\DeclareRobustCommand{\abbrevcrefs}{%
	\crefname{proposition}{Prop.}{Props.}%
}
\DeclareRobustCommand{\cshref}[1]{{\abbrevcrefs\cref{#1}}}
\DeclareMathOperator*{\argmin}{argmin}
\DeclareMathOperator*{\argmax}{argmax}
\newcommand{\sz}{\textup{size}}
\def\tfnp/{\textup{TFNP}}
\def\ppa/{\textup{PPA}}
\def\ppad/{\textup{PPAD}}
\def\ppp/{\textup{PPP}}
\def\pls/{\textup{PLS}}
\def\cls/{\textup{CLS}}
\def\ppadpls/{\text{$\textup{PPAD} \cap \textup{PLS}$}}
\def\fixp/{\textup{FIXP}}
\def\linearfixp/{\textup{Linear-FIXP}}
\def\linearcls/{\textup{Linear-CLS}}
\def\clo/{\textup{\textsc{Continuous-Localopt}}}
\def\gclo/{\textup{\textsc{General-Continuous-Localopt}}}
\def\kkt/{\textup{\textsc{KKT}}}
\def\gdls/{\textup{\textsc{GD-Local-Search}}}
\def\gdfp/{\textup{\textsc{GD-Fixpoint}}}
\def\gdfd/{\textup{\textsc{GD-Finite-Diff}}}
\def\gbrouwer/{\textup{\textsc{General-Brouwer}}}
\def\grlo/{\textup{\textsc{General-Real-Localopt}}}
\def\linclo/{\textup{\textsc{Linear-Continuous-Localopt}}}
\def\eol/{\textup{\textsc{End-of-Line}}}
\def\iter/{\textup{\textsc{Iter}}}
\theoremstyle{definition}
\newtheorem{definition}{Definition}
\newtheorem{remark}{Remark}
\theoremstyle{plain}
\newtheorem{theorem}{Theorem}[section]
\newtheorem{lemma}[theorem]{Lemma}
\newtheorem{proposition}[theorem]{Proposition}
\newtheorem{corollary}[theorem]{Corollary}
\begin{document}

\title{The Complexity of Gradient Descent: CLS = PPAD \texorpdfstring{$\cap$}{∩} PLS}

\author{John Fearnley}
\orcid{0000-0003-0791-4342}
\affiliation{%
	\institution{University of Liverpool}
	\department{Department of Computer Science}
	\city{Liverpool}
	\country{UK}
}
\email{john.fearnley@liverpool.ac.uk}

\author{Paul W. Goldberg}
\orcid{0000-0002-5436-7890}
\affiliation{%
	\institution{University of Oxford}
	\department{Department of Computer Science}
	\city{Oxford}
	\country{UK}
}
\email{paul.goldberg@cs.ox.ac.uk}

\author{Alexandros Hollender}
\orcid{0000-0001-5255-9349}
\affiliation{%
	\institution{University of Oxford}
	\department{Department of Computer Science}
	\city{Oxford}
	\country{UK}
}
\email{alexandros.hollender@cs.ox.ac.uk}

\author{Rahul Savani}
\orcid{0000-0003-1262-7831}
\affiliation{%
	\institution{University of Liverpool}
	\department{Department of Computer Science}
	\city{Liverpool}
	\country{UK}
}
\email{rahul.savani@liverpool.ac.uk}

\begin{abstract}
We study search problems that can be solved by performing Gradient Descent on a bounded convex polytopal domain and show that this class is equal to the intersection of two well-known classes: PPAD and PLS. As our main underlying technical contribution, we show that computing a Karush-Kuhn-Tucker (KKT) point of a continuously differentiable function over the domain $[0,1]^2$ is PPAD\,$\cap$\,PLS-complete. This is the first non-artificial problem to be shown complete for this class. Our results also imply that the class CLS (Continuous Local Search) -- which was defined by Daskalakis and Papadimitriou as a more ``natural'' counterpart to PPAD\,$\cap$\,PLS and contains many interesting problems -- is itself equal to PPAD\,$\cap$\,PLS.
\end{abstract}

\maketitle

\section{Introduction}

It is hard to overstate the importance of Gradient Descent. As noted by \citet{JinNGKJ21},
``Machine learning algorithms generally arise via formulations as optimization problems, and, despite a massive classical toolbox of sophisticated optimization algorithms and a major modern effort to further develop that toolbox,  the simplest algorithms---gradient descent, which dates to the 1840s \citep{Cauchy1847} and stochastic gradient descent, which dates to the 1950s \citep{RM1951}---reign supreme in machine learning.'' \citet{JinNGKJ21} continue by highlighting the simplicity of Gradient Descent as a key selling-point, and the importance of theoretical analysis in understanding its efficacy in non-convex optimization.

In its simplest form, which we consider in this paper, Gradient Descent attempts to find a minimum of a continuously differentiable function $f$ over some domain $D$, by starting at some point $x_0$ and iterating according to the update rule
$$x_{k+1} \leftarrow x_k - \eta \nabla f(x_k)$$
where $\eta$ is some fixed step size. The algorithm is based on the fundamental fact that for any point $x$ the term $-\nabla f(x)$ points in the direction of steepest descent in some sufficiently small neighbourhood of $x$. However, in the unconstrained setting---where the domain is the whole space---it is easy to see that Gradient Descent can at best find a stationary point. Indeed, if the gradient is zero at some point, then there is no escape. Note that a stationary point might be a local minimum, but it could also be a saddle point or even a local maximum. Similarly, in the constrained setting---where the domain $D$ is no longer the whole space---Gradient Descent can at best find a point $x$ that satisfies the Karush-Kuhn-Tucker (KKT) optimality conditions. Roughly, the KKT conditions say that the gradient of $f$ is zero at $x$, or if not, $x$ is on the boundary of $D$ and any further local improvement would take us outside $D$.

In this paper we investigate the complexity of finding a point where Gradient Descent terminates---or equivalently, as we will see, a KKT point---when the domain is \emph{bounded}. It is known that a global or even a local minimum cannot be found in polynomial time unless P\,=\,NP \citep{MurtyK1987,AhmadiZ22-polytope}. Indeed, even deciding whether a point is a local minimum is already co-NP-hard \citep{MurtyK1987}. In contrast, it is easy to check whether a point satisfies the KKT conditions. In general, finding a KKT point is hard, since even deciding whether a KKT point exists is NP-hard in the unconstrained setting \citep{AhmadiZ22-unconstrained}. However, when the domain is bounded, a KKT point is guaranteed to exist! This means that in our case, we are looking for something that can be verified efficiently and that necessarily exists. Intuitively, it seems that this problem should be more tractable. This intuition can be made formal by noting that these two properties place the problem in the complexity class \tfnp/ of {\em total} search problems in NP: any instance has at least one solution, and a solution can be checked in polynomial time. A key feature of such problems is that they cannot be NP-hard unless NP\,=\,co-NP \citep{MegiddoP1991-tfnp}. \tfnp/ problems have been classified via certain ``syntactic subclasses'' of \tfnp/, of which \ppad/ and \pls/ are two of the most important ones.

\subsection{NP total search classes: \ppad/, \pls/, and \cls/}

As discussed by \citet{Pap94}, \tfnp/ is unlikely to have complete problems, and various \emph{syntactic} subclasses have been used to classify the many diverse problems that belong to it.
Among them, the classes \ppad/ and \pls/ (introduced by \citet{Pap94} and \citet{JPY88} respectively) have been hugely successful in this regard.
Each of these classes has a corresponding {\em computationally inefficient existence proof principle}, one that when applied in a general context, does not yield a polynomial-time algorithm.\footnote{The other well-known such classes, less relevant to the present paper, are \ppa/ and \ppp/; it is known that \ppad/ is a subset of \ppa/ and also of \ppp/. These set-theoretic containments correspond directly to the strength, or generality, of the corresponding proof principles.} In the case of \ppad/ this is the {\em parity argument on a directed graph}, equivalent to the existence guarantee of {\em Brouwer fixpoints}: a Brouwer function is a continuous function $f:D\rightarrow D$ where $D$ is a convex compact domain, and Brouwer's fixed point theorem guarantees a point $x$ for which $f(x)=x$. \ppad/ has been widely used to classify problems of computing game-theoretic equilibria (a long line of work on Nash equilibrium computation beginning with \citet{DGP09,CDT09}, and market equilibria, e.g., \citet{CDDT09}). \ppad/ also captures diverse problems in combinatorics and cooperative game theory \citep{KPRST13}.  

\pls/, for ``Polynomial Local Search'', captures problems of finding a local minimum of an objective function $f$, in contexts where any candidate solution $x$ has a local neighbourhood within which we can readily check for the existence of some other point having a lower value of $f$. Many diverse local optimization problems have been shown complete for \pls/, attesting to its importance. Examples include searching for a local optimum of the TSP according to the Lin-Kernighan heuristic \citep{Papadimitriou92}, and finding pure Nash equilibria in many-player congestion games \citep{FPT04}.

The complexity class \cls/ (``Continuous Local Search'') was introduced by \citet{DaskalakisP2011-CLS} to classify various important problems that lie in both \ppad/ and \pls/. \ppad/ and \pls/ are believed to be strictly incomparable---one is not a subset of the other---a belief supported by oracle separations \citep{Morioka01-Mthesis-PLS,BureshM04-NP-search-problems,buss2012propositional}. It follows from this that problems belonging to both classes cannot be complete for either one of them. \cls/ is seen as a strong candidate for capturing the complexity of some of those important problems, but, prior to this work, only two problems related to general versions of Banach's fixed point theorem were known to be \cls/-complete \citep{DaskTZ18,FGMS17}. An important result---supporting the claim that \cls/-complete problems are hard to solve---is that the hardness of \cls/ can be based on various cryptographic assumptions such as indistinguishability obfuscation \citep{HubacekY2017-CLS}, the soundness of the Fiat-Shamir heuristic applied to the sumcheck protocol \citep{ChoudhuriHKPRR19-Fiat-Shamir}, or the assumption that Learning With Errors (LWE) is sub-exponentially hard \citep{JawaleKKZ21-PPAD-LWE}. Prior to the present paper, it was generally believed that \cls/ is a proper subset of \ppadpls/, as conjectured by \citet{DaskalakisP2011-CLS}.

\subsection{Our contribution and its significance}\label{sec:intro-results}

As our main result, we show that finding a point where Gradient Descent on a continuously differentiable function terminates---or equivalently a KKT point---is \ppadpls/-complete, when the domain is a bounded convex polytope. This continues to hold even when the domain is as simple as the unit square $[0,1]^2$. The \ppadpls/-completeness result applies to the ``white box'' model, where functions are represented as arithmetic circuits.

\paragraph{\bf Computational Hardness}
As an immediate consequence, our result provides convincing evidence that the problem is computationally hard.
First of all, there are reasons to believe that \ppadpls/ is hard simply because \ppad/ and \pls/ are believed to be hard. Indeed, if \ppadpls/ could be solved in polynomial time, then, given an instance of a \ppad/-complete problem and an instance of a \pls/-complete problem, we would be able to solve at least one of the two instances in polynomial time. Furthermore, since \cls/ $\subseteq$ \ppadpls/, the above-mentioned cryptographic hardness of \cls/ applies automatically to \ppadpls/, and thus to our problem of interest. Note that our result says that finding a stationary point (or, to be more precise, a KKT point) is computationally hard, not only for the Gradient Descent algorithm, but for \emph{any} algorithm.

\paragraph{\bf Continuous Local Search} Since Gradient Descent is just a special case of continuous local search, our hardness result implies that
$$\cls/ = \ppadpls/$$
which disproves the widely believed conjecture by \citet{DaskalakisP2011-CLS}
that the containment is strict. Our result also allows us to resolve an
ambiguity in the original definition of \cls/ by showing that the
high-dimensional version of the class reduces to the 2-dimensional version of
the class (the 1-dimensional version is computationally tractable, so no further progress is to be made). Equality to \ppadpls/ also applies to a linear version of \cls/ analogous to the class \linearfixp/ of \citet{EtessamiPRY20}.

\paragraph{\bf $\boldsymbol{\ppadpls/}$} Perhaps more importantly, our result establishes \ppadpls/ as an important complexity class that captures the complexity of interesting problems. It was previously known that one can construct a problem complete for \ppadpls/ by gluing together two problems, one for each class (see \cref{sec:classesdefs}), but the resulting problem is highly artificial. In contrast, the Gradient Descent problem we consider is clearly natural and of separate interest. Some \tfnp/ classes can be characterized as the set of all problems solved by some type of algorithm, where ``solved'' is interpreted as ``solved eventually, without any efficiency requirement''. For instance, \ppad/ is the class of all problems that can be solved by ``path-following'' algorithms such as the Lemke-Howson algorithm. \pls/ is the class of all problems that can be solved by general local search methods. Analogously, one can define the class GD containing all problems that can be solved by the Gradient Descent algorithm on a bounded domain, i.e., that reduce to our Gradient Descent problem in polynomial time. Our result shows that GD\,=\,\ppadpls/. In other words, the class \ppadpls/, which is obtained by combining \ppad/ and \pls/ in a completely artificial way, turns out to have a very natural characterization:
\begin{center}
	\textit{\ppadpls/ is the class of all problems that can be solved\\
		by performing Gradient Descent on a bounded domain.}
\end{center}
Our new characterization has already been very useful in the context of
Algorithmic Game Theory, where it was recently used by \citet{BabR21}, to show
\ppadpls/-completeness of computing mixed Nash equilibria of congestion games.

\subsection{Further related work}

Following the definition of \cls/ by \citet{DaskalakisP2011-CLS}, two \cls/-complete problems were identified: \textsc{Banach} \citep{DaskTZ18} and \textsc{MetametricContraction} \citep{FGMS17}. \textsc{Banach} is a computational presentation of Banach's fixed point theorem in which the metric is presented as part of the input (and could be complicated). Banach fixpoints are unique, but \cls/ problems do not in general have unique solutions, and the problem \textsc{Banach} circumvents that obstacle by allowing certain ``violation'' solutions, such as a pair of points witnessing that $f$ is not a contraction map. \textsc{MetametricContraction} is a generalisation of \textsc{Banach}, where the metric is replaced by a slightly relaxed notion called a meta-metric.

\citet{ChaRW19} showed that online gradient descent can encode general PSPACE computations. In contrast, our result provides evidence that the problem itself (which gradient descent attempts to solve) is hard. (Although, our result does not automatically apply to the specific machine learning setting of \citep{ChaRW19}.) The distinction between these two types of statements is most clearly apparent in the case of linear programming, where the simplex method can encode arbitrary PSPACE computations \citep{DisserS19-simplex-NP-mighty,AdlerPR14-simplex,FeaS15}, while the problem itself can be solved in polynomial time.

\citet{DSZ21} study nonlinear {\em min-max} optimization, namely the optimization problem: $\min_x \max_y f(x,y)$ where $(x,y) \in \mathbb{R}^{d_1} \times \mathbb{R}^{d_2}$ is constrained to lie in a polytope ${\mathcal P}$ and $f$ is a smooth function. They define an appropriate notion of approximate local solution for this setting and prove that computing such a solution is \ppad/-complete, for a suitable regime of parameters. A contrasting aspect here is that our hardness result requires inverse-exponential parameters, whereas \citet{DSZ21} achieve hardness with inverse-polynomial parameters---for us the inverse-exponential parameters are a necessary evil, since the problem can otherwise be solved in polynomial time, even in high dimension (by running Gradient Descent, see \cref{lem:kkt-poly-params}). Furthermore, the \ppad/-hardness of \citet{DSZ21} requires ${\mathcal P}$ to be a carefully chosen subset of the unit hypercube, whereas for us, the feasible region is the unit square.

Finally, note that in contrast to our hardness result, in the special case of \emph{convex} optimization our problem can be solved efficiently, even in high dimension and with inverse-exponential precision. Related work in nonlinear optimization is covered in \cref{sec:nonlinearopt}.

\section{Overview}\label{sec:overview}

In this section we give a condensed and informal overview of the concepts, ideas, and techniques of this paper. We begin by providing informal definitions of the problems of interest and the complexity classes. We then present an overview of our results, along with the high-level ideas of our main reduction.

\subsection{The problems of interest}\label{sec:defs}

The motivation for the problems we study stems from the ultimate goal of minimizing a continuously differentiable function $f: \mathbb{R}^n \to \mathbb{R}$ over some domain $D$. As mentioned in the introduction, this problem is known to be intractable, and so we instead consider relaxations where we are looking for a point where Gradient Descent terminates, or for a KKT point. Our investigation is restricted to bounded domains, namely we consider the setting where the domain $D$ is a bounded convex polytope defined by a collection of linear inequalities. Furthermore, we also assume that the function $f$ and its gradient $\nabla f$ are Lipschitz-continuous over $D$, for some Lipschitz constant $L$ provided in the input. Let $C^1_L(D,\mathbb{R})$ denote the set of continuously differentiable functions $f$ from $D$ to $\mathbb{R}$, such that $f$ and $\nabla f$ are $L$-Lipschitz.

In order to define our Gradient Descent problem, we need to specify what we mean by ``a point where Gradient Descent terminates''. We consider the following two stopping criteria for Gradient Descent: (a) stop when we find a point such that the next iterate does not improve the objective function value, or (b) stop when we find a point such that the next iterate is the same point. In practice, of course, Gradient Descent is performed with some underlying precision parameter $\varepsilon > 0$. Thus, the appropriate stopping criteria are: (a) stop when we find a point such that the next iterate improves the objective function value by less than $\varepsilon$, or (b) stop when we find a point such that the next iterate is at most $\varepsilon$ away. Importantly, note that, given a point, both criteria can be checked efficiently. This ensures that the resulting computational problems lie in \tfnp/. The totality of the problems follows from the simple fact that a local minimum must exist (since the domain is bounded) and any local minimum satisfies the stopping criteria. The first stopping criterion has a local search flavour and so we call the corresponding problem \gdls/. The second stopping criterion is essentially asking for an approximate fixed point of the Gradient Descent dynamics, and yields the \gdfp/ problem.

Since we are performing Gradient Descent on a bounded domain, we have to ensure that the next iterate indeed lies in the domain $D$. The standard way to achieve this is to use so-called Projected Gradient Descent, which computes the next iterate as usual and then projects it onto the domain. Define $\Pi_D$ to be the projection operator, that maps any point in $D$ to itself, and any point outside $D$ to its closest point in $D$ (under the Euclidean norm). The two Gradient Descent problems are defined as follows.

\begin{tcolorbox}
	{\centering
		\gdls/ and \gdfp/ \textit{(informal)}
		
	}
	
	\medskip
	
	\noindent\textbf{Input}: $\varepsilon > 0$, step size $\eta > 0$, domain $D$, $f \in C^1_L(D,\mathbb{R})$ and its gradient $\nabla f$.
	
	\smallskip
	
	\noindent\textbf{Goal}: Compute any point where (projected) gradient descent for $f$ on $D$ terminates. Namely, find $x \in D$ such that $x$ and its next iterate $x' = \Pi_D(x-\eta\nabla f(x))$ satisfy:
	\begin{itemize}
		\item for \gdls/: \quad $f(x') \geq f(x) - \varepsilon,$ \hfill \textit{($f$ decreases by at most $\varepsilon$)}
		\item for \gdfp/: \hspace{1.1cm} \quad $\|x - x'\| \leq \varepsilon.$ \hfill \textit{($x'$ is $\varepsilon$-close to $x$)}
	\end{itemize}
\end{tcolorbox}

In a certain sense, \gdls/ is a \pls/-style version of Gradient Descent, while \gdfp/ is a \ppad/-style version.\footnote{A very similar version of \gdfp/ was also defined by \citet{DSZ21} and shown to be equivalent to finding an \emph{approximate} local minimum (which is essentially the same as a KKT point).} We show that these two versions are computationally equivalent by a triangle of reductions (see \cref{fig:reductions}). The other problem in that triangle of equivalent problems is the \kkt/ problem, defined below.

\begin{tcolorbox}
	{\centering
		\kkt/ \textit{(informal)}
		
	}
	
	\medskip

	\noindent\textbf{Input}: $\varepsilon > 0$, domain $D$, $f \in C^1_L(D,\mathbb{R})$ and its gradient $\nabla f$.
	
	\smallskip
	
	\noindent\textbf{Goal}: Compute any $\varepsilon$-KKT point of the minimization problem for $f$ on domain $D$.
\end{tcolorbox}

A point $x$ is a KKT point if $x$ is feasible (it belongs to the domain $D$), and $x$ is either a zero-gradient point of $f$, or alternatively $x$ is on the boundary of $D$ and the boundary constraints prevent local improvement of $f$. ``$\varepsilon$-KKT'' relaxes the KKT condition so as to allow inexact KKT solutions with limited numerical precision. For a formal definition of these notions see \cref{sec:nonlinearopt}.

\paragraph{\bf Representation of $\boldsymbol{f}$ and $\boldsymbol{\nabla f}$} We consider these computational problems in the ``white box'' model, where some computational device computing $f$ and $\nabla f$ is provided in the input. In our case, we assume that $f$ and $\nabla f$ are presented as arithmetic circuits. In more detail, following \citet{DaskalakisP2011-CLS}, we consider arithmetic circuits that use the operations $\{+,-,\times,\max,\min,<\}$, as well as rational constants.\footnote{A subtle issue is that it might not always be possible to evaluate such a circuit efficiently, because the $\times$-gates can be used to perform ``repeated squaring''. To avoid this issue, we restrict ourselves to what we call \emph{well-behaved} arithmetic circuits. See \cref{sec:arithmetic-cls} of the preliminaries for more details.} Another option would be to assume that the functions are given as polynomial-time Turing machines, but this introduces some extra clutter in the formal definitions of the problems. Overall, the definition with arithmetic circuits is cleaner, and, in any case, the complexity of the problems is the same in both cases.

\paragraph{\bf Promise-version and total-version} Given an arithmetic circuit for $f$ and one for $\nabla f$, we know of no easy way of checking that the circuit for $\nabla f$ indeed computes the gradient of $f$, and that the two functions are indeed $L$-Lipschitz. There are two ways to handle this issue: (a) consider the promise version of the problem, where we restrict our attention to instances that satisfy these conditions, or (b) introduce ``violation'' solutions in the spirit of \citet{DaskalakisP2011-CLS}, i.e. allow as a solution a witness of the fact that one of the conditions is not satisfied. The first option is more natural, but the second option ensures that the problem is formally in \tfnp/. Thus, we use the second option for the formal definitions of our problems in \cref{sec:nonlinearopt-problems}. However, we note that our ``promise-preserving'' reductions ensure that \emph{our hardness results also hold for the promise versions of the problems.}

\subsection{Complexity classes}\label{sec:classesdefs}

In this section we provide informal definitions of the relevant complexity classes, and discuss their key features. The formal definitions can be found in \cref{sec:model-classes-circuits}, but the high-level descriptions presented here are intended to be sufficient to follow the overview of our main proof in \cref{sec:proof-overview}.

\paragraph{\bf \ppad/} The complexity class \ppad/ is defined as the set of TFNP problems that reduce in polynomial time to the problem \eol/.

\begin{tcolorbox}
	{\centering
		\eol/ \textit{(informal)}
		
	}
	
	\medskip
	
	\noindent\textbf{Input}: A directed graph on the vertex set $[2^n]$, such that every vertex has in- and out-degree at most $1$, and such that vertex $1$ is a source.
	
	\smallskip
	
	\noindent\textbf{Goal}: Find a sink of the graph, or any other source.
\end{tcolorbox}

Importantly, the graph is not provided explicitly in the input, but instead we are given Boolean circuits that efficiently compute the successor and predecessor of each vertex. This means that the size of the graph can be exponential with respect to its description length. A problem is {\em complete} for \ppad/ if it belongs to \ppad/ and if \eol/ reduces in polynomial time to that problem. Many variants of the search for a fixed point of a Brouwer function turn out to be \ppad/-complete. This is essentially the reason why \gdfp/, and thus the other two equivalent problems, lie in \ppad/. See \cref{fig:EOL-example} for an example of an instance of \eol/.

\begin{figure}
	\centering
	\scalebox{0.8}{
		\includegraphics{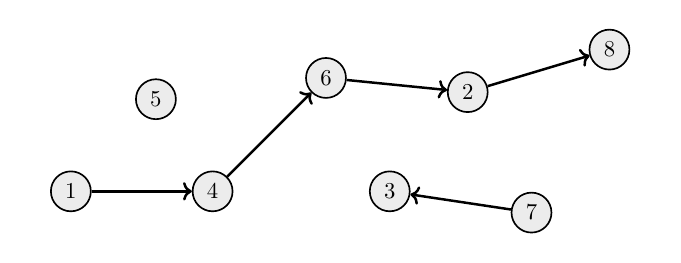}
	}
	\caption{Example of an \eol/ instance for $n=3$. The $2^n$ ($=8$) vertices are represented by circular nodes and the directed edges by arrows. Note that the graph is not provided explicitly in the input, but is only represented implicitly by a successor and predecessor circuit. In this example, the \eol/ solutions are the vertices $3$, $7$ and $8$. In more detail, vertices $3$ and $8$ are sinks, while vertex $7$ is a source. Note that the ``trivial'' source $1$ is not a solution. Finally, the isolated vertex $5$ is also not a solution.}
	\label{fig:EOL-example}
\end{figure}

\paragraph{\bf \pls/} The complexity class \pls/ is defined as the set of TFNP problems that reduce in polynomial time to the problem \textsc{Localopt}.

\begin{tcolorbox}
	{\centering
		\textsc{Localopt} \textit{(informal)}
		
	}
	
	\medskip
	
	\noindent\textbf{Input}: Functions $V: [2^n] \to \mathbb{R}$ and $S: [2^n] \to [2^n]$.
	
	\smallskip
	
	\noindent\textbf{Goal}: Find $v \in [2^n]$ such that $V(S(v)) \geq V(v)$.
\end{tcolorbox}

The functions are given as Boolean circuits. A problem is {\em complete} for \pls/ if it belongs to \pls/ and if \textsc{Localopt} reduces in polynomial time to that problem. \pls/ embodies general local search methods where one attempts to optimize some objective function by considering local improving moves. Our problem \gdls/ is essentially a special case of local search, and thus lies in \pls/. In this paper we make use of the problem \iter/, defined below, which is known to be \pls/-complete \citep{Morioka01-Mthesis-PLS}.

\begin{tcolorbox}
	{\centering
		\iter/ \textit{(informal)}
		
	}
	
	\medskip
	
	\noindent\textbf{Input}: A function $C: [2^n] \to [2^n]$ such that $C(v) \geq v$ for all $v \in [2^n]$, and $C(1) > 1$.
	
	\smallskip
	
	\noindent\textbf{Goal}: Find $v$ such that $C(v) > v$ and $C(C(v)) = C(v)$.
\end{tcolorbox}

For this problem, it is convenient to think of the nodes in $[2^n]$ as lying on a line, in increasing order. Then, any node is either a fixed point of $C$, or it is mapped to some node further to the right. We are looking for any node that is not a fixed point, but is mapped to a fixed point. It is easy to see that the condition $C(1) > 1$ ensures that such a solution must exist. See \cref{fig:ITER-example} for an example of an instance of \iter/.

\begin{figure}
	\centering
	\includegraphics{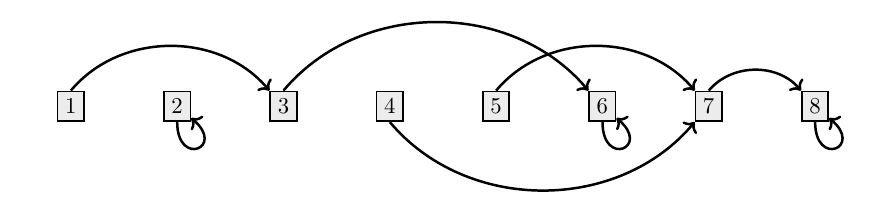}
	\caption{Example of an \iter/ instance $C$ for $n=3$. The $2^n$ ($=8$) nodes are represented by squares. The arrows indicate the mapping given by the circuit $C$. In this example, nodes $2$, $6$ and $8$ are the fixed points of $C$. Any node that is mapped by $C$ to a fixed point is a solution to the \iter/ instance. Thus, in this example, the solutions are nodes $3$ and $7$.}\label{fig:ITER-example}
\end{figure}

\paragraph{\bf $\boldsymbol{\ppadpls/}$} The class \ppadpls/ contains, by definition, all TFNP problems that lie in both \ppad/ and in \pls/. Prior to our work, the only known way to obtain \ppadpls/-complete problems was to combine a \ppad/-complete problem $A$ and a \pls/-complete problem $B$ as follows \citep{DaskalakisP2011-CLS}.

\begin{tcolorbox}
	{\centering
		\textsc{Either-Solution($A$,$B$)}
		
	}
	
	\medskip
	
	\noindent\textbf{Input}: An instance $I_A$ of $A$ and an instance $I_B$ of $B$.
	
	\smallskip
	
	\noindent\textbf{Goal}: Find a solution of $I_A$ or a solution of $I_B$.
\end{tcolorbox}

In particular, the problem \textsc{Either-Solution(\eol/,\iter/)} is \ppadpls/-complete, and this is the problem we reduce from to obtain our results.

\paragraph{\bf \cls/} Noting that all known \ppadpls/-complete problems looked very artificial, \citet{DaskalakisP2011-CLS} defined the class \cls/\,$\subseteq$\,\ppadpls/, which combines \ppad/ and \pls/ in a more natural way. The class \cls/ is defined as the set of TFNP problems that reduce to the problem 3D-\clo/.

\begin{tcolorbox}
	{\centering
		3D-\clo/ \textit{(informal)}
		
	}
	
	\medskip
	
	\noindent\textbf{Input}: $\varepsilon > 0$, $L$-Lipschitz functions $p: [0,1]^3 \to [0,1]$ and $g: [0,1]^3 \to [0,1]^3$.
	
	\smallskip
	
	\noindent\textbf{Goal}: Compute any approximate local optimum of $p$ with respect to $g$. Namely, find $x \in [0,1]^3$ such that
	$$p(g(x)) \geq p(x) - \varepsilon.$$
\end{tcolorbox}

This problem is essentially a special case of the \textsc{Localopt} problem, where we perform local search over a continuous domain and where the functions are continuous. The formal definition of 3D-\clo/ includes violation solutions for the Lipschitz-continuity of the functions. We also consider a more general version of this problem, which we call \gclo/, where we allow any bounded convex polytope as the domain.

\subsection{Results}\label{sec:results}

The main technical contribution of this work is \cref{thm:main-kkt-hard}, which shows that the \kkt/ problem is \ppadpls/-hard, even when the domain is the unit square $[0,1]^2$. The hardness also holds for the promise version of the problem, because the hard instances that we construct always satisfy the promises. We present the main ideas needed for this result in the next section, but we first briefly present the consequences of this reduction here.

A chain of reductions, presented in \cref{sec:all-ppadpls-complete} and shown in \cref{fig:reductions}, which includes the ``triangle'' between the three problems of interest, establishes the following theorem.

\theoremstyle{plain}
\newtheorem*{thm:overview:all-ppadpls-complete}{Theorem \ref*{thm:all-ppadpls-complete}}
\begin{thm:overview:all-ppadpls-complete}
	The problems \kkt/, \gdls/, \gdfp/ and \gclo/ are $\ppad/~\cap$ \pls/-complete, even when the domain is fixed to be the unit square $[0,1]^2$. This hardness result continues to hold even if one considers the promise-versions of these problems, i.e., only instances without violations.
\end{thm:overview:all-ppadpls-complete}

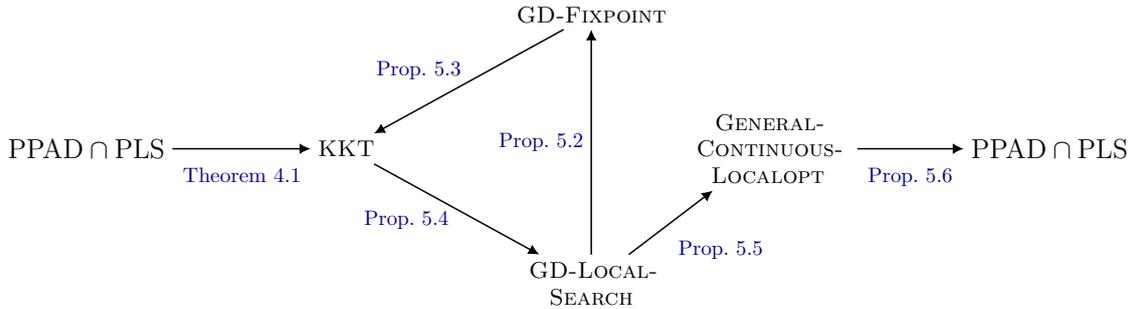
\begin{figure}[h]
	\centering
	\scalebox{0.72}{ 
		\begin{tikzpicture}[node distance=3.5cm]
		\tikzset{>={Latex[width=2mm,length=2mm]}}

		\tikzstyle{cc}=[font=\Large]
		\tikzstyle{prob}=[font=\large]
		
		\node[cc] (pp) {\ppadpls/};
		\node[prob, right of=pp, xshift=1.25cm] (kkt) {\kkt/};
		\node[prob, above right of=kkt, xshift=2cm, text width=3cm, align=center] (gdfp) {\gdfp/};
		\node[prob, below right of=kkt, xshift=2cm, text width=3.5cm, align=center] (gdls) {\gdls/};
		\node[prob, above right of=gdls, xshift=0.75cm, text width=3cm, align=center] (gclo) {\gclo/};
		\node[cc, right of=gclo, xshift=1.65cm] (pp2) {\ppadpls/};
		
		\path[->, thick] (pp) edge node[below, yshift=-2mm, text width=4cm, align=center] {\cref{thm:main-kkt-hard}} (kkt);
		\path[->, thick] (gdfp) edge node[left, yshift=2mm] {\cshref{prop:gdfp2kkt}} (kkt);
		\path[->, thick] (kkt) edge node[left, yshift=-2mm] {\cshref{prop:kkt2gdls}} (gdls);
		\path[->, thick] (gdls) edge node[left] {\cshref{prop:gdls2gdfp}} (gdfp);
		\path[->, thick] (gdls) edge node[right, xshift=0mm, yshift=-5mm] {\cshref{prop:gdls2gclo}} (gclo);
		\path[->, thick] (gclo) edge node[below, yshift=-2mm] {\cshref{prop:gclo2ppadpls}} (pp2);
		
		\end{tikzpicture}
	}
	\caption{Our reductions. The main one (\cref{thm:main-kkt-hard}) is on the left; note that the other reductions are all domain- and promise-preserving.}\label{fig:reductions}
\end{figure}

These reductions are domain-preserving---which means that they leave the domain $D$ unchanged---and promise-preserving---which means that they are also valid reductions between the promise versions of the problems. As a result, the other problems ``inherit'' the hardness result for \kkt/, including the fact that it holds for $D=[0,1]^2$ and even for the promise versions.

\paragraph{\bf Consequences for \cls/} The \ppadpls/-hardness of \gclo/ on domain $[0,1]^2$, and thus also on domain $[0,1]^3$, immediately implies the following surprising collapse.

\theoremstyle{plain}
\newtheorem*{thm:overview:cls-equal-ppadpls}{Theorem \ref*{thm:cls-equal-ppadpls}}
\begin{thm:overview:cls-equal-ppadpls}
	\cls/ $=$ \ppadpls/.
\end{thm:overview:cls-equal-ppadpls}

\noindent As a result, it also immediately follows that the two known \cls/-complete problems \citep{DaskTZ18,FGMS17} are in fact \ppadpls/-complete.

\theoremstyle{plain}
\newtheorem*{cor:overview:banach-metametric}{Corollary \ref*{cor:banach-metametric}}
\begin{cor:overview:banach-metametric}
	\textup{\textsc{Banach}} and \textup{\textsc{MetametricContraction}} are \ppadpls/-complete.
\end{cor:overview:banach-metametric}

\noindent Our results also show that the definition of \cls/ is robust to various modifications. The fact that our hardness result holds on domain $[0,1]^2$ implies
that the $n$-dimensional variant of \cls/ is equal to the two-dimensional version,
a fact that was not previously known. 
Furthermore, since our results hold even for the promise version of \gclo/, this implies that the definition of \cls/ is
robust with respect to the removal of violations
(promise-\cls/\,=\,\cls/). Finally, we also show that restricting the circuits
to be linear arithmetic circuits (that compute piecewise-linear
functions) does not yield a weaker class, i.e., 2D-Linear-\cls/\,=\,\cls/. This
result is obtained by showing that linear circuits can be used to efficiently
approximate any Lipschitz-continuous function with \emph{arbitrary precision}
(\cref{sec:approx-linear-circuit}), which might be of independent interest. All
the consequences for \cls/ are discussed in detail in \cref{sec:cls}. In that section, we also define a Gradient Descent problem where we do not have access to the gradient of the function (which might, in fact, not even be differentiable) and instead use ``finite differences'' to compute an approximate gradient. We show that this problem remains \ppadpls/-complete.

\subsection{Proof overview for \texorpdfstring{\cref{thm:main-kkt-hard}}{Theorem~\ref*{thm:main-kkt-hard}}}\label{sec:proof-overview}

In this section we provide a brief overview of our reduction from the \ppadpls/-complete problem \textsc{Either-Solution} \textsc{(\eol/,\iter/)} to the \kkt/ problem on domain $[0,1]^2$. We note that the proof can be simplified using subsequent work by \citet{GoosHJMPRT22-collapses}. See \cref{sec:future-directions} for more details on this.

Given an instance $I^{\rm EOL}$ of \eol/ and an instance $I^{\rm ITER}$ of \iter/, we construct an instance $I^{\rm KKT}=(\varepsilon, f, \nabla f, L)$ of the \kkt/ problem on domain $[0,1]^2$ such that from any $\varepsilon$-KKT point of $f$, we can efficiently obtain a solution to either $I^{\rm EOL}$ or $I^{\rm ITER}$. The function $f$ and its gradient $\nabla f$ are first defined on an exponentially small grid on $[0,1]^2$, and then extended within every small square of the grid by using bicubic interpolation. This ensures that the function is continuously differentiable on the whole domain. The most interesting part of the reduction is how the function is defined on the grid points, by using information from $I^{\rm EOL}$, and then, where necessary, also from $I^{\rm ITER}$.

\paragraph{\bf Embedding $\boldsymbol{I^{\rm EOL}}$} The domain is first subdivided into
$2^n \times 2^n$ \emph{big squares}, where $[2^n]$ is the set of vertices in $I^{\rm EOL}$. The big squares on the diagonal (shaded in \cref{fig:overview-eol-embedding}) represent the vertices of $I^{\rm EOL}$ and the function $f$ is constructed so as to embed the directed edges in the graph of $I^{\rm EOL}$. If the edge $(v_1,v_2)$ in $I^{\rm EOL}$ is a forward edge, i.e, $v_1 < v_2$, then there will be a ``green path'' going from the big square of $v_1$ to the big square of $v_2$. On the other hand, if the edge $(v_1,v_2)$ in $I^{\rm EOL}$ is a backward edge, i.e., $v_1 > v_2$, then there will be an ``orange path'' going from the big square of $v_1$ to the big square of $v_2$. These paths are shown in \cref{fig:overview-eol-embedding} for the corresponding example instance of \cref{fig:EOL-example}. The idea of embedding the vertices on the diagonal, in a ``staircase embedding'', was introduced by \citet{HubacekY2017-CLS}. However, their work only required the embedding of forward edges, whereas we have to be able to implement backward edges as well.

\begin{figure}[t]
	\centering
	\includegraphics[width=10cm]{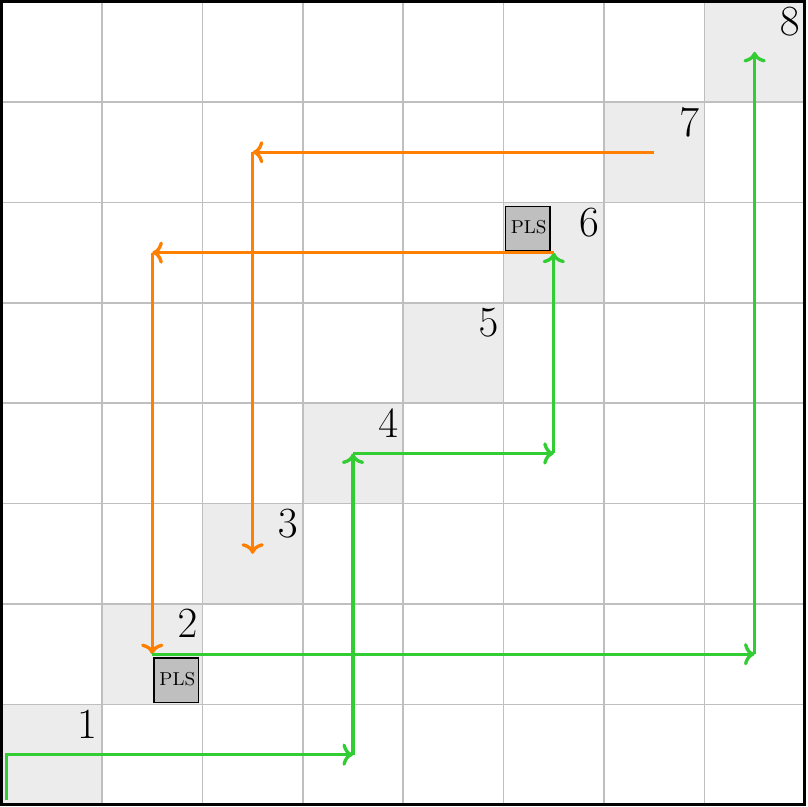}
	\caption{A high-level illustration of our construction. The shaded squares on the diagonal correspond to vertices of the graph represented by $I^{\rm EOL}$, in this case corresponding to the graph in \cref{fig:EOL-example}. The green and orange arrows encode the directed edges of the graph. The positions where $I^{\rm ITER}$ is encoded, i.e., the PLS-Labyrinths, are shown as boxes labelled ``PLS''. They are located at points where the embedding of $I^{\rm EOL}$ would introduce false solutions, and their purpose is to hide those false solutions by co-locating any such solution with a solution to $I^{\rm ITER}$.}
	\label{fig:overview-eol-embedding}
\end{figure}

The function $f$ is constructed such that when we move along a green path the value of $f$ decreases. Conversely, when we move along an orange path the value of $f$ increases. Outside the paths, $f$ is defined so as to decrease towards the origin $(0,0) \in [0,1]^2$, where the green path corresponding to the source of $I^{\rm EOL}$ starts. As a result, we show that an $\varepsilon$-KKT point can only occur in a big square corresponding to a vertex $v$ of $I^{\rm EOL}$ such that (a) $v$ is a solution of $I^{\rm EOL}$, or (b) $v$ is \emph{not} a solution of $I^{\rm EOL}$, but its two neighbours (in the $I^{\rm EOL}$ graph) are both greater than $v$, or alternatively both less than $v$. Case (b) exactly corresponds to the case where a green path ``meets'' an orange path. In that case, it is easy to see that an $\varepsilon$-KKT point is unavoidable.

\paragraph{\bf The PLS-Labyrinth} In order to resolve the issue with case (b) above, we use the following idea: hide the (unavoidable) $\varepsilon$-KKT point in such a way that locating it requires solving $I^{\rm ITER}$! This is implemented by introducing a gadget, that we call the PLS-Labyrinth, at the point where the green and orange paths meet (within some big square). An important point is that the PLS-Labyrinth only works properly when it is positioned at such a meeting point. If it is positioned elsewhere, then it will either just introduce additional unneeded $\varepsilon$-KKT points, or even introduce $\varepsilon$-KKT points that are easy to locate. Indeed, if we were able to position the PLS-Labyrinth wherever we wanted, this would presumably allow us to show \pls/-hardness, which as we noted earlier we do not expect. In \cref{fig:overview-eol-embedding}, the positions where a PLS-Labyrinth is introduced are shown as grey boxes labelled ``PLS''.

Every PLS-Labyrinth is subdivided into exponentially many \emph{medium squares} such that the medium squares on the diagonal (shaded in \cref{fig:overview-iter-embedding}) correspond to the nodes of $I^{\rm ITER}$. The point where the green and orange paths meet, which lies just outside the PLS-Labyrinth, creates an ``orange-blue path'' which then makes its way to the centre of the medium square for node $1$ of $I^{\rm ITER}$. Similarly, for every node $u$ of $I^{\rm ITER}$ that is a candidate to be a solution (i.e., with $C(u)>u$), there is an orange-blue path starting from the orange path (which runs along the PLS-Labyrinth) and going to the centre of the medium square corresponding to $u$. Sinks of orange-blue paths introduce $\varepsilon$-KKT points, and so for those $u$ that are not solutions of $I^{\rm ITER}$, the orange-blue path of $u$ turns into a ``blue path'' that goes and merges into the orange-blue path of $C(u)$. This ensures that sinks of orange-blue paths (that do not turn into blue paths) exactly correspond to the solutions of $I^{\rm ITER}$. An interesting point to note is that sources of blue paths do \emph{not} introduce $\varepsilon$-KKT points. This allows us to handle crossings between paths in a straightforward way. \cref{fig:overview-iter-embedding} shows an overview of the PLS-Labyrinth that encodes the \iter/ example of \cref{fig:ITER-example}.

\begin{figure}[t]
	\centering
	\scalebox{0.5}{
		\includegraphics{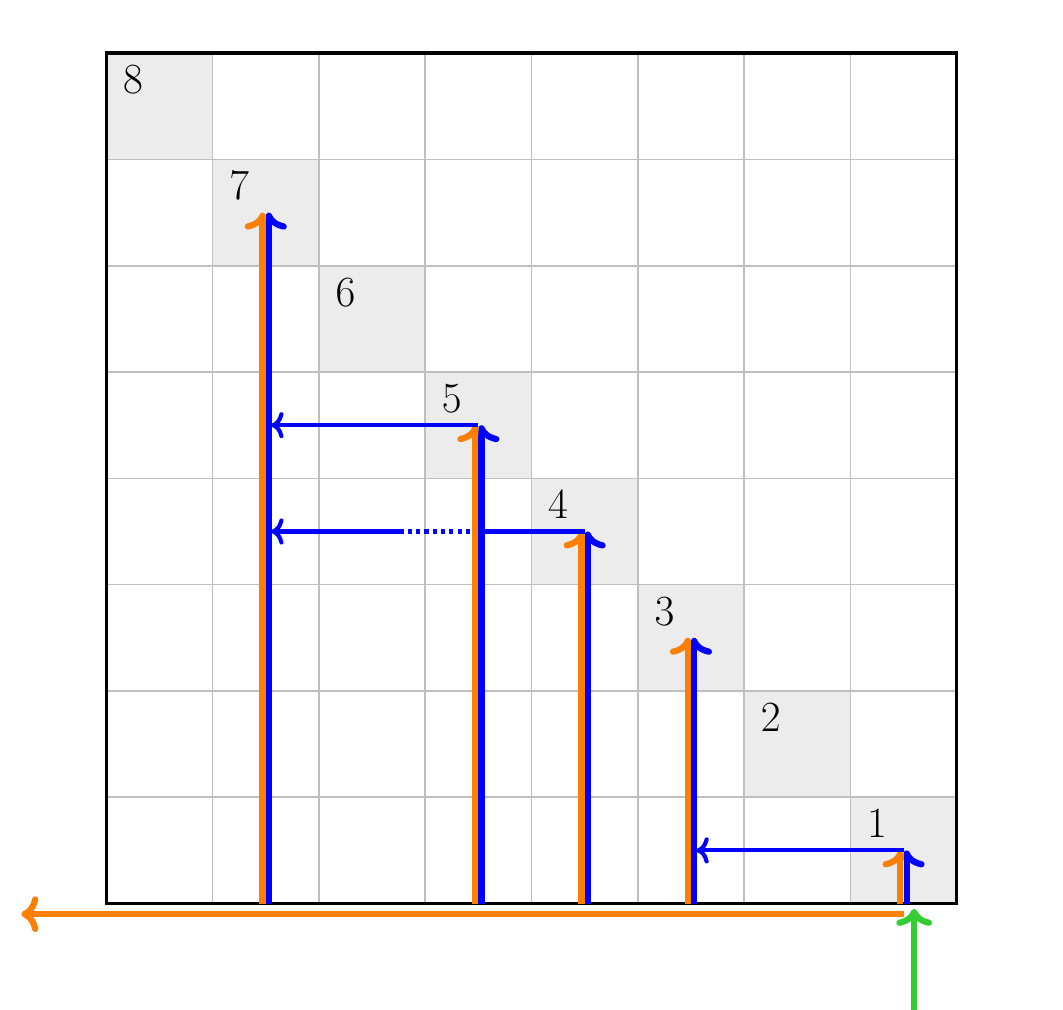}
	}
	\caption{High-level illustration of the \pls/-Labyrinth corresponding to the \iter/ example of \cref{fig:ITER-example}. Shaded squares on the diagonal correspond to the nodes of \iter/. Colours of lines determine how $f$ is constructed at these points. The horizontal blue lines (pointing left) correspond to the 3 edges in \cref{fig:ITER-example} that go out from non-solutions, and we do not use similar lines going out from solutions (nodes 3 and 7).}\label{fig:overview-iter-embedding}
\end{figure}

\paragraph{\bf Bicubic interpolation}
In our construction, we specify how the objective function
$f$ behaves within the ``small squares'' of $[0,1]^2$. At this stage, we have
values of $f$ and $\nabla f$ at the corners of the small squares, and we then
need to smoothly interpolate within the interior of each square. We use 
bicubic interpolation to do this. It constructs a
polynomial over every small square given values for $f$ and $\nabla f$ at the 
square's corners, in such a way that putting all these pieces together yields a continuously differentiable function over the whole domain $[0,1]^2$. We note that the simpler \emph{bilinear} interpolation, which was used by \citet{HubacekY2017-CLS}, yields a continuous function, but not a continuously differentiable function over the whole domain. Since \citet{HubacekY2017-CLS} reduce to continuous local search, it is sufficient in their case to construct two continuous functions $p$ and $g$. However, in our case we reduce to a gradient descent problem, so we have to construct a single continuously differentiable function $f$.

We must prove that using bicubic interpolation does not introduce any $\varepsilon$-KKT points
within any small square, unless that small square
corresponds to a solution of  $I^{\rm ITER}$ or $I^{\rm EOL}$. 
Each individual small square leads to a different class of polynomials, based on
the color-coding of the grid point, and the direction of the gradient at each
grid point. Our construction uses 101 distinct small squares, and we must prove
that no unwanted solutions are introduced in any of them. By making use of various symmetries we are able to group these 101 squares into just four different cases for which we can directly verify that the desired statement holds: an $\varepsilon$-KKT point can only appear in a small square that yields a solution of $I^{\rm ITER}$ or $I^{\rm EOL}$.

\section{Preliminaries}\label{sec:prelims}

Let $n \in \mathbb{N}$ be a positive integer. Throughout this paper we use $\|\cdot\|$ to denote the standard Euclidean norm in $n$-dimensional space, i.e., the $\ell_2$-norm in $\mathbb{R}^n$. The maximum-norm, or $\ell_\infty$-norm, is denoted by $\|\cdot\|_\infty$. For $x,y \in \mathbb{R}^n$, $\langle x,y \rangle := \sum_{i=1}^n x_i y_i$ denotes the inner product. For any non-empty closed convex set $D \subseteq \mathbb{R}^n$, let $\Pi_D: \mathbb{R}^n \to D$ denote the projection onto $D$ with respect to the Euclidean norm. Formally, for any $x \in \mathbb{R}^n$, $\Pi_D(x)$ is the unique point $y \in D$ that minimizes $\|x-y\|$. For $k \in \mathbb{N}$, let $[k] := \{1,2,\dots,n\}$.

\subsection{Computational model, classes and arithmetic circuits}\label{sec:model-classes-circuits}

We work in the standard Turing machine model. Rational numbers are represented as irreducible fractions, with the numerator and denominator of the irreducible fraction given in binary. Note that given any fraction, it can be made irreducible in polynomial time using the Euclidean algorithm. For a rational number $x$, we let $\sz(x)$ denote the number of bits needed to represent $x$, i.e., the number of bits needed to write down the numerator and denominator (in binary) of the irreducible fraction for $x$.

\subsubsection{NP total search problems and reductions}\label{sec:def-tfnp}

\paragraph{\bf Search Problems} Let $\{0,1\}^*$ denote the set of all finite length bit-strings and let $|x|$ be the length of $x \in \{0,1\}^*$. A computational search problem is given by a relation $R \subseteq \{0,1\}^* \times \{0,1\}^*$, interpreted as the following problem: given an instance $x \in \{0,1\}^*$, find $w \in \{0,1\}^*$ such that $(x,w) \in R$, or return that no such $w$ exists.

The search problem $R$ is in FNP (\emph{search problems in NP}), if $R$ is polynomial-time computable (i.e., $(x,w) \in R$ can be decided in polynomial time in $|x|+|w|$) and polynomially-balanced (i.e., there exists some polynomial $p$ such that $(x,w) \in R \implies |w| \leq p(|x|)$). Intuitively, FNP contains all search problems where all solutions have size polynomial in the size of the instance and any solution can be checked in polynomial time. (The solution $w$ is thought-of as a witness.) The class of all search problems in FNP that can be solved by a polynomial-time algorithm is denoted by FP. The question FP vs.\ FNP is equivalent to the P vs.\ NP question.

The class TFNP (\emph{total search problems in NP}) is defined as the set of all FNP problems~$R$ that are \emph{total}, i.e., every instance has at least one solution. Formally, $R$ is total, if for every $x \in \{0,1\}^*$ there exists $w \in \{0,1\}^*$ such that $(x,w) \in R$. In a certain sense,\footnote{Clearly, TFNP $\subseteq$ FNP, but with a slight abuse of notation we can also say that FP $\subseteq$ TFNP. Indeed, any problem $R$ in FP can be turned into a TFNP problem by including a pair $(x,\textup{NO})$ in $R$ for each instance $x$ that does not have a solution, where $\textup{NO}$ is some dedicated bit-string. Note that despite this minor modification, the search problem remains essentially the same.} TFNP lies between FP and FNP.

Note that the totality of TFNP problems should not rely on any promise. Instead, there is a \emph{syntactic} guarantee of totality: for any instance, there is a solution. It is easy to see that a TFNP problem cannot be NP-hard, unless NP = co-NP. Furthermore, it is also believed that no TFNP-complete problem exists. For more details on this, see \citep{MegiddoP1991-tfnp}.

\paragraph{\bf Reductions between TFNP problems} Let $R$ and $S$ be two TFNP problems. We say that $R$ reduces to $S$ if there exist polynomial-time computable functions $f : \{0,1\}^* \to \{0,1\}^*$ and $g : \{0,1\}^* \times \{0,1\}^* \to \{0,1\}^*$ such that, for all $x,w \in \{0,1\}^*$,
$$(f(x), w) \in S \implies (x, g(x,w)) \in R$$
Intuitively, this says that for any instance $x$ of $R$, if we can find a solution $w$ to instance $f(x)$ of $S$, then $g(x,w)$ gives us a solution to instance $x$ of $R$. In particular, note that if $S$ is polynomial-time solvable, then so is $R$.

\subsubsection{The classes \ppad/, \pls/ and \ppadpls/}

Since TFNP problems likely cannot be NP-hard, or TFNP-complete, one instead attempts to classify the problems inside TFNP. Various subclasses of TFNP have been defined and natural problems have been proved complete for these subclasses. In this section we formally define the subclasses \ppad/ and \pls/, which have both been very successful in capturing the complexity of interesting problems.

The most convenient way to define these classes is using problems on Boolean circuits. A Boolean circuit $C: \{0,1\}^n \to \{0,1\}^n$ with $n$ inputs and $n$ outputs, is allowed to use the logic gates $\land$ (AND), $\lor$ (OR) and $\lnot$ (NOT), where the $\land$ and $\lor$ gates have fan-in 2, and the $\lnot$ gate has fan-in 1. For ease of notation, we identify $\{0,1\}^n$ with $[2^n]$.

\paragraph{\bf \ppad/} The class \ppad/ is defined as the set of all TFNP problems that reduce to the problem \eol/ \citep{Pap94,DGP09}.

\begin{tcolorbox}
	\begin{definition}\label{def:eol}
		\eol/:
		
		\noindent\textbf{Input}: Boolean circuits $S,P : [2^n] \to [2^n]$ with $P(1) = 1 \neq S(1)$.
		
		\noindent\textbf{Goal}: Find $v \in [2^n]$ such that $P(S(v)) \neq v$ or $S(P(v)) \neq v \neq 1$.
	\end{definition}
\end{tcolorbox}

The successor circuit $S$ and the predecessor circuit $P$ implicitly define a directed graph on the vertex set $[2^n]$. There is an edge from $v_1$ to $v_2$ if $S(v_1)=v_2$ and $P(v_2)=v_1$. Every vertex has at most one outgoing edge and at most one incoming edge. Since the vertex $1$ has one outgoing edge and no incoming edge, it is a source. The goal is to find another end of line, i.e., another source, or a sink of the graph. Note that such a vertex is guaranteed to exist. The condition $P(1) = 1 \neq S(1)$ can be enforced syntactically, so this is indeed a TFNP problem and not a promise problem. See \cref{fig:EOL-example} for an example of an \eol/ instance.

\paragraph{\bf \pls/} The class \pls/ is defined as the set of all TFNP problems that reduce to the problem \textsc{Localopt} \citep{JPY88,DaskalakisP2011-CLS}.

\begin{tcolorbox}
	\begin{definition}
		\textsc{Localopt}:
		
		\noindent\textbf{Input}: Boolean circuits $S,V : [2^n] \to [2^n]$.
		
		\noindent\textbf{Goal}: Find $v \in [2^n]$ such that $V(S(v)) \geq V(v)$.
	\end{definition}
\end{tcolorbox}

This problem embodies local search over the node set $[2^n]$. The output of the circuit $V$ represents a value and ideally we would like to find a node $v \in [2^n]$ that minimises $V(v)$. The circuit $S$ helps us in this task by proposing a possibly improving node $S(v)$ for any $v$. We stop our search, when we find a $v$ such that $V(S(v)) \geq V(v)$, i.e., $S$ no longer helps us decrease the value of $V$. This is local search, because the circuit $S$ represents the search for an improving node in some small (polynomial-size) neighbourhood.

In this paper, we also make use of the following \pls/-complete problem \citep{Morioka01-Mthesis-PLS}.

\begin{tcolorbox}
	\begin{definition}\label{def:iter}
		\iter/:
		
		\noindent\textbf{Input}: Boolean circuit $C : [2^n] \to [2^n]$ with $C(1) > 1$.
		
		\noindent\textbf{Goal}: Find $v$ such that either
		\begin{itemize}
			\item $C(v) < v$, or
			\item $C(v) > v$ and $C(C(v)) = C(v)$.
		\end{itemize}
	\end{definition}
\end{tcolorbox}

In this problem, it is convenient to think of the nodes in $[2^n]$ as lying on a line from left to right. Then, we are looking for any node $v$ that is mapped to the left by $C$, or any node $v$ that is mapped to the right and such that $C(v)$ is a fixed point of $C$. Since $C(1) > 1$, i.e., node $1$ is mapped to the right, it is easy to see that such a solution must exist (apply $C$ repeatedly on node $1$). Note that the condition $C(1) > 1$ can be enforced syntactically, so this is indeed a TFNP problem and not a promise problem. See \cref{fig:ITER-example} for an example of an \iter/ instance.

\paragraph{\bf $\boldsymbol{\ppadpls/}$} The class \ppadpls/ is the set of all TFNP problems that lie in both \ppad/ and in \pls/. A problem in \ppadpls/ cannot be \ppad/- or \pls/-complete, unless \ppad/ $\subseteq$ \pls/ or \pls/ $\subseteq$ \ppad/. Neither of these two containments is believed to hold, and this is supported by oracle separations between the classes \citep{Morioka01-Mthesis-PLS,BureshM04-NP-search-problems,buss2012propositional}. It is easy to construct ``artificial'' \ppadpls/-complete problems from \ppad/- and \pls/-complete problems.

\begin{proposition}[{\citet{DaskalakisP2011-CLS}}]\label{prop:either}
	For any \tfnp/ problems $A$ and $B$, let \textup{\textsc{Either-Solution($A$,$B$)}} denote the problem: given an instance $I_A$ of $A$ and an instance $I_B$ of $B$, find a solution of $I_A$ or a solution of $I_B$. If $A$ is \ppad/-complete and $B$ is \pls/-complete, then \textup{\textsc{Either-Solution($A$,$B$)}} is \ppadpls/-complete.
\end{proposition}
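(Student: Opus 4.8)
The plan is to prove \cref{prop:either}, namely that \textsc{Either-Solution($A$,$B$)} is \ppadpls/-complete whenever $A$ is \ppad/-complete and $B$ is \pls/-complete. The argument splits into two parts: showing membership in \ppadpls/ $=$ \ppad/ $\cap$ \pls/, and showing \ppadpls/-hardness.

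\medskip

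\noindent\textbf{Membership.} I would first argue that \textsc{Either-Solution($A$,$B$)} lies in \ppad/. Since $A \in$ \ppad/, there is a polynomial-time reduction from $A$ to \eol/; given an instance $(I_A, I_B)$ of \textsc{Either-Solution($A$,$B$)}, simply run this reduction on $I_A$, ignoring $I_B$ entirely. Any solution to the resulting \eol/ instance maps back (via the reduction's $g$ function) to a solution of $I_A$, which is in particular a solution of \textsc{Either-Solution($A$,$B$)}. Hence \textsc{Either-Solution($A$,$B$)} $\in$ \ppad/. By the symmetric argument using $B \in$ \pls/ and the reduction from $B$ to \textsc{Localopt} (again discarding $I_A$), we get \textsc{Either-Solution($A$,$B$)} $\in$ \pls/. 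Therefore it lies in \ppad/ $\cap$ \pls/ $=$ \ppadpls/. One subtlety to check: the reduction framework defined in the preliminaries requires the target of a reduction to be a specific complete problem, but membership in a class only requires reducibility to that problem; also one should make sure the trivial "pad and discard" map is polynomial-time computable and that the $g$ function composes correctly, which is routine.

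\medskip

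\noindent\textbf{Hardness.} For \ppadpls/-hardness I need to show that every problem in \ppadpls/ reduces to \textsc{Either-Solution($A$,$B$)}. Let $R \in$ \ppadpls/, so $R \in$ \ppad/ and $R \in$ \pls/. Then $R$ reduces to \eol/ \emph{and} $R$ reduces to \textsc{Localopt}. Since $A$ is \ppad/-complete, \eol/ reduces to $A$; composing, $R$ reduces to $A$, say via maps $(f_A, g_A)$. Similarly, since $B$ is \pls/-complete, \textsc{Localopt} reduces to $B$, and composing, $R$ reduces to $B$ via maps $(f_B, g_B)$. Now map an instance $x$ of $R$ to the \textsc{Either-Solution($A$,$B$)} instance $(f_A(x), f_B(x))$. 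Given a solution $y$ to this instance, $y$ is either a solution of $f_A(x)$ — in which case $g_A(x,y)$ solves $x$ — or a solution of $f_B(x)$ — in which case $g_B(x,y)$ solves $x$. Since we can check in polynomial time which of the two cases holds (solution-checking is polynomial-time for TFNP problems $A$ and $B$), the combined solution-recovery map is polynomial-time computable. This gives a valid reduction, so \textsc{Either-Solution($A$,$B$)} is \ppadpls/-hard.

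\medskip

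\noindent\textbf{Main obstacle.} There is no deep obstacle here; the proposition is essentially a formal bookkeeping exercise about composition of reductions. The one point requiring a little care is the solution-recovery step in the hardness direction: we must be able to tell, given $y$, whether it is a solution to the $A$-instance or the $B$-instance, so that we know which of $g_A$ or $g_B$ to apply. This is handled by the fact that $A$ and $B$ are in FNP, so their relations are polynomial-time checkable; if $y$ happens to solve both, either recovery works. A second minor point is to confirm that the padding instance in the membership direction is well-formed (e.g., that discarding $I_B$ yields a syntactically valid instance of the complete problem and that totality is preserved), which follows immediately from the definitions.
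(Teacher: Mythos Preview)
Your proof is correct and is exactly the standard argument for this fact. Note, however, that the paper does not actually give its own proof of this proposition: it is stated with attribution to \citet{DaskalakisP2011-CLS} and used as a black box, so there is no in-paper proof to compare against.
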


\noindent As a result, we obtain the following corollary, which we will use to show our main \ppadpls/-hardness result.

\begin{corollary}\label{cor:either}
	\textup{\textsc{Either-Solution(\eol/,\iter/)}} is \ppadpls/-complete.
\end{corollary}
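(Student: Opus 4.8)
The statement to prove is \Cref{cor:either}, namely that \textsc{Either-Solution(\eol/,Iter)} is \ppadpls/-complete. This is an immediate consequence of the two facts stated just before it: \Cref{prop:either}, which says that if $A$ is \ppad/-complete and $B$ is \pls/-complete, then \textsc{Either-Solution($A$,$B$)} is \ppadpls/-complete; and the known completeness results for the two ingredient problems.

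\medskip

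\noindent\textbf{Proof proposal.} The plan is simply to instantiate \Cref{prop:either} with $A = \eol/$ and $B = \textsc{Iter}$. For this I need two inputs to that proposition. First, \eol/ is \ppad/-complete: it lies in \ppad/ trivially (since \ppad/ is \emph{defined} as the class of TFNP problems reducing to \eol/, and \eol/ reduces to itself via the identity reduction), and it is \ppad/-hard by definition of the class as well — every problem in \ppad/ reduces to \eol/. One should double check that \eol/ as given in \Cref{def:eol} is genuinely a TFNP problem, which is noted in the surrounding text: the syntactic condition $P(1)=1\neq S(1)$ is enforceable without a promise, and a solution always exists (following the line out of the source $1$ must reach either a sink or another source). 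Second, \textsc{Iter} is \pls/-complete: this is cited in the excerpt (the problem is due to \citet{Morioka2001}), and again \Cref{def:iter} is a genuine TFNP problem since $C(1)>1$ is syntactically enforceable and iterating $C$ from node $1$ must reach a fixed point, whose predecessor along that iteration is a solution.

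\medskip

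\noindent With these two ingredients in hand, \Cref{prop:either} applies verbatim and yields that \textsc{Either-Solution(\eol/,Iter)} is \ppadpls/-complete, which is the claim. There is essentially no obstacle here: the corollary is a bookkeeping step that packages \Cref{prop:either} together with the two external completeness results into the precise statement that will be used as the starting point of the main reduction in \Cref{thm:main-kkt-hard}. The only things worth being careful about are (i) confirming that both \eol/ and \textsc{Iter} as formalized in Definitions~\ref{def:eol} and~\ref{def:iter} are honest total search problems in NP (no promise), so that \textsc{Either-Solution} of them is also in TFNP, and (ii) noting that membership of \textsc{Either-Solution(\eol/,Iter)} in \ppadpls/ requires it to be in both \ppad/ and \pls/ — which follows because, given an instance $(I_A,I_B)$, a solution to $I_A$ alone already solves it, giving a reduction to \eol/ (so membership in \ppad/), and symmetrically a solution to $I_B$ alone solves it, giving a reduction to \textsc{Localopt} (so membership in \pls/); the hardness direction is exactly the content of \Cref{prop:either}. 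Hence the proof is a one-line invocation, and I would present it as such.
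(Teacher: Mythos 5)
Your proposal is correct and matches the paper's treatment exactly: the corollary is obtained by instantiating \cref{prop:either} with $A=\eol/$ (which is \ppad/-complete by the definition of \ppad/) and $B=\textsc{Iter}$ (which is \pls/-complete by \citet{Morioka2001}). Your additional checks that both problems are honest TFNP problems and that membership in \ppadpls/ follows from the two projection reductions are sound but not needed beyond what the cited proposition already provides.
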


\noindent Prior to our work, the problems \textsc{Either-Solution($A$,$B$)}, where $A$ is \ppad/-complete and $B$ is \pls/-complete, were the only known \ppadpls/-complete problems.

\subsubsection{Arithmetic circuits and the class \cls/}\label{sec:arithmetic-cls}

Noting that \ppadpls/ only seemed to have artificial complete problems, \citet{DaskalakisP2011-CLS} defined a subclass of \ppadpls/ with a more natural definition, that combines \ppad/ and \pls/ nicely in a single problem. Unlike \ppad/ and \pls/, \cls/ is defined using arithmetic circuits.

\paragraph{\bf Arithmetic circuits} An arithmetic circuit representing a function $f: \mathbb{R}^n \to \mathbb{R}^m$ is a circuit with $n$ inputs and $m$ outputs, and every internal node is a gate with fan-in 2 performing an operation in $\{+, -, \times, \max, \min, >\}$ or a rational constant (modelled as a gate with fan-in 0). The comparison gate $>$, on input $a,b \in \mathbb{R}$, outputs $1$ if $a > b$, and $0$ otherwise. For an arithmetic circuit $f$, we let $\sz(f)$ denote the size of the circuit, i.e., the number of bits needed to describe the circuit, including the rational constants used therein. Obviously, there are various different ways of defining arithmetic circuits, depending on which gates we allow. The definition we use here is the same as the one used by \citet{DaskalakisP2011-CLS} in their original definition of \cls/. Many variants are equivalent to this definition, for example $<$, $\leq$, and  $\geq$ can be implemented using the given operations.

These circuits are very natural, but they suffer from a subtle issue that seems to have been overlooked in some prior works. Using the multiplication gate, such an arithmetic circuit can perform repeated squaring to construct numbers that have exponential representation size with respect to the size of the circuit and the input to the circuit. In other words, the circuit can construct numbers that are \emph{doubly} exponential (or the inverse thereof). Thus, in some cases, it might not be possible to evaluate the circuit on some input efficiently, i.e., in time polynomial in the size of the circuit and the given input. Indeed, evaluating the circuit is at least as hard as solving the PosSLP problem (of checking whether the output of a given straight-line program is positive) introduced by \citet{AllenderBKM09}, which is not believed to be polynomial-time solvable.

This subtle issue was recently also noticed by Daskalakis and Papadimitriou, who proposed a way to fix it in a corrigendum\footnote{\url{http://people.csail.mit.edu/costis/CLS-corrigendum.pdf}} to the definition of \cls/. Their modification consists in having an additional input $K$ (in unary) provided as part of the input such that the evaluation of the arithmetic circuit---purportedly---only involves numbers of bit-size at most $K \cdot \sz(x)$ on input $x$. Any point $x$ where the arithmetic circuit fails to satisfy this property is accepted as a solution.

In this paper, we use an alternative way to resolve the issue. We restrict our attention to what we call \emph{well-behaved} arithmetic circuits. An arithmetic circuit $f$ is well-behaved if, on any directed path that leads to an output, there are at most $\log (\sz(f))$ \emph{true} multiplication gates. A true multiplication gate is one where both inputs are non-constant nodes of the circuit. In particular, note that we allow our circuits to perform multiplication by a constant as often as needed without any restriction. Indeed, these operations cannot be used to do repeated squaring.

It is easy to see that given an arithmetic circuit $f$, we can check in polynomial time whether $f$ is well-behaved. Furthermore, these circuits can always be efficiently evaluated.

\begin{lemma}\label{lem:well-behaved-efficient}
	Let $f$ be a well-behaved arithmetic circuit with $n$ inputs. Then, for any rational $x \in \mathbb{R}^n$, $f(x)$ can be computed in time $\textup{poly}(\sz(f),\sz(x))$.
\end{lemma}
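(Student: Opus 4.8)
The plan is to evaluate the circuit gate by gate in topological order, and to bound the bit-size of the rational value carried on every wire by a polynomial in $\sz(f)$ and $\sz(x)$. Once such a bound is in place, correctness and efficiency are immediate: there are at most $\sz(f)$ gates, each gate operation ($+,-,\times,\max,\min,>$) on two rationals of size $\leq s$ produces a rational of size $O(s)$ and runs in time $\textup{poly}(s)$ (using the Euclidean algorithm to keep fractions irreducible), so the whole evaluation takes time $\textup{poly}(\sz(f),\sz(x))$. Thus the entire content of the lemma is the size bound on the intermediate values, and that is where the well-behavedness hypothesis is used.

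The key step is to track, for each wire $w$, an upper bound $b(w)$ on $\sz(\text{value on }w)$ as a function of two quantities: the ``additive depth'' (number of $+,-$ gates on the longest path from an input or constant to $w$) and the number of true multiplication gates on such a path. First I would observe that $+$, $-$, $\max$, $\min$ and $>$ gates at most roughly \emph{add} the sizes of their two inputs (for $+$ and $-$ on fractions $p_1/q_1$ and $p_2/q_2$ one gets $(p_1 q_2 \pm p_2 q_1)/(q_1 q_2)$, of size $O(\sz(p_1/q_1) + \sz(p_2/q_2))$ before reduction, which only shrinks it; $\max,\min,>$ just select or output a constant). A multiplication by a \emph{constant} $c$ likewise only adds $\sz(c) \le \sz(f)$ to the size. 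Hence along any path that contains no true multiplication gate, the size grows at most additively, and since there are at most $\sz(f)$ gates total and each input/constant has size $\le \max(\sz(x),\sz(f))$, any wire reachable without crossing a true multiplication gate has value of size $\textup{poly}(\sz(f),\sz(x))$. A true multiplication gate, however, can roughly \emph{add} the sizes of its two (non-constant) inputs — which is fine — but iterating this is what causes the doubly-exponential blow-up in general. The well-behavedness condition caps the number of true multiplication gates on any input-to-output path at $\log(\sz(f))$, so I would prove by induction on this count $k$ that every wire at ``multiplication-level'' $k$ carries a value of size at most $S_k$, where $S_0 = \textup{poly}(\sz(f),\sz(x))$ as above and $S_{k+1} \le 2 S_k + O(\sz(f))$ (the worst case being a true multiplication of two wires at level $\le k$, followed by a chain of additions and constant multiplications, of which there are at most $\sz(f)$). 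This recurrence gives $S_k = O(2^k \cdot \textup{poly}(\sz(f),\sz(x)))$, and since $k \le \log(\sz(f))$, we have $2^k \le \sz(f)$, so $S_k = \textup{poly}(\sz(f),\sz(x))$ for every wire, in particular for the output wires.

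With the size bound established, I would conclude by the straightforward topological-order evaluation argument sketched above: maintain for each gate its (irreducible) rational value, computed from those of its predecessors in time polynomial in the now-bounded sizes, over at most $\sz(f)$ gates.

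The main obstacle is getting the bookkeeping of the induction exactly right — in particular, being careful that ``additive depth'' between consecutive true multiplication gates can itself be as large as $\sz(f)$, so each level of the recurrence contributes not just a constant but an additive $O(\sz(f))$ term, and that constant-multiplication gates (which are \emph{not} counted as true multiplications) must be shown to contribute only additively to the size rather than multiplicatively. Once one is convinced that true multiplications are the only operation that can double the size and that there are at most $\log(\sz(f))$ of them on any path, the bound $2^{\log \sz(f)} = \sz(f)$ does the rest.
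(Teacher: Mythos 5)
Your high-level strategy — topological evaluation after bounding the bit-size of every intermediate value via the cap of $\log(\sz(f))$ true multiplications per path — is the same as the paper's, and your conclusion $2^{\log \sz(f)} = \sz(f)$ is the right endgame. However, the central recurrence in your proposal has a real gap: the step $S_{k+1} \le 2 S_k + O(\sz(f))$ is not justified by your own accounting.

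You state that $+,-,\max,\min,>$ and $\times\zeta$ ``roughly add the sizes of their two inputs,'' and then invoke this to conclude that along any chain of at most $\sz(f)$ such gates the size grows by only an additive $O(\sz(f))$ term. These are inconsistent: if two wires each of size $s$ feed an addition gate, your own rule gives output size $\approx 2s$, i.e.\ the size can \emph{double} at every addition gate (after all, $p_1/q_1 + p_2/q_2 = (p_1 q_2 \pm p_2 q_1)/(q_1 q_2)$ roughly doubles the denominator size). A chain of $\sz(f)$ additions would then yield a factor $2^{\sz(f)}$, not $O(\sz(f))$. The remark ``true multiplications are the only operation that can double the size'' is therefore false as stated, and the whole recurrence collapses. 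The fact that denominators also grow under addition in the worst case is precisely the subtlety that needs to be handled, and ``reduction only shrinks it'' does not supply a quantitative bound.

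The paper repairs exactly this hole by tracking \emph{two} quantities separately rather than a single ``size.'' First, the magnitude $|v(g)|$ at most doubles per additive gate (adding only 1 bit), and squares per true multiplication, which gives $|v(g)| \le 2^{d(g)\cdot 2^{md(g)}(\sz(x)+\sz(f))}$. Second, and this is the key observation your proposal lacks, the denominator $den(g)$ is shown to divide $M^{d(g)\cdot 2^{md(g)}}$ where $M$ is the product of all input and constant denominators: the crucial point is that if both $v(g_1)$ and $v(g_2)$ can be written with a common denominator $Q$, then so can $v(g_1)\pm v(g_2)$, so \emph{addition does not grow the common-denominator bound at all}. Only $\times\zeta$ (multiply $Q$ by a factor of $M$) and true $\times$ (square $Q$) increase it. This two-pronged bookkeeping gives the polynomial bound; a direct ``size grows additively per gate'' argument over a shared DAG does not, because unfolding a shared subcircuit of depth $\sz(f)$ into a tree yields $2^{\sz(f)}$ leaves. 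To make your proof work you would need to replace the single $S_k$ recurrence with separate magnitude and common-denominator invariants as the paper does.
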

\noindent We provide a proof of this Lemma in \cref{app:more-circuits}.

Using well-behaved arithmetic circuits, instead of the solution proposed by Daskalakis and Papadimitriou, has the advantage that we do not need to add any additional inputs, or any additional violation solutions to our problems. Indeed, the restriction to well-behaved circuits can be enforced syntactically. Furthermore, we note that our problems defined with well-behaved circuits easily reduce to the versions using the solution proposed by Daskalakis and Papadimitriou (see \cref{rem:well-behaved-corrigendum} below). Thus, this restriction only makes our hardness results stronger. In fact, for \cls/ we show that restricting the circuits even further to only use gates $\{+,-,\max,\min, \times \zeta\}$ and rational constants (where $\times \zeta$ is multiplication by a constant), so-called linear arithmetic circuits (representing piecewise linear functions), does not make the class any weaker (see \cref{sec:linear-cls}).

For the problems we consider, it is quite convenient to use arithmetic circuits instead of, say, polynomial-time Turing machines to represent the functions involved. Indeed, the problems could also be defined with polynomial-time Turing machines, but that would introduce some technical subtleties in the definitions (the polynomial used as an upper bound on the running time of the machines would have to be fixed). The important thing to note is that the Turing machine variants of the problems would continue to lie in \ppadpls/. Thus, using arithmetic circuits just makes our hardness results stronger. Note also that in the hard instances we construct, the arithmetic circuits only perform a constant number of true multiplications (see the proof of \cref{lem:kkt-function-properties}).

\begin{remark}\label{rem:well-behaved-corrigendum}
	The proof of \cref{lem:well-behaved-efficient} (in \cref{app:more-circuits}) shows that if we evaluate a well-behaved arithmetic circuit $f$ on some input $x$, then, the value $v(g)$ at any gate $g$ of the circuit will satisfy $\sz(v(g)) \leq 6 \cdot \sz(f)^3 \cdot \sz(x)$. As a result, it immediately follows that problems with well-behaved arithmetic circuits can be reduced to the versions of the problems with the modification proposed by Daskalakis and Papadimitriou in the corrigendum of the \cls/ paper. Indeed, it suffices to let $K = 6 \cdot \sz(f)^3$, which can be written down in unary. In particular, this holds for the definition of \cls/.
\end{remark}

\begin{remark}
	Our definition of well-behaved circuits is robust in the following sense. For any $k \in \mathbb{N}$, say that a circuit $f$ is $k$-well-behaved if, on any path that leads to an output, there are at most $k \cdot \log (\sz(f))$ true multiplication gates. In particular, a circuit is well-behaved if it is 1-well-behaved. It is easy to see that for any fixed $k \in \mathbb{N}$, if we are given a circuit $f$ that is $k$-well-behaved, we can construct in time $\textup{poly}(\sz(f))$ a circuit $f'$ that is well-behaved and computes the same function as $f$. This can be achieved by adding $(\sz(f))^k$ dummy gates to the circuit $f$, i.e., gates that do not alter the output of the circuit. For example, we can add gates that repeatedly add $0$ to the output of the circuit.
\end{remark}

\paragraph{\bf Lipschitz-continuity} Note that even well-behaved arithmetic circuits might not yield continuous functions, because of the comparison gate. Some of our problems require continuity of the function, and the most convenient type of continuity for computational purposes is Lipschitz-continuity. A function $f: \mathbb{R}^n \to \mathbb{R}^m$ is Lipschitz-continuous on the domain $D \subseteq \mathbb{R}^n$ with Lipschitz-constant $L$, if for all $x,y \in D$
$$\|f(x) - f(y)\| \leq L \cdot \|x-y\|.$$

\paragraph{\bf Violations and promise-preserving reductions} There is no known way of syntactically enforcing that an arithmetic circuit be Lipschitz-continuous. Thus, to ensure that our problems indeed lie in TFNP, we allow any well-behaved circuit in the input, together with a purported Lipschitz-constant $L$, and also accept a pair $(x,y)$ witnessing a violation of $L$-Lipschitz-continuity as a solution. This ``trick'' was also used by \citet{DaskalakisP2011-CLS} for the definition of \cls/.

One might wonder whether defining a problem in this way, with violations, makes it harder than the (more natural) promise version, where we only consider inputs that satisfy the promise (namely, $L$-Lipschitz-continuity). We show that for our problems, the promise versions are just as hard. Indeed, the hard instances we construct for the \kkt/ problem satisfy the promises and we then obtain this for the other problems ``for free'', because all of our reductions are \emph{promise-preserving}, as defined in \citep[Definition 7]{FGMS2020-UEOPL}. A reduction $(f,g)$ from problem $R$ to problem $S$ is promise-preserving, if for any instance $x$ of $R$, for any violation solution $y$ of instance $f(x)$ of $S$, it holds that $g(x,y)$ is a violation solution of instance $x$ of $R$. Informally: any violation solution of $S$ is mapped back to a violation solution of $R$.

\paragraph{\bf \cls/} The class \cls/ is defined as the set of all TFNP problems that reduce to 3D-\clo/.

\begin{tcolorbox}[breakable,enhanced]
	\begin{definition}\label{def:clo}
		\clo/:
		
		\noindent\textbf{Input}:
		\begin{itemize}
			\item precision/stopping parameter $\varepsilon > 0$,
			\item well-behaved arithmetic circuits $p: [0,1]^n \to [0,1]$ and $g: [0,1]^n \to [0,1]^n$,
			\item Lipschitz constant $L > 0$.
		\end{itemize}
		
		\noindent\textbf{Goal}: Compute an approximate local optimum of $p$ with respect to $g$. Formally, find $x \in [0,1]^n$ such that
		$$p(g(x)) \geq p(x) - \varepsilon.$$
		Alternatively, we also accept one of the following violations as a solution:
		\begin{itemize}
			\item ($p$ is not $L$-Lipschitz) $x,y \in [0,1]^n$ such that $|p(x) - p(y)| > L \|x-y\|$,
			\item ($g$ is not $L$-Lipschitz) $x,y \in [0,1]^n$ such that $\|g(x) - g(y)\| > L \|x-y\|$.
		\end{itemize}
	\end{definition}
\end{tcolorbox}

For $k \in \mathbb{N}$, we let $k$D-\clo/ denote the problem \clo/ where $n$ is fixed to be equal to $k$.

\clo/ is similar to \textsc{Localopt}, in the sense that we are looking for a minimum of $p$ over the domain $[0,1]^n$ using the help of a function $g$. The membership of the problem in \pls/ and in \ppad/ is easy to show \citep{DaskalakisP2011-CLS}. The membership in \ppad/ follows from the observation that $g$ is a Brouwer function and that every (approximate) fixed point of $g$ also yields a solution to the \clo/ instance.

Note that the original definition of \clo/ in \citep{DaskalakisP2011-CLS} uses arithmetic circuits without the ``well-behaved'' restriction. As argued above, these circuits cannot always be evaluated efficiently, and so we instead use well-behaved arithmetic circuits, to ensure that the problem lies in \tfnp/. The interesting problems shown to lie in \cls/ by \citet{DaskalakisP2011-CLS} still reduce to \clo/ even with this restriction on the circuits. It also turns out that this restriction does not make the class any weaker, since we show that 2D-\clo/ with well-behaved arithmetic circuits is \ppadpls/-hard.

For some applications it is convenient to allow more general domains than just $[0,1]^n$ and so we also define a more general version of \clo/.

\begin{tcolorbox}[breakable,enhanced]
	\begin{definition}\label{def:gclo}
		\gclo/:
		
		\noindent\textbf{Input}:
		\begin{itemize}
			\item precision/stopping parameter $\varepsilon > 0$,
			\item $(A,b) \in \mathbb{R}^{m \times n} \times \mathbb{R}^m$ defining a bounded non-empty domain $D = \{x \in \mathbb{R}^n: Ax \leq b\}$,
			\item well-behaved arithmetic circuits $p: \mathbb{R}^n \to \mathbb{R}$ and $g: \mathbb{R}^n \to \mathbb{R}^n$,
			\item Lipschitz constant $L > 0$.
		\end{itemize}
		
		\noindent\textbf{Goal}: Compute an approximate local optimum of $p$ with respect to $g$ on domain $D$. Formally, find $x \in D$ such that
		$$p\bigl(\Pi_D(g(x))\bigr) \geq p(x) - \varepsilon.$$
		Alternatively, we also accept one of the following violations as a solution:
		\begin{itemize}
			\item ($p$ is not $L$-Lipschitz) $x,y \in D$ such that $|p(x) - p(y)| > L \|x-y\|$,
			\item ($g$ is not $L$-Lipschitz) $x,y \in D$ such that $\|g(x) - g(y)\| > L \|x-y\|$.
		\end{itemize}
	\end{definition}
\end{tcolorbox}

\noindent Note that given $(A,b) \in \mathbb{R}^{m \times n} \times \mathbb{R}^m$, it is easy to check whether the domain $D = \{x \in \mathbb{R}^n: Ax \leq b\}$ is bounded and non-empty by using linear programming.

We use the projection $\Pi_D$ in this definition, because it is not clear whether there is some syntactic way of ensuring that $g(x) \in D$. Namely, it is unclear whether $\Pi_D$ can be computed inside our arithmetic circuits. However, $\Pi_D$ can be computed efficiently by a Turing machine, since it can be formulated as a convex quadratic program, known to be solvable in polynomial time \citep{KozlovTK80-convex-quadratic}. To be more precise, given a rational vector $x \in \mathbb{R}^n$, $\Pi_D(x)$ can be computed exactly in time $\textup{poly}(\sz(x), \sz(A), \sz(b))$. Note that when $D=[0,1]^n$, the projection $\Pi_D$ can easily be computed by arithmetic circuits, so $\Pi_D$ is not needed in the definition of \clo/. Indeed, when $D=[0,1]^n$, we have $[\Pi_D(x)]_i = \min \{1, \max\{0,x_i\}\}$ for all $i \in [n]$ and $x \in \mathbb{R}^n$.

The definition of \cls/ using 3D-\clo/, instead of 2D-\clo/, \clo/, or \gclo/, leaves open various questions about whether all these different ways of defining it are equivalent. We prove that this is indeed the case. We discuss this, as well as the robustness of the definition of \cls/ with respect to other modifications in \cref{sec:cls}.

\subsection{Computational problems from nonlinear optimization}\label{sec:nonlinearopt-problems}

In this section we formally define our three problems of interest. We begin by a brief introduction to nonlinear optimization.

\subsubsection{Background on nonlinear optimization}\label{sec:nonlinearopt}

The standard problem of nonlinear optimization (also called nonlinear programming) can be formulated as follows:
\begin{equation}\label{eq:opt-problem}
\begin{tabular}{lll}
& $\min\limits_{x \in \mathbb{R}^n} f(x)$ &\\
& &\\
s.t. & $g_i(x) \leq 0$ & $\forall i \in [m]$\\
\end{tabular}
\end{equation}
where $f: \mathbb{R}^n \to \mathbb{R}$ is the objective function to be minimised, and $g_1, \dots, g_m: \mathbb{R}^n \to \mathbb{R}$ are the inequality constraint functions. It is assumed that $f,g_i$ are $C^1$, i.e., continuously differentiable. Throughout this paper we consider the minimisation problem, but our results also apply to the maximisation problem, since we consider function classes that are closed under negation.

\paragraph{\bf Global minimum} Unfortunately, solving the optimization problem \eqref{eq:opt-problem}, namely computing a \emph{global} minimum, is intractable, even for relatively simple objective functions and constraints (\citep{MurtyK1987} in the context of quadratic programming, \citep{BlumR92} in the context of neural networks).

\paragraph{\bf Local minima} The most natural way to relax the requirement of a global minimum, is to look for a \emph{local} minimum instead. A point $x \in \mathbb{R}^n$ is a local minimum of \eqref{eq:opt-problem}, if it satisfies all the constraints, namely $x \in D$, where $D= \{y \in \mathbb{R}^n \, | \, g_i(x) \leq 0 \, \forall i \in [m]\}$, and if there exists $\varepsilon > 0$ such that
\begin{equation}\label{eq:local-minimum}
f(x) \leq f(y) \qquad \forall y \in D \cap B_\varepsilon(x)
\end{equation}
where $B_\varepsilon(x) = \{y \in \mathbb{R}^n \, | \, \|y-x\| \leq \varepsilon\}$.

However, while the notion of a local minimum is very natural, an important issue arises when the problem is considered from the computational perspective. Looking at expression \eqref{eq:local-minimum}, it not clear how to efficiently check whether a given point $x$ is a local minimum or not. Indeed, it turns out that deciding whether a given point is a local minimum is co-NP-hard, even for simple objective and constraint functions \citep{MurtyK1987}. Furthermore, it was recently shown that computing a local minimum, even when it is guaranteed to exist, cannot be done in polynomial time unless P = NP \citep{AhmadiZ22-polytope}, even for quadratic functions where the domain is a polytope.

\paragraph{\bf Necessary optimality conditions} In order to avoid this issue, one can instead look for a point satisfying some so-called \emph{necessary optimality conditions}. As the name suggests, these are conditions that must be satisfied for any local minimum, but might also be satisfied for points that are not local minima. Importantly, these conditions can usually be checked in polynomial time. For this reason, algorithms attempting to solve \eqref{eq:opt-problem}, usually try to find a point that satisfies some necessary optimality conditions instead.

\paragraph{\bf KKT points} The most famous and simplest necessary optimality conditions are the Karush-Kuhn-Tucker (KKT) conditions. The KKT conditions are first-order conditions in the sense that they only involve the first derivatives (i.e., the gradients) of the functions in the problem statement. Formally, a point $x \in \mathbb{R}^n$ satisfies the KKT conditions if it is feasible, i.e., $x \in D = \{y \in \mathbb{R}^n \, | \, g_i(x) \leq 0 \, \forall i \in [m]\}$, and if there exist $\mu_1, \dots, \mu_m \geq 0$ such that
$$\nabla f(x) + \sum_{i=1}^m \mu_i \nabla g_i(x) = 0$$
and $\mu_i g_i(x) = 0$ for all $i \in [m]$. This last condition ensures that $\mu_i > 0$ can only occur if $g_i(x) = 0$, i.e., if the $i$th constraint is tight. In particular, if no constraint is tight at $x$, then $x$ is a KKT point if $\nabla f(x) = 0$ (in other words, if it is a stationary point). A point $x$ that satisfies the KKT conditions is also called a KKT point. Note that given access to $\nabla f(x)$, $g_i(x)$ and $\nabla g_i(x)$, one can check in polynomial time whether $x$ is a KKT point, since this reduces to checking the feasibility of a linear program.

Every local minimum of \eqref{eq:opt-problem} must satisfy the KKT conditions, as long as the problem satisfies some so-called regularity conditions or constraint qualifications. In this paper, we restrict our attention to linear constraints (i.e., $g_i(x) = \langle a_i, x \rangle - b_i$). In this case, it is known that every local minimum is indeed a KKT point.

\paragraph{\bf $\boldsymbol{\varepsilon}$-KKT points} In practice, but also when studying the computational complexity in the standard Turing model (because of issues of representation), it is unreasonable to expect to find a point that exactly satisfies the KKT conditions. Instead, one looks for an \emph{approximate} KKT point. Given $\varepsilon \geq 0$, we say that $x \in \mathbb{R}^n$ is an $\varepsilon$-KKT point if $x \in D$ and if there exist $\mu_1, \dots, \mu_m \geq 0$ such that
$$\left\|\nabla f(x) + \sum_{i=1}^m \mu_i \nabla g_i(x)\right\| \leq \varepsilon$$
and $\mu_i g_i(x) = 0$ for all $i \in [m]$.\footnote{This is the usual definition of an $\varepsilon$-KKT point \citep{vavasis1993localmin,DuttaDTA13}. A weaker notion where the complementary slackness condition is relaxed to $\mu_i g_i(x) \geq -\varepsilon$ can also be considered \citep{DuttaDTA13}. Our hardness result holds for this weaker notion as well. Indeed, it is easy to check that an $\varepsilon$-KKT point can be obtained by first computing a ``weak'' $\varepsilon'$-KKT point $x \in [0,1]^2$ (for some $\varepsilon'$ efficiently computable given $\varepsilon$ and the other parameters of the problem), and by then ``rounding'' $x_i$ to $0$ or to $1$ if it is sufficiently close to $0$ or $1$, respectively.} In particular, if no constraint is tight at $x$, then $x$ is an $\varepsilon$-KKT point if $\|\nabla f(x)\| \leq \varepsilon$ (i.e., if $x$ is an $\varepsilon$-stationary point). Since $\|\cdot\|$ denotes the $\ell_2$-norm, we can check whether a point is an $\varepsilon$-KKT point in polynomial time by using a convex quadratic program, which can be solved efficiently \citep{KozlovTK80-convex-quadratic}. If we instead use the $\ell_\infty$-norm or the $\ell_1$-norm in the definition of $\varepsilon$-KKT point, then we can check whether a point is an $\varepsilon$-KKT point in polynomial time by solving a linear program.

Since we focus on the case where $D = \{y \in \mathbb{R}^n \,|\, Ay \leq b\}$, $(A,b) \in \mathbb{R}^{m \times n} \times \mathbb{R}^m$, we can rewrite the KKT conditions as follows. A point $x \in \mathbb{R}^n$ is an $\varepsilon$-KKT point if $x \in D$ and if there exist $\mu_1, \dots, \mu_m \geq 0$ such that
$$\Bigl\|\nabla f(x) + A^T\mu\Bigr\| \leq \varepsilon$$
and $\langle \mu, Ax - b \rangle = 0$. Note that this exactly corresponds to the earlier definition adapted to this case. In particular, the condition ``$\mu_i [Ax - b]_i = 0$ for all $i \in [m]$'' is equivalent to $\langle \mu, Ax - b \rangle = 0$, since $\mu_i \geq 0$ and $[Ax - b]_i \leq 0$ for all $i \in [m]$.

It is known that if there are no constraints, then it is NP-hard to decide whether a KKT point exists~\citep{AhmadiZ22-unconstrained}. This implies that, in general, unless P = NP, there is no polynomial-time algorithm that computes a KKT point of \eqref{eq:opt-problem}. However, this hardness result does not say anything about one very important special case, namely when the feasible region $D$ is a compact set (in particular, when it is a bounded polytope defined by linear constraints). Indeed, in that case, a KKT point is guaranteed to exist---since a local minimum is guaranteed to exist---and easy to verify, and thus finding a KKT point is a total search problem in the class \tfnp/. In particular, this means that, for compact $D$, the problem of computing a KKT point cannot be NP-hard, unless NP = co-NP~\citep{MegiddoP1991-tfnp}. In this paper, we provide strong evidence that the problem remains hard for such bounded domains, and, in fact, even when the feasible region is as simple as $D=[0,1]^2$.

The problem of finding an $\varepsilon$-KKT point has primarily been studied in the ``black box'' model, where we only have oracle access to the function and its gradient, and count the number of oracle calls needed to solve the problem. \citet{vavasis1993localmin} proved that at least $\Omega(\sqrt{L/\varepsilon})$ calls are needed to find an $\varepsilon$-KKT point of a continuously differentiable function $f:[0,1]^2 \to \mathbb{R}$ with $L$-Lipschitz gradient. It was recently shown by \citet{BM2020gradient} that this bound is tight up to a logarithmic factor. For the high-dimensional case, \citet{CarmonDHS20-stationary} showed a tight bound of $\Theta(1/\varepsilon^2)$, when the Lipschitz constant is fixed.

\subsubsection{The \kkt/ problem}

Given the definition of $\varepsilon$-KKT points in the previous section, we can formally define a computational problem where the goal is to compute such a point. Our formalisation of this problem assumes that $f$ and $\nabla f$ are provided in the input as arithmetic circuits. However, it is unclear if, given a circuit $f$, we can efficiently determine whether it corresponds to a continuously differentiable function, and whether the circuit for $\nabla f$ indeed computes its gradient. Thus, one has to either consider the promise version of the problem (where this is guaranteed to hold for the input), or add violation solutions like in the definition of \clo/. In order to ensure that our problem is in TFNP, we formally define it with violation solutions. However, we note that our hardness results also hold for the promise versions.

The type of violation solution that we introduce to ensure that $\nabla f$ is indeed the gradient of $f$ is based on the following version of Taylor's theorem, which is proved in \cref{app:taylor}.

\begin{lemma}[Taylor's theorem]\label{lem:taylor}
	Let $f: \mathbb{R}^n \to \mathbb{R}$ be continuously differentiable and let $D \subseteq \mathbb{R}^n$ be convex. If $\nabla f$ is $L$-Lipschitz-continuous (w.r.t.\ the $\ell_2$-norm) on $D$, then for all $x,y \in D$ we have
	$$\bigl|f(y) - f(x) - \langle \nabla f(x), y-x \rangle \bigr| \leq \frac{L}{2} \|y-x\|^2.$$
\end{lemma}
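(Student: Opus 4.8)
The plan is the standard one-dimensional reduction plus integration of the gradient. Fix $x,y \in D$ and define the auxiliary function $\phi:[0,1]\to\mathbb{R}$ by $\phi(t) = f\bigl(x + t(y-x)\bigr)$. Since $D$ is convex, the segment $x + t(y-x)$ stays inside $D$ for all $t\in[0,1]$, so $\phi$ is well defined and, because $f$ is continuously differentiable, the chain rule gives $\phi'(t) = \langle \nabla f(x + t(y-x)),\, y-x \rangle$, which is continuous in $t$ (as $\nabla f$ is continuous).

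Next I would apply the fundamental theorem of calculus to $\phi$ to write
$$f(y) - f(x) = \phi(1) - \phi(0) = \int_0^1 \langle \nabla f(x + t(y-x)),\, y-x \rangle \, dt,$$
and then subtract the constant $\langle \nabla f(x), y-x\rangle = \int_0^1 \langle \nabla f(x), y-x\rangle\,dt$ to obtain
$$f(y) - f(x) - \langle \nabla f(x), y-x \rangle = \int_0^1 \bigl\langle \nabla f(x + t(y-x)) - \nabla f(x),\; y-x \bigr\rangle \, dt.$$
Taking absolute values, passing the modulus inside the integral, and applying Cauchy--Schwarz pointwise gives the bound $\int_0^1 \|\nabla f(x + t(y-x)) - \nabla f(x)\|\,\|y-x\|\,dt$. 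Finally I would invoke the $L$-Lipschitz continuity of $\nabla f$ on $D$, which yields $\|\nabla f(x + t(y-x)) - \nabla f(x)\| \le L\,\|t(y-x)\| = Lt\,\|y-x\|$, so the integral is at most $\|y-x\|^2 \int_0^1 Lt\,dt = \tfrac{L}{2}\|y-x\|^2$, as required.

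This argument is essentially routine; the only points requiring a word of care are (i) that convexity of $D$ is exactly what licenses evaluating $f$ and $\nabla f$ along the whole segment and applying the Lipschitz bound there, and (ii) that the continuity of $\phi'$ (inherited from continuity of $\nabla f$) is what makes the fundamental theorem of calculus applicable in its elementary form. Neither is a genuine obstacle, so I expect the write-up to be short.
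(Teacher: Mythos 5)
Your proposal is correct and is essentially the same argument as the paper's: both reduce to the one-variable function $t \mapsto f(x+t(y-x))$, apply the fundamental theorem of calculus, and bound the integrand via Cauchy--Schwarz together with the $L$-Lipschitz continuity of $\nabla f$ along the segment. The only cosmetic difference is that the paper first packages the Cauchy--Schwarz and Lipschitz steps into the statement that $g'$ is $(L\|y-x\|^2)$-Lipschitz before integrating, whereas you apply them pointwise under the integral.
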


\noindent We are now ready to formally define our \kkt/ problem.

\begin{tcolorbox}[breakable,enhanced]
	\begin{definition}\label{def:kkt-general}
		\kkt/:
		
		\noindent\textbf{Input}:
		\begin{itemize}
			\item precision parameter $\varepsilon > 0$,
			\item $(A,b) \in \mathbb{R}^{m \times n} \times \mathbb{R}^m$ defining a bounded non-empty domain $D = \{x \in \mathbb{R}^n: Ax \leq b\}$,
			\item well-behaved arithmetic circuits $f: \mathbb{R}^n \to \mathbb{R}$ and $\nabla f: \mathbb{R}^n \to \mathbb{R}^n$,
			\item Lipschitz constant $L > 0$.
		\end{itemize}
		
		\noindent\textbf{Goal}: Compute an $\varepsilon$-KKT point for the minimization problem of $f$ on domain $D$.
		
		\noindent Formally, find $x \in D$ such that there exist $\mu_1, \dots, \mu_m \geq 0$ such that
		$$\Bigl\| \nabla f(x) + A^T\mu \Bigr\| \leq \varepsilon$$
		and $\langle \mu, Ax-b \rangle = 0$.
		
		\noindent Alternatively, we also accept one of the following violations as a solution:
		\begin{itemize}
			\item ($f$ or $\nabla f$ is not $L$-Lipschitz) $x,y \in D$ such that
			$$|f(x) - f(y)| > L \|x-y\| \qquad \text{or} \qquad \|\nabla f(x) - \nabla f(y)\| > L \|x-y\|,$$
			\item ($\nabla f$ is not the gradient of $f$) $x,y \in D$ that contradict Taylor's theorem (\cref{lem:taylor}), i.e.,
			$$\bigl|f(y) - f(x) - \langle \nabla f(x), y-x \rangle \bigr| > \frac{L}{2} \|y-x\|^2.$$
		\end{itemize}
	\end{definition}
\end{tcolorbox}

Note that all conditions on the input of the \kkt/ problem can be checked in polynomial time. In particular, we can use linear programming to check that the domain is bounded and non-empty. With regards to a solution $x \in D$, there is no need to include the values $\mu_1, \dots, \mu_m$ as part of a solution. Indeed, given $x \in D$, we can check in polynomial time whether there exist such $\mu_1, \dots, \mu_m$ by solving the following convex quadratic program:
\begin{equation*}
\begin{tabular}{lll}
$\min\limits_{\mu \in \mathbb{R}^m}$ & $\bigl\|\nabla f(x) + A^T\mu\bigr\|^2$ &\\
& &\\
s.t. & $\langle \mu, Ax-b \rangle=0$&\\
& $\mu \geq 0$&\\
\end{tabular}
\end{equation*}
If the optimal value of this program is strictly larger than $\varepsilon^2$, then $x$ is not an $\varepsilon$-KKT point. Otherwise, it is an $\varepsilon$-KKT point and the optimal $\mu_1, \dots, \mu_m$ certify this. If we use the $\ell_\infty$-norm or the $\ell_1$-norm instead of the $\ell_2$-norm for the definition of $\varepsilon$-KKT points, then we can check whether a point is an $\varepsilon$-KKT point using the same approach (except that we do not take the square of the norm, and we simply obtain a linear program). Whether we use the $\ell_2$-norm, the $\ell_\infty$-norm or the $\ell_1$-norm for the definition of $\varepsilon$-KKT points has no impact on the complexity of the \kkt/ problem defined above. Indeed, is is easy to reduce the various versions to each other.

Note that $\varepsilon$ and $L$ are provided in binary representation in the input. This is important, since our hardness result in \cref{thm:main-kkt-hard} relies on at least one of those two parameters being exponential in the size of the input. If both parameters are provided in unary, then the problem can be solved in polynomial time on the domain $[0,1]^n$ (see \cref{lem:kkt-poly-params}).

\subsubsection{Gradient Descent problems}

In this section we formally define our two versions of the Gradient Descent problem. Since we consider Gradient Descent on bounded domains $D$, we need to ensure that the next iterate indeed lies in $D$. The standard way to handle this is by using so-called \emph{Projected} Gradient Descent, where the next iterate is computed using a standard Gradient Descent step and then projected onto $D$ using $\Pi_D$. Formally,
$$x^{(k+1)} \leftarrow \Pi_D\left(x^{(k)} - \eta \nabla f\bigl(x^{(k)}\bigr)\right)$$
where $\eta > 0$ is the step size. Throughout, we only consider the case where the step size is fixed, i.e., the same in all iterations.
Our first version of the problem considers the stopping criterion: stop if the next iterate improves the objective function value by less than $\varepsilon$.

\begin{tcolorbox}[breakable,enhanced]
	\begin{definition}\label{def:gdls}
		\gdls/:
		
		\noindent\textbf{Input}:
		\begin{itemize}
			\item precision/stopping parameter $\varepsilon > 0$,
			\item step size $\eta > 0$,
			\item $(A,b) \in \mathbb{R}^{m \times n} \times \mathbb{R}^m$ defining a bounded non-empty domain $D = \{x \in \mathbb{R}^n: Ax \leq b\}$,
			\item well-behaved arithmetic circuits $f: \mathbb{R}^n \to \mathbb{R}$ and $\nabla f: \mathbb{R}^n \to \mathbb{R}^n$,
			\item Lipschitz constant $L > 0$.
		\end{itemize}
		
		\noindent\textbf{Goal}: Compute any point where (projected) gradient descent for $f$ on domain $D$ with fixed step size $\eta$ terminates. Formally, find $x \in D$ such that
		$$f\Bigl(\Pi_D\bigl(x - \eta \nabla f(x)\bigr)\Bigr) \geq f(x) - \varepsilon.$$
		Alternatively, we also accept one of the following violations as a solution:
		\begin{itemize}
			\item ($f$ or $\nabla f$ is not $L$-Lipschitz) $x,y \in D$ such that
			$$|f(x) - f(y)| > L \|x-y\| \qquad \text{or} \qquad \|\nabla f(x) - \nabla f(y)\| > L \|x-y\|,$$
			\item ($\nabla f$ is not the gradient of $f$) $x,y \in D$ that contradict Taylor's theorem (\cref{lem:taylor}), i.e.,
			$$\bigl|f(y) - f(x) - \langle \nabla f(x), y-x \rangle \bigr| > \frac{L}{2} \|y-x\|^2.$$
		\end{itemize}
	\end{definition}
\end{tcolorbox}

Our second version of the problem considers the stopping criterion: stop if the next iterate is $\varepsilon$-close to the current iterate.

\begin{tcolorbox}[breakable,enhanced]
	\begin{definition}\label{def:gdfp}
		\gdfp/:
		
		\noindent\textbf{Input}:
		\begin{itemize}
			\item precision/stopping parameter $\varepsilon > 0$,
			\item step size $\eta > 0$,
			\item $(A,b) \in \mathbb{R}^{m \times n} \times \mathbb{R}^m$ defining a bounded non-empty domain $D = \{x \in \mathbb{R}^n: Ax \leq b\}$,
			\item well-behaved arithmetic circuits $f: \mathbb{R}^n \to \mathbb{R}$ and $\nabla f: \mathbb{R}^n \to \mathbb{R}^n$,
			\item Lipschitz constant $L > 0$.
		\end{itemize}
		
		\noindent\textbf{Goal}: Compute any point that is an $\varepsilon$-approximate fixed point of (projected) gradient descent for $f$ on domain $D$ with fixed step size $\eta$. Formally, find $x \in D$ such that
		$$\left\|x - \Pi_D\bigl(x - \eta \nabla f(x)\bigr)\right\| \leq \varepsilon.$$
		Alternatively, we also accept one of the following violations as a solution:
		\begin{itemize}
			\item ($f$ or $\nabla f$ is not $L$-Lipschitz) $x,y \in D$ such that
			$$|f(x) - f(y)| > L \|x-y\| \qquad \text{or} \qquad \|\nabla f(x) - \nabla f(y)\| > L \|x-y\|,$$
			\item ($\nabla f$ is not the gradient of $f$) $x,y \in D$ that contradict Taylor's theorem (\cref{lem:taylor}), i.e.,
			$$\bigl|f(y) - f(x) - \langle \nabla f(x), y-x \rangle \bigr| > \frac{L}{2} \|y-x\|^2.$$
		\end{itemize}
	\end{definition}
\end{tcolorbox}

The comments made about the \kkt/ problem in the previous section also apply to these two problems.
In particular, we show that even the promise versions of the two Gradient Descent problems remain \ppadpls/-hard. In other words, the hard instances we construct have no violations.

\begin{remark}
An interesting question is: what happens if we omit the last violation (namely, the one about Taylor's theorem) from the definitions of these problems? For \gdls/ it turns out that this does not change the complexity of the problem. Indeed, removing the last violation means that the functions $f$ and $\nabla f$ can now be completely unrelated. However, for \gdls/ it is not hard to see that the problem remains in \cls/ (in fact, note that the proof of \cref{prop:gdls2gclo} which reduces \gdls/ to \gclo/ does not use violations to Taylor's theorem). Thus, the problem remains \ppadpls/-complete.

On the other hand, for \gdfp/ and \kkt/ it turns out that omitting the last violation does change the complexity of the problem. Indeed, note that unlike \gdls/, the property that some $x$ must satisfy in order to be a (non-violation) solution only depends on $\nabla f$, and not at all on $f$. As a result, it is easy to reduce from the problem of finding an approximate Brouwer fixed point of a function $g: [0,1]^2 \to [0,1]^2$ to either of these two problems, by letting $f(x) = 0$, $\nabla f(x) = x - g(x)$, and setting the remaining parameters appropriately. It follows that \gdfp/ and \kkt/ without the last violation are \ppad/-hard, and in fact it can be shown that they are \ppad/-complete. Finally, note that it is easy to see that \gdfp/ and \kkt/ remain equivalent if we remove the last violation: one direction is given by the proof of \cref{prop:gdfp2kkt} (which still works without violations to Taylor's theorem), and the other direction can be obtained by using the arguments in step 2 of the proof of \cref{prop:kkt2gdls}.
\end{remark}

\section{KKT is \texorpdfstring{$\boldsymbol{\ppadpls/}$}{PPAD ∩ PLS}-hard}\label{sec:kkt-ppadpls}

In this section, we prove our main technical result.

\begin{theorem}\label{thm:main-kkt-hard}
	\kkt/ is \ppadpls/-hard, even when the domain is fixed to be the unit square $[0,1]^2$. The hardness continues to hold even if one considers the promise-version of the problem, i.e., only instances without violations.
\end{theorem}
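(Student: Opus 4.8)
The plan is to reduce from \textsc{Either-Solution(\eol/,Iter)}, which is \ppadpls/-complete by \cref{cor:either}. Given an \eol/ instance $I^{\rm EOL}$ on vertex set $[2^n]$ and an \textsc{Iter} instance $I^{\rm ITER}$, I build a \kkt/ instance $(\varepsilon, A, b, f, \nabla f, L)$ with domain $D = [0,1]^2$ such that every $\varepsilon$-KKT point of $f$ on $D$ can be decoded in polynomial time into a solution of $I^{\rm EOL}$ or of $I^{\rm ITER}$, and such that the constructed instance has \emph{no} violation solutions (which simultaneously gives hardness of the promise version). Since \kkt/ on $[0,1]^n$ is polynomial-time solvable when both $\varepsilon$ and $L$ are given in unary (\cref{lem:kkt-poly-params}), the construction must live on an exponentially fine grid, with $\varepsilon$ exponentially small and $L$ exponentially large; these parameters are ``a necessary evil'' rather than an artefact of the proof.

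First I set up the combinatorial skeleton. Partition $[0,1]^2$ into a $2^n \times 2^n$ array of \emph{big squares}, the diagonal ones indexed by the vertices of $I^{\rm EOL}$, and further into an exponentially finer grid of \emph{small squares}. At each grid point I prescribe an integer ``height'' and a gradient direction, chosen so that: (i) away from all paths the height strictly decreases towards the origin $(0,0)$, which is where the source's path begins; (ii) each forward edge $(v_1,v_2)$ of $I^{\rm EOL}$ is realised by a ``green'' path of grid points running from the big square of $v_1$ to that of $v_2$ along which the height strictly decreases; (iii) each backward edge is realised by an ``orange'' path along which the height strictly increases. A purely local analysis of this discrete height field shows that the only big squares that can host an $\varepsilon$-KKT point are (a) those of genuine \eol/ solutions, and (b) those of a vertex $v$ that is \emph{not} a solution but whose two graph-neighbours are both larger, or both smaller, than $v$; case (b) is exactly where a green path is forced to meet an orange path, making an $\varepsilon$-KKT point unavoidable there.

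Next I neutralise case (b) with the \emph{PLS-Labyrinth} gadget placed inside the colliding big square. It is subdivided into exponentially many \emph{medium squares}, the diagonal ones indexed by the nodes of $I^{\rm ITER}$. The colliding green/orange paths feed an ``orange-blue'' path into the medium square of node $1$, and for every $u$ with $C(u) > u$ there is a further orange-blue path into the medium square of $u$. A \emph{sink} of an orange-blue path is an $\varepsilon$-KKT point, so for each $u$ that is not an \textsc{Iter} solution I route a ``blue'' path out of $u$'s medium square that merges into the orange-blue path of $C(u)$; crucially the gadget is designed so that \emph{sources} of blue paths are not $\varepsilon$-KKT points, which lets path crossings be handled locally. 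The upshot is that the unavoidable $\varepsilon$-KKT point inside the labyrinth sits exactly at a medium square whose node $u$ satisfies $C(u) > u$ and $C(C(u)) = C(u)$, i.e.\ an \textsc{Iter} solution. One must also verify that the labyrinth ``works only here'': placed away from a path collision it could at worst introduce easily-locatable $\varepsilon$-KKT points, never ones that would let us prove \pls/-hardness (which we do not expect).

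Finally I turn the discrete height/gradient data into an honest function $f \in C^1_L([0,1]^2, \mathbb{R})$ together with a well-behaved arithmetic circuit for $\nabla f$. Within each small square I interpolate from the prescribed corner values of $f$ and $\nabla f$ using \emph{bicubic interpolation}; this yields a globally $C^1$ function computed by a circuit that uses only a constant number of true multiplications (hence well-behaved), with explicit Lipschitz bounds, so one sets $L$ accordingly and checks there are no Lipschitz or Taylor violations. The delicate point — and the main obstacle — is that bicubic interpolation might create spurious $\varepsilon$-KKT points strictly inside a small square. Up to symmetry and the colour/direction coding there are only $101$ distinct small-square types; for each I express the statement ``bicubic interpolation introduces no $\varepsilon$-KKT point in the interior of this square type'' as a sentence in the existential theory of the reals and discharge it with an SMT solver, which confirms that new solutions arise only for the square types located at an \eol/- or \textsc{Iter}-solution. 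Assembling these pieces: every $\varepsilon$-KKT point of the instance lies in a big or medium square corresponding to a solution of $I^{\rm EOL}$ or $I^{\rm ITER}$, the instance map and the solution-decoding map are polynomial-time, and the instance has no violations, which proves the theorem. The bulk of the work is in engineering the $101$ local patterns and the labyrinth geometry so that smoothing never opens an unintended escape, while keeping the gadget faithful to \textsc{Iter} and free of accidental \pls/-hardness.
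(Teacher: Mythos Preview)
Your proposal is correct and follows essentially the same approach as the paper: the reduction from \textsc{Either-Solution(\eol/,Iter)}, the green/orange path embedding of \eol/ edges on the diagonal big squares, the PLS-Labyrinth gadget with orange-blue and blue paths encoding \textsc{Iter}, bicubic interpolation to obtain a $C^1$ function, and the SMT-assisted verification of the 101 small-square types are exactly the components the paper uses. The only minor inaccuracy is that the 101 square types are counted \emph{without} exploiting symmetry (the paper notes symmetry could reduce the count further but does not do so), and the ``height'' values are given by five explicit affine value regimes rather than arbitrary integers.
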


\noindent In order to show this we provide a polynomial-time many-one reduction from \textsc{Either-Solution(End-of-Line,Iter)} to \textsc{KKT} on the unit square.

\paragraph{\bf Overview} Consider any instance of \eol/ with $2^n$ vertices and any instance of \iter/ with $2^m$ nodes. We construct a function $f$ for the KKT problem as follows. We first work on the domain $[0,N]^2$ with a grid $G = \{0,1,2, \dots, N\}^2$, where $N = 2^n \cdot 2^{m+4}$. In the conceptually most interesting part of the reduction, we carefully specify the value of the function $f$ and the direction of $-\nabla f$ (\emph{the direction of steepest descent}) at all the points of the grid $G$. Then, in the second part of the reduction, we show how to extend $f$ within every square of the grid, so as to obtain a continuously differentiable function on $[0,N]^2$. Finally, we scale down the domain to $[0,1]^2$. We show that any $\varepsilon$-KKT point of $f$ (for some sufficiently small $\varepsilon$) must yield a solution to the \eol/ instance or a solution to the \iter/ instance.

\subsection{Defining the function on the grid}\label{sec:fngrid}

\paragraph{\bf Overview of the embedding} We divide the domain $[0,N]^2$ into $2^n \times 2^n$ big squares. For any $v_1,v_2 \in [2^n]$, let $B(v_1,v_2)$ denote the big square
$$\left[(v_1-1)\frac{N}{2^n},v_1\frac{N}{2^n}\right] \times \left[(v_2-1)\frac{N}{2^n},v_2\frac{N}{2^n}\right].$$
We use the following interpretation: the vertex $v \in [2^n]$ of the \eol/ instance is embedded at the centre of the big square $B(v,v)$. Thus, the vertices are arranged along the main diagonal of the domain. In particular, the trivial source $1 \in [2^n]$ is located at the centre of the big square that lies in the bottom-left corner of the domain and contains the origin.

We seek to embed the edges of the \eol/ instance in our construction. For every directed edge $(v_1,v_2)$ of the \eol/ instance, we are going to embed a directed path in the grid $G$ that goes from the centre of $B(v_1,v_1)$ to the centre of $B(v_2,v_2)$. The type of path used and the route taken by the path will depend on whether the edge $(v_1,v_2)$ is a ``forward'' edge or a ``backward'' edge. In more detail:
\begin{itemize}
	\item if $v_1 < v_2$ (``forward'' edge), then we will use a so-called \emph{green} path that can only travel to the right and upwards. The path starts at the centre of $B(v_1,v_1)$ and moves to the right until it reaches the centre of $B(v_2,v_1)$. Then, it moves upwards until it reaches its destination: the centre of $B(v_2,v_2)$.
	\item if $v_1 > v_2$ (``backward'' edge), then we will use a so-called \emph{orange} path that can only travel to the left and downwards. The path starts at the centre of $B(v_1,v_1)$ and moves to the left until it reaches the centre of $B(v_2,v_1)$. Then, it moves downwards until it reaches its destination: the centre of $B(v_2,v_2)$.
\end{itemize}
\cref{fig:kkt-big-picture} illustrates the high-level idea of the embedding with an example.

For points of the grid $G$ that are part of the ``environment'', namely that do not lie on a path, the function $f$ will simply be defined by $(x,y) \mapsto x + y$. Thus, if there are no paths at all, the only local minimum of $f$ will be at the origin. However, a green path starts at the origin and this will ensure that there is no minimum there. This green path will correspond to the outgoing edge of the trivial source $1 \in [2^n]$ of the \eol/ instance.

The green paths will be constructed such that if one moves along a green path the value of $f$ decreases, which means that we are improving the objective function value. Furthermore, the value of $f$ at any point on a green path will be below the value of $f$ at any point in the environment. Conversely, the orange paths will be constructed such that if one moves along an orange path the value of $f$ increases, so the objective function value becomes worse. Additionally, the value of $f$ at any point on an orange path will be above the value of $f$ at any point in the environment.

We say that a path starting at $B(v_1,v_1)$ ``starts in the environment'', if there is no path ending at $B(v_1,v_1)$. Similarly, a path ending at $B(v_2,v_2)$ ``ends in the environment'', if there is no path starting at $B(v_2,v_2)$. If any path starts or ends in the environment, the construction ensures that there is a stationary point (and thus a KKT point) there. The only exception is the path corresponding to the outgoing edge of the trivial vertex $1 \in [2^n]$. The start of that path will not create a KKT point. Thus, in the example of \cref{fig:kkt-big-picture}, there will certainly be KKT points in $B(3,3)$, $B(7,7)$ and $B(8,8)$, but not in $B(1,1)$.

Recall that every vertex $v \in [2^n]$ has at most one incoming edge and at most one outgoing edge. Thus, for any vertex $v \neq 1$, one of the following cases occurs:
\begin{itemize}
	\item $v$ is an isolated vertex. In this case, the big square $B(v,v)$ will not contain any path and will fully be in the environment, thus not containing any KKT point. Example: vertex $5$ in \cref{fig:kkt-big-picture}.
	\item $v$ has one outgoing edge and no incoming edge. In this case, the big square $B(v,v)$ will contain the start of a green or orange path. There will be a KKT point at the start of the path, which is fine, since $v$ is a (non-trivial) source of the \eol/ instance. Example: vertex $7$ in \cref{fig:kkt-big-picture}.
	\item $v$ has one incoming edge and no outgoing edge. In this case, the big square $B(v,v)$ will contain the end of a green or orange path. There will be a KKT point at the end of the path, which is again fine, since $v$ is a sink of the \eol/ instance. Example: vertices $3$ and $8$ in \cref{fig:kkt-big-picture}.
	\item $v$ has one outgoing and one incoming edge. In this case, there are two sub-cases:
	\begin{itemize}
		\item If both edges yield paths of the same colour, then we will be able to ``connect'' the two paths at the centre of $B(v,v)$ and avoid introducing a KKT point there. Example: vertex $4$ in \cref{fig:kkt-big-picture}.
		\item If one of the paths is green and the other one is orange, then there will be a local maximum or minimum in $B(v,v)$ (and thus a KKT point). It is not too hard to see that this is in fact unavoidable. Indeed, if the incoming path is green and the outgoing path is orange, then there will necessarily be a local minimum at the end of the green path (Example: vertex $6$ in \cref{fig:kkt-big-picture}). If the incoming path is orange and the outgoing path is green, then there will necessarily be a local maximum at the end of the orange path (Example: vertex $2$ in \cref{fig:kkt-big-picture}). This is where we use the main new ``trick'' of our reduction: we ``hide'' the exact location of the KKT point inside $B(v,v)$ in such a way, that finding it requires solving a \pls/-complete problem, namely the \iter/ instance. This is achieved by introducing a new gadget at the point where the two paths meet. We call this the PLS-Labyrinth gadget.
	\end{itemize}
\end{itemize}
The construction of the green and orange paths is described in detail in \cref{sec:kkt-eol-paths}. The PLS-Labyrinth gadget is described in detail in \cref{sec:kkt-pls-labyrinth}.

\begin{figure}
	\centering
	\includegraphics[width=0.7\textwidth]{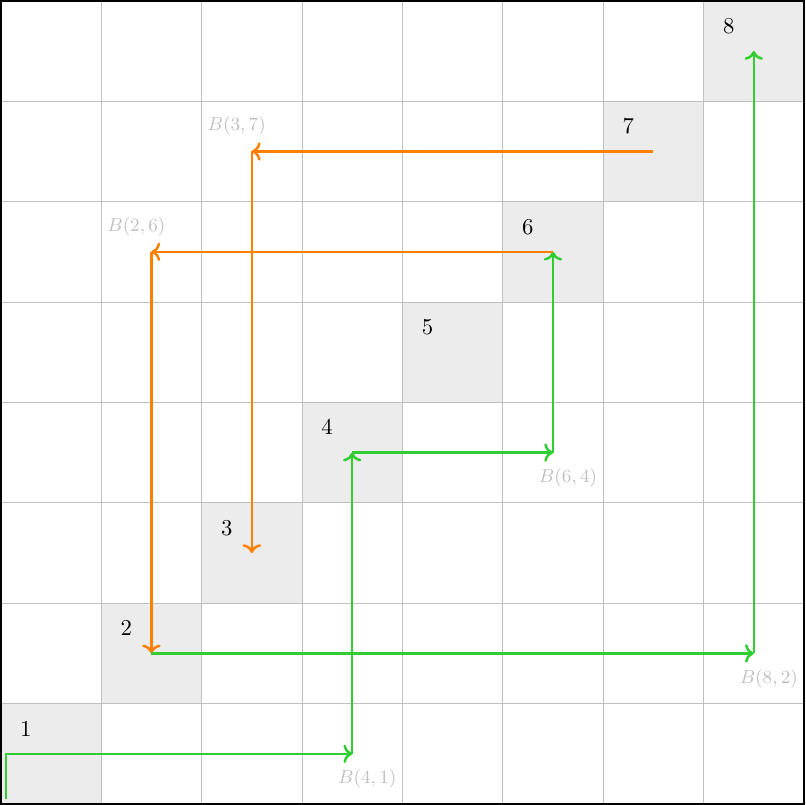}
	\caption{Example of the high-level idea for the embedding of an \eol/ instance in the domain. In this example we are embedding an \eol/ instance with the set of vertices $[8]$ (i.e., $n=3$) and the directed edges: $(1,4)$, $(2,8)$, $(4,6)$, $(6,2)$ and $(7,3)$ (see \cref{fig:EOL-example}). The domain is divided into $8 \times 8$ big squares, and the big squares corresponding to the vertices of the \eol/ graph are coloured in grey. The solutions of this \eol/ instance are the vertices $3$, $7$ and $8$.}
	\label{fig:kkt-big-picture}
\end{figure}

\subsubsection{Pre-processing}

Consider any instance $((S,P),C)$ of \textsc{Either-Solution(End-of-Line,Iter)}, i.e., $S,P: [2^n] \to [2^n]$ is an instance of \eol/ and $C: [2^m] \to [2^m]$ is an instance of \iter/. Without loss of generality, we can assume that these instances satisfy the following:
\begin{enumerate}
	\item The successor and predecessor circuits $S,P$ agree on all edges. Formally, for all $v \in [2^n]$, it holds that
	\begin{itemize}
		\item if $S(v) \neq v$, then $P(S(v)) = v$, and
		\item if $P(v) \neq v$, then $S(P(v)) = v$.
	\end{itemize}
	This property can be ensured by a simple pre-processing step. We modify the circuit $S$, so that before outputting $S(v)$, it first checks whether $(S(v) \neq v) \land (P(S(v)) \neq v)$, and, if this holds, outputs $v$ instead of $S(v)$. It is easy to see that this new circuit for $S$ can be constructed in polynomial time in the size of $S$ and $P$. We also perform the analogous modification for $P$. It is easy to check that this does not introduce any new solutions.
	\item For all $u \in [2^m]$ we have $C(u) \geq u$. We can ensure that this holds by modifying the circuit $C$, so that before outputting $C(u)$, it checks whether $C(u) < u$, and, if this is the case, outputs $u$ instead of $C(u)$. Again, the modification can be done in polynomial time and does not introduce new solutions, nor does it stop the problem from being total.
\end{enumerate}

\subsubsection{The value regimes}\label{sec:valueregimes}

Recall that we want to specify the value of $f$ and $-\nabla f$ (\emph{the direction of steepest descent}) at all points on the grid $G = \{0,1,2, \dots, N\}^2$, where $N = 2^n \cdot 2^{m+4}$. In order to specify the value of $f$, it is convenient to define \emph{value regimes}. Namely, if a point $(x,y) \in G$ is in:
\begin{itemize}
	\item the red value regime, then $f(x,y) := x-y + 4N + 20$.
	\item the orange value regime, then $f(x,y) := -x - y + 4N + 10$.
	\item the black value regime, then $f(x,y) := x + y$.
	\item the green value regime, then $f(x,y) := -x - y - 10$.
	\item the blue value regime, then $f(x,y) := x-y - 2N - 20$.
\end{itemize}
Note that at any point on the grid, the value regimes are ordered: red $>$ orange $>$ black $>$ green $>$ blue. Furthermore, it is easy to check that the gap between any two regimes at any point is at least $10$. \cref{fig:kkt-value-regimes} illustrates the main properties of the value regimes.

\begin{figure}[h]
	\centering
	\scalebox{0.8}{
		\includegraphics{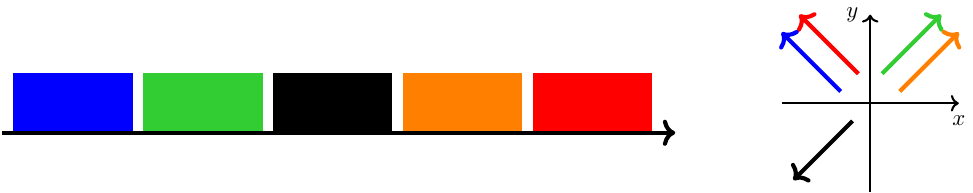}
	}
	\caption{The value regimes. On the left, the colours are ordered according to increasing value, from left to right. On the right, we indicate for each value regime, the direction in which it improves, i.e., decreases, in the $x$-$y$-plane.}
	\label{fig:kkt-value-regimes}
\end{figure}

The black value regime will be used for the environment. Thus, unless stated otherwise, every grid point is coloured in black, i.e., belongs to the black value regime. Furthermore, in our construction, we will set the direction of steepest descent, i.e., $- \nabla f(x,y)$, at every grid point $(x,y)$ to be one of the four possible cardinal directions, i.e., left, right, up, or down. Unless stated otherwise, at every black grid point $(x,y)$, the direction of steepest descent $- \nabla f(x,y)$ will point to the left.\footnote{Notice that this is not the same as the negative gradient of the ``black regime function'' $(x,y)\mapsto x+y$, which would point south-west, as shown in \cref{fig:kkt-value-regimes}. Nevertheless, as we show later, this is enough to ensure that the bicubic interpolation that we use does not introduce any points with zero gradient in a region of the environment. Similarly, for grid points coloured with one of the other colours we will also not use the diagonal negative gradient of the corresponding value regime function, but instead one of the four cardinal directions.} The only exceptions to this (i.e., the only black grid points where the direction of steepest descent does not point left) are black grid points that lie in paths, or black grid points that lie on the left boundary of the domain (i.e., points with $x=0$).

\subsubsection{Embedding the \eol/ instance: The green and orange paths}\label{sec:kkt-eol-paths}

Our construction specifies for each grid point a colour (which represents the value of $f$ at that point) and an arrow that represents the direction of $- \nabla f$ at that point. A general ``rule'' that we follow throughout our construction is that the function values should be consistent with the arrows. For example, if some grid point has an arrow pointing to the right, then the adjacent grid point to the right should have a lower function value, while the adjacent grid point to the left should have a higher function value. This rule is not completely sufficient by itself to avoid KKT points, but it is already a very useful guide.

Recall that the grid $G = \{0,1,2, \dots, N\}^2$ subdivides every big square $B(v_1,v_2)$ into $2^{m+4} \times 2^{m+4}$ small squares. The width of the paths we construct will be two small squares. This corresponds to a width of three grid points.

\paragraph{\bf Green paths}
When a green path moves to the right, the two lower grid points will be coloured in green, and the grid point at the top will be in black. \cref{fig:G1-horizontal-green} shows a big square that is traversed by a green path from left to right. Such a big square is said to be of type G1. The black arrows indicate the direction of $- \nabla f$ at every grid point.

When a green path moves upwards, the two right-most grid points will be coloured in green, and the grid point on the left will be in black. \cref{fig:G2-vertical-green} shows a big square of type G2, namely one that is traversed by a green path from the bottom to the top.

Recall that a green path implementing an edge $(v_1,v_2)$ (where $v_1 < v_2$) comes into the big square $B(v_2,v_1)$ from the left and leaves at the top. Thus, the path has to ``turn''. \cref{fig:G3-turn-up} shows how this turn is implemented. The big square $B(v_2,v_1)$ is said to be of type G3.

If a vertex $v \in [2^n]$ has one incoming edge $(v_1,v)$ and one outgoing edge $(v,v_2)$ such that $v_1 < v < v_2$, then both edges will be implemented by green paths. The green path corresponding to $(v_1,v)$ will enter $B(v,v)$ from the bottom and stop at the centre of $B(v,v)$. The green path corresponding to $(v,v_2)$ will start at the centre of $B(v,v)$ and leave the big square on the right. In order to avoid introducing any KKT points in $B(v,v)$ (since $v$ is not a solution of the \eol/ instance), we will connect the two paths at the centre of $B(v,v)$. This will be achieved by a simple turn, as shown in \cref{fig:G4-turn-right}. The big square $B(v,v)$ is said to be of type G4.

If a vertex $v \in [2^n] \setminus \{1\}$ has one outgoing edge $(v,v_2)$ such that $v < v_2$, and no incoming edge, then this will yield a green path starting at the centre of $B(v,v)$ and going to the right, as shown in \cref{fig:G5-green-source}. The big square $B(v,v)$ is said to be of type G5 in that case. It is not hard to see that there will be a KKT point at the source of that green path. On the other hand, if a vertex $v \in [2^n] \setminus \{1\}$ has one incoming edge $(v_1,v)$ such that $v_1 < v$, and no outgoing edge, then this will yield a green path coming from the bottom and ending at the centre of $B(v,v)$, as shown in \cref{fig:G6-green-sink}. The big square $B(v,v)$ is said to be of type G6 in that case. Again, there will be a KKT point at the sink of that green path.

\begin{figure}
	\begin{subfigure}[t]{0.47\textwidth}
		\centering
		\scalebox{0.7}{
			\includegraphics{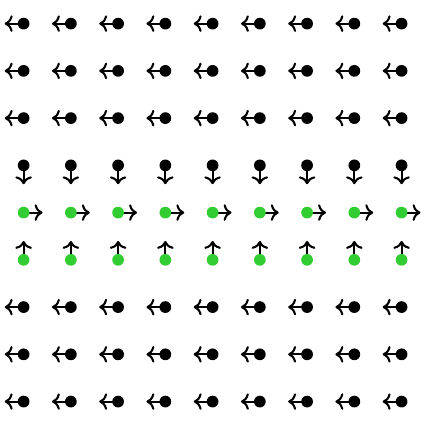}
		}
		\caption{[G1] Green path traversing big square from left to right.}\label{fig:G1-horizontal-green}
	\end{subfigure}\hfill
	\begin{subfigure}[t]{0.47\textwidth}
		\centering
		\scalebox{0.7}{
			\includegraphics{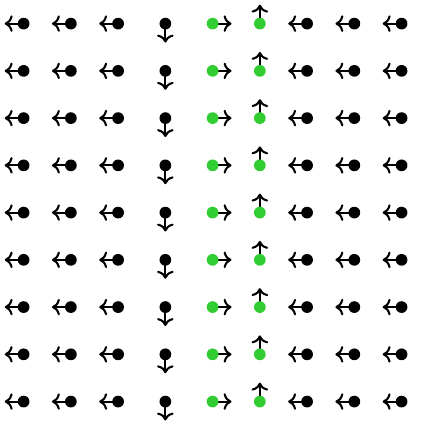}
		}
		\caption{[G2] Green path traversing big square from bottom to top.}\label{fig:G2-vertical-green}
	\end{subfigure}
	
	\bigskip
	
	\begin{subfigure}[t]{0.47\textwidth}
		\centering
		\scalebox{0.7}{
			\includegraphics{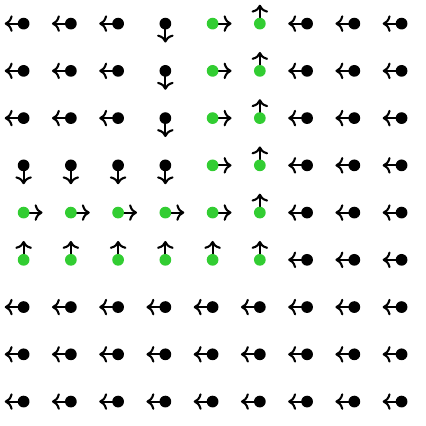}
		}
		\caption{[G3] Green path entering big square from the left, turning, and leaving at the top.}\label{fig:G3-turn-up}
	\end{subfigure}\hfill
	\begin{subfigure}[t]{0.47\textwidth}
		\centering
		\scalebox{0.7}{
			\includegraphics{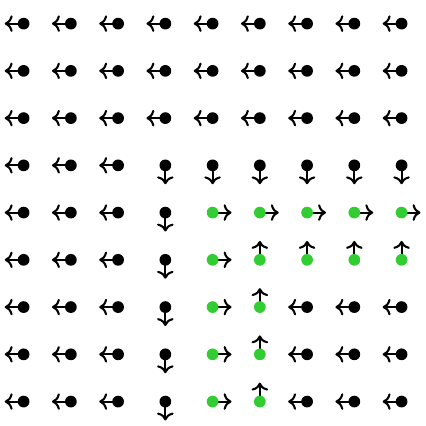}
		}
		\caption{[G4] Green path entering big square from the bottom, turning, and leaving on the right.}\label{fig:G4-turn-right}
	\end{subfigure}
	
	\bigskip
	
	\begin{subfigure}[t]{0.47\textwidth}
		\centering
		\scalebox{0.7}{
			\includegraphics{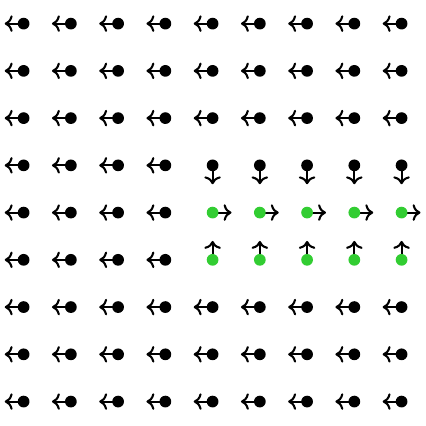}
		}
		\caption{[G5] Source: green path starting at the centre of big square and leaving on the right.}\label{fig:G5-green-source}
	\end{subfigure}\hfill
	\begin{subfigure}[t]{0.47\textwidth}
		\centering
		\scalebox{0.7}{
			\includegraphics{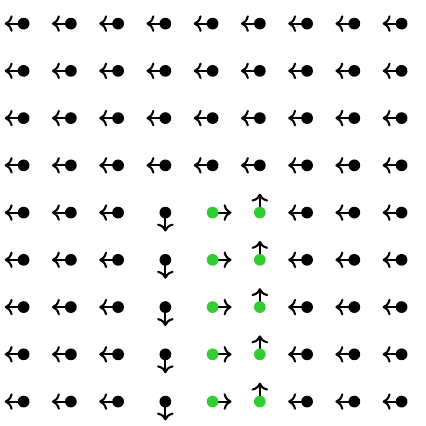}
		}
		\caption{[G6] Sink: green path entering big square from the bottom and ending at the centre.}\label{fig:G6-green-sink}
	\end{subfigure}
	\caption{Construction of the green paths. The figures show various types of big squares containing different portions of green paths. In these illustrations, the big squares are assumed to have size $8 \times 8$ instead of $2^{m+4} \times 2^{m+4}$.}\label{fig:greenpaths}
\end{figure}

\paragraph{\bf Orange paths}
The structure of orange paths is, in a certain sense, symmetric to the structure of green paths. When an orange path moves to the left, the two upper grid points will be coloured in orange, and the grid point at the bottom will be in black. \cref{fig:O1-horizontal-orange} shows a big square that is traversed by an orange path from right to left. Such a big square is said to be of type O1.

When an orange path moves downwards, the two left-most grid points will be coloured in orange, and the grid point on the right will be in black. \cref{fig:O2-vertical-orange} shows a big square of type O2, namely one that is traversed by an orange path from top to bottom. Note that the arrows on an orange path essentially point in the \emph{opposite} direction compared to the direction of the path. This is because we want the value to increase (i.e., worsen) when we follow an orange path.

An orange path implementing an edge $(v_1,v_2)$ (where $v_1 > v_2$) comes into the big square $B(v_2,v_1)$ from the right and leaves at the bottom. This turn is implemented as shown in \cref{fig:O3-turn-down}. The big square $B(v_2,v_1)$ is said to be of type O3.

If a vertex $v \in [2^n]$ has one incoming edge $(v_1,v)$ and one outgoing edge $(v,v_2)$ such that $v_1 > v > v_2$, then both edges will be implemented by orange paths. The orange path corresponding to $(v_1,v)$ will enter $B(v,v)$ from the top and stop at the centre of $B(v,v)$. The orange path corresponding to $(v,v_2)$ will start at the centre of $B(v,v)$ and leave the big square on the left. As above, we avoid introducing a KKT point by connecting the two paths at the centre of $B(v,v)$. This is achieved by the turn shown in \cref{fig:O4-turn-left}. The big square $B(v,v)$ is said to be of type O4.

If a vertex $v \in [2^n] \setminus \{1\}$ has one outgoing edge $(v,v_2)$ such that $v > v_2$, and no incoming edge, then this will yield an orange path starting at the centre of $B(v,v)$ and going to the left, as shown in \cref{fig:O5-orange-source}. The big square $B(v,v)$ is said to be of type O5 in that case. It is not hard to see that there will be a KKT point at the source of that orange path. On the other hand, if a vertex $v \in [2^n] \setminus \{1\}$ has one incoming edge $(v_1,v)$ such that $v_1 > v$, and no outgoing edge, then this will yield an orange path coming from the top and ending at the centre of $B(v,v)$, as shown in \cref{fig:O6-orange-sink}. The big square $B(v,v)$ is said to be of type O6 in that case. Again, there will be a KKT point at the sink of that orange path.

\begin{figure}
	\begin{subfigure}[t]{0.47\textwidth}
		\centering
		\scalebox{0.7}{
			\includegraphics{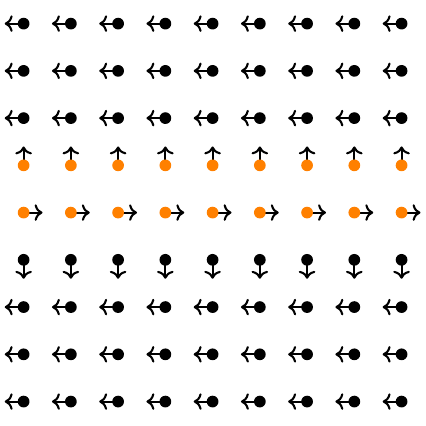}
		}
		\caption{[O1] Orange path traversing big square from right to left.}\label{fig:O1-horizontal-orange}
	\end{subfigure}\hfill
	\begin{subfigure}[t]{0.47\textwidth}
		\centering
		\scalebox{0.7}{
			\includegraphics{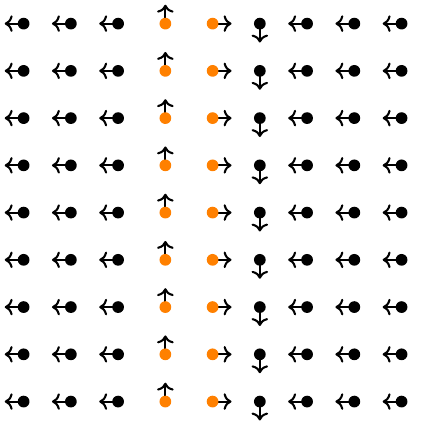}
		}
		\caption{[O2] Orange path traversing big square from top to bottom.}\label{fig:O2-vertical-orange}
	\end{subfigure}
	
	\bigskip
	
	\begin{subfigure}[t]{0.47\textwidth}
		\centering
		\scalebox{0.7}{
			\includegraphics{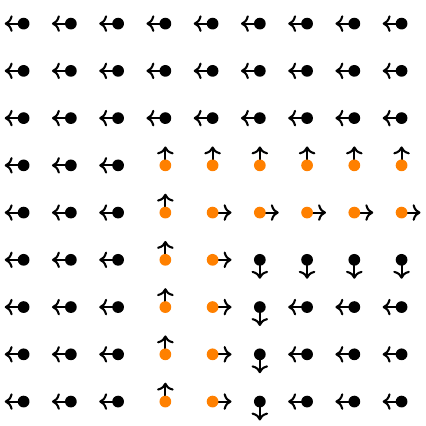}
		}
		\caption{[O3] Orange path entering big square from the right, turning, and leaving at the bottom.}\label{fig:O3-turn-down}
	\end{subfigure}\hfill
	\begin{subfigure}[t]{0.47\textwidth}
		\centering
		\scalebox{0.7}{
			\includegraphics{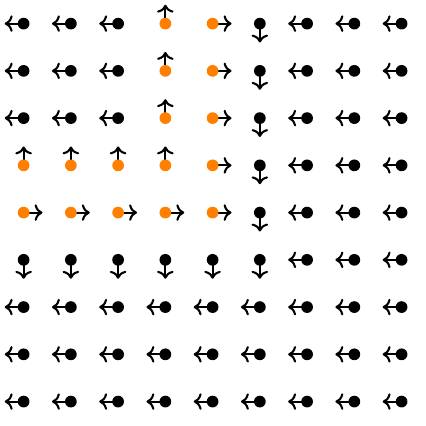}
		}
		\caption{[O4] Orange path entering big square from the top, turning, and leaving on the left.}\label{fig:O4-turn-left}
	\end{subfigure}
	
	\bigskip
	
	\begin{subfigure}[t]{0.47\textwidth}
		\centering
		\scalebox{0.7}{
			\includegraphics{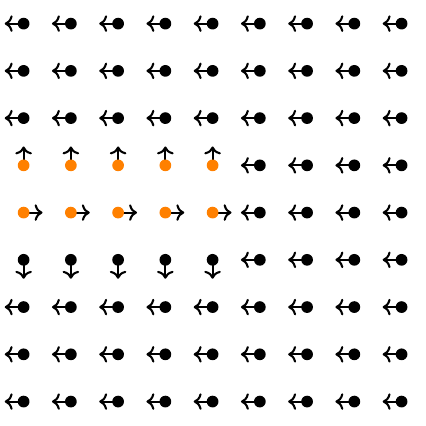}
		}
		\caption{[O5] Source: orange path starting at the centre of big square and leaving on the left.}\label{fig:O5-orange-source}
	\end{subfigure}\hfill
	\begin{subfigure}[t]{0.47\textwidth}
		\centering
		\scalebox{0.7}{
			\includegraphics{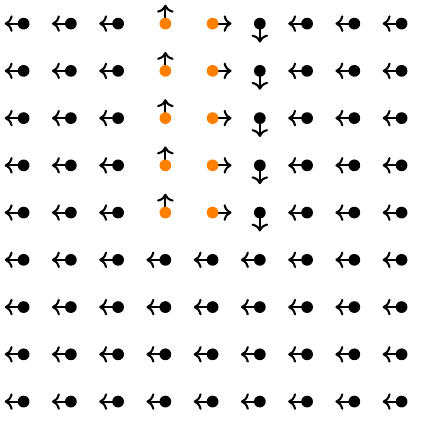}
		}
		\caption{[O6] Sink: orange path entering big square from the top and ending at the centre.}\label{fig:O6-orange-sink}
	\end{subfigure}
	
	\caption{Construction of the orange paths. The figures show various types of big squares containing different portions of orange paths. In these illustrations, the big squares are assumed to have size $8 \times 8$ instead of $2^{m+4} \times 2^{m+4}$.}
\end{figure}

\paragraph{\bf Crossings}
Note that, by construction, green paths only exist below the diagonal, and orange paths only exist above the diagonal. Thus, there is no point where an orange path crosses a green path. However, there might exist points where green paths cross, or orange paths cross. First of all, note that it is impossible to have more than two paths traversing a big square, and thus any crossing involves exactly two paths. Furthermore, no crossing can occur in big squares where a ``turn'' occurs, since, in that case, the turn connects the two paths.

The only way for two green paths to cross is the case where a green path traverses a big square from left to right, and a second green path traverses the same big square from bottom to top. In that case, we say that the big square is of type G7. This problem always occurs when one tries to embed an \eol/ instance in a two-dimensional domain. \citet{ChenD2009-2D} proposed a simple, yet ingenious, trick to resolve this issue. The idea is to locally re-route the two paths so that they no longer cross. This modification has the following two crucial properties: a) it is completely local, and b) it does not introduce any new solution (in our case a KKT point). \cref{fig:G7-green-crossing} shows how this modification is implemented for crossing green paths, i.e., what our construction does for big squares of type G7.

The same issue might arise for orange paths. By the same arguments as above, this can only happen when an orange path traverses a big square from right to left, and a second orange path traverses the same big square from top to bottom. In that case, we say that the big square is of type O7. \cref{fig:O7-orange-crossing} shows how the issue is locally resolved in that case, i.e., what our construction does for big squares of type O7.

\begin{figure}
	\begin{subfigure}{\textwidth}
		\centering
		\scalebox{0.62}{
			\includegraphics{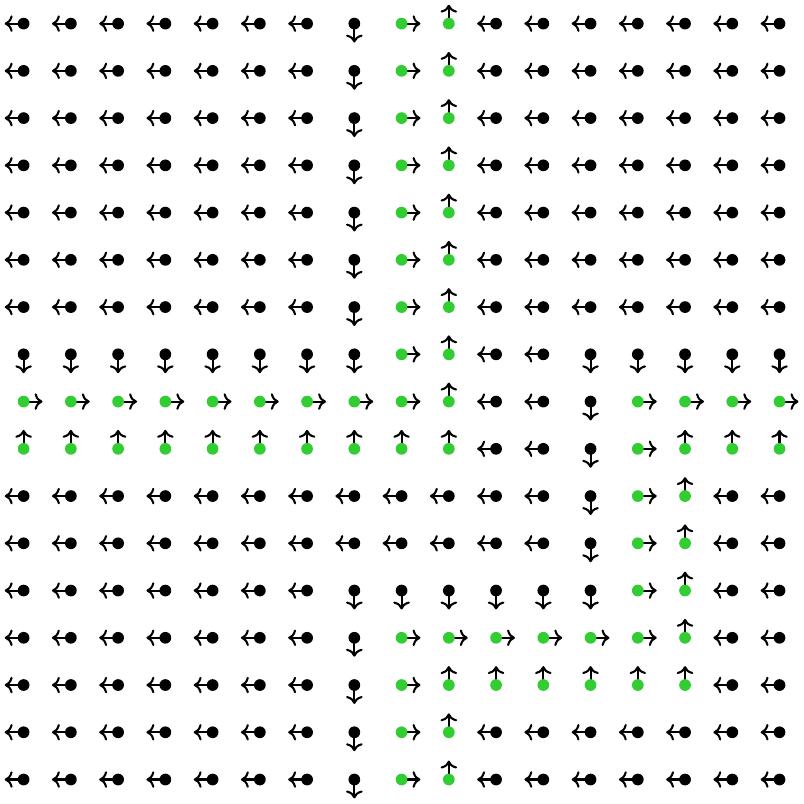}
		}
		\caption{[G7] Crossing of green paths.}\label{fig:G7-green-crossing}
	\end{subfigure}
	
	\bigskip
	
	\begin{subfigure}{\textwidth}
		\centering
		\scalebox{0.62}{
			\includegraphics{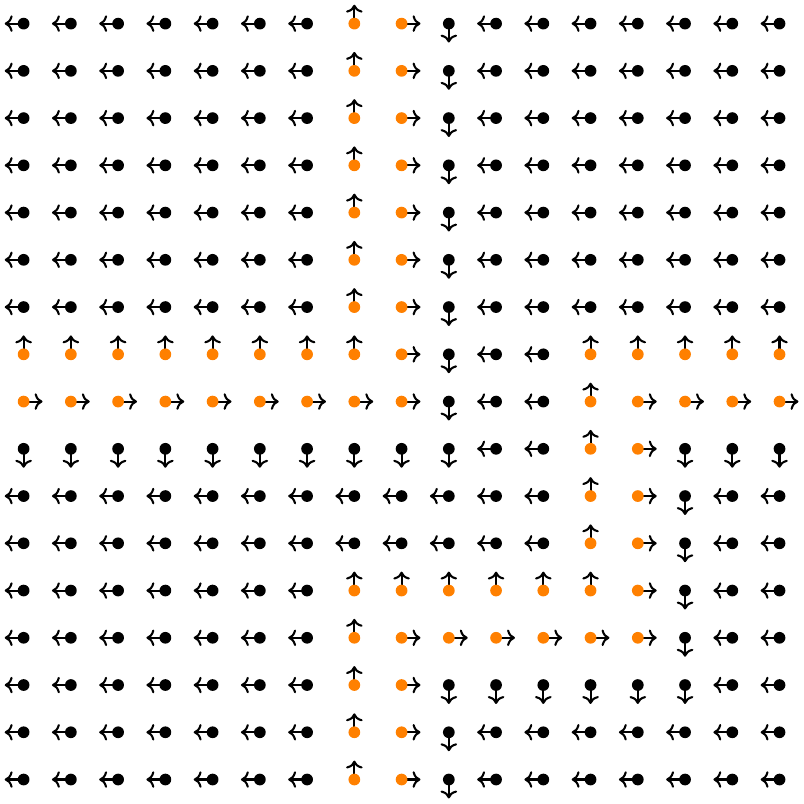}
		}
		\caption{[O7] Crossing of orange paths.}\label{fig:O7-orange-crossing}
	\end{subfigure}
	
	\caption{Crossing gadgets for green and orange paths. In these two illustrations, the big squares are assumed to have size $16 \times 16$ instead of $2^{m+4} \times 2^{m+4}$.}
\end{figure}

\paragraph{\bf Boundary and origin squares}
Any big square that is not traversed by any path (including all big squares $B(v,v)$ where $v$ is an isolated vertex of the \eol/ instance), will have all its grid points coloured in black, and $- \nabla f$ pointing to the left. These big squares, which are said to be of type E1, are as represented in \cref{fig:E1-standard-environment}. The only exceptions to this rule are the big squares $B(1,v)$ for all $v \in [2^n] \setminus \{1\}$. In those big squares, which are said to be of type E2, the grid points on the left boundary have $-\nabla f$ pointing downwards, instead of to the left. The rest of the grid points have $-\nabla f$ pointing to the left as before. Note that none of these big squares is ever traversed by a path, so they are always as shown in \cref{fig:E2-left-boundary-environment}.

\begin{figure}
	\begin{subfigure}[t]{0.47\textwidth}
		\centering
		\scalebox{0.7}{
			\includegraphics{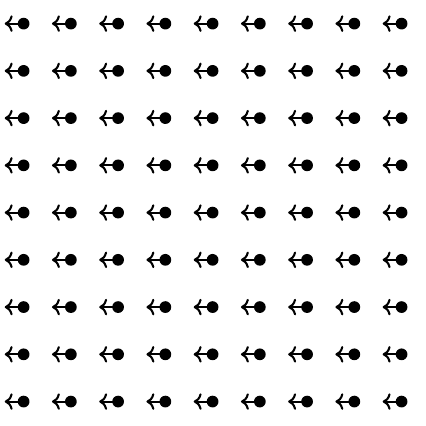}
		}
		\caption{[E1] Big square not traversed by any path.}\label{fig:E1-standard-environment}
	\end{subfigure}\hfill
	\begin{subfigure}[t]{0.47\textwidth}
		\centering
		\scalebox{0.7}{
			\includegraphics{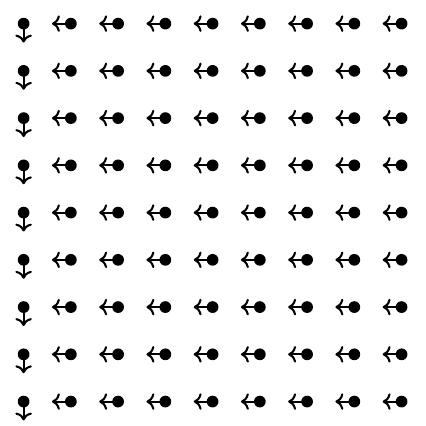}
		}
		\caption{[E2] Big square on left boundary of domain.}\label{fig:E2-left-boundary-environment}
	\end{subfigure}
	\caption{Big squares not traversed by any path. In these two illustrations, the big squares are assumed to have size $8 \times 8$ instead of $2^{m+4} \times 2^{m+4}$.}
\end{figure}

The big square $B(1,1)$ is special and we say that it is of type S. Since it corresponds to the trivial source of the \eol/ instance, it has one outgoing edge (which necessarily corresponds to a green path) and no incoming edge. Normally, this would induce a KKT point at the centre of $B(1,1)$ (as in \cref{fig:G5-green-source}). Furthermore, recall that, by the definition of the black value regime, there must also be a KKT point at the origin, if it is coloured in black. By a careful construction (which is very similar to the one used by \citet{HubacekY2017-CLS} for \clo/) we can ensure that these two KKT points neutralise each other. In other words, instead of two KKT points, there is no KKT point at all in $B(1,1)$. The construction for $B(1,1)$ is shown in \cref{fig:S-origin-big-square}.

\begin{figure}
	\centering
	\scalebox{0.7}{
		\includegraphics{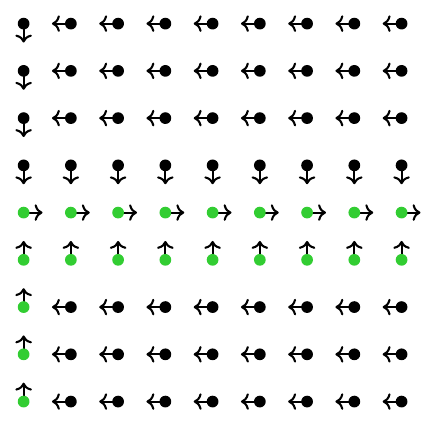}
	}
	\caption{[S] Construction for big square $B(1,1)$ (for size $8 \times 8$ instead of $2^{m+4} \times 2^{m+4}$).}\label{fig:S-origin-big-square}
\end{figure}

\smallskip

\cref{fig:X-full-boundary} shows the whole construction for a small example where $n=1$ and big squares have size $8 \times 8$ (instead of $2^{m+4} \times 2^{m+4}$).

\begin{figure}
	\centering
	\scalebox{0.7}{
		\includegraphics{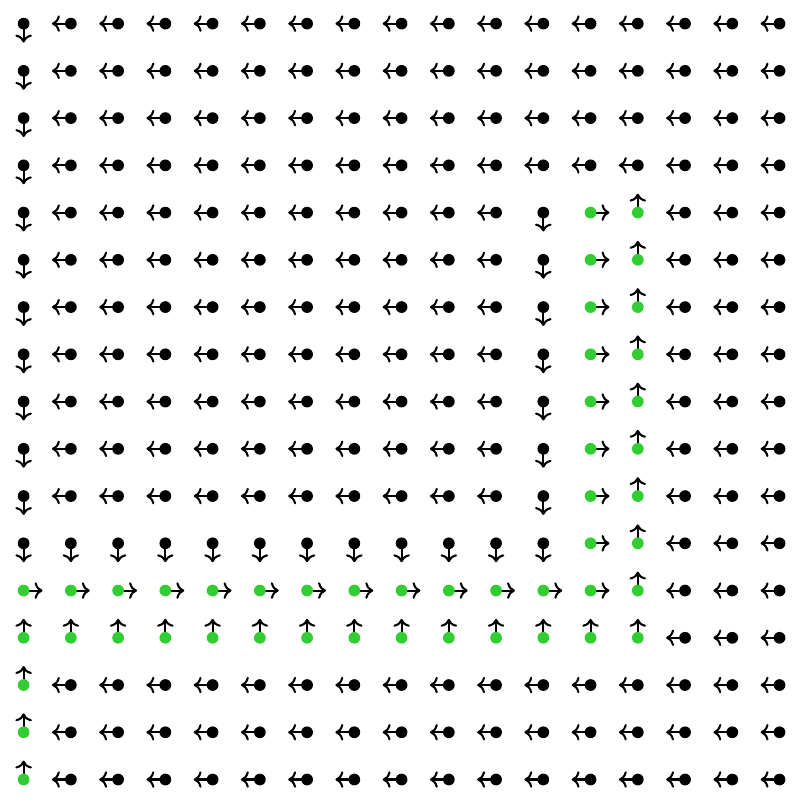}
	}
	\caption{[X] Full construction for a small example, in particular showing the whole boundary. Here $n=1$ and big squares have size $8 \times 8$ (instead of $2^{m+4} \times 2^{m+4}$).}\label{fig:X-full-boundary}
\end{figure}

\paragraph{\bf Green and orange paths meeting}
Our description of the construction is almost complete, but there is one crucial piece missing. Indeed, consider any vertex $v$ that has one incoming edge $(v_1,v)$ and one outgoing edge $(v,v_2)$ such that: A) $v_1 < v$ and $v_2 < v$, or B) $v_1 > v$ and $v_2 > v$. As it stands, a green path and an orange path meet at the centre of $B(v,v)$ which means that there is a local minimum or maximum at the centre of $B(v,v)$, and thus a KKT point. However, $v$ is not a solution to the \eol/ instance. Even though we cannot avoid having a KKT point in $B(v,v)$, we can ``hide'' it, so that finding it requires solving the \iter/ instance. This is implemented by constructing a PLS-Labyrinth gadget at the point where the green and orange paths meet. \cref{fig:LA-PLS-position,fig:LB-PLS-position} show where this PLS-Labyrinth gadget is positioned inside a big square of type LA (namely when case A above occurs) and a big square of type LB (namely when case B above occurs) respectively. The PLS-Labyrinth gadget can only be positioned at a point where a green path and an orange path meet. In particular, it cannot be used to ``hide'' a KKT point occurring at a source or sink of a green or orange path, i.e., at a solution of the \eol/ instance.

\begin{figure}
	\begin{subfigure}[t]{0.47\textwidth}
		\centering
		\includegraphics{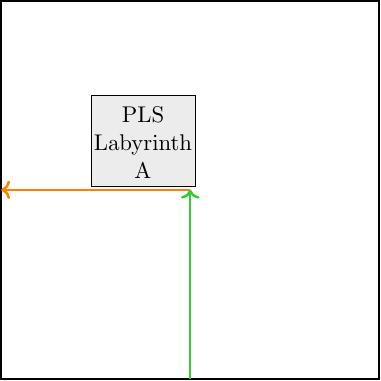}
		\caption{[LA] Position of PLS-Labyrinth gadget in big square of type LA.}\label{fig:LA-PLS-position}
	\end{subfigure}\hfill
	\begin{subfigure}[t]{0.47\textwidth}
		\centering
		\includegraphics{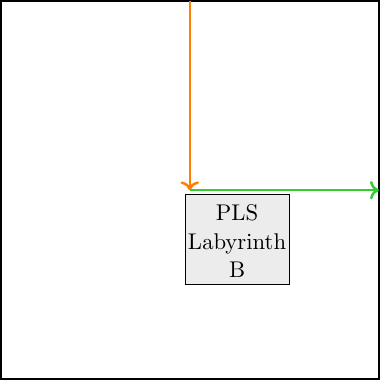}
		\caption{[LB] Position of PLS-Labyrinth gadget in big square of type LB.}\label{fig:LB-PLS-position}
	\end{subfigure}
	
	\caption{Position of PLS-Labyrinth gadget in big squares of type LA and LB.}
\end{figure}

\smallskip

In our construction, every big square is of type G1, G2, $\dots$, G7, O1, O2, $\dots$, O7, E1, E2, S, LA or LB. Note that we can efficiently determine the type of a given big square, if we have access to the \eol/ circuits $S$ and $P$.

\subsubsection{Embedding the \iter/ instance: The \pls/-Labyrinth}\label{sec:kkt-pls-labyrinth}

\paragraph{\bf PLS-Labyrinth}
We begin by describing the PLS-Labyrinth gadget for case A, i.e., $v$ has one incoming edge $(v_1,v)$ and one outgoing edge $(v,v_2)$ such that $v_1 < v$ and $v_2 < v$. In particular, $B(v,v)$ is of type LA. The PLS-Labyrinth gadget comprises $2^{m+2} \times 2^{m+2}$ small squares and is positioned in the big square $B(v,v)$ as shown in \cref{fig:LA-PLS-position}. Note, in particular, that the bottom side of the gadget is adjacent to the orange path, and the bottom-right corner of the gadget lies just above the point where the green and orange paths intersect (which occurs at the centre of $B(v,v)$). Finally, observe that since $B(v,v)$ has $2^{m+4} \times 2^{m+4}$ small squares, there is enough space for the PLS-Labyrinth gadget.

For convenience, we subdivide the PLS-Labyrinth gadget into $2^{m} \times 2^{m}$ medium squares. Thus, every medium square is made out of $4 \times 4$ small squares. We index the medium squares as follows: for $u_1,u_2 \in [2^m]$, let $M(u_1,u_2)$ denote the medium square that is the $u_2$th from the bottom and the $u_1$th from the \emph{right}. Thus, $M(1,1)$ corresponds to the medium square that lies at the bottom-right of the gadget (and is just above the intersection of the paths). Our construction will create the following paths inside the PLS-Labyrinth gadget:
\begin{itemize}
	\item For every $u \in [2^m]$ such that $C(u) > u$, there is an orange-blue path (namely, a path formed by both orange and blue points) starting at $M(u,1)$ and moving upwards until it reaches $M(u,u)$.
	\item For every $u \in [2^m]$ such that $C(u) > u$ and $C(C(u)) > C(u)$, there is a blue path starting at $M(u,u)$ and moving to the left until it reaches $M(C(u),u)$.
\end{itemize}
\cref{fig:labyrinth-A-highlevel} shows a high-level overview of how the \iter/ instance is embedded in the PLS-Labyrinth.
Note that if $C(u) > u$ and $C(C(u)) > C(u)$, then the blue path starting at $M(u,u)$ will move to the left until $M(C(u),u)$ where it will reach the orange-blue path moving up from $M(C(u),1)$ to $M(C(u),C(u))$ (which exists since $C(C(u)) > C(u)$). Thus, every blue path will always ``merge'' into some orange-blue path. On the other hand, some orange-blue paths will stop in the environment without merging into any other path. Consider any $u \in [2^m]$ such that $C(u) > u$. The orange-blue path for $u$ stops at $M(u,u)$. If $C(C(u)) > C(u)$, then there is a blue path starting there, so the orange-blue path ``merges'' into the blue path. However, if $C(C(u)) \leq C(u)$, i.e., $C(C(u))=C(u)$, there is no blue path starting at $M(u,u)$ and the orange-blue path just stops in the environment. Thus, the only place in the PLS-Labyrinth where a path can stop in the environment is in a medium square $M(u,u)$ such that $C(u)>u$ and $C(C(u))=C(u)$. This corresponds exactly to the solutions of the \iter/ instance $C$. In our construction, we will ensure that KKT points can indeed only occur at points where a path stops without merging into any other path.

\begin{figure}
	\centering
	\scalebox{0.8}{
		\includegraphics{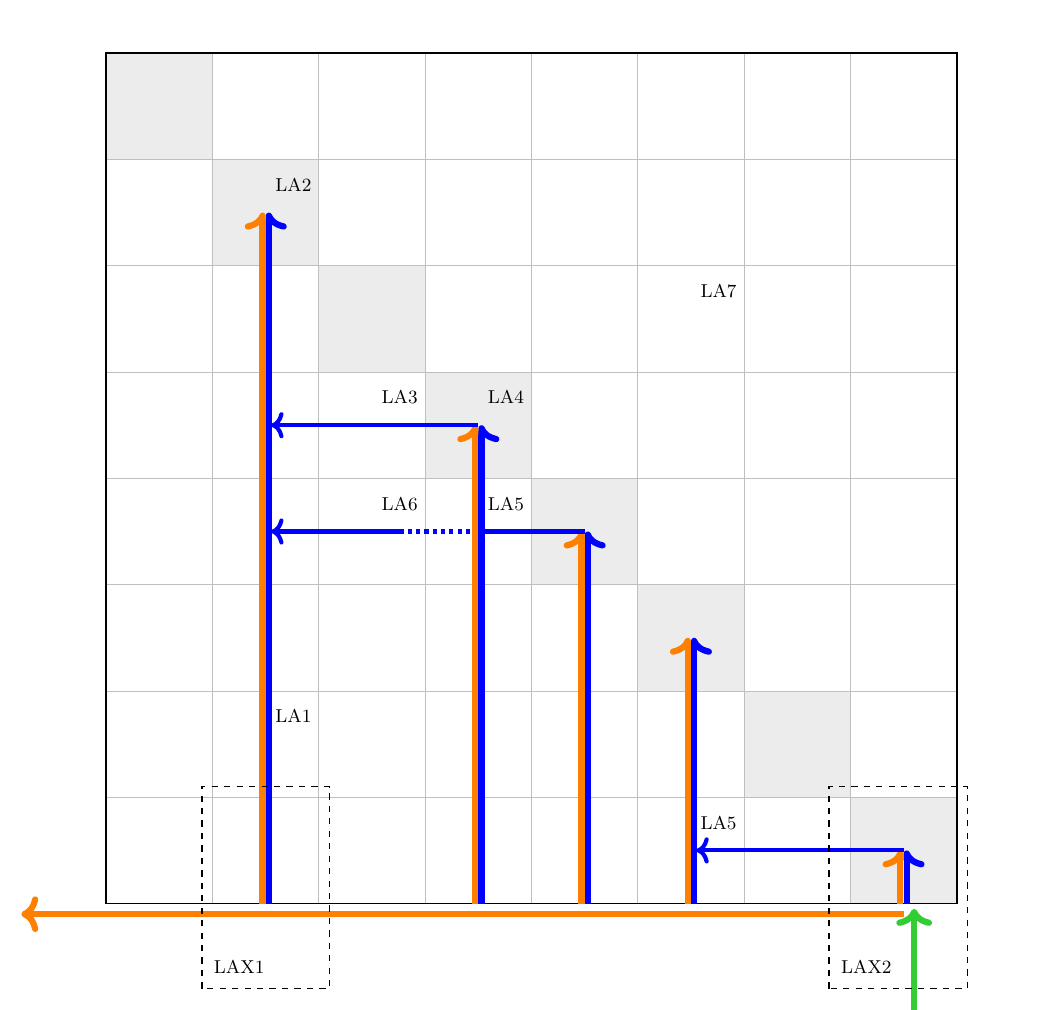}
	}
	\caption{Map of the \pls/-Labyrinth for case A corresponding to the \iter/ example of \cref{fig:ITER-example}. Shaded squares are the medium squares corresponding to the nodes of \iter/. The horizontal blue lines (pointing left) correspond to the 3 edges in \cref{fig:ITER-example} that go out from non-solutions, and we do not use similar lines going out from solutions (nodes 3 and 7). We have also indicated the parts LA1-LA6, and LAX1-LAX2, that are constructed in \cref{fig:LA-all-figures}.}\label{fig:labyrinth-A-highlevel}
\end{figure}

\paragraph{\bf Orange-blue paths} 
An orange-blue path moves from $M(u,1)$ upwards to $M(u,u)$ (for some $u \in [2^m]$ such that $C(u) > u$) and has a width of two small squares. The left-most point is coloured in orange and the two points on the right are blue. \cref{fig:LA1-orange-blue-path} shows a medium square that is being traversed by the orange-blue path, i.e., a medium square $M(u,w)$ where $w < u$. We say that such a medium square $M(u,w)$ is of type LA1. When the orange-blue path reaches $M(u,u)$, it either ``turns'' to the left and creates the beginning of a blue path (medium square of type LA4, \cref{fig:LA4-orange-blue-turn}), or it just stops there (medium square of type LA2, \cref{fig:LA2-orange-blue-sink}). The case where the orange-blue path just stops, occurs when there is no blue path starting at $M(u,u)$. Note that, in that case, $u$ is a solution of the \iter/ instance, and so it is acceptable for a medium square of type LA2 to contain a KKT point.

The orange-blue path begins in $M(u,1)$, which lies just above the orange path. In fact, the beginning of the orange-blue path is adjacent to the orange path as shown in \cref{fig:LAX1-orange-blue-start}. This is needed, since if the orange-blue path started in the environment, the point coloured orange would yield a local maximum and thus a KKT point.

The beginning of the orange-blue path for $u=1$ is special, since, in a certain sense, this path is created by the intersection of the green and orange paths. \cref{fig:LAX2-labyrinth-A-origin} shows how the intersection is implemented and how exactly it is adjacent to $M(1,1)$. Note that $M(1,1)$ is just a standard ``turn'', i.e., a medium square of type LA4.

\begin{figure}
	\begin{subfigure}[t]{0.3\textwidth}
		\centering
		\scalebox{0.8}{
			\includegraphics{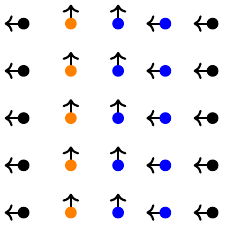}
		}
		\caption{[LA1] Orange-blue path traversing medium square.}\label{fig:LA1-orange-blue-path}
	\end{subfigure}\hfill
	\begin{subfigure}[t]{0.3\textwidth}
		\centering
		\scalebox{0.8}{
			\includegraphics{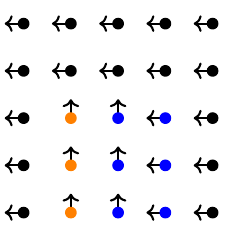}
		}
		\caption{[LA2] Orange-blue path ending in medium square.}\label{fig:LA2-orange-blue-sink}
	\end{subfigure}\hfill
	\begin{subfigure}[t]{0.3\textwidth}
		\centering
		\scalebox{0.8}{
			\includegraphics{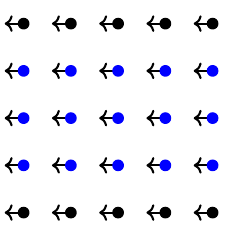}
		}
		\caption{[LA3] Blue path traversing medium square.}\label{fig:LA3-blue-path}
	\end{subfigure}
	
	\bigskip
	
	\begin{subfigure}[t]{0.3\textwidth}
		\centering
		\scalebox{0.8}{
			\includegraphics{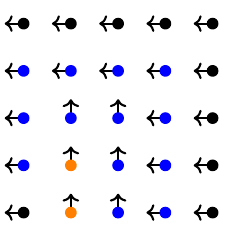}
		}
		\caption{[LA4] Orange-blue path turning in medium square and creating start of blue path.}\label{fig:LA4-orange-blue-turn}
	\end{subfigure}\hfill
	\begin{subfigure}[t]{0.65\textwidth}
		\centering
		\scalebox{0.8}{
			\includegraphics{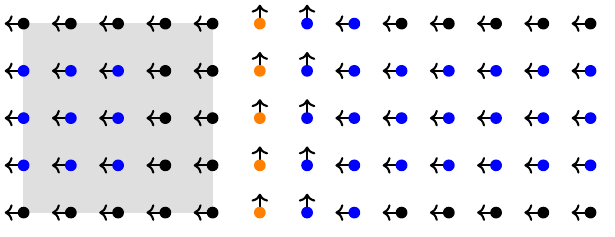}
		}
		\caption{[LA6] Blue path crossing over orange-blue path. Medium square of type LA6 indicated in grey.}\label{fig:LA6-crossing-over}
	\end{subfigure}
	
	\bigskip
	
	\begin{subfigure}[t]{0.42\textwidth}
		\centering
		\scalebox{0.8}{
			\includegraphics{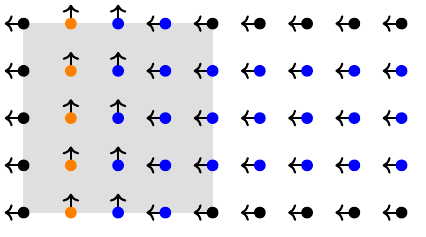}
		}
		\caption{[LA5] Blue path merging into orange-blue path. Medium square of type LA5 indicated in grey.}\label{fig:LA5-merge}
	\end{subfigure}\hfill
	\begin{subfigure}[t]{0.25\textwidth}
		\centering
		\scalebox{0.8}{
			\includegraphics{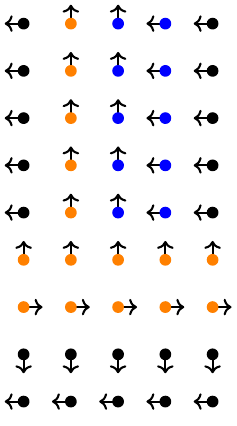}
		}
		\caption{[LAX1] Start of orange-blue path, adjacent to orange path.}\label{fig:LAX1-orange-blue-start}
	\end{subfigure}\hfill
	\begin{subfigure}[t]{0.27\textwidth}
		\centering
		\scalebox{0.8}{
			\includegraphics{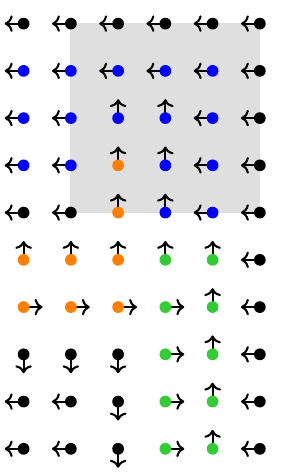}
		}
		\caption{[LAX2] Start of orange-blue path for $u=1$, adjacent to the intersection of green and orange path. $M(1,1)$ is indicated in grey.}\label{fig:LAX2-labyrinth-A-origin}
	\end{subfigure}
	
	\caption{Construction of blue and orange-blue paths in the PLS-Labyrinth gadget inside a big square of type LA.}\label{fig:LA-all-figures}
\end{figure}

\paragraph{\bf Blue paths} A blue path starts in $M(u,u)$ for some $u \in [2^m]$ such that $C(u) > u$ and $C(C(u)) > C(u)$. It moves from right to left and has a width of two small squares. All three points on the path are coloured blue and the direction of steepest descent points to the left. \cref{fig:LA3-blue-path} shows a medium square traversed by a blue path. Such a medium square is said to be of type LA3. As mentioned above, the blue path starts at $M(u,u)$ which is of type LA4 (a ``turn''). When the blue path reaches $M(C(u),u)$, it merges into the orange-blue path going from $M(C(u),1)$ to $M(C(u),C(u))$. This merging is straightforward and is implemented as shown in \cref{fig:LA5-merge}. The medium square $M(C(u),u)$ is then said to be of type LA5.

\paragraph{\bf Crossings} Note that two orange-blue paths cannot cross, and similarly two blue paths can also not cross. However, a blue path going from $M(u,u)$ to $M(C(u),u)$ can cross many other orange-blue paths, before it reaches and merges into its intended orange-blue path. Fortunately, these crossings are much easier to resolve than earlier. Indeed, when a blue path is supposed to cross an orange-blue path, it can simply merge into it and restart on the other side. The important thing to note here is that, while a blue path cannot stop in the environment (without creating a KKT point), it can \emph{start} in the environment. \cref{fig:LA6-crossing-over} shows how this is implemented. In particular, we use a medium square of type LA5 for the merging, and a medium square of type LA6 for the re-start of the blue path.

Note that if the blue path has to cross more than one orange-blue path in immediate succession, then it will simply merge into the first one it meets, and restart after the last one (i.e., as soon as it reaches a medium square that is not traversed by an orange-blue path).

\smallskip

Finally, we say that a medium square is of type LA7, if it does not contain any path at all. Medium squares of type LA7 are like the environment, i.e., all the grid points are coloured black and the arrows of steepest descent point to the left. In our construction, every medium square in the PLS-Labyrinth gadget is of type LA1, LA2, $\dots$, LA6, or LA7. It is easy to check that the type of a given medium square can be determined efficiently, given access to the \iter/ circuit $C$.

\smallskip

The PLS-Labyrinth gadget for case B is, in a certain sense, symmetric to the one presented above. Indeed, it suffices to perform a point reflection (in other words, a rotation by $180$ degrees) with respect to the centre of $B(v,v)$, and a very simple transformation of the colours. With regards to the final interpolated function, this corresponds to rotating $B(v,v)$ by $180$ degrees around its centre and multiplying the output of the function by $-1$. Let $\phi: B(v,v) \to B(v,v)$ denote rotation by $180$ degrees around the centre of $B(v,v)$. Then, the direction of steepest descent at some grid point $(x,y) \in B(v,v)$ in case B is simply the same as the direction of steepest descent at $\phi(x,y)$ in case A. The colour of $(x,y)$ in case B is obtained from the colour of $\phi(x,y)$ in case A as follows:
\begin{itemize}
	\item black remains black,
	\item green becomes orange, and vice-versa,
	\item blue becomes red, and vice-versa.
\end{itemize}
\cref{fig:labyrinth-B-highlevel} shows a high-level overview of the PLS-Labyrinth gadget for case B.
We obtain corresponding medium squares of type LB1, LB2, $\dots$, LB7. The analogous illustrations for case B are shown in \cref{fig:LB-all-figures}.

\begin{figure}
	\centering
	\scalebox{0.8}{
		\includegraphics{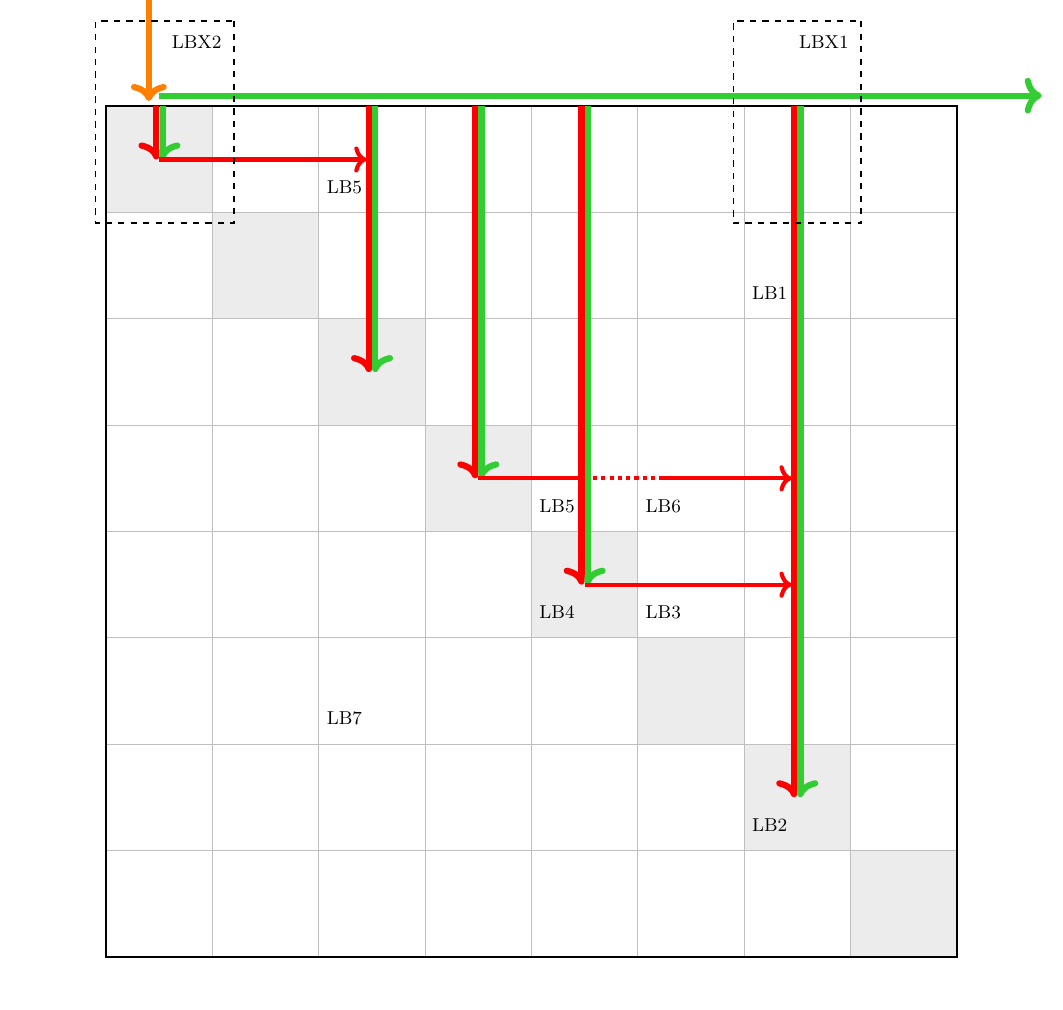}
	}
	\caption{Map of the \pls/-Labyrinth for case B corresponding to the \iter/ example of \cref{fig:ITER-example}. Shaded squares are the medium squares corresponding to the nodes of \iter/. We have also indicated the parts LB1-LB6, and LBX1-LBX2, that are constructed in \cref{fig:LB-all-figures}.}\label{fig:labyrinth-B-highlevel}
\end{figure}

\begin{figure}
	\begin{subfigure}[t]{0.3\textwidth}
		\centering
		\scalebox{0.8}{
			\includegraphics{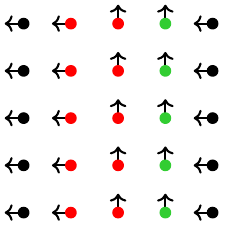}
		}
		\caption{[LB1] Red-green path traversing medium square.}\label{fig:LB1-red-green-path}
	\end{subfigure}\hfill
	\begin{subfigure}[t]{0.3\textwidth}
		\centering
		\scalebox{0.8}{
			\includegraphics{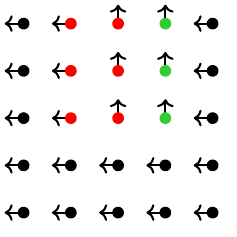}
		}
		\caption{[LB2] Red-green path ending in medium square.}\label{fig:LB2-red-green-sink}
	\end{subfigure}\hfill
	\begin{subfigure}[t]{0.3\textwidth}
		\centering
		\scalebox{0.8}{
			\includegraphics{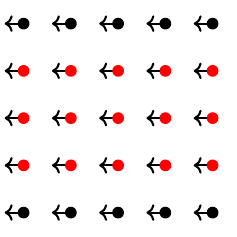}
		}
		\caption{[LB3] Red path traversing medium square.}\label{fig:LB3-red-path}
	\end{subfigure}
	
	\bigskip
	
	\begin{subfigure}[t]{0.3\textwidth}
		\centering
		\scalebox{0.8}{
			\includegraphics{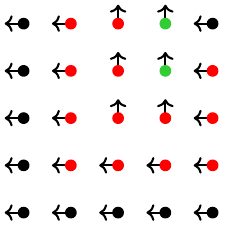}
		}
		\caption{[LB4] Red-green path turning in medium square and creating start of red path.}\label{fig:LB4-red-green-turn}
	\end{subfigure}\hfill
	\begin{subfigure}[t]{0.65\textwidth}
		\centering
		\scalebox{0.8}{
			\includegraphics{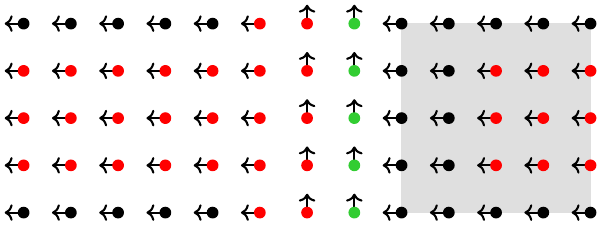}
		}
		\caption{[LB6] Red path crossing over red-green path. Medium square of type LB6 indicated in grey.}\label{fig:LB6-crossing-over}
	\end{subfigure}
	
	\bigskip
	
	\begin{subfigure}[t]{0.42\textwidth}
		\centering
		\scalebox{0.8}{
			\includegraphics{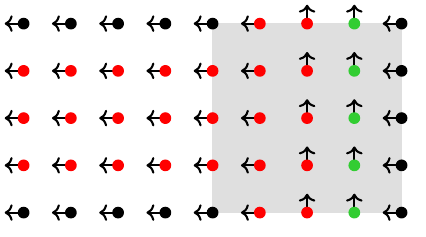}
		}
		\caption{[LB5] Red path merging into red-green path. Medium square of type LB5 indicated in grey.}\label{fig:LB5-merge}
	\end{subfigure}\hfill
	\begin{subfigure}[t]{0.25\textwidth}
		\centering
		\scalebox{0.8}{
			\includegraphics{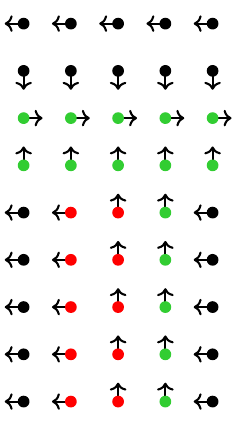}
		}
		\caption{[LBX1] Start of red-green path, adjacent to green path.}\label{fig:LBX1-red-green-start}
	\end{subfigure}\hfill
	\begin{subfigure}[t]{0.27\textwidth}
		\centering
		\scalebox{0.8}{
			\includegraphics{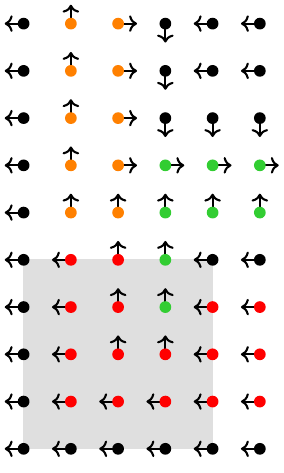}
		}
		\caption{[LBX2] Start of red-green path for $u=1$, adjacent to the intersection of orange and green path. $M(1,1)$ is indicated in grey.}\label{fig:LBX2-labyrinth-B-origin}
	\end{subfigure}
	
	\caption{Construction of red and red-green paths in the PLS-Labyrinth gadget inside a big square of type LB.}\label{fig:LB-all-figures}
\end{figure}

\subsection{Extending the function to the rest of the domain}

Up to this point we have defined the function $f$ and the direction of its gradient at all grid points of $G$. In order to extend $f$ to the whole domain $[0,N]^2$, we use \emph{bicubic interpolation} (see e.g.~\citep{Russell1995-interpolation} or the corresponding Wikipedia article\footnote{\url{https://en.wikipedia.org/wiki/Bicubic_interpolation}}). Note that the more standard and simpler \emph{bilinear} interpolation (used in particular by \citet{HubacekY2017-CLS}) yields a continuous function, but not necessarily a continuously differentiable function. On the other hand, bicubic interpolation ensures that the function will indeed be continuously differentiable over the whole domain $[0,N]^2$.

We use bicubic interpolation in every small square of the grid $G$. Consider any small square and let $(x,y) \in [0,1]^2$ denote the local coordinates of a point inside the square. Then, the bicubic interpolation inside this square will be a polynomial of the form:
\begin{equation}\label{eq:bicubic}
f(x,y) = \sum_{i=0}^3 \sum_{j=0}^3 a_{ij} x^i y^j
\end{equation}
where the coefficients $a_{ij}$ are computed as follows
\begin{align}\label{eq:bicubic-coefficients}
&\begin{bmatrix}
a_{00} & a_{01} & a_{02} & a_{03}\\
a_{10} & a_{11} & a_{12} & a_{13}\\
a_{20} & a_{21} & a_{22} & a_{23}\\
a_{30} & a_{31} & a_{32} & a_{33}\\
\end{bmatrix}\\
&=
\begin{bmatrix}
1 & 0 & 0 & 0\\
0 & 0 & 1 & 0\\
-3 & 3 & -2 & -1\\
2 & -2 & 1 & 1\\
\end{bmatrix} \cdot 
\begin{bmatrix}
f(0,0) & f(0,1) & f_y(0,0) & f_y(0,1)\\
f(1,0) & f(1,1) & f_y(1,0) & f_y(1,1)\\
f_x(0,0) & f_x(0,1) & f_{xy}(0,0) & f_{xy}(0,1)\\
f_x(1,0) & f_x(1,1) & f_{xy}(1,0) & f_{xy}(1,1)\\
\end{bmatrix}
\cdot
\begin{bmatrix}
1 & 0 & -3 & 2\\
0 & 0 & 3 & -2\\
0 & 1 & -2 & 1\\
0 & 0 & -1 & 1\\
\end{bmatrix}\nonumber
\end{align}
Here $f_x$ and $f_y$ denote the partial derivatives with respect to $x$ and $y$ respectively. Similarly, $f_{xy}$ denotes the second order partial derivative with respect to $x$ and $y$. It remains to explain how we set the values of $f, f_x, f_y$ and $f_{xy}$ at the four corners of the square:
\begin{itemize}
	\item The values $f(0,0), f(0,1), f(1,0)$ and $f(1,1)$ are set according to the value regimes in our construction.
	\item The values of $f_x(0,0), f_x(0,1), f_x(1,0), f_x(1,1), f_y(0,0), f_y(0,1), f_y(1,0)$ and $f_y(1,1)$ are set based on the direction of steepest descent ($- \nabla f$) in our construction, with a length multiplier of $\delta = 1/2$. For example, if the arrow of steepest descent at $(0,1)$ is pointing to the left, then we set $f_x(0,1)=\delta$ and $f_y(0,1)=0$. If it is pointing up, then we set $f_x(0,1)=0$ and $f_y(0,1)=-\delta$.
	\item We always set $f_{xy}(0,0) = f_{xy}(0,1) = f_{xy}(1,0) = f_{xy}(1,1) = 0$.
\end{itemize}
By using this interpolation procedure in each small square, we obtain a function $f: [0,N]^2 \to \mathbb{R}$. In fact, we can even extend the function to points $(x,y) \in \mathbb{R}^2 \setminus [0,N]^2$ by simply using the interpolated polynomial obtained for the small square that is closest to $(x,y)$. This will be done automatically by our construction of the arithmetic circuit computing $f$ and it will ensure that the gradient is well-defined even on the boundary of $[0,N]^2$.

\begin{lemma}\label{lem:kkt-function-properties}
	The function $f: \mathbb{R}^2 \to \mathbb{R}$ we obtain by bicubic interpolation has the following properties:
	\begin{itemize}
		\item It is continuously differentiable on $\mathbb{R}^2$;
		\item $f$ and its gradient $\nabla f$ are Lipschitz-continuous on $[0,N]^2$ with Lipschitz-constant $L = 2^{18} N$;
		\item Well-behaved arithmetic circuits computing $f$ and $\nabla f$ can be constructed in polynomial time (in the size of the circuits $S,P$ and $C$).
	\end{itemize}
\end{lemma}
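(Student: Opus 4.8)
The plan is to prove the three properties in turn, working square by square on the bicubic interpolant and then gluing. For \emph{continuous differentiability}, observe that inside any small square the interpolant \eqref{eq:bicubic} is a polynomial, hence $C^\infty$, so the only thing to check is that $f$ and $\nabla f$ match across the shared edges of adjacent small squares (and that the extension outside $[0,N]^2$ is consistent). For this I would invoke the standard Hermite property of the coefficient formula \eqref{eq:bicubic-coefficients}: the resulting polynomial reproduces $f$, $f_x$, $f_y$ and $f_{xy}$ at all four corners, and consequently its restriction to any edge of the square, as well as the restriction of each of $f_x$ and $f_y$ to that edge, is a cubic in the coordinate running along the edge that is determined \emph{solely} by the values of $f,f_x,f_y,f_{xy}$ at the two endpoints of the edge. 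Our construction assigns to each grid point a single value of $f$ (from the value regime of that point), a single gradient $\nabla f$ (from the steepest-descent arrow, scaled by $\delta = 1/2$), and $f_{xy}=0$; hence two small squares sharing an edge feed identical endpoint data into that edge, so $f$ and $\nabla f$ agree there, which gives $f \in C^1([0,N]^2)$. For the extension to $\mathbb{R}^2$, the enclosing-square indices are taken to be $\max(0,\min(N-1,\lfloor\cdot\rfloor))$, so where two ``closest-square'' polynomials meet along a grid line they agree, as cubics, on the corresponding unit segment, hence agree identically (and so do their derivatives); thus $f \in C^1(\mathbb{R}^2)$.

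For the \emph{Lipschitz bounds}, I would first bound the corner data uniformly over all small squares: every value $f(\cdot,\cdot)$ lies in the interval $[-3N-20,\,5N+20]$ prescribed by the value regimes, every first-derivative entry lies in $\{0,\pm 1/2\}$, and $f_{xy}=0$ throughout. Plugging this into \eqref{eq:bicubic-coefficients}, whose two fixed matrices have integer entries of absolute value at most $3$, yields $|a_{ij}| = O(N)$ for every coefficient in every small square. A routine estimate then bounds $\|\nabla f\|$ and the operator norm of the Hessian of $f$ inside each small square (where the local coordinates lie in $[0,1]^2$), and chasing the constants shows both are at most $2^{18}N$, with room to spare. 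To globalise this: $[0,N]^2$ is convex and $\|\nabla f\| \le L$ on every small square, so integrating $\nabla f$ along the segment joining any two points of $[0,N]^2$ gives $|f(x)-f(y)| \le L\|x-y\|$; and since $\nabla f$ is continuous on $\mathbb{R}^2$ and $L$-Lipschitz on each small square, subdividing that segment at the grid lines it crosses and applying the triangle inequality gives $\|\nabla f(x)-\nabla f(y)\| \le L\|x-y\|$. Hence $f$ and $\nabla f$ are $L$-Lipschitz on $[0,N]^2$ with $L = 2^{18}N$.

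For the \emph{arithmetic circuits}, I would exhibit a polynomial-time algorithm that, from the \textsc{End-of-Line} circuits $S,P$ and the \textsc{Iter} circuit $C$, builds well-behaved arithmetic circuits for $f$ and $\nabla f$. On input $(x,y)$ the circuit: (i) extracts the small-square indices $c_x,c_y \in \{0,\dots,N-1\}$ bit by bit, top bit first and $n+m+4$ bits in total (since $N = 2^{n+m+4}$), using only comparison gates together with additions and subtractions of constant powers of two --- this automatically realises $\max(0,\min(N-1,\lfloor\cdot\rfloor))$, hence the intended extension rule --- and sets the local coordinates $x_{\mathrm{loc}} = x - c_x$, $y_{\mathrm{loc}} = y - c_y$; (ii) from $c_x,c_y$ locates the enclosing big square and, where relevant, the enclosing medium square, and runs the finite case analysis over the square types (G1--G7, O1--O7, E1, E2, S, LA, LB, and LA1--LA7, LB1--LB7), driven by the outputs of $S,P,C$, which are simulated arithmetically using $\max/\min$ for $\lor/\land$ and $1-(\cdot)$ for $\lnot$, never using a true multiplication; (iii) reads off the corner data at the four corners as selections among rational constants; (iv) forms the coefficients $a_{ij}$ via \eqref{eq:bicubic-coefficients}, which needs only multiplications by constants; and (v) evaluates $\sum_{i,j} a_{ij} x_{\mathrm{loc}}^i y_{\mathrm{loc}}^j$, respectively the two analogous polynomials for the partial derivatives. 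The circuit has size polynomial in $\sz(S)+\sz(P)+\sz(C)$ and is constructed in polynomial time, and it is well-behaved because the only true multiplications occur in step (v), where every directed path to an output meets at most four of them --- which is $\le \log \sz(f)$ for all but finitely many instances, and one can pad with dummy gates in the remaining cases (cf.\ the $k$-well-behaved circuits discussed earlier). Correctness is immediate, since the circuit literally evaluates the definition of $f$.

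The delicate part of the argument is step (ii): one must verify that the entire routing logic of the construction --- the $101$ small-square types and the two PLS-Labyrinth gadgets, each determined from $S,P,C$ --- really can be implemented by a polynomial-size arithmetic circuit that stays well-behaved, i.e.\ that the ``which square am I in, and of what type'' computation uses no true multiplication and that only the final polynomial evaluation does, in boundedly many. By comparison, the $C^1$-gluing rests on nothing more than the Hermite property of bicubic interpolation, and the Lipschitz claim is a direct constant-chase.
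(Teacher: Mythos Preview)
Your proposal is correct and follows essentially the same approach as the paper: for the first point the paper simply cites a reference on bicubic interpolation (you spell out the Hermite edge-matching argument explicitly), for the Lipschitz bound both you and the paper bound the corner data, push the bound through \eqref{eq:bicubic-coefficients} to get $|a_{ij}| = O(N)$, and globalise by subdividing segments at grid crossings, and for the circuit construction both describe the same pipeline (bit-extract the indices, run the Boolean case analysis from $S,P,C$, compute the sixteen corner values, form the $a_{ij}$, evaluate the polynomial). One small imprecision: in your step (iii) the corner $f$-values are not literally ``rational constants'' --- they depend on the grid position via the value-regime formulas --- so the selection there either uses a bounded-max trick or a constant number of true multiplications per path (the paper in fact allows true multiplications in its step 3 as well as step 4); either way the circuit is well-behaved, so your conclusion stands.
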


\begin{proof}
	Regarding the first point, see, e.g., \citep{Russell1995-interpolation}.
	
	\paragraph{\bf Lipschitz-continuity} In order to prove the second point, we first show that $f$ and $\nabla f$ are $L$-Lipschitz-continuous in every small square of the grid. Consider any small square. In our construction, the values of $f,f_x,f_y,f_{xy}$ used in the computation of the coefficients $a_{ij}$ are clearly all upper bounded by $2^3 N$ in absolute value. Thus, using \cref{eq:bicubic-coefficients}, it is easy to check that $|a_{ij}| \leq 2^{10} N$ for all $i,j \in \{0,1,2,3\}$. Furthermore, note that the partial derivatives of $f$ inside the small square can be expressed as:
	\begin{equation}\label{eq:bicubic-gradient}
	\frac{\partial f}{\partial x}(x,y) = \sum_{i=1}^3 \sum_{j=0}^3 i \cdot a_{ij} x^{i-1} y^j
	\qquad
	\frac{\partial f}{\partial y}(x,y) = \sum_{i=0}^3 \sum_{j=1}^3 j \cdot a_{ij} x^i y^{j-1}
	\end{equation}
	using the local coordinates $(x,y) \in [0,1]^2$ inside the small square. Finally, it is easy to check that the monomials $x^i y^j$, $i,j \in \{0,1,2,3\}$, are all $6$-Lipschitz continuous over $[0,1]^2$. Putting everything together and using \cref{eq:bicubic} and \cref{eq:bicubic-gradient}, it follows that $f$ and $\nabla f$ are Lipschitz-continuous (w.r.t.\ the $\ell_2$-norm) with Lipschitz constant $L = 2^{18} N$ inside the small square. Note that the change from local coordinates to standard coordinates is just a very simple translation that does not impact the Lipschitzness of the functions.
	
	Since $f$ and $\nabla f$ are $L$-Lipschitz-continuous inside every small square and continuous over all of $[0,N]^2$, it follows that they are in fact $L$-Lipschitz-continuous over the whole domain $[0,N]^2$. Indeed, consider any points $z_1,z_2 \in [0,N]^2$. Then, there exists $\ell \in \mathbb{N}$ such that the segment $z_1z_2$ can be subdivided into $z_1w_1w_2 \dots w_\ell z_2$ so that each of the segments $z_1w_1$, $w_1w_2, \dots,$ $w_{\ell-1}w_\ell$, $w_\ell z_2$ lies within a small square. For ease of notation, we let $w_0:=z_1$ and $w_{\ell+1}:=z_2$. Then, we can write
	$$\|\nabla f(z_1) - \nabla f(z_2)\| \leq \sum_{i=0}^\ell \|\nabla f(w_i) - \nabla f(w_{i+1})\| \leq L \sum_{i=0}^\ell \|w_i-w_{i+1}\| = L \|z_1-z_2\|$$
	where we used the fact that $\sum_{i=0}^\ell \|w_i-w_{i+1}\| = \|z_1-z_2\|$, since $w_0w_1w_2 \dots w_\ell w_{\ell+1}$ is just a partition of the segment $z_1z_2$. The exact same argument also works for $f$.
	
	\paragraph{\bf Arithmetic circuits} Before showing how to construct the arithmetic circuits for $f$ and $\nabla f$, we first construct a Boolean circuit $B$ that will be used as a sub-routine. The Boolean circuit $B$ receives as input a point $(x,y)$ on the grid $G=\{0,1,\dots,N\}^2$ and outputs the colour (i.e., value regime) and steepest descent arrow at that point. It is not too hard to see that the circuit $B$ can be constructed in time that is polynomial in the sizes of the circuits $S,P$ and $C$. In more detail, it performs the following operations:
	\begin{enumerate}
		\item Compute \eol/-vertices $v_1,v_2 \in [2^n]$ such that $(x,y)$ lies in the big square $B(v_1,v_2)$.
		\item Using the \eol/ circuits $S$ and $P$ determine the exact type of $B(v_1,v_2)$, namely one of the following: G1-G7, O1-O7, E1, E2, S, LA or LB.
		\item If the type of $B(v_1,v_2)$ is not LA or LB, then we know the exact structure of $B(v_1,v_2)$ and can easily return the colour and arrow at $(x,y)$.
		\item If the type of $B(v_1,v_2)$ is LA or LB, then first determine whether $(x,y)$ lies in the PLS-Labyrinth inside $B(v_1,v_2)$ or not.
		\item If $(x,y)$ does not lie in the PLS-Labyrinth, then we can easily determine the colour and arrow at $(x,y)$, since we know the exact structure of $B(v_1,v_2)$ except the inside of the PLS-Labyrinth.
		\item If $(x,y)$ lies in the PLS-Labyrinth, then we can compute \iter/-vertices $u_1,u_2 \in [2^m]$ such that $(x,y)$ lies in the medium square $M(u_1,u_2)$ of the PLS-Labyrinth inside $B(v_1,v_2)$.
		\item Using the \iter/ circuit $C$ determine the type of $M(u_1,u_2)$, namely one of the following: LA1-LA7, LB1-LB7. Given the type of $M(u_1,u_2)$, we then know the exact structure of $M(u_1,u_2)$ and can in particular determine the colour and arrow at $(x,y)$.
	\end{enumerate}
	The arithmetic circuits for $f$ and $\nabla f$ are then constructed to perform the following operations on input $(x,y) \in [0,N]^2$:
	\begin{enumerate}
		\item Using the comparison gate $<$ and binary search, compute the bits representing $(\widehat{x},\widehat{y}) \in \{0,1,\dots,N-1\}^2$: a grid point such that $(x,y)$ lies in the small square that has $(\widehat{x},\widehat{y})$ as its bottom left corner.
		\item Simulate the Boolean circuit $B$ using arithmetic gates to compute (a bit representation) of the colour and arrow at the four corners of the small square, namely $(\widehat{x},\widehat{y})$, $(\widehat{x}+1,\widehat{y})$,$(\widehat{x},\widehat{y}+1)$ and $(\widehat{x}+1,\widehat{y}+1)$.
		\item Using this information and the formulas for the value regimes, compute the $16$ terms for $f,f_x,f_y$ and $f_{xy}$ needed to determine the bicubic interpolation. Then, compute the coefficients $a_{ij}$ by performing the matrix multiplication in \cref{eq:bicubic-coefficients}.
		\item In the arithmetic circuit for $f$, apply \cref{eq:bicubic} to compute the value of $f(x,y)$. In the arithmetic circuit for $\nabla f$, apply \cref{eq:bicubic-gradient} to compute the value of $\nabla f(x,y)$. Note that in the interpolation \cref{eq:bicubic,eq:bicubic-gradient}, we have to use the local coordinates $(x-\widehat{x},y-\widehat{y}) \in [0,1]^2$ instead of $(x,y)$.
	\end{enumerate}
	The two arithmetic circuits can be computed in polynomial time in $n,m$ and in the sizes of $S,P,C$. Since $n$ and $m$ are upper bounded by the sizes of $S$ and $C$ respectively, they can be constructed in polynomial time in the sizes of $S,P,C$. Furthermore, note that the two circuits are well-behaved. In fact, they only use a \emph{constant} number of true multiplication gates. To see this, note that true multiplication gates are only used for the matrix multiplication in step 3 and for step 4. In particular, steps 1 and 2 do not need to use any true multiplication gates at all (see, e.g., \citep{DGP09,CDT09}).
\end{proof}

\subsection{Correctness}

To show the correctness of the construction, we prove the following
lemma, which states that $0.01$-KKT points of $f$ only lie at solutions for the
\eol/ instance or the \iter/ instance.

\begin{lemma}\label{lem:kkt-correct-solutions}
	Let $\varepsilon = 0.01$. We have that $(x, y)$ is an $\varepsilon$-KKT point of
	$f$ on the domain $[0,N]^2$ only if $(x,y)$ lies in a ``solution region'', namely:
	\begin{itemize}
		\item $(x,y)$ lies in a big square $B(v,v)$, such that $v \in [2^n] \setminus \{1\}$ is a source or sink of the \eol/ instance $S,P$, or
		\item $(x,y)$ lies in a medium square $M(u,u)$ of some PLS-Labyrinth gadget, such that $u \in [2^m]$ is a solution to the \iter/ instance $C$, i.e., $C(u)>u$ and $C(C(u))=C(u)$.
	\end{itemize}
\end{lemma}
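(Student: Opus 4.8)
The strategy is to localize the question to the small unit squares of the grid $G$ and then discharge a finite case analysis with an automated check. The starting observation is that $[0,N]^2$ is tiled by these small squares, so any $\varepsilon$-KKT point $(x,y)$ lies in (at least) one small square $Q$, and inside $Q$ the function $f$ coincides with the bicubic polynomial of \eqref{eq:bicubic}, whose coefficients are, by \eqref{eq:bicubic-coefficients}, a fixed linear function of the values of $f, f_x, f_y, f_{xy}$ at the four corners of $Q$. By construction those corner data are completely determined by the colours (value regimes) and the steepest-descent arrows assigned to the four corners, together with the multiplier $\delta = 1/2$, and our construction only ever realizes the $101$ distinct corner-configurations arising from the big-square types G1--G7, O1--O7, E1, E2, S and from the medium-square pieces LA1--LA7, LB1--LB7 (together with the gadget-boundary pieces LAX1, LAX2, LBX1, LBX2). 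Hence it suffices to treat each of these $101$ small-square types once.

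Next I would spell out the $\varepsilon$-KKT condition for the box $D = [0,N]^2$. When $Q$ lies in the interior of $D$, a point of $Q$ is an $\varepsilon$-KKT point exactly when $\|\nabla f(x,y)\| \le \varepsilon$. When $Q$ abuts a side of $D$, say $x = 0$, a point on that side is an $\varepsilon$-KKT point exactly when there is $\mu \ge 0$ with $\|\nabla f(x,y) - \mu\, e_1\| \le \varepsilon$ (and analogously for the other three sides), and at a corner of $D$ two such one-sided conditions hold simultaneously. Since $\delta$, $\varepsilon = 0.01$ and all corner data are rational, for each fixed square type the statement ``$Q$ contains no $\varepsilon$-KKT point'' is a sentence in the existential theory of the reals, obtained by negating an existentially quantified conjunction of polynomial (in)equalities. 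I would then invoke an SMT solver to certify this sentence for every one of the $101$ types that is not a solution region --- that is, for all of them except the four source/sink big-square types G5, G6, O5, O6 and the two path-sink medium-square types LA2, LB2. These certifications succeed because away from the designed source, sink, and path-meeting locations the assigned arrows keep the interpolant's gradient pointing ``downhill'' with magnitude on the order of $1/2$, far larger than $\varepsilon = 0.01$, so that no near-stationary point appears; in particular the neutralizing construction in the origin big square $B(1,1)$ (type S) is certified to leave no $\varepsilon$-KKT point, disposing simultaneously of the spurious ``minimum at the origin'' and of the source of the trivial vertex's outgoing green path.

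Finally I would reassemble the pieces. Every $\varepsilon$-KKT point lies in some small square; by the SMT certification that square cannot be of a non-solution type, so it is of type G5, G6, O5 or O6 --- which occur only inside a big square $B(v,v)$ with $v \in [2^n] \setminus \{1\}$ a source or sink of $(S,P)$, since the type of a diagonal big square is dictated exactly by the in- and out-degrees of $v$ --- or of type LA2 or LB2, which by the construction of the PLS-Labyrinth occur only in a medium square $M(u,u)$ with $C(u) > u$ and $C(C(u)) = C(u)$, i.e., with $u$ a solution of the \textsc{Iter} instance $C$. This is precisely the claimed dichotomy. One subtlety deserves an explicit remark: a small square may sit on the shared edge of two big (or medium) squares of different types, but by \cref{lem:kkt-function-properties} $f$ is $C^1$, so the two incident polynomials and their gradients agree along that edge, and a point there is an $\varepsilon$-KKT point for $f$ only if it is one for at least one of the incident polynomials --- both of which are among the $101$ already-handled types.

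I expect the real work, and the main source of potential error, to be the enumeration and the encoding themselves: correctly listing the $101$ configurations, including the awkward gadget-boundary and domain-boundary variants; writing down each bicubic polynomial and its gradient faithfully from \eqref{eq:bicubic-coefficients}--\eqref{eq:bicubic-gradient}; and setting up the existential-theory-of-the-reals queries --- especially the one-sided boundary and corner conditions --- so that an \texttt{unsat} verdict is a genuine proof that no $\varepsilon$-KKT point exists. The conceptual reason the construction works --- the value regimes together with the arrow directions force a strictly positive ``drift'' everywhere except at the designed source, sink, and path-meeting locations --- is what makes all the queries come out the right way; but checking this by hand for $101$ polynomials would be tedious and error-prone, which is exactly why the verification is delegated to the solver.
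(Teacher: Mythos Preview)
Your approach is essentially the paper's: localize to the small unit squares, enumerate the finitely many small-square types that the construction ever produces, and for each non-solution type hand an existential-theory-of-the-reals sentence to an SMT solver, with separate formulas for interior and domain-boundary squares.

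There is one point you gloss over that the paper treats explicitly and that is important for correctness. You write that the corner data ``are completely determined by the colours (value regimes) and the steepest-descent arrows,'' and later that ``all corner data are rational.'' This is true for $f_x$, $f_y$, $f_{xy}$, but not for $f$: the value-regime formulas (e.g.\ black is $(x,y)\mapsto x+y$) mean that the actual corner values depend on the \emph{position} of the square inside $[0,N]^2$, not just on the colours. For a multi-colour square the $a_{ij}$, and hence the gradient of the bicubic interpolant, depend on the unknown differences between the regime base values at that position, so a single check with fixed rationals would not cover all occurrences of the type. The paper resolves this by introducing one symbolic real per colour present in the square, encoding only the known relative offsets within each colour and the guaranteed gap of at least $4$ between distinct regimes, and then universally quantifying over those symbolic colour values in the SMT query. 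This is precisely the kind of encoding subtlety you yourself flag at the end as the main risk, so it is worth making the quantification over the unknown colour base values explicit in your formulation.
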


\begin{proof}
	Even though we have defined $\varepsilon$-KKT points in \cref{sec:nonlinearopt} with respect to the $\ell_2$-norm, here it is more convenient to consider the $\ell_\infty$-norm instead. Note that any $\varepsilon$-KKT point w.r.t.\ the $\ell_2$-norm is also an $\varepsilon$-KKT point w.r.t.\ the $\ell_\infty$-norm. Thus, if \cref{lem:kkt-correct-solutions} holds for $\varepsilon$-KKT points w.r.t.\ the $\ell_\infty$-norm, then it automatically also holds for $\varepsilon$-KKT points w.r.t.\ the $\ell_2$-norm.
	
	For the domain $[0,N]^2$, it is easy to see that a point $x \in [0,N]^2$ is an $\varepsilon$-KKT point (with respect to the $\ell_\infty$-norm) if and only if
	\begin{itemize}
		\item for all $i \in \{1,2\}$ with $x_i \neq 0$ : $[\nabla f(x)]_i \leq \varepsilon$
		\item for all $i \in \{1,2\}$ with $x_i \neq N$ : $-[\nabla f(x)]_i \leq \varepsilon$.
	\end{itemize}
	Intuitively, these conditions state that if $x$ is not on the boundary of $[0,N]^2$, then it must hold that $\|\nabla f(x)\|_\infty \leq \varepsilon$. If $x$ is on the boundary of $[0,1]^2$, then ``$-\nabla f(x)$ must point straight outside the domain, up to an error of $\varepsilon$''.

	In order to prove \cref{lem:kkt-correct-solutions}, we will show that any small square that does
	not lie in a solution region, does not contain any $\varepsilon$-KKT point.
	The behaviour of the function in a given small square depends on the information
	we have about the four corners, namely the colours and arrows at the four corners,
	but also on the position of the square in our instance, since the value defined
	by a colour depends on the position. For our proof, it is convenient to consider
	a square with the (colour and arrow) information about its four corners, but without any information
	about its position. Indeed, if we can show that a square does not contain any
	$\varepsilon$-KKT point using only this information, then this will always hold,
	wherever the square is positioned. As a result, we obtain a finite number of squares
	(with colour and arrow information) that we need to check.
	Conceptually, this is a straightforward task: for each small square we get a set
	of cubic polynomials that could be generated by bicubic interpolation for that
	square, and we must prove that no polynomial in that set has 
	an $\varepsilon$-KKT point within the square.

	However, this still leaves us with 101 distinct small squares to verify. In a earlier version of this paper, we presented a computer-assisted proof that used an SMT solver for this task. For each square we wrote down an SMT formula that encodes ``there exists an $\varepsilon$-KKT
	point within this square''. The SMT solver was then used to check this formula for
	satisfiability over the real numbers.\footnote{SMT solvers are capable
		of deciding satisfiability for the existential theory of the reals. There are no
		rounding issues or floating point errors to worry about.} We applied the Z3 SMT solver~\citep{MB08} to all 101 squares in our
	construction and we found that the formula is unsatisfiable for every square that does not lie
	directly at the end of a line of the \eol/ instance, or at a
	solution of the \iter/ instance, which proves \cref{lem:kkt-correct-solutions}. A detailed description of how the SMT formulas were constructed, as well as the full output of the solver is available online.\footnote{\url{https://github.com/jfearnley/PPADPLS/}}
	
	Below, we present a direct non-computer-assisted proof of \cref{lem:kkt-correct-solutions}. This relatively concise proof makes extensive use of symmetries to drastically reduce the number of cases that need to be verified. We view this proof not as a replacement of the computer-assisted proof, but instead as additional evidence that \cref{lem:kkt-correct-solutions} holds. Indeed, the two proofs complement each other nicely, in the sense that they cover each other's potential weaknesses. The new direct proof eliminates any concerns one might have about the implementation of the SMT solver, or about possible hardware failures. On the other hand, the original computer-assisted proof verified all 101 squares individually, thus avoiding any errors in the use of symmetries and also performing the additional ``sanity check'' of showing that ``bad'' squares (namely, squares that can only appear in a solution region) indeed introduce $\varepsilon$-KKT points.
	
	We begin by describing the transformations that we use to group ``symmetric'' squares together. Throughout, we consider a square where each of its four corners has an associated value in $\mathbb{Z}$ and an arrow pointing in a cardinal direction. The first transformation is \emph{reflection with respect to the axis $y=1/2$}. Applying this transformation to a square has the following effect: the two corners at the top of the square now find themselves at the bottom of the square (and vice-versa) and the sign of the $y$-coordinate of each arrow is flipped. Using \cref{eq:bicubic,eq:bicubic-coefficients} one can check  that taking the bicubic interpolation of this reflected square yields the same result as taking the interpolation of the original square and then applying the reflection to the interpolated function (which corresponds to considering $(x,y) \mapsto f(x,1-y)$). We can summarize this by saying that bicubic interpolation \emph{commutes} with this transformation.
	
	The second transformation we use is \emph{reflection with respect to the axis $y=x$}, i.e., the diagonal through the square. This corresponds to swapping the corners $(0,1)$ and $(1,0)$ of the square, and additionally also swapping the $x$- and $y$-coordinate of the arrows at all four corners. Again, using \cref{eq:bicubic,eq:bicubic-coefficients} one can directly verify that this transformation also commutes with bicubic interpolation (where applying the transformation to the interpolated function $f$ corresponds to considering $(x,y) \mapsto f(y,x)$). These two transformations are already enough to obtain any rotation or reflection of the square. We will only need one more transformation: \emph{negation}. This corresponds to negating the values and arrows at the four corners, where ``negating an arrow'' just means replacing it by an arrow in the opposite direction. Using \cref{eq:bicubic,eq:bicubic-coefficients}, it is immediately clear that negation commutes with bicubic interpolation.
	
	Since all three transformations commute with bicubic interpolation, this continues to hold for any more involved transformation that is constructed from these basic three. Furthermore, it is easy to see that the basic transformations do not introduce $\varepsilon$-KKT points. Indeed, if a function does not have any $\varepsilon$-stationary points, then applying any reflection or taking the negation cannot change that property. As a result, if two squares are ``symmetric'' (i.e., one can be obtained from the other by applying a combination of the three basic transformations), then it is enough to verify that just one of these two squares does not contain any $\varepsilon$-KKT points when we take the bicubic interpolation.
	
	Using this notion of symmetry, the squares that need to be verified can be grouped into just four different groups, namely Groups 1 to 4, as shown in \cref{fig:symmetry-grouping}. The remaining squares lie in Group 0, which contains all squares that always lie close to actual solutions and thus do not need to be verified. They are shown in \cref{fig:symmetry-group-0}. We now prove that the squares in Groups 1 to 4 do not contain $\varepsilon$-KKT points.

	\begin{figure}
		\centering
		{\large Group 1:}
		
		\medskip
		
		\resizebox{0.89\textwidth}{!}{
			\includegraphics{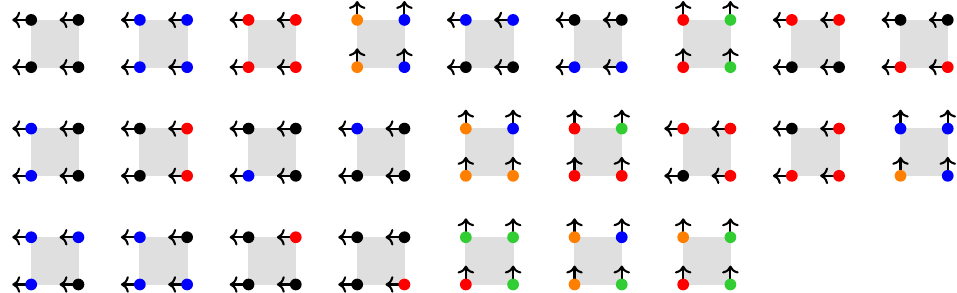}
		}

		\vspace{10pt}

		{\large Group 2:}
		
		\medskip
		
		\resizebox{0.89\textwidth}{!}{
			\includegraphics{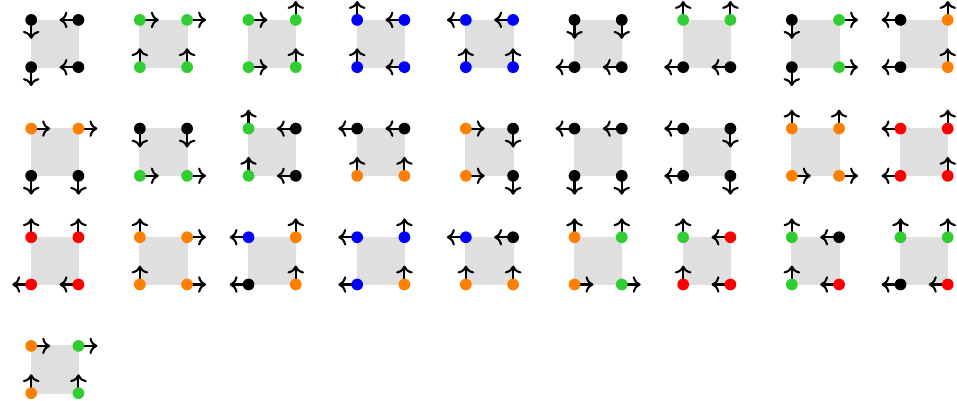}
		}

		\bigskip

		{\large Group 3:}
		
		\medskip
		
		\resizebox{0.89\textwidth}{!}{
			\includegraphics{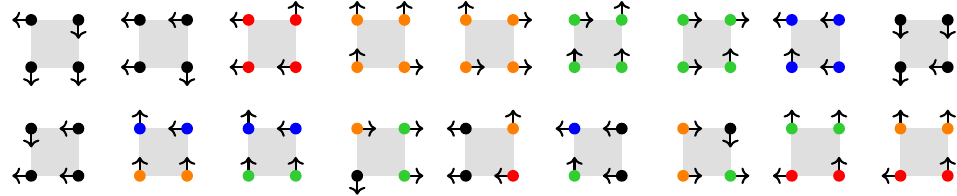}
		}

		\vspace{10pt}

		{\large Group 4:}
		
		\medskip
		
		\resizebox{0.89\textwidth}{!}{
			\includegraphics{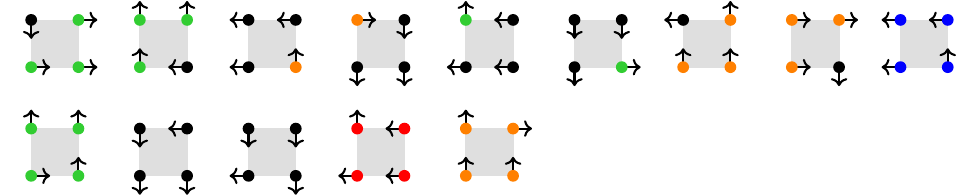}
		}
		
		\bigskip
		
		\caption{Complete list of all squares that need to be verified, partitioned into four groups.}\label{fig:symmetry-grouping}
	\end{figure}

	\begin{figure}
		\centering
		{\large Group 0:}
		
		\medskip
		
		\resizebox{0.9\textwidth}{!}{
			\includegraphics{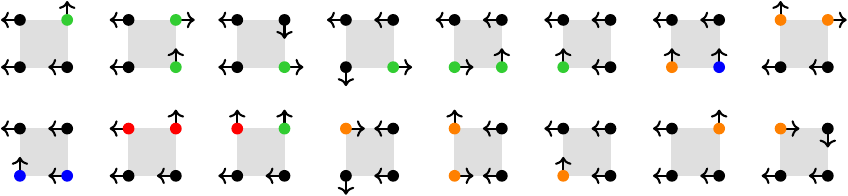}
		}
		
		\smallskip
		
		\caption{List of squares that lie close to actual solutions, and thus do \emph{not} need to be verified.}\label{fig:symmetry-group-0}
	\end{figure}

	\paragraph{\bf Group 1}
	
	This group contains the squares where all the arrows point in the same direction. These squares are all symmetric to a square of the following form:
	
	\begin{minipage}{0.6\textwidth}
		\begin{center}
			\resizebox{85pt}{!}{
				\begin{tikzpicture}[scale=5]
				\path (-0.07, -0.07) -- (0.27, 0.27);
				\draw[lightgray,dashed] (0.0,0.0) rectangle (0.2,0.2);
				\path[->,line width=1,black] (0.0, 0.0) edge (0.08, 0.0);
				\path[->,line width=1,black] (0.0, 0.2) edge (0.08, 0.2);
				\path[->,line width=1,black] (0.2, 0.0) edge (0.28, 0.0);
				\path[->,line width=1,black] (0.2, 0.2) edge (0.28, 0.2);
				\node [draw=black,fill=white,circle,inner sep=0,minimum size=9] at (0.0, 0.0)  {\scriptsize $c$};
				\node [draw=black,fill=white,circle,inner sep=0,minimum size=9] at (0.0, 0.2)  {\scriptsize $a$};
				\node [draw=black,fill=white,circle,inner sep=0,minimum size=9] at (0.2, 0.0)  {\scriptsize $d$};
				\node [draw=black,fill=white,circle,inner sep=0,minimum size=9] at (0.2, 0.2)  {\scriptsize $b$};
				\end{tikzpicture}
			}
		\end{center}
	\end{minipage} \hfill
	\begin{minipage}{0.4\textwidth}
		\begin{tabular}{ll}
			Conditions:\\
			~$a \geq b+1$\\
			~$c \geq d+1$\\
		\end{tabular}
	\end{minipage}
	where $a,b,c,d$ are the values at the four corners of the square as shown.  We now prove that bicubic interpolation does not introduce any $\varepsilon$-KKT point in such a square.
	
	Recall that $\varepsilon=0.01$ and that we use $\delta=1/2$ as the length multiplier for the arrows. Also recall that the arrows point in the \emph{opposite} direction of the gradient. Thus, our goal here will be to prove that the bicubic interpolation $f$ satisfies $\frac{\partial f}{\partial x} (x,y) < -\varepsilon$ for all points $(x,y) \in [0,1]^2$ in the square, which immediately implies that the square does not contain any $\varepsilon$-KKT point.
	
	We can compute the coefficients of the interpolation using \cref{eq:bicubic-coefficients} by setting $f(0,0) := c$, $f(1,0) := d$, $f(0,1) := a$, $f(1,1) := b$, $f_x(0,0) = f_x(1,0) = f_x(0,1) = f_x(1,1) := - \delta = -1/2$, and the remaining terms to zero. Then, substituting these coefficients into \cref{eq:bicubic-gradient} and simplifying, we obtain that for all $x,y \in [0,1]$:
	\begin{align*}
	\frac{\partial f}{\partial x} (x,y) &= -\frac{1}{2}-3x(1-x)\Big(1+2(1-y^2(3-2y))(c-d-1) + 2y^2(3-2y)(a-b-1)\Big)\\
	&\leq -\frac{1}{2} < -\varepsilon
	\end{align*}
	where we used the fact that $y^2(3-2y) \in [0,1]$ for all $y \in [0,1]$, as well as the conditions $c-d-1 \geq 0$ and $a-b-1 \geq 0$.

	\paragraph{\bf Group 2}
	
	This group contains all the squares that are symmetric to a square of the following form:
	
	\begin{minipage}{0.6\textwidth}
		\begin{center}
			\resizebox{85pt}{!}{
				\begin{tikzpicture}[scale=5]
				\path (-0.07, -0.07) -- (0.27, 0.27);
				\draw[darkgray,dotted] (0.0,0.134) -- (0.2,0.134);
				\draw[lightgray,dashed] (0.0,0.0) rectangle (0.2,0.2);
				\path[->,line width=1,black] (0.0, 0.0) edge (0.0, 0.08);
				\path[->,line width=1,black] (0.0, 0.2) edge (0.08, 0.2);
				\path[->,line width=1,black] (0.2, 0.0) edge (0.2, 0.08);
				\path[->,line width=1,black] (0.2, 0.2) edge (0.28, 0.2);
				\node [draw=black,fill=white,circle,inner sep=0,minimum size=9] at (0.0, 0.0)  {\scriptsize $c$};
				\node [draw=black,fill=white,circle,inner sep=0,minimum size=9] at (0.0, 0.2)  {\scriptsize $a$};
				\node [draw=black,fill=white,circle,inner sep=0,minimum size=9] at (0.2, 0.0)  {\scriptsize $d$};
				\node [draw=black,fill=white,circle,inner sep=0,minimum size=9] at (0.2, 0.2)  {\scriptsize $b$};
				\end{tikzpicture}
			}
		\end{center}
	\end{minipage} \hfill
	\begin{minipage}{0.4\textwidth}
		\begin{tabular}{ll}
			\\
			Conditions:\\
			~$a \geq b+1$\\
			~$c \geq a+1$\\
			~$d \geq b+1$\\
			~$c \geq d-1$\\
			\\
		\end{tabular}
	\end{minipage}
	In order to prove that such squares do not contain any $\varepsilon$-KKT points, we distinguish two cases depending on the value of $y$: when $y \in [0,2/3]$ (i.e., below the dotted line) we show that $\frac{\partial f}{\partial y} (x,y) < -\varepsilon$, and when $y \in [2/3,1]$ (i.e., above the dotted line) we show that $\frac{\partial f}{\partial x} (x,y) < -\varepsilon$. Consider first the case where $y \leq 2/3$. Using \cref{eq:bicubic-coefficients,eq:bicubic-gradient} and simplifying we obtain
	\begin{align*}
	\frac{\partial f}{\partial y} (x,y) &= \begin{aligned}[t] -\frac{1}{2} (1-y) \Bigg( 1 + 3y \bigg( 1 + 2x + 4\bigg(c - a &- \frac{1}{2}\bigg) \Big(1-x^2(3-2x)\Big)\\
	&+ 2(2d-2b-2)x^2(3-2x) \bigg) \Bigg)\end{aligned}\\
	&\leq -\frac{1}{2} (1-y) \leq -\frac{1}{6} < -\varepsilon
	\end{align*}
	where we used the fact that $x^2(3-2x) \in [0,1]$ for all $x \in [0,1]$, as well as $c-a -1/2 \geq 0$ and $2d-2b-2 \geq 0$ to obtain the first inequality, and then $y \leq 2/3$. Next, consider the case where $y \geq 2/3$. Using \cref{eq:bicubic-coefficients,eq:bicubic-gradient} and simplifying we obtain
	\begin{align*}
	\frac{\partial f}{\partial x} (x,y) &= \begin{aligned}[t] -\frac{1}{2} y^2 (3-2y) - 3x(1-x)\Big( (2a-2b-1)y^2(3 &- 2y)\\
	&+ 2(c-d)\left(1-y^2(3-2y)\right) \Big)\end{aligned}\\
	&\leq -\frac{1}{2} y^2 (3-2y) - 3x(1-x)\left(3y^2(3-2y)-2\right)\\
	&\leq -\frac{1}{2} y^2 (3-2y) \leq -\frac{1}{3} < -\varepsilon
	\end{align*}
	where we used $2a-2b-1 \geq 1$ and $c-d \geq -1$ to obtain the first inequality (as well as the fact that $y^2(3-2y) \in [0,1]$ for all $y \in [0,1]$ as before), and for the second and third inequality the fact that for $y \in [2/3,1]$ we have $y^2(3-2y) \geq 2/3$.

	\paragraph{\bf Group 3}
	
	This group contains all the squares that are symmetric to a square of the following form:
	
	\begin{minipage}{0.6\textwidth}
		\begin{center}
			\resizebox{85pt}{!}{
				\begin{tikzpicture}[scale=5]
				\path (-0.07, -0.07) -- (0.27, 0.27);
				\draw[darkgray,dotted] (0.0,0.134) -- (0.1,0.134) -- (0.1,0.2);
				\draw[lightgray,dashed] (0.0,0.0) rectangle (0.2,0.2);
				\path[->,line width=1,black] (0.0, 0.0) edge (0.0, 0.08);
				\path[->,line width=1,black] (0.0, 0.2) edge (0.08, 0.2);
				\path[->,line width=1,black] (0.2, 0.0) edge (0.2, 0.08);
				\path[->,line width=1,black] (0.2, 0.2) edge (0.2, 0.28);
				\node [draw=black,fill=white,circle,inner sep=0,minimum size=9] at (0.0, 0.0)  {\scriptsize $c$};
				\node [draw=black,fill=white,circle,inner sep=0,minimum size=9] at (0.0, 0.2)  {\scriptsize $a$};
				\node [draw=black,fill=white,circle,inner sep=0,minimum size=9] at (0.2, 0.0)  {\scriptsize $d$};
				\node [draw=black,fill=white,circle,inner sep=0,minimum size=9] at (0.2, 0.2)  {\scriptsize $b$};
				\end{tikzpicture}
			}
		\end{center}
	\end{minipage} \hfill
	\begin{minipage}{0.4\textwidth}
		\begin{tabular}{ll}
			\\
			Conditions:\\
			~$a \geq b+1$\\
			~$c \geq a+1$\\
			~$d \geq b+1$\\
			~$c \geq d-1$\\
			\\
		\end{tabular}
	\end{minipage}
	Consider first the case where $(x,y) \in [0,1/2] \times [2/3,1]$, i.e., the point lies in the top-left corner region, as delimited by the dotted line. Using \cref{eq:bicubic-coefficients,eq:bicubic-gradient} and simplifying we obtain
	\begin{align*}
	\frac{\partial f}{\partial x} (x,y) &= \begin{aligned}[t] -\frac{1}{2} (1-x) \bigg( y^2(3 -2y) + 3x\Big( y^2 -2y^2(3-2y) &+4(a-b)y^2(3-2y)\\
	&+4(c-d)\left(1-y^2(3-2y)\right) \Big) \bigg)\end{aligned}\\
	&\leq -\frac{1}{2} (1-x) \Big( y^2(3 -2y) + 3x(y^2 + 6y^2(3-2y) -4) \Big)\\
	&\leq -\frac{1}{2} (1-x) y^2(3 -2y) \leq -\frac{1}{6} < -\varepsilon
	\end{align*}
	where we used $a-b \geq 1$, $c-d \geq -1$ and $y^2(3-2y) \in [0,1]$ for the first inequality, and then the fact that $y^2(3-2y) \geq 2/3$ for $y \in [2/3,1]$, as well as $x \leq 1/2$, for the second and third inequality.

	Next, consider the case where $(x,y) \notin [0,1/2] \times [2/3,1]$, i.e., $x > 1/2$ or $y < 2/3$. Using \cref{eq:bicubic-coefficients,eq:bicubic-gradient} and simplifying we obtain
	\begin{align*}
	\frac{\partial f}{\partial y} (x,y) &= \begin{aligned}[t] -\frac{1}{2} + \frac{y}{2} \bigg( y\big(1 &-x^2(3-2x)\big) - (1-y)\Big( -4 +3x(2-x) -5x^2(3-2x)\\
	&+12(d-b-1)x^2(3-2x) +12(c-a-1)\left(1-x^2(3-2x)\right) \Big) \bigg)\end{aligned}\\
	&\leq -\frac{1}{2} + \frac{y}{2} \Big( y\left(1-x^2(3-2x)\right) - (1-y)\big(8 +3x(2-x) -5x^2(3-2x)\big) \Big)\\
	&\leq -\frac{1}{2} + \frac{y^2}{2}\left(1-x^2(3-2x)\right)\\
	&\leq -\frac{1}{2} + \max\left\{\frac{2}{9},\frac{3}{8}\right\} < -\varepsilon
	\end{align*}
	where we used $d-b-1 \geq 0$, $c-a-1 \geq 0$ and $x^2(3-2x) \in [0,1]$ for the first and second inequalities. For the third inequality we used the fact that $x>1/2$ or $y<2/3$, where we note that $1-x^2(3-2x) \in [0,1]$ when $x \in [0,1]$, and $1-x^2(3-2x) \leq 3/4$ when $x \in [1/2,1]$.

	\paragraph{\bf Group 4}
	
	This group contains all the squares that are symmetric to a square of the following form:
	
	\begin{minipage}{0.6\textwidth}
		\begin{center}
			\resizebox{85pt}{!}{
				\begin{tikzpicture}[scale=5]
				\path (-0.07, -0.07) -- (0.27, 0.27);
				\draw[darkgray,dotted] (0.0,0.067) -- (0.1,0.067) -- (0.1,0);
				\draw[lightgray,dashed] (0.0,0.0) rectangle (0.2,0.2);
				\path[->,line width=1,black] (0.0, 0.0) edge (0.08, 0.0);
				\path[->,line width=1,black] (0.0, 0.2) edge (0.0, 0.28);
				\path[->,line width=1,black] (0.2, 0.0) edge (0.2, 0.08);
				\path[->,line width=1,black] (0.2, 0.2) edge (0.2, 0.28);
				\node [draw=black,fill=white,circle,inner sep=0,minimum size=9] at (0.0, 0.0)  {\scriptsize $c$};
				\node [draw=black,fill=white,circle,inner sep=0,minimum size=9] at (0.0, 0.2)  {\scriptsize $a$};
				\node [draw=black,fill=white,circle,inner sep=0,minimum size=9] at (0.2, 0.0)  {\scriptsize $d$};
				\node [draw=black,fill=white,circle,inner sep=0,minimum size=9] at (0.2, 0.2)  {\scriptsize $b$};
				\end{tikzpicture}
			}
		\end{center}
	\end{minipage} \hfill
	\begin{minipage}{0.4\textwidth}
		\begin{tabular}{ll}
			\\
			Conditions:\\
			~$c \geq a+1$\\
			~$d \geq b+1$\\
			~$c \geq d+1$\\
			~$a \geq b-1$\\
			\\
		\end{tabular}
	\end{minipage}
	Consider first the case where $(x,y) \in [0,1/2] \times [0,1/3]$, i.e., the point lies in the bottom-left corner region, as delimited by the dotted line. Using \cref{eq:bicubic-coefficients,eq:bicubic-gradient} and simplifying we obtain
	\begin{align*}
	\frac{\partial f}{\partial x} (x,y) &= \begin{aligned}[t] -\frac{1}{2}(1-x) \bigg( 1 -y^2(3-2y) + 3x \Big( -1 +y(2 &-y) + 4(a-b)y^2(3-2y)\\
	&+ 4(c-d)\left(1 -y^2(3+2y)\right) \Big) \bigg)\end{aligned}\\
	&\leq -\frac{1}{2}(1-x) \Big( 1 -y^2(3-2y) + 3x \big( 3 + y(2-y) -8y^2(3-2y) \big) \Big)\\
	&\leq -\frac{1}{2}(1-x) \left( 1 -y^2(3-2y)\right) \leq -\frac{1}{6} < -\varepsilon
	\end{align*}
	where we used $a-b \geq -1$, $c-d \geq 1$ and $y^2(3-2y) \in [0,1]$ for the first inequality, and $y^2(3-2y) \leq 1/3$ for $y \in [0,1/3]$ for the second inequality. For the third inequality, we used the fact that $x \leq 1/2$ and $1 -y^2(3-2y) \geq 2/3$ for $y \in [0,1/3]$.

	Next, consider the case where $(x,y) \notin [0,1/2] \times [0,1/3]$, i.e., $x > 1/2$ or $y > 1/3$. Using \cref{eq:bicubic-coefficients,eq:bicubic-gradient} and simplifying we obtain
	\begin{align*}
	\frac{\partial f}{\partial y} (x,y) &= \begin{aligned}[t] -\frac{1}{2} +\frac{1}{2}(1-y)\big(1 &-x^2(3-2x)\big) -\frac{3}{2}y(1-y)\Big( 3 - x(2-x)\\
	&+4(d-b-1)x^2(3 -2x) +4(c-a-1)\left(1 -x^2(3-2x)\right) \Big)\end{aligned}\\
	&\leq -\frac{1}{2} +\frac{1}{2}(1-y)\left(1-x^2(3-2x)\right) -\frac{3}{2}y(1-y)\big(3 - x(2-x)\big)\\
	&\leq -\frac{1}{2} +\frac{1}{2}(1-y)\left(1-x^2(3-2x)\right)\\
	&\leq -\frac{1}{2} + \max\left\{\frac{3}{8},\frac{1}{3}\right\} < -\varepsilon
	\end{align*}
	where we used $d-b-1 \geq 0$, $c-a-1 \geq 0$ and $x^2(3-2x) \in [0,1]$ for the first inequality, and the fact that $3-x(2-x) \geq 0$ for the second inequality. For the third inequality we used the fact that $x>1/2$ or $y>1/3$, where we again note that $1-x^2(3-2x) \in [0,1]$ when $x \in [0,1]$, and $1-x^2(3-2x) \leq 3/4$ when $x \in [1/2,1]$.

	\paragraph{\bf Boundary}
	
	Up to this point we have shown that there are no unintended $\varepsilon$-KKT points in the interior of the domain $[0,N]^2$ of our instance. It remains to show that no $\varepsilon$-KKT points appear on the boundary of the domain. Intuitively, this corresponds to showing that $-\nabla f$ never points ``straight outside the domain'' when we are on the boundary.
	
	First of all, it is easy to see that there is no $\varepsilon$-KKT point at any of the four corners of the domain $[0,N]^2$. This follows from the fact that at any such corner the arrow never points outside the domain (see \cref{fig:X-full-boundary}). Since these corners of the domain are also corners of their respective squares, $- \nabla f$ (where $f$ is obtained by bicubic interpolation in every square) will automatically be equal to the arrow at that corner (scaled by $\delta=1/2$), which ensures that it is not an $\varepsilon$-KKT point.
	
	It remains to consider points that lie on the boundary of the domain, except at the corners. Using the first two transformations introduced earlier it is easy to check that any square that lies on the boundary (see \cref{fig:X-full-boundary}) is symmetric to one of the three following cases:

	\begin{minipage}{0.3\textwidth}
		\begin{center}
			\resizebox{80pt}{!}{
				\begin{tikzpicture}[scale=5]
				\path (-0.04, 0.0) -- (0.24, 0.28);
				\draw[lightgray,dashed] (0.0,0.0) rectangle (0.2,0.2);
				\path[lightgray,line width=0.7] (0.0,0.0) edge (0.0,0.2);
				\path[->,line width=1,black] (0.0, 0.0) edge (0.08, 0.0);
				\path[->,line width=1,black] (0.0, 0.2) edge (0.08, 0.2);
				\node [draw=black,fill=white,circle,inner sep=0,minimum size=9] at (0.0, 0.0)  {\scriptsize $c$};
				\node [draw=black,fill=white,circle,inner sep=0,minimum size=9] at (0.0, 0.2)  {\scriptsize $a$};
				\end{tikzpicture}
			}
		\end{center}
		
		\begin{center}
			Case (i)	
		\end{center}
	\end{minipage}
	\hfill
	\begin{minipage}{0.3\textwidth}
		\begin{center}
			\resizebox{80pt}{!}{
				\begin{tikzpicture}[scale=5]
				\path (-0.04, 0.0) -- (0.24, 0.28);
				\draw[lightgray,dashed] (0.0,0.0) rectangle (0.2,0.2);
				\path[lightgray,line width=0.7] (0.0,0.0) edge (0.0,0.2);
				\path[->,line width=1,black] (0.0, 0.0) edge (0.0, 0.08);
				\path[->,line width=1,black] (0.0, 0.2) edge (0.0, 0.28);
				\node [draw=black,fill=white,circle,inner sep=0,minimum size=9] at (0.0, 0.0)  {\scriptsize $c$};
				\node [draw=black,fill=white,circle,inner sep=0,minimum size=9] at (0.0, 0.2)  {\scriptsize $a$};
				\end{tikzpicture}
			}
		\end{center}
		
		\begin{center}
			Case (ii)	
		\end{center}
	\end{minipage}
	\hfill
	\begin{minipage}{0.3\textwidth}
		\begin{center}
			\resizebox{80pt}{!}{
				\begin{tikzpicture}[scale=5]
				\path (-0.04, 0.0) -- (0.24, 0.28);
				\draw[darkgray,dotted] (0.0,0.1) -- (0.2,0.1);
				\draw[lightgray,dashed] (0.0,0.0) rectangle (0.2,0.2);
				\path[lightgray,line width=0.7] (0.0,0.0) edge (0.0,0.2);
				\path[->,line width=1,black] (0.0, 0.0) edge (0.0, 0.08);
				\path[->,line width=1,black] (0.0, 0.2) edge (0.08, 0.2);
				\node [draw=black,fill=white,circle,inner sep=0,minimum size=9] at (0.0, 0.0)  {\scriptsize $c$};
				\node [draw=black,fill=white,circle,inner sep=0,minimum size=9] at (0.0, 0.2)  {\scriptsize $a$};
				\end{tikzpicture}
			}
		\end{center}
		
		\begin{center}
			Case (iii)
		\end{center}
	\end{minipage}
	
	\smallskip
	
	\noindent where for cases (ii) and (iii) we also have that $c \geq a+1$. In each of these three illustrations the square touches the boundary of the domain with its left edge (which is not dashed). In order to show that there are no $\varepsilon$-KKT points on that left edge, it suffices to show that for any $(x,y)$ on that edge (i.e., $x=0$, $y \in [0,1]$) we have $\frac{\partial f}{\partial y} (x,y) \notin [-\varepsilon,\varepsilon]$ or $\frac{\partial f}{\partial x} (x,y) < -\varepsilon$.
	
	\begin{itemize}
		\item For case (i), we have
		\begin{align*}
		\frac{\partial f}{\partial x} (0,y) &= -\frac{1}{2}y^2(3-2y) -\frac{1}{2}\left(1-y^2(3-2y)\right) = -\frac{1}{2} < -\varepsilon.
		\end{align*}
		\item For case (ii), we have
		\begin{align*}
		\frac{\partial f}{\partial y} (0,y) &= -\frac{1}{2} -3(2c-2a-1)(1-y)y \leq -\frac{1}{2} < -\varepsilon
		\end{align*}
		where we used $2c-2a-1 \geq 0$.
		\item For case (iii), if $y \geq 1/2$ then we have
		\begin{align*}
		\frac{\partial f}{\partial x} (0,y) &= -\frac{1}{2}y^2(3-2y) \leq -\frac{1}{4} < -\varepsilon
		\end{align*}
		since $y^2(3-2y) \geq 1/2$ when $y \in [1/2,1]$. On the other hand, if $y \leq 1/2$ then we have
		\begin{align*}
		\frac{\partial f}{\partial y} (0,y) &= -\frac{1}{2}(1-y)\left(1+3(4c-4a-1)y\right) \leq -\frac{1}{2}(1-y) \leq - \frac{1}{4} < -\varepsilon
		\end{align*}
		where we used $4c-4a-1 \geq 0$.
	\end{itemize}
	It follows that there are no $\varepsilon$-KKT points on the boundary of the domain. This completes the proof of \cref{lem:kkt-correct-solutions}.
\end{proof}

\subsection{Re-scaling}\label{sec:kkt-scaling}

The last step of the reduction is to re-scale the function $f$ so that it is defined on $[0,1]^2$ instead of $[0,N]^2$. Thus, the final function, which we denote $\widehat{f}$ here, is defined by
$$\widehat{f}(x,y) = (1/N) \cdot f(N \cdot x,N \cdot y).$$
The properties of $f$ proved in \cref{lem:kkt-function-properties} naturally also hold for $\widehat{f}$, in the following sense. Clearly, $\widehat{f}$ is also continuously differentiable. Furthermore, it holds that $\nabla \widehat{f}(x,y) = \nabla f(N \cdot x,N \cdot y)$. Thus, we can easily construct well-behaved arithmetic circuits for $\widehat{f}$ and $\nabla \widehat{f}$ in polynomial time given well-behaved circuits for $f$ and $\nabla f$, which, in turn, can be efficiently constructed according to \cref{lem:kkt-function-properties}. Furthermore, since $\nabla f$ is $L$-Lipschitz-continuous, it is easy to see that $\nabla \widehat{f}$ is $\widehat{L}$-Lipschitz-continuous with $\widehat{L} = N \cdot L = 2^{18} N^2 = 2^{2n+2m+26}$. Finally, note that since $f$ is $L$-Lipschitz-continuous, $\widehat{f}$ is too, and in particular it is also $\widehat{L}$-Lipschitz-continuous.

All these properties imply that the instance of \textsc{KKT} we construct does
not admit any violation solutions. In other words, it satisfies all the expected
promises. Finally, note that any $\varepsilon$-KKT point of $\widehat{f}$ on
$[0,1]^2$ immediately yields an $\varepsilon$-KKT point of $f$ on $[0,N]^2$.
Thus, the correctness of the reduction follows from
\cref{lem:kkt-correct-solutions}.

Note that we can re-scale the instance depending on the parameter regime we are interested in. The instance $(\varepsilon,\widehat{f},\nabla \widehat{f},\widehat{L})$ we have constructed is clearly equivalent to the instance $(\alpha\varepsilon,\alpha\widehat{f},\nabla (\alpha\widehat{f}),\alpha\widehat{L})$ for any $\alpha > 0$. For example, by letting $\alpha = 1/\widehat{L}$, we obtain hard instances with Lipschitz-constant $1$, and with inversely exponential precision parameter.

\section{Gradient Descent and KKT are \texorpdfstring{$\boldsymbol{\ppadpls/}$}{PPAD ∩ PLS}-complete}\label{sec:all-ppadpls-complete}

In this section, we show how the \ppadpls/-hardness of \kkt/ (\cref{thm:main-kkt-hard}) implies that our other problems of interest, including our Gradient Descent problems, are \ppadpls/-complete. Namely, we prove:

\begin{theorem}\label{thm:all-ppadpls-complete}
	The problems \kkt/, \gdls/, \gdfp/ and \gclo/ are $\ppad/~\cap$ \pls/-complete, even when the domain is fixed to be the unit square $[0,1]^2$. This hardness result continues to hold even if one considers the promise-versions of these problems, i.e., only instances without violations.
\end{theorem}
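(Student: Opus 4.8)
The plan is to obtain \cref{thm:all-ppadpls-complete} by combining the single hardness result \cref{thm:main-kkt-hard} with the cycle of polynomial-time reductions summarised in \cref{fig:reductions}: the ``triangle'' \kkt/ $\to$ \gdls/ $\to$ \gdfp/ $\to$ \kkt/, together with \gdls/ $\to$ \gclo/ and \gclo/ $\in$ \ppadpls/. No new gadget construction is needed, so the whole argument is an assembly of these ingredients; the essential content lives in the supporting \cref{prop:gdfp2kkt,prop:kkt2gdls,prop:gdls2gdfp,prop:gdls2gclo,prop:gclo2ppadpls}.

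For \emph{hardness}, \cref{thm:main-kkt-hard} already gives that \kkt/ is \ppadpls/-hard on $[0,1]^2$ and for the promise version. I would then establish the triangle and the reduction to \gclo/: \kkt/ $\to$ \gdls/ (\cref{prop:kkt2gdls}), by arguing that for a sufficiently small fixed step size $\eta = \Theta(1/L)$ a point where projected gradient descent makes progress less than $\varepsilon'$ must be an $\varepsilon$-KKT point, using Taylor's theorem (\cref{lem:taylor}) and the obtuse-angle property of the projection $\Pi_D$; \gdls/ $\to$ \gdfp/ (\cref{prop:gdls2gdfp}), via the contrapositive, since if $\|x-\Pi_D(x-\eta\nabla f(x))\|$ is not small then $f$ decreases by a quantifiable amount, so a fixed-point solution is a local-search solution; \gdfp/ $\to$ \kkt/ (\cref{prop:gdfp2kkt}), because $x\approx\Pi_D(x-\eta\nabla f(x))$ says precisely that $-\eta\nabla f(x)$ lies approximately in the normal cone of $D$ at $x$, which unpacks into the $\varepsilon$-KKT inequality with the appropriate multipliers; and \gdls/ $\to$ \gclo/ (\cref{prop:gdls2gclo}), which is essentially an inclusion of problems (take $p:=f$, $g(x):=x-\eta\nabla f(x)$, keep $D$ unchanged, and note every Lipschitz violation of $p$ or $g$ pulls back to a violation of $f$ or $\nabla f$). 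Composing, \kkt/ reduces to each of the other three problems, so all four are \ppadpls/-hard; since all of these reductions are domain- and promise-preserving, the hardness persists on $[0,1]^2$ and for the promise versions.

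For \emph{membership}, the key step is \cref{prop:gclo2ppadpls}, placing \gclo/ in \ppadpls/: membership in \ppad/ because $x\mapsto\Pi_D(g(x))$ is a Brouwer self-map of the polytope $D$ whose approximate fixed points solve the instance (reducing to a discrete Brouwer / \eol/-style problem, with a change of coordinates to handle a general polytope), and membership in \pls/ by discretising $D$ on a grid fine enough relative to $L$, $\eta$ and $\varepsilon$, using $p$ as the \textsc{Localopt} value and ``step towards $\Pi_D(g(\cdot))$'' as the neighbourhood. Since \ppad/ and \pls/ are closed under polynomial-time reductions and each of \kkt/, \gdls/, \gdfp/ reduces to \gclo/ through the triangle (for instance \gdfp/ $\to$ \kkt/ $\to$ \gdls/ $\to$ \gclo/), all four problems lie in \ppadpls/; together with the hardness they are \ppadpls/-complete, on $[0,1]^2$ and in the promise versions.

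The statement itself is immediate once the propositions hold, so the real obstacle is inside those propositions, namely calibrating the parameters $\eta$, $L$, $\varepsilon$, $\varepsilon'$ in the \gdls/$\leftrightarrow$\gdfp/$\leftrightarrow$\kkt/ equivalence via the descent lemma so that the reductions remain \emph{domain-preserving} (we cannot leave $[0,1]^2$) and \emph{promise-preserving} (each target violation must map back to a source violation, even though \gclo/ only carries Lipschitz-type violations whereas the other three problems additionally carry a Taylor-type violation); and, on the membership side, making the \pls/ discretisation of \gclo/ over an arbitrary bounded polytope robust to the projection step.
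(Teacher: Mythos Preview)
Your proposal is correct and follows essentially the same route as the paper: hardness of \kkt/ on $[0,1]^2$ from \cref{thm:main-kkt-hard}, the domain- and promise-preserving triangle \cref{prop:gdls2gdfp,prop:gdfp2kkt,prop:kkt2gdls}, the inclusion reduction \cref{prop:gdls2gclo}, and membership via \cref{prop:gclo2ppadpls}. The only minor slippage is in your description of \cref{prop:gdfp2kkt}: the reduction direction requires showing that an $\varepsilon'$-KKT point is an approximate fixed point (not the converse), though your normal-cone characterisation is of course biconditional and the paper's argument is exactly the projection/Cauchy--Schwarz computation you would expect.
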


\noindent The hardness results in this theorem are the ``best possible'', in the following sense:
\begin{itemize}
	\item Promise-problem: as mentioned in the theorem, the hardness holds even for the promise-versions of these problems. In other words, the hard instances that we construct are not pathological: they satisfy all the conditions that we would expect from the input, e.g., $\nabla f$ is indeed the gradient of $f$, $\nabla f$ and $f$ are indeed $L$-Lipschitz-continuous, etc.
	\item Domain: the problems remain hard even if we fix the domain to be the unit square $[0,1]^2$, which is arguably the simplest two-dimensional bounded domain. All the problems become polynomial-time solvable if the domain is one-dimensional (\cref{lem:one-dimensional-easy}).
	\item Exponential parameters: in all of our problems, the parameters, such as $\varepsilon$ and $L$, are provided in the input in \emph{binary} representation. This means that the parameters are allowed to be exponentially small or large with respect to the length of the input. Our hardness results make use of this, since the proof of \cref{thm:main-kkt-hard} constructs an instance of \kkt/ where $\varepsilon$ is some constant, but $L$ is exponential in the input length. By a simple transformation, this instance can be transformed into one where $\varepsilon$ is exponentially small and $L$ is constant (see \cref{sec:kkt-scaling}). It is easy to see that at least one of $\varepsilon$ or $L$ must be exponentially large/small, for the problem to be hard on the domain $[0,1]^2$. However, this continues to hold even in high dimension, i.e., when the domain is $[0,1]^n$ (\cref{lem:kkt-poly-params}). In other words, if the parameters are given in unary, the problem is easy, even in high dimension. This is in contrast with the problem of finding a Brouwer fixed point, where moving to domain $[0,1]^n$ makes it possible to prove \ppad/-hardness even when the parameters are given in unary.
\end{itemize}
\cref{thm:all-ppadpls-complete} follows from \cref{thm:main-kkt-hard}, proved in \cref{sec:kkt-ppadpls}, and a set of domain- and promise-preserving reductions as pictured in \cref{fig:reductions}, which are presented in the rest of this section as follows. In \cref{sec:kkt-equiv-GD} we show that the problems \kkt/, \gdls/ and \gdfp/ are equivalent. Then, in \cref{sec:gdls2gclo2ppadpls} we reduce \gdls/ to \gclo/, and finally we show that \gclo/ lies in \ppadpls/.

\subsection{\kkt/ and the Gradient Descent problems are equivalent}\label{sec:kkt-equiv-GD}

The equivalence between \kkt/, \gdls/ and \gdfp/ is proved by providing a ``triangle'' of reductions as shown in \cref{fig:reductions}. Namely, we show that \gdls/ reduces to \gdfp/ (\cref{prop:gdls2gdfp}), \gdfp/ reduces to \kkt/ (\cref{prop:gdfp2kkt}), and \kkt/ reduces to \gdls/ (\cref{prop:kkt2gdls}). All the reductions are domain- and promise-preserving.

\begin{proposition}\label{prop:gdls2gdfp}
	\gdls/ reduces to \gdfp/ using a domain- and promise-preserving reduction.
\end{proposition}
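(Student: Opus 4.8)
The plan is to give a many-one reduction that leaves essentially all of the instance data unchanged and only shrinks the precision parameter, exploiting the $L$-Lipschitz-continuity of $f$ to convert ``the next iterate is $\varepsilon'$-close'' into ``the next iterate improves by at most $\varepsilon$''. Given a \gdls/ instance $I = (\varepsilon, \eta, D, f, \nabla f, L)$, I would output the \gdfp/ instance $I' = (\varepsilon', \eta, D, f, \nabla f, L)$ with $\varepsilon' := \varepsilon / L$. Since $\varepsilon, L > 0$ we have $\varepsilon' > 0$, and $\varepsilon'$ has polynomial-size representation (both $\varepsilon$ and $L$ are rationals given in binary), so $I'$ is computable from $I$ in polynomial time. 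As $D$ is untouched, the reduction is automatically domain-preserving.

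For the solution-recovery map $g$, I would proceed as follows. A solution of $I'$ that is a violation solution (non-$L$-Lipschitzness of $f$ or of $\nabla f$, or a pair contradicting Taylor's theorem, \cref{lem:taylor}) is passed through unchanged: these conditions refer only to $D, f, \nabla f, L$, which are identical in $I$ and $I'$, so such a pair is verbatim a violation solution of $I$. This immediately makes the reduction promise-preserving. Now suppose instead that $x \in D$ is a genuine \gdfp/ solution of $I'$, and set $x' := \Pi_D(x - \eta\nabla f(x)) \in D$, so that $\|x - x'\| \leq \varepsilon'$. I would first test whether $|f(x) - f(x')| > L\|x - x'\|$; if so, $(x, x')$ is a violation solution of $I$ witnessing that $f$ is not $L$-Lipschitz, and $g$ outputs it. Otherwise $f(x) - f(x') \leq |f(x) - f(x')| \leq L\|x - x'\| \leq L\varepsilon' = \varepsilon$, hence $f(x') \geq f(x) - \varepsilon$, so $x$ is a \gdls/ solution of $I$ and $g$ outputs $x$.

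I do not anticipate a real obstacle; the reduction is essentially immediate once one notices the Lipschitz bound. The only points requiring care are the quantitative choice $\varepsilon' = \varepsilon/L$ (forced by $|f(x) - f(x')| \leq L\|x-x'\|$) and the observation that on a pathological instance a genuine \gdfp/ solution need \emph{not} be a genuine \gdls/ solution, which is why $g$ must first probe the pair $(x,x')$ for a Lipschitz violation before it can safely report $x$ as a \gdls/ solution. Keeping this branch in $g$ is exactly what ensures the map is well-defined as a \tfnp/ reduction and simultaneously promise-preserving.
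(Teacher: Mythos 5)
Your reduction is exactly the paper's: you pass the instance through unchanged except for setting $\varepsilon' = \varepsilon/L$, map violations back verbatim, and use the $L$-Lipschitzness of $f$ (or extract a violation of it from the pair $(x,x')$) to turn $\|x-x'\|\leq\varepsilon'$ into $f(x')\geq f(x)-\varepsilon$. This is correct and matches the paper's proof of \cref{prop:gdls2gdfp} in every essential detail.
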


\begin{proof}
	Let $(\varepsilon, \eta, A, b, f, \nabla f, L)$ be an instance of \gdls/. The reduction simply constructs the instance $(\varepsilon', \eta, A, b, f, \nabla f, L)$ of \gdfp/, where $\varepsilon' = \varepsilon/L$. This reduction is trivially domain-preserving and it is also promise-preserving, because any violation of the constructed instance is immediately also a violation of the original instance. Clearly, the reduction can be computed in polynomial time, so it remains to show that any (non-violation) solution of the constructed instance can be mapped back to a solution or violation of the original instance.
	
	Consider any solution $x \in D$ of the \gdfp/ instance, i.e.,
	$$\|x-y\| = \|x - \Pi_D(x - \eta \nabla f(x))\| \leq \varepsilon'.$$
	where $y = \Pi_D(x - \eta \nabla f(x))$. If $x,y$ do not satisfy the $L$-Lipschitzness of $f$, then we have obtained a violation. Otherwise, it must be that
	$$|f(x)-f(y)| \leq L \|x-y\| \leq L \varepsilon' = \varepsilon.$$
	In particular, it follows that
	$$f(y) \geq f(x) - \varepsilon$$
	which means that $x$ is a solution of the original \gdls/ instance.
\end{proof}

\begin{proposition}\label{prop:gdfp2kkt}
	\gdfp/ reduces to \kkt/ using a domain- and promise-preserving reduction.
\end{proposition}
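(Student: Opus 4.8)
\medskip

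The plan is to exploit the classical fact that, on a polyhedral domain $D=\{x:Ax\le b\}$, a point is a fixed point of the projected-gradient map $x\mapsto\Pi_D(x-\eta\nabla f(x))$ if and only if it is a KKT point, and to make this correspondence quantitative. Given a \gdfp/ instance $(\varepsilon,\eta,A,b,f,\nabla f,L)$, the reduction outputs the \kkt/ instance $(\varepsilon/\eta,A,b,f,\nabla f,L)$: the domain, the arithmetic circuits and the Lipschitz constant are left untouched, and only the precision parameter is rescaled by the factor $1/\eta$ (a rational number, hence computable in polynomial time). This is trivially polynomial-time and domain-preserving, and it is promise-preserving essentially for free: the two problems have syntactically identical violation solutions (same $f$, $\nabla f$, $L$, $D$ and the same Lipschitz/Taylor violation conditions), so any violation returned for the constructed \kkt/ instance is verbatim a violation of the original \gdfp/ instance.

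It then remains to show that any genuine (non-violation) $(\varepsilon/\eta)$-KKT point $x\in D$ of the constructed instance is a \gdfp/ solution of the original instance; we will take $g(\cdot,x)=x$. Let $\mu\ge 0$ witness the KKT condition, so $\|\nabla f(x)+A^T\mu\|\le\varepsilon/\eta$ and $\langle\mu,Ax-b\rangle=0$. The key observation is that $w:=A^T\mu$ lies in the normal cone of $D$ at $x$: for every $u\in D$,
\[
\langle w,u-x\rangle=\langle\mu,Au-Ax\rangle\le 0,
\]
because whenever $\mu_i>0$ complementary slackness forces $(Ax)_i=b_i\ge(Au)_i$, and otherwise $\mu_i=0$. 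Consequently, by the variational characterisation of the Euclidean projection, $\langle(x+\eta w)-x,\,u-x\rangle=\eta\langle w,u-x\rangle\le 0$ for all $u\in D$, which says precisely that $\Pi_D(x+\eta A^T\mu)=x$.

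Combining this with nonexpansiveness of $\Pi_D$ in the Euclidean norm gives
\begin{align*}
\bigl\|x-\Pi_D\bigl(x-\eta\nabla f(x)\bigr)\bigr\|
&=\bigl\|\Pi_D(x+\eta A^T\mu)-\Pi_D\bigl(x-\eta\nabla f(x)\bigr)\bigr\| \\
&\le\bigl\|(x+\eta A^T\mu)-(x-\eta\nabla f(x))\bigr\|
=\eta\,\bigl\|\nabla f(x)+A^T\mu\bigr\|\le\varepsilon,
\end{align*}
so $x$ solves the \gdfp/ instance, as desired. I expect the only slightly delicate step to be the identification $\Pi_D(x+\eta A^T\mu)=x$ via complementary slackness and the projection inequality; the rest is bookkeeping. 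One should also confirm the norm conventions line up — both \kkt/ and \gdfp/ are stated with the $\ell_2$-norm, which is also the norm for which $\Pi_D$ is nonexpansive — so no norm-conversion loss is incurred and the choice $\varepsilon'=\varepsilon/\eta$ is exactly right.
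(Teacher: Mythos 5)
Your reduction is identical to the paper's (same instance, same choice $\varepsilon'=\varepsilon/\eta$, same observation that violations carry over verbatim), and your correctness argument is sound; only the final estimate is organised differently. The paper applies the projection inequality (\cref{lem:projection-lemma}) at $y=\Pi_D(x-\eta\nabla f(x))$ with test point $z=x$, obtaining $\|x-y\|^2\le \eta\langle\nabla f(x)+A^T\mu,\,x-y\rangle$ after discarding the term $\eta\langle A^T\mu,x-y\rangle\ge 0$ (which follows from complementary slackness and $Ay\le b$), and then finishes with Cauchy--Schwarz and a division by $\|x-y\|$ (handling $\|x-y\|=0$ separately). You instead first establish that $A^T\mu$ lies in the normal cone at $x$, deduce $x=\Pi_D(x+\eta A^T\mu)$ from the variational characterisation of the projection, and conclude by nonexpansiveness of $\Pi_D$ in one line. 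Both arguments rest on the same two ingredients (the projection inequality and complementary slackness); yours packages them slightly more conceptually, avoids the degenerate-case split, and uses nonexpansiveness, which the paper also invokes elsewhere (e.g.\ in \cref{prop:general-rlo}), so nothing nonstandard is assumed. Your norm-convention check is also apt: both problems are stated in the $\ell_2$-norm, for which $\Pi_D$ is nonexpansive, so no constant is lost.
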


\begin{proof}
	Let $(\varepsilon, \eta, A, b, f, \nabla f, L)$ be an instance of \gdfp/. The reduction simply constructs the instance $(\varepsilon'$, $A$, $b$, $f$, $\nabla f$, $L)$ of \kkt/, where $\varepsilon' = \varepsilon/\eta$. This reduction is trivially domain-preserving and it is also promise-preserving, because any violation of the constructed instance is immediately also a violation of the original instance. Clearly, the reduction can be computed in polynomial time, so it remains to show that any (non-violation) solution of the constructed instance can be mapped back to a solution or violation of the original instance.
	
	In more detail, we will show that any $\varepsilon'$-KKT point must be an $\varepsilon$-approximate fixed point of the gradient descent dynamics. Consider any $\varepsilon'$-KKT point of the \kkt/ instance, i.e., a point $x \in D$ such that there exists $\mu \geq 0$ with $\langle \mu, Ax-b \rangle = 0$ and $\|\nabla f(x) + A^T\mu\| \leq \varepsilon'$.
	
	Let $y = \Pi_D(x - \eta \nabla f(x))$. We want to show that $\|x-y\| \leq \varepsilon$. Since $y$ is the projection of $x - \eta \nabla f(x)$ onto $D$, by \cref{lem:projection-lemma} it follows that for all $z \in D$
	$$\langle x - \eta \nabla f(x) - y, z - y \rangle \leq 0.$$
	Letting $z := x$, this implies that $\langle x-y, x-y \rangle \leq \eta \langle \nabla f(x), x-y \rangle$ and thus
	\begin{align*}
	\|x-y\|^2 = \langle x-y, x-y \rangle \leq \eta \langle \nabla f(x), x-y \rangle &= \eta \langle \nabla f(x) + A^T\mu, x-y \rangle - \eta \langle A^T\mu, x-y \rangle\\
	&\leq \eta \langle \nabla f(x) + A^T\mu, x-y \rangle\\
	&\leq \eta \|\nabla f(x) + A^T\mu\| \cdot \|x-y\|
	\end{align*}
	where we used the Cauchy-Schwarz inequality and the fact that $\langle A^T\mu, x-y \rangle \geq 0$, which follows from
	$$\langle A^T\mu, x-y \rangle = \langle \mu, A(x-y) \rangle = \langle \mu, Ax-b \rangle - \langle \mu, Ay-b \rangle \geq 0$$
	since $\langle \mu, Ax-b \rangle = 0$, $\mu \geq 0$ and $Ay-b \leq 0$ (because $y \in D$).
	
	We can now show that $\|x-y\| \leq \varepsilon$. If $\|x-y\| = 0$, this trivially holds. Otherwise, divide both sides of the inequality obtained above by $\|x-y\|$, which yields
	\[\|x-y\| \leq \eta \|\nabla f(x) + A^T\mu\| \leq \eta \cdot \varepsilon' = \varepsilon. \qedhere \]
\end{proof}

\begin{proposition}\label{prop:kkt2gdls}
	\kkt/ reduces to \gdls/ using a domain- and promise-preserving reduction.
\end{proposition}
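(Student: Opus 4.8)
The plan is to give a reduction that leaves all the ``hard'' data untouched --- the domain $D = \{x : Ax \le b\}$, the circuits for $f$ and $\nabla f$, and the Lipschitz constant $L$ --- and only sets the auxiliary parameters of the \gdls/ instance. From the \kkt/ instance $(\varepsilon, A, b, f, \nabla f, L)$ I would build the \gdls/ instance $(\varepsilon', \eta, A, b, f, \nabla f, L)$ with step size $\eta := 1/L$ and precision $\varepsilon' := \varepsilon^2/(8L)$. Because \gdls/ and \kkt/ share exactly the same violation solutions (witnesses that $f$ or $\nabla f$ is not $L$-Lipschitz, or a pair contradicting the Taylor inequality of \cref{lem:taylor}), any violation of the constructed instance is literally a violation of the original one; this makes the reduction domain- and promise-preserving, and it is clearly polynomial-time computable.

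The work is in the solution map. Given a \gdls/ solution $x \in D$, I would set $y := \Pi_D(x - \eta\nabla f(x))$; if the pair $(x,y)$ witnesses a failure of $L$-Lipschitzness or of the Taylor inequality, I output that pair as a violation, and otherwise I output $y$, claiming it is an $\varepsilon$-KKT point. First, as in the proof of \cref{prop:gdfp2kkt}, the projection lemma (\cref{lem:projection-lemma}) with $z := x$ gives $\|x-y\|^2 \le \eta\langle\nabla f(x), x-y\rangle$; feeding this into \cref{lem:taylor} and using $\eta = 1/L$ gives
$$f(y) - f(x) \;\le\; \langle\nabla f(x), y-x\rangle + \tfrac{L}{2}\|x-y\|^2 \;\le\; -\tfrac{1}{\eta}\|x-y\|^2 + \tfrac{L}{2}\|x-y\|^2 \;=\; -\tfrac{L}{2}\|x-y\|^2 .$$
As $x$ is a \gdls/ solution, $f(y) \ge f(x) - \varepsilon'$, so $\|x-y\|^2 \le 2\varepsilon'/L$, i.e.\ $x$ and $y$ are close.

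Second, $y$ is the Euclidean projection of $x-\eta\nabla f(x)$ onto the polytope $D$, hence the optimum of a convex quadratic program with affine constraints, so its own optimality conditions yield $\lambda \ge 0$ with $\langle\lambda, Ay-b\rangle = 0$ and $\eta\nabla f(x) + A^T\lambda = x-y$. Therefore $\eta\nabla f(y) + A^T\lambda = \eta(\nabla f(y)-\nabla f(x)) + (x-y)$, and by $L$-Lipschitzness of $\nabla f$ and $\eta = 1/L$,
$$\bigl\|\eta\nabla f(y) + A^T\lambda\bigr\| \;\le\; \eta L\|x-y\| + \|x-y\| \;=\; 2\|x-y\| \;\le\; 2\sqrt{2\varepsilon'/L} .$$
Multiplying through by $L = 1/\eta$ and putting $\mu := L\lambda \ge 0$ gives $\|\nabla f(y) + A^T\mu\| \le 2\sqrt{2L\varepsilon'} = \varepsilon$ together with $\langle\mu, Ay-b\rangle = 0$, so $y$ is an $\varepsilon$-KKT point of $f$ on $D$. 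The point $y$, and the check of whether $(x,y)$ is a violation, can be computed in polynomial time (e.g.\ by solving the quadratic program defining $\Pi_D$).

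The step I expect to be the crux is handling complementary slackness, and it is the reason the reduction outputs the projected point $y$ rather than the \gdls/ solution $x$ itself: the constraints active at $x$ need not be those active at $y = \Pi_D(x-\eta\nabla f(x))$, so the multipliers coming out of the projection's optimality conditions are complementary to $y$, not to $x$. Re-deriving the stationarity estimate at $y$ and absorbing the extra $\nabla f(y)-\nabla f(x)$ term through the smoothness bound --- which is valid precisely because $\|x-y\|$ is forced to be small once $\varepsilon'$ is chosen small enough --- is what makes everything fit together.
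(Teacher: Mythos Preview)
Your proof is correct, and Step~1 is essentially identical to the paper's. The difference is in Step~2: the paper derives the variational inequality $\langle -\nabla f(y), z-y\rangle \le \varepsilon\|z-y\|$ for all $z\in D$ from the projection lemma, and then invokes a strengthened Farkas lemma (\cref{lem:new-farkas}) to extract multipliers $\mu$ satisfying the $\varepsilon$-KKT conditions at $y$. You instead exploit that $y=\Pi_D(x-\eta\nabla f(x))$ is itself the solution of a convex QP with linear constraints, so its own KKT conditions hand you multipliers $\lambda\ge 0$ with $\langle\lambda,Ay-b\rangle=0$ and $\eta\nabla f(x)+A^T\lambda=x-y$; rescaling by $L=1/\eta$ and absorbing the $\nabla f(y)-\nabla f(x)$ discrepancy via Lipschitzness gives the $\varepsilon$-KKT certificate directly. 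Your route is shorter and avoids the need for \cref{lem:new-farkas}, at the cost of appealing to the (standard) fact that KKT conditions are necessary for convex QPs with affine constraints; the paper's route is more self-contained in that it builds everything from the projection lemma and an explicit Farkas-type argument.
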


\begin{proof}
	Let $(\varepsilon, A, b, f, \nabla f, L)$ be an instance of \kkt/. The reduction simply constructs the instance $(\varepsilon', \eta, A, b, f, \nabla f, L)$ of \gdls/, where $\varepsilon' = \frac{\varepsilon^2}{8L}$ and $\eta = \frac{1}{L}$. This reduction is trivially domain-preserving and it is also promise-preserving, because any violation of the constructed instance is immediately also a violation of the original instance. Clearly, the reduction can be computed in polynomial time, so it remains to show that any (non-violation) solution of the constructed instance can be mapped back to a solution or violation of the original instance.
	
	Consider any $x \in D$ that is a solution of the \gdls/ instance and let $y = \Pi_D(x - \eta \nabla f(x))$. Then, it must be that $f(y) \geq f(x) - \varepsilon'$. We begin by showing that this implies that $\|x-y\| \leq \frac{\varepsilon}{2L}$, or we can find a violation of the \kkt/ instance.
	
	\paragraph{\bf Step 1: Bounding $\boldsymbol{\|x-y\|}$}
	If $x$ and $y$ do not satisfy Taylor's theorem (\cref{lem:taylor}), then we immediately obtain a violation. If they do satisfy Taylor's theorem, it holds that
	$$\langle \nabla f(x), x-y \rangle - \frac{L}{2} \|y-x\|^2 \leq f(x) - f(y) \leq \varepsilon'.$$
	Now, since $y$ is the projection of $x - \eta \nabla f(x)$ onto $D$, by \cref{lem:projection-lemma} it follows that $\langle x - \eta \nabla f(x) - y, z - y \rangle \leq 0$ for all $z \in D$. In particular, by letting $z := x$, we obtain that
	$$\langle \nabla f(x), x - y \rangle \geq \frac{1}{\eta} \langle x - y, x - y \rangle = L \|y-x\|^2$$
	where we used the fact that $\eta = 1/L$. Putting the two expressions together we obtain that
	$$\frac{L}{2} \|y-x\|^2 = L \|y-x\|^2 - \frac{L}{2} \|y-x\|^2 \leq \varepsilon'$$
	which yields that $\|x-y\| \leq \sqrt{2\varepsilon'/L} = \frac{\varepsilon}{2L}$.
	
	\paragraph{\bf Step 2: Obtaining an $\boldsymbol{\varepsilon}$-KKT point}
	Next, we show how to obtain an $\varepsilon$-KKT point or a violation of the \kkt/ instance.
	Note that if $y-x = - \eta \nabla f(x)$, then we immediately have that $\|\nabla f(x)\| = \|x-y\| / \eta \leq \varepsilon/2$, i.e., $x$ is an $\varepsilon$-KKT point. However, because of the projection $\Pi_D$ used in the computation of $y$, in general we might not have that $y-x = - \eta \nabla f(x)$ and, most importantly, $x$ might not be an $\varepsilon$-KKT point. Nevertheless, we show that $y$ will necessarily be an $\varepsilon$-KKT point.
	
	Since $y$ is the projection of $x - \eta \nabla f(x)$ onto $D$, by \cref{lem:projection-lemma} it follows that for all $z \in D$
	$$\langle x - \eta \nabla f(x) - y, z - y \rangle \leq 0.$$
	From this it follows that for all $z \in D$
	$$\langle -\nabla f(x), z - y \rangle \leq \frac{1}{\eta} \langle y - x, z - y \rangle \leq \frac{1}{\eta} \|x-y\| \cdot \|z-y\| \leq \frac{\varepsilon}{2} \|z-y\|$$
	where we used the Cauchy-Schwarz inequality, $\eta = 1/L$ and $\|x-y\| \leq \varepsilon/2L$. Next, unless $x$ and $y$ yield a violation to the $L$-Lipschitzness of $\nabla f$, it must hold that $\|\nabla f(x) - \nabla f(y)\| \leq L \|x - y\| \leq \varepsilon/2$. Thus, we obtain that for all $z \in D$
	\begin{equation}\label{eq:kkt2gdls}
	\begin{split}
	\langle -\nabla f(y), z - y \rangle &= \langle -\nabla f(x), z - y \rangle + \langle \nabla f(x) - \nabla f(y), z - y \rangle\\
	&\leq \frac{\varepsilon}{2} \|z-y\| + \|\nabla f(x) - \nabla f(y)\| \cdot \|z-y\|\\
	&\leq \varepsilon \|z-y\|
	\end{split}
	\end{equation}
	where we used the Cauchy-Schwarz inequality.
	
	Let $I = \{i \in [m]: [Ay-b]_i = 0\}$, i.e., the indices of the constraints that are tight at $y$. Denote by $A_I \in \mathbb{R}^{(m-|I|) \times n}$ the matrix obtained by only keeping the rows of $A$ that correspond to indices in $I$. Assume for now that the statement
	\begin{equation}\label{eq:for-farkas}
	\exists p \in \mathbb{R}^n: A_I p \leq 0, \langle -\nabla f(y), p \rangle > \varepsilon \|p\|
	\end{equation}
	does \emph{not} hold. Then, by a stronger version of Farkas' Lemma, which we prove in the appendix (\cref{lem:new-farkas}), it follows that there exists $\nu \in \mathbb{R}^{|I|}_{\geq 0}$ such that $\|A_I^T\nu + \nabla f(y)\| \leq \varepsilon$. Let $\mu \in \mathbb{R}^m_{\geq 0}$ be such that $\mu$ agrees with $\nu$ on indices $i \in I$, i.e., $\mu_I = \nu$, and $\mu_i=0$ for $i \notin I$. Then we immediately obtain that $A^T\mu = A_I^T\nu$ and thus $\|A^T\mu + \nabla f(y)\| = \|A_I^T\nu + \nabla f(y)\| \leq \varepsilon$. Since we also have that $\langle \mu, Ay - b \rangle = \langle \mu_I, [Ay-b]_I \rangle = 0$ (because $[Ay-b]_I = 0$), it follows that $y$ indeed is an $\varepsilon$-KKT point of $f$ on domain $D$.
	
	It remains to show that the statement \eqref{eq:for-farkas} indeed does not hold. Consider any $p \in \mathbb{R}^n$ such that $A_I p \leq 0$. Then, there exists a sufficiently small $\alpha > 0$ such that $z = y + \alpha p \in D$. Indeed, note that $[Az-b]_i = [Ay - b]_i + \alpha [Ap]_i$ and thus
	\begin{itemize}
		\item for $i \in I$, we get that $[Az-b]_i \leq 0$, since $[Ay - b]_i = 0$ and $[Ap]_i \leq 0$,
		\item for $i \notin I$, we have that $[Ay - b]_i < 0$. If $[Ap]_i \leq 0$, then we obtain $[Az-b]_i \leq 0$ as above. If $[Ap]_i > 0$, then it also holds that $[Az-b]_i \leq 0$, as long as $\alpha \leq - \frac{[Ay - b]_i}{[Ap]_i}$.
	\end{itemize}
	Thus, it suffices to pick $\alpha = \min \left\{- \frac{[Ay - b]_i}{[Ap]_i} : i \notin I, [Ap]_i > 0\right\}$. Note that this indeed ensures that $\alpha > 0$.
	
	Since $z = y + \alpha p \in D$, using \eqref{eq:kkt2gdls} we get that
	$$\langle -\nabla f(y), p \rangle = \frac{1}{\alpha} \langle -\nabla f(y), z - y \rangle \leq \frac{\varepsilon}{\alpha} \|z-y\| = \varepsilon \|p\|.$$
	Thus, we have shown that the statement \eqref{eq:for-farkas} indeed does not hold, and this finishes the proof.
\end{proof}

\subsection{From \gdls/ to \texorpdfstring{$\boldsymbol{\ppadpls/}$}{PPAD ∩ PLS}}\label{sec:gdls2gclo2ppadpls}

In this section we show that \gdls/ reduces to \gclo/ (\cref{prop:gdls2gclo}), and then that \gclo/ lies in \ppadpls/ (\cref{prop:gclo2ppadpls}).

\begin{proposition}\label{prop:gdls2gclo}
	\gdls/ reduces to \gclo/ using a domain- and promise-preserving reduction.
\end{proposition}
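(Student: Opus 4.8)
The plan is to use the natural reduction that reads Gradient Descent directly as continuous local search. Given an instance $(\varepsilon, \eta, A, b, f, \nabla f, L)$ of \gdls/, I would construct the \gclo/ instance $(\varepsilon, A, b, p, g, L')$ where $p := f$, where $g \colon \mathbb{R}^n \to \mathbb{R}^n$ is the Gradient Descent step map $g(x) := x - \eta \nabla f(x)$, and where $L'$ is a Lipschitz constant to be fixed below. The domain description $(A,b)$ and the precision parameter $\varepsilon$ are left unchanged, so the reduction is domain-preserving. With these choices the condition defining a \gclo/ solution, namely $p(\Pi_D(g(x))) \geq p(x) - \varepsilon$, is literally the condition $f(\Pi_D(x - \eta\nabla f(x))) \geq f(x) - \varepsilon$ defining a \gdls/ solution; hence any non-violation solution of the \gclo/ instance is already a solution of the \gdls/ instance, and the solution map is the identity.

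First I would build the arithmetic circuits. The circuit for $p$ is the given circuit for $f$. The circuit for $g$ is obtained from the given circuit for $\nabla f$ by multiplying each output coordinate by the rational constant $\eta$ and subtracting the result from the corresponding input coordinate; since multiplication by a constant is not a \emph{true} multiplication gate, the circuit for $g$ remains well-behaved, and it is clearly constructible in polynomial time. Next I would fix $L'$. Since $p = f$, the map $p$ is $L$-Lipschitz by the promise on the \gdls/ instance. For $g$, the triangle inequality together with $L$-Lipschitzness of $\nabla f$ gives
$$\|g(x) - g(y)\| \;\leq\; \|x-y\| + \eta\,\|\nabla f(x) - \nabla f(y)\| \;\leq\; (1 + \eta L)\,\|x - y\|,$$
so $g$ is $(1+\eta L)$-Lipschitz, and it suffices to take $L' := \max\{L,\, 1 + \eta L\}$, which is computable in polynomial time from the rationals $L$ and $\eta$.

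It remains to check that the reduction is promise-preserving, i.e., that every violation of the constructed \gclo/ instance pulls back to a violation of the original \gdls/ instance. A \gclo/ violation is a pair $x,y \in D$ certifying that $p$ or $g$ fails to be $L'$-Lipschitz. If $|p(x)-p(y)| > L'\|x-y\|$, then, since $p=f$ and $L' \geq L$, the same pair certifies $|f(x)-f(y)| > L\|x-y\|$, a violation for \gdls/. If $\|g(x)-g(y)\| > L'\|x-y\| \geq (1+\eta L)\|x-y\|$, then the displayed inequality forces $\eta\|\nabla f(x)-\nabla f(y)\| > \eta L\|x-y\|$, and dividing by $\eta > 0$ yields $\|\nabla f(x)-\nabla f(y)\| > L\|x-y\|$, again a violation for \gdls/. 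So mapping each \gclo/ solution or violation back to itself does the job.

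There is no substantial obstacle here: the only points needing attention are verifying that the circuit for $g$ stays well-behaved — immediate, since only constant multiplications are added — and the elementary Lipschitz bookkeeping above. The conceptual content is simply that $p(\Pi_D(g(x))) \geq p(x) - \varepsilon$ and $f(\Pi_D(x-\eta\nabla f(x))) \geq f(x)-\varepsilon$ are the same inequality once $p$ and $g$ are chosen as above.
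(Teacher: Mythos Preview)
Your proposal is correct and matches the paper's proof essentially verbatim: the paper likewise sets $p := f$, $g(x) := x - \eta\nabla f(x)$, $L' := \max\{L,\,1+\eta L\}$, observes that the circuit for $g$ stays well-behaved because only $-$ and $\times\zeta$ gates are added, and uses the same triangle-inequality bookkeeping to map violations back.
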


\begin{proof}
	This essentially follows from the fact that the local search version of Gradient Descent is a special case of continuous local search, which is captured by the \gclo/ problem.
	Let $(\varepsilon, A, b, \eta, f, \nabla f, L)$ be an instance of \gdls/. The reduction simply constructs the instance $(\varepsilon, A, b, p, g, L')$ of \gclo/, where $p(x) = f(x)$, $g(x) = x - \eta \nabla f(x)$ and $L' = \max \{\eta L + 1, L\}$. We can easily construct an arithmetic circuit computing $g$, given the arithmetic circuit computing $\nabla f$. It follows that the reduction can be computed in polynomial time. In particular, since we extend $\nabla f$ by using only the gates $-$ and $\times \zeta$, the circuit for $g$ is also well-behaved.
	
	Let us now show that any solution to the \gclo/ instance yields a solution to the \gdls/ instance. First of all, by construction of $g$, it immediately follows that any local optimum solution of the \gclo/ instance is also a non-violation solution to the \gdls/ instance.
	
	Next, we show that any pair of points $x,y \in D$ that violate the $(\eta L + 1)$-Lipschitzness of $g$, also violate the $L$-Lipschitzness of $\nabla f$. Indeed, if $x,y$ do not violate the $L$-Lipschitzness of $\nabla f$, then
	$$\|g(x)-g(y)\| \leq \|x-y\| + \eta \|\nabla f(x) -\nabla f(y)\| \leq (\eta L + 1) \|x - y\|.$$
	In particular, any violation to the $L'$-Lipschitzness of $g$ yields a violation to the $L$-Lipschitzness of $\nabla f$. 
	
	Finally, any violation to the $L'$-Lipschitzness of $p$ immediately yields a violation to the $L$-Lipschitzness of $f$. Since any violation to \gclo/ yields a violation to \gdls/, the reduction is also promise-preserving.
\end{proof}

\begin{proposition}\label{prop:gclo2ppadpls}
	\gclo/ lies in \ppadpls/.
\end{proposition}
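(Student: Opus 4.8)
The plan is to show membership of \gclo/ in both \ppad/ and \pls/ separately; since \ppadpls/ is by definition the intersection of these two classes, this suffices.

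\textbf{Membership in \pls/.} This is the more direct of the two. Given an instance $(\varepsilon, A, b, p, g, L)$ of \gclo/ on the bounded domain $D = \{x : Ax \le b\}$, I would discretize $D$ onto a sufficiently fine grid (with spacing inverse-polynomial in $L$, $1/\varepsilon$, and the bit-sizes involved, so that the Lipschitz bound controls how much $p$ and $g$ can change between grid points), and reduce to \textsc{Localopt}. The value circuit $V$ evaluates (a suitably rounded/scaled version of) $p$ at a grid point; the successor circuit $S$ maps a grid point $x$ to the nearest grid point to $\Pi_D(g(x))$. One must be slightly careful: $\Pi_D$ is computable by quadratic programming but perhaps not by arithmetic circuits — however, the reduction to \textsc{Localopt} only needs Boolean circuits computing $S$ and $V$ on the grid, and on the grid we can afford to compute the projection via a polynomial-time algorithm encoded into the Boolean circuit. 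A local optimum of this discrete \textsc{Localopt} instance, unwound through the Lipschitz estimates, yields either a genuine $\varepsilon$-approximate local optimum of $p$ with respect to $g$, or a pair of nearby points witnessing that $p$ or $g$ violates $L$-Lipschitzness, i.e. a violation solution. So \gclo/ $\in$ \pls/.

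\textbf{Membership in \ppad/.} Here the idea, as hinted in the excerpt's remark about 3D-\clo/, is that $g$ (composed with the projection, $x \mapsto \Pi_D(g(x))$) is a continuous self-map of the convex compact domain $D$, so Brouwer applies, and an approximate fixed point of this map is an $\varepsilon$-approximate local optimum of $p$ w.r.t.\ $g$ (since if $\Pi_D(g(x)) \approx x$ then $p(\Pi_D(g(x))) \approx p(x) \ge p(x) - \varepsilon$, using Lipschitzness of $p$, and otherwise we extract a Lipschitz violation). The only subtlety is again that $\Pi_D$ may not be expressible by arithmetic circuits when $D$ is a general polytope rather than $[0,1]^n$; to handle this I would either (i) reduce first to an equivalent instance on a box by an affine change of coordinates together with a penalty/retraction argument, or more cleanly (ii) invoke a ``general Brouwer''-type \ppad/-complete problem that already builds in projection onto a polytope (the excerpt's macro \gbrouwer/ suggests exactly such a problem is available), reducing \gclo/ to it and checking that its solutions and violation-witnesses pull back correctly. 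Either way, approximate fixed points of the projected-$g$ dynamics map to \gclo/ solutions or violations, giving \gclo/ $\in$ \ppad/.

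\textbf{Main obstacle.} The genuinely delicate part is handling the projection $\Pi_D$ uniformly for an arbitrary bounded polytope $D$ while keeping everything inside the syntactic \tfnp/ subclasses — in particular making sure the discretization/grid argument for \pls/ and the Brouwer-fixpoint argument for \ppad/ both respect the violation solutions (Lipschitz witnesses) and produce only polynomially-sized objects. One must choose the grid resolution carefully against both $L$ and $\varepsilon$ and against the geometry of $D$ (its vertices have bit-size polynomial in the input), and verify that a discrete local optimum / discrete fixpoint, after accounting for the $L$-Lipschitz error over one grid cell, always yields either a true $\varepsilon$-solution or a certified violation. Once the bookkeeping on parameters is set up, both containments are essentially the standard arguments that \clo/-style problems lie in \ppad/ and in \pls/; the reduction here is just the adaptation to general polytopal domains with the projection operator.
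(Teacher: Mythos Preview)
Your proposal is correct and follows essentially the same approach as the paper: for \ppad/ membership the paper reduces to exactly the \gbrouwer/ problem you anticipate (with $\varepsilon' = \varepsilon/L$), and for \pls/ membership the paper reduces to an intermediate problem \grlo/ (which is \gclo/ with the continuity requirement on $g$ dropped), whose \pls/ membership is then established by precisely the discretization-to-\textsc{Localopt} argument you sketch. The only difference is organizational: the paper packages the discretization step inside the proof that \grlo/ $\in$ \pls/ (deferred to an appendix), whereas you unroll it directly.
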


\begin{proof}
	This essentially follows by the same arguments that were used by \citet{DaskalakisP2011-CLS} to show that \cls/ lies in \ppadpls/. The only difference is that here the domain is allowed to be more general. Consider any instance $(\varepsilon, A, b, p, g, L)$ of \gclo/.
	
	The containment of \gclo/ in \ppad/ follows from a reduction to the problem of finding a fixed point guaranteed by Brouwer's fixed point theorem, which is notoriously \ppad/-complete. Indeed, let $x^* \in D$ be any $\varepsilon/L$-approximate fixed point of the function $x \mapsto \Pi_D(g(x))$, i.e., such that $\|\Pi_D(g(x^*)) - x^*\| \leq \varepsilon/L$. Then, unless $x^*$ and $\Pi_D(g(x^*))$ yield a violation of $L$-Lipschitzness of $p$, it follows that $p(\Pi_D(g(x^*))) \geq p(x^*) - \varepsilon$, i.e., $x^*$ is a solution of the \gclo/ instance. Formally, the reduction works by constructing the instance $(\varepsilon', A, b, g, L)$ of \gbrouwer/, where $\varepsilon' = \varepsilon/L$. The formal definition of \gbrouwer/ can be found in \cref{sec:general-brouwer-rlo}, where it is also proved that the problem lies in \ppad/.
	
	The containment of \cls/ in \pls/ was proved by \citet{DaskalakisP2011-CLS} by reducing \clo/ to a problem called \textsc{Real-Localopt}, which they show to lie in \pls/. \textsc{Real-Localopt} is defined exactly as \clo/, except that the function $g$ is not required to be continuous. In order to show the containment of \gclo/ in \pls/, we reduce to the appropriate generalisation of \textsc{Real-Localopt}, which we simply call \grlo/. Formally, the reduction is completely trivial, since any instance of \gclo/ is also an instance of \grlo/, and solutions can be mapped back as is. The formal definition of \grlo/ can be found in \cref{sec:general-brouwer-rlo}, where it is also proved that the problem lies in \pls/.
\end{proof}

\section{Consequences for Continuous Local Search}\label{sec:cls}

In this section, we explore the consequences of \cref{thm:main-kkt-hard} (and \cref{thm:all-ppadpls-complete}) for the class \cls/, defined by \citet{DaskalakisP2011-CLS} to capture problems that can be solved by ``continuous local search'' methods. In \cref{sec:linear-cls} we also consider a seemingly weaker version of \cls/, which we call \linearcls/, and show that it is in fact the same as \cls/. Finally, we define a Gradient Descent problem where we do not have access to the gradient of the function (which might, in fact, not even be differentiable) and instead use ``finite differences'' to compute an approximate gradient. We show that this problem remains \ppadpls/-complete.

\subsection{Consequences for \cls/}

The class \cls/ was defined by \citet{DaskalakisP2011-CLS} as a more natural counterpart to \ppadpls/. Indeed, Daskalakis and Papadimitriou noted that all the known \ppadpls/-complete problems were unnatural, namely uninteresting combinations of a \ppad/-complete and a \pls/-complete problem. As a result, they defined \cls/, a subclass of \ppadpls/, which is a more natural combination of \ppad/ and \pls/, and conjectured that \cls/ is a \emph{strict} subclass of \ppadpls/. They were able to prove that various interesting problems lie in \cls/, thus further strengthening the conjecture that \cls/ is a more natural subclass of \ppadpls/, and more likely to capture the complexity of interesting problems.

It follows from our results that, surprisingly, \cls/ is actually equal to \ppadpls/.
\begin{theorem}\label{thm:cls-equal-ppadpls}
	\cls/ $=$ \ppadpls/.
\end{theorem}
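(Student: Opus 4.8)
The plan is to prove the two inclusions separately. The inclusion $\cls/ \subseteq \ppadpls/$ is already known: it was established by \citet{DaskalakisP2011-CLS}, and it also follows from \cref{prop:gclo2ppadpls}, since $3$D-\clo/ is exactly the special case of \gclo/ in which the domain is fixed to be $[0,1]^3$. Hence the entire content of the theorem is the reverse inclusion $\ppadpls/ \subseteq \cls/$: every \tfnp/ problem that lies in both \ppad/ and \pls/ must reduce in polynomial time to $3$D-\clo/.

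To show this I would chain together three reductions. \emph{First}, by \cref{thm:all-ppadpls-complete}, the problem \gclo/ with the domain fixed to the unit square $[0,1]^2$ is \ppadpls/-complete, so every problem in $\ppadpls/$ reduces to it. \emph{Second}, \gclo/ on $[0,1]^2$ reduces to $2$D-\clo/: given an instance $(\varepsilon, A, b, p, g, L)$ with $\{x : Ax \le b\} = [0,1]^2$, replace $g$ by $\widetilde g(x) := \Pi_{[0,1]^2}(g(x))$, where $[\Pi_{[0,1]^2}(x)]_i = \min\{1,\max\{0,x_i\}\}$ can be computed by a well-behaved arithmetic circuit using only $\min$ and $\max$ gates; this adds no true multiplication gates, keeps $\widetilde g$ $L$-Lipschitz with image in $[0,1]^2$, turns the goal condition $p(\Pi_D(g(x))) \ge p(x)-\varepsilon$ into $p(\widetilde g(x)) \ge p(x)-\varepsilon$, and pulls the two Lipschitz violations back verbatim. \emph{Third}, $2$D-\clo/ reduces to $3$D-\clo/ by trivial padding: from an instance $(\varepsilon, p, g, L)$ on $[0,1]^2$, build $(\varepsilon, p', g', L)$ on $[0,1]^3$ with $p'(x,y,z) := p(x,y)$ and $g'(x,y,z) := (g_1(x,y),\, g_2(x,y),\, 0)$. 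Both $p'$ and $g'$ are $L$-Lipschitz, their circuits are well-behaved, and any approximate local optimum $(x,y,z)$ of $p'$ with respect to $g'$ satisfies $p(g(x,y)) = p'(g'(x,y,z)) \ge p'(x,y,z) - \varepsilon = p(x,y) - \varepsilon$, so $(x,y)$ solves the original instance; Lipschitz violations of $p'$ or $g'$ again pull back directly. Composing the three reductions shows that every problem in $\ppadpls/$ reduces to $3$D-\clo/, i.e.\ lies in $\cls/$, which together with the easy inclusion yields $\cls/ = \ppadpls/$.

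There is essentially no genuine obstacle here: all the difficulty is concentrated in \cref{thm:main-kkt-hard} and its consequence \cref{thm:all-ppadpls-complete}, and what remains is bookkeeping. The only points requiring a little care are (i) that the auxiliary circuits built in the last two reductions stay well-behaved — which holds because they introduce only $\min$, $\max$, and constant gates, never a true multiplication — so the argument is valid for $\cls/$ as defined in this paper via well-behaved circuits, and by \cref{rem:well-behaved-corrigendum} it transfers to the corrigendum formulation as well; and (ii) checking that $2$D-\clo/ (with well-behaved circuits) is precisely the form needed in the definition of $\cls/$. As immediate corollaries, the previously known $\cls/$-complete problems \textsc{Banach} and \textsc{MetametricContraction} become $\ppadpls/$-complete, the $n$-dimensional version of $\cls/$ collapses to the two-dimensional one (the one-dimensional version being tractable), and $\cls/$ is robust under removal of the violation solutions.
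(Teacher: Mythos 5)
Your proposal is correct and follows essentially the same route as the paper: the content is the \ppadpls/-hardness of \gclo/ on $[0,1]^2$ from \cref{thm:all-ppadpls-complete}, followed by the projection reduction to $2$D-\clo/ (\cref{prop:2d-gclo2cls}) and the padding reduction to $3$D-\clo/ (\cref{lem:clo-dimension}). The one detail you gloss over under your point (ii) is that \cref{def:clo} requires $p$ to have codomain $[0,1]$ while the $p$ inherited from \gclo/ maps to $\mathbb{R}$; the paper discharges this by an affine rescaling of $p$ (\cref{lem:clo-codomain}), and you should do the same.
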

\noindent Recall that in \cref{thm:all-ppadpls-complete}, we have shown that \gclo/ with domain $[0,1]^2$ is \ppadpls/-complete. \cref{thm:cls-equal-ppadpls} follows from the fact that this problem lies in \cls/, almost by definition. Before proving this in \cref{prop:2d-gclo2cls} below, we explore some further consequences of our results for \cls/.
An immediate consequence is that the two previously known \cls/-complete problems are in fact \ppadpls/-complete.
\begin{corollary}\label{cor:banach-metametric}
	\textup{\textsc{Banach}} and \textup{\textsc{MetametricContraction}} are \ppadpls/-complete.
\end{corollary}
\noindent For the definitions of these problems, which are computational versions of Banach's fixed point theorem, see \citep{DaskTZ18} and \citep{FGMS17}, respectively.

\smallskip
\noindent Furthermore, our results imply that the definition of \cls/ is ``robust'' in the following sense:
\begin{itemize}
	\item Dimension: the class \cls/ was defined by \citet{DaskalakisP2011-CLS} as the set of all TFNP problems that reduce to 3D-\clo/, i.e., \clo/ with $n=3$. Even though it is easy to see that $k$D-\clo/ reduces to $(k+1)$D-\clo/ (\cref{lem:clo-dimension}), it is unclear how to construct a reduction in the other direction. Indeed, similar reductions exist for the Brouwer problem, but they require using a discrete equivalent of Brouwer, namely \eol/, as an intermediate step. Since no such discrete problem was known for \cls/, this left open the possibility of a hierarchy of versions of \cls/, depending on the dimension, i.e., 2D-\cls/ $\subset$ 3D-\cls/ $\subset$ 4D-\cls/ $\dots$. We show that even the two-dimensional version is \ppadpls/-hard, and thus the definition of \cls/ is indeed independent of the dimension used. In other words,
	$$\text{2D-\cls/ = \cls/ = $n$D-\cls/}.$$
	Note that this is tight, since 1D-\clo/ can be solved in polynomial time (\cref{lem:one-dimensional-easy}), i.e., 1D-\cls/ $=$ FP.\footnote{With the slight abuse of notation that FP $\subseteq$ TFNP, as explained in \cref{sec:def-tfnp}.}
	\item Domain: some interesting problems can be shown to lie in \cls/, but the reduction produces a polytopal domain, instead of the standard hypercube $[0,1]^n$. In other words, they reduce to \gclo/, which we have defined as a generalization of \clo/. Since \gclo/ is \ppadpls/-complete (\cref{thm:all-ppadpls-complete}), it follows that \cls/ can equivalently be defined as the set of all TFNP problems that reduce to \gclo/.
	\item Promise: the problem \clo/, which defines \cls/, is a problem with violation solutions. One can instead consider promise-\cls/, which is defined as the set of all TFNP problems that reduce to a promise version of \clo/. In the promise version of \clo/, we restrict our attention to instances that satisfy the promise, i.e., where the functions $p$ and $g$ are indeed $L$-Lipschitz-continuous. The class promise-\cls/ could possibly be weaker than \cls/, since the reduction is required to always map to instances of \clo/ without violations. However, it follows from our results that promise-\cls/\,=\,\cls/, since the promise version of \clo/ is shown to be \ppadpls/-hard, even on domain $[0,1]^2$ (\cref{thm:all-ppadpls-complete}).
	\item Turing reductions: since \ppad/ and \pls/ are closed under Turing reductions \citep{buss2012propositional}, it is easy to see that this also holds for \ppadpls/, and thus by our result also for \cls/.
	\item Circuits: \cls/ is defined using the problem \clo/ where the functions are represented by general arithmetic circuits. If one restricts the type of arithmetic circuit that is used, this might yield a weaker version of \cls/. Linear arithmetic circuits are a natural class of circuits that arise when reducing from various natural problems. We define \linearcls/ as the set of problems that reduce to \clo/ with linear arithmetic circuits. In \cref{sec:linear-cls} we show that \linearcls/ = \cls/.
\end{itemize}
Before moving on to \cref{sec:linear-cls} and \linearcls/, we provide the last reduction in the chain of reductions proving \cref{thm:cls-equal-ppadpls}.

\begin{proposition}\label{prop:2d-gclo2cls}
	\gclo/ with fixed domain $[0,1]^2$ reduces to \textup{2D}-\clo/ using a promise-preserving reduction. In particular, the problem lies in \cls/.
\end{proposition}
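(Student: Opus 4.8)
The plan is to reduce \gclo/ on domain $[0,1]^2$ to \textup{2D}-\clo/ by absorbing the projection $\Pi_D$ into the displacement map and rescaling the objective into $[0,1]$. The key observation is that for $D=[0,1]^2$ the Euclidean projection is the coordinate-wise clamping map $[\Pi_D(z)]_i=\min\{1,\max\{0,z_i\}\}$, which is computed by a well-behaved arithmetic circuit using only $\min$, $\max$ and the constants $0$ and $1$ (no true multiplication). So given a \gclo/ instance $(\varepsilon, A, b, p, g, L)$ with $\{x:Ax\le b\}=[0,1]^2$ (hence $n=2$), I would set $\tilde g(x):=\Pi_{[0,1]^2}(g(x))$, which maps $\mathbb{R}^2$ into $[0,1]^2$, and, letting $c:=p(0,0)$ (a rational computable in polynomial time by \cref{lem:well-behaved-efficient}),
\[
\tilde p(x)\;:=\;\frac{1}{4L}\bigl(p(\Pi_{[0,1]^2}(x))-c\bigr)+\tfrac12,
\]
and output the \textup{2D}-\clo/ instance $(\varepsilon',\tilde p,\tilde g,L')$ with $\varepsilon':=\varepsilon/(4L)$ and $L':=\max\{L,1\}$. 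Since $\tilde g$ post-composes $g$ with a multiplication-free gadget and $\tilde p$ applies an affine rescaling (multiplication by the \emph{constant} $1/(4L)$), both circuits are well-behaved, and the instance is produced in polynomial time.

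Next I would verify the three things a promise-preserving reduction needs. First, that the instance is well-formed: if $p$ is $L$-Lipschitz on $D$, then $|p(x)-c|\le L\cdot\operatorname{diam}([0,1]^2)\le 2L$ for all $x\in[0,1]^2$, so $\tilde p$ takes values in $[0,1]$; moreover $\tilde p$ is $\tfrac14$-Lipschitz while $\tilde g$ is $L$-Lipschitz since $\Pi_{[0,1]^2}$ is nonexpansive, so $L'=\max\{L,1\}$ is a valid Lipschitz bound for both. Second, the forward direction: for a genuine solution $x\in[0,1]^2$ of the \textup{2D}-\clo/ instance we have $x,\tilde g(x)\in[0,1]^2$, on which $\tilde p$ is just the affine rescaling of $p$, so $\tilde p(\tilde g(x))\ge\tilde p(x)-\varepsilon'$ unfolds to $p(\Pi_{[0,1]^2}(g(x)))\ge p(x)-4L\varepsilon'=p(x)-\varepsilon$, i.e.\ $x$ is a solution of the \gclo/ instance. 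Third, promise-preservation: a witness $(x,y)\in[0,1]^2\times[0,1]^2$ that $\tilde p$ is not $L'$-Lipschitz gives $|p(x)-p(y)|=4L\,|\tilde p(x)-\tilde p(y)|>4LL'\|x-y\|\ge L\|x-y\|$, a witness that $p$ is not $L$-Lipschitz; and a witness $(x,y)$ that $\tilde g$ is not $L'$-Lipschitz gives $\|g(x)-g(y)\|\ge\|\Pi_{[0,1]^2}(g(x))-\Pi_{[0,1]^2}(g(y))\|>L'\|x-y\|\ge L\|x-y\|$, a witness that $g$ is not $L$-Lipschitz. In all cases the solution-translation map is the identity, so the reduction is promise-preserving. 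Membership in \cls/ then follows by composing with the standard embedding $\textup{2D}\text{-}\clo/\to\textup{3D}\text{-}\clo/$ of \cref{lem:clo-dimension}: \gclo/ on $[0,1]^2$ reduces to \textup{3D}-\clo/, the defining problem of \cls/.

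I expect the only genuinely delicate point to be the codomain of $\tilde p$: the rescaling lands in $[0,1]$ only under the promise that $p$ is $L$-Lipschitz. This is not actually an obstacle, since the formal definition of \clo/ places no constraint on the range of the objective beyond the Lipschitz violations (which is what makes the problem total in the first place), and when $p$ fails $L$-Lipschitzness the argument above still delivers a valid \gclo/ solution for every solution of the \textup{2D}-\clo/ instance, in particular for every Lipschitz violation of $\tilde p$ or $\tilde g$. A minor bookkeeping check is that all constants ($c$, $1/(4L)$, $\varepsilon'$, $L'$) have polynomial size and that no true multiplication gate is introduced, so that well-behavedness -- and hence membership in \tfnp/ -- is preserved.
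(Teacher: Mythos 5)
Your reduction is the one the paper uses: project $g$ to $g'(x)=\Pi_{[0,1]^2}(g(x))$ via the coordinate-wise $\min/\max$ formula, rescale the objective into $[0,1]$, and compose with the embedding of \cref{lem:clo-dimension}. The paper leaves $p$ unchanged and cites \cref{lem:clo-codomain} for the codomain issue; you inline that rescaling (using reference point $(0,0)$ rather than the cube centre, which is equally fine). The one substantive discrepancy is that the proof of \cref{lem:clo-codomain} wraps the rescaled objective in an outer clamp $\min\{1,\max\{0,\cdot\}\}$ so that the $[0,1]$-codomain requirement in \cref{def:clo} holds syntactically on \emph{every} input, not only on promise-satisfying ones; you omit the clamp and argue it away by reading the typing $p\colon[0,1]^n\to[0,1]$ as purely informal. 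That reading is defensible, but it is cheaper and removes all ambiguity to just add the clamp: it uses only $\min$, $\max$, and constants, so well-behavedness is preserved, and -- as the lemma's proof notes -- whenever the clamp is actually active the pair $(x,(0,0))$ immediately witnesses a violation of $L$-Lipschitzness of $p$ (since $\|x-(0,0)\|\le\sqrt 2<2$ while $|p(x)-p(0,0)|>2L$), so all of your forward and backward solution maps go through unchanged.
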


\begin{proof}
	Given an instance $(\varepsilon, p, g, L)$ of \gclo/ with fixed domain $[0,1]^2$, we construct the instance $(\varepsilon, p, g', L)$ of 2D-\clo/, where $g'(x) = \Pi_D(g(x))$. Note that since $D=[0,1]^2$, the projection $\Pi_D$ can easily be computed as $[\Pi_D(x)]_i = \min \{1, \max\{0,x_i\}\}$ for all $x \in \mathbb{R}^2$ and $i \in [2]$. In particular, since we extend $g$ by using only the gates $-$, $\times \zeta$, $\min$, $\max$ and rational constants, the circuit for $g'$ is also well-behaved.
	
	Any non-violation solution of the constructed instance is also a solution of the original instance.
	Any violation of the constructed instance is immediately mapped back to a violation of the original instance. In particular, it holds that $\|g'(x)-g'(y)\| \leq \|g(x) - g(y)\|$ for all $x,y \in [0,1]^2$, since projecting two points cannot increase the distance between them. This implies that any violation of the $L$-Lipschitzness of $g'$ is also a violation of the $L$-Lipschitzness of $g$. Note that by \cref{lem:clo-codomain} we do not need to ensure that the codomain of $p$ is in $[0,1]$. Finally, it is easy to see that 2D-\clo/ lies in \cls/, since it immediately reduces to 3D-\clo/ (\cref{lem:clo-dimension}).
\end{proof}

\subsection{\linearcls/ and Gradient Descent with finite differences}\label{sec:linear-cls}

The class \cls/ was defined by \citet{DaskalakisP2011-CLS} using the \clo/ problem which uses arithmetic circuits with gates in $\{+,-,\min,\max,\times,<\}$ and rational constants. In this section we show that even if we restrict ourselves to linear arithmetic circuits (i.e., only the gates in $\{+,-,\min,\max, \times \zeta\}$ and rational constants are allowed), the $\clo/$ problem and \cls/ remain just as hard as the original versions. Note that even though these circuits are called \emph{linear}, they in fact correspond to \emph{piecewise linear} functions.

\begin{tcolorbox}[breakable,enhanced]
	\begin{definition}
		\linclo/:
		
		\noindent\textbf{Input}:
		\begin{itemize}
			\item precision/stopping parameter $\varepsilon > 0$,
			\item linear arithmetic circuits $p: [0,1]^n \to [0,1]$ and $g: [0,1]^n \to [0,1]^n$.
		\end{itemize}
		
		\noindent\textbf{Goal}: Compute an approximate local optimum of $p$ with respect to $g$. Formally, find $x \in [0,1]^n$ such that
		$$p(g(x)) \geq p(x) - \varepsilon.$$
	\end{definition}
\end{tcolorbox}

\noindent For $k \in \mathbb{N}$, we let $k$D-\linclo/ denote the problem \linclo/ where $n$ is fixed to be equal to $k$. Note that the definition of \linclo/ does not require violation solutions, since every linear arithmetic circuit is automatically Lipschitz-continuous with a Lipschitz-constant that can be represented with a polynomial number of bits (\cref{lem:linear-lipschitz}). In particular, \linclo/ reduces to \clo/ and thus to \gclo/.

We define the class 2D-\linearcls/ as the set of all TFNP problems that reduce to 2D-\linclo/. We show that:

\begin{theorem}\label{thm:linear-cls}
	$\textup{2D-}\linearcls/ = \ppadpls/$.
\end{theorem}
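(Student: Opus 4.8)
The strategy is to upgrade the hardness result already in hand: we know from \cref{thm:all-ppadpls-complete} (via \cref{thm:main-kkt-hard} and the reduction chain) that \gclo/ on domain $[0,1]^2$ is \ppadpls/-hard, but the hard instance uses a general arithmetic circuit (bicubic interpolation involves genuine multiplications $x^i y^j$). To get \ppadpls/-hardness for 2D-\linclo/, I would show that the specific function $\widehat f$ constructed in \cref{sec:kkt-ppadpls}, together with the maps $p,g$ produced by \cref{prop:gdls2gclo} and \cref{prop:2d-gclo2cls}, can be replaced by \emph{piecewise-linear approximations computed by linear arithmetic circuits}, without destroying the property that approximate local optima only occur at genuine solutions. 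The containment $\textup{2D-}\linearcls/ \subseteq \ppadpls/$ is immediate, since every linear arithmetic circuit is Lipschitz with a polynomially-bounded constant (\cref{lem:linear-lipschitz}), so 2D-\linclo/ reduces to \clo/ and hence to \gclo/, which lies in \ppadpls/ by \cref{prop:gclo2ppadpls}.

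For the hardness direction, the key tool is the approximation result advertised in the introduction: linear arithmetic circuits can approximate any Lipschitz-continuous function to arbitrary precision (the content of \cref{sec:approx-linear-circuit}). Concretely, I would proceed as follows. First, take the \gclo/ instance on $[0,1]^2$ obtained by composing the reductions of \cref{thm:all-ppadpls-complete}; it has functions $p,g$ that are $L$-Lipschitz with $L$ at most exponential in the input size, and its only approximate local optima lie in solution regions, with a \emph{slack}: by inspection of \cref{lem:kkt-correct-solutions} the gradient of $f$ has norm bounded below by some fixed constant (coming from the value-regime gaps and $\delta=1/2$) outside the solution regions, so in fact $p(g(x)) < p(x) - \varepsilon_0$ for a constant $\varepsilon_0$ strictly larger than the target $\varepsilon$ whenever $x$ is outside a solution region. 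Second, use the linear-circuit approximation to build linear arithmetic circuits $\tilde p, \tilde g$ with $\|\tilde p - p\|_\infty \le \gamma$ and $\|\tilde g - g\|_\infty \le \gamma$ for $\gamma$ chosen polynomially small (so that, using the Lipschitz bound on $p$, the quantity $\tilde p(\tilde g(x))$ differs from $p(g(x))$ by at most $(L+1)\gamma \le (\varepsilon_0-\varepsilon)/2$). Third, observe that any solution $x$ of the 2D-\linclo/ instance $(\varepsilon,\tilde p,\tilde g)$ then satisfies $p(g(x)) \le \tilde p(\tilde g(x)) + (L+1)\gamma \le p(x) - \varepsilon + (L+1)\gamma + (\text{another }(L+1)\gamma\text{ term}) < p(x) - \varepsilon_0 + (\varepsilon_0 - \varepsilon)$, which forces $x$ to lie in a solution region, and from there the decoding map of \cref{thm:all-ppadpls-complete} recovers a solution of the original \textsc{Either-Solution(\eol/,Iter)} instance. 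Note that 2D-\linclo/ has no violation solutions, so this is automatically a reduction of the promise flavour; one just needs $\gamma$ small enough that the approximation error is absorbed, and the bit-size of $\gamma^{-1}$ is polynomial since $\varepsilon_0$ is constant and $L$ is at most exponential.

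The main obstacle is making the linear-circuit approximation step fully rigorous while keeping the circuit size polynomial: a naive piecewise-linear approximation of an $L$-Lipschitz function on $[0,1]^2$ to additive error $\gamma$ needs roughly $(L/\gamma)^2$ pieces, which is exponential here. This is exactly why \cref{sec:approx-linear-circuit} is needed --- the point is that linear arithmetic circuits, via the $\min$/$\max$ gates, can encode exponentially many linear pieces with only polynomially many gates (using a binary-search / bit-decomposition structure over the exponentially fine grid, much as the Boolean circuit $B$ inside \cref{lem:kkt-function-properties} already does). So the real work is: (i) invoke (or reprove in the linear setting) that any Lipschitz function presented by a well-behaved arithmetic circuit admits an $\varepsilon'$-approximating \emph{linear} arithmetic circuit of polynomial size, and (ii) check the error budget carefully, tracking how errors in $\tilde g$ propagate through $p$ (hence the need for the Lipschitz constant of $p$, not just of $g$) and confirming that the constant lower bound $\varepsilon_0$ on $\|\nabla f\|$ outside solution regions --- which is implicit in the SMT-verified \cref{lem:kkt-correct-solutions} --- genuinely gives us room to absorb a polynomially-small perturbation. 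Once those two points are in place, the reduction, its polynomial-time computability, and its correctness (including promise-preservation) follow by the bookkeeping sketched above, and combined with the trivial containment we conclude $\textup{2D-}\linearcls/ = \ppadpls/$.
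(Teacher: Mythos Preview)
Your proposal is correct and follows what the paper itself calls the ``direct'' route: approximate both $p$ and $g$ from the hard 2D-\clo/ instance by linear arithmetic circuits using the approximation result of \cref{sec:approx-linear-circuit}, and check that the error is small enough to preserve solutions. The paper explicitly notes this route works (just before \cref{thm:gd-finite-diff}), but then deliberately chooses a different path: it introduces the problem \gdfd/ (Gradient Descent with finite differences), proves \emph{that} is \ppadpls/-hard on $[0,1]^2$ by approximating only $f$ with a linear circuit (not $\nabla f$, since the finite-difference operator reconstructs an approximate gradient from $f$ alone), and then observes that \gdfd/ reduces trivially to 2D-\linclo/.

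The trade-offs: your route is shorter and more transparent as a pure reduction, but the paper's route yields \gdfd/ as a \ppadpls/-complete problem with a \emph{single} arithmetic circuit in the input, which they flag as independently interesting. On the technical side, the paper's error analysis (in the proof of \cref{thm:gd-finite-diff}) does not need your ``slack $\varepsilon_0$'' argument from \cref{lem:kkt-correct-solutions}; it simply sets the target precision of the new instance small enough and uses Taylor's theorem to bound the finite-difference error. Your approach could be simplified in the same way: rather than appealing to the constant gradient lower bound outside solution regions, just reduce from \gclo/ with precision $\varepsilon$, build the \linclo/ instance with precision $\varepsilon'=\varepsilon/2$ and approximation error $\gamma \le \varepsilon/(2(L+2))$, and the inequality $p(g(x)) \ge p(x) - \varepsilon' - (L+2)\gamma \ge p(x)-\varepsilon$ goes through without reference to the internals of the hard instance.
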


\noindent Note that, just as for \cls/, the one-dimensional version can be solved in polynomial time, i.e., 1D-\linearcls/ = FP.\footnote{With the slight abuse of notation that FP $\subseteq$ TFNP, as explained in \cref{sec:def-tfnp}.} The containment 2D-\linearcls/ $\subseteq$ \ppadpls/ immediately follows from the fact that 2D-\linearcls/ $\subseteq$ \cls/ $\subseteq$ \ppadpls/. The other, more interesting, containment in \cref{thm:linear-cls} can be proved by directly reducing 2D-\clo/ to 2D-\linclo/. This reduction mainly relies on a more general result which says that any arithmetic circuit can be arbitrarily well approximated by a linear arithmetic circuit on a bounded domain. This approximation theorem (\cref{thm:lin-circuit-simple}) is stated and proved in \cref{sec:approx-linear-circuit}. The proof uses known techniques developed in the study of the complexity of Nash equilibria \citep{DGP09,CDT09}, but replaces the usual averaging step by a median step, which ensures that we obtain the desired accuracy of approximation.

Instead of reducing 2D-\clo/ to 2D-\linclo/, we prove \cref{thm:linear-cls} by a different route that also allows us to introduce a problem which might be of independent interest. To capture the cases where the gradient is not available or perhaps too expensive to compute, we consider a version of Gradient Descent where the \emph{finite differences} approach is used to compute an approximate gradient, which is then used as usual to obtain the next iterate. Formally, given a finite difference spacing parameter $h > 0$, the approximate gradient $\widetilde{\nabla}_h f(x)$ at some point $x \in [0,1]^n$ is computed as
$$\left[\widetilde{\nabla}_h f(x)\right]_i = \frac{f(x + h \cdot e_i) - f(x - h \cdot e_i)}{2h}$$
for all $i \in [n]$. The computational problem is defined as follows. Note that even though we define the problem on the domain $[0,1]^n$, it can be defined on more general domains as in our other problems.

\begin{tcolorbox}[breakable,enhanced]
	\begin{definition}
		\gdfd/:
		
		\noindent\textbf{Input}:
		\begin{itemize}
			\item precision/stopping parameter $\varepsilon > 0$,
			\item step size $\eta > 0$,
			\item finite difference spacing parameter $h > 0$,
			\item linear arithmetic circuit $f: \mathbb{R}^n \to \mathbb{R}$.
		\end{itemize}
		
		\noindent\textbf{Goal}: Compute any point where (projected) gradient descent for $f$ on domain $D=[0,1]^n$ using finite differences to approximate the gradient and fixed step size $\eta$ terminates. Formally, find $x \in [0,1]^n$ such that
		$$f(\Pi_D(x - \eta \widetilde{\nabla}_h f(x))) \geq f(x) - \varepsilon$$
		where for all $i \in [n]$
		$$\left[\widetilde{\nabla}_h f(x)\right]_i = \frac{f(x + h \cdot e_i) - f(x - h \cdot e_i)}{2h}.$$
	\end{definition}
\end{tcolorbox}

\noindent \gdfd/ immediately reduces to \linclo/ by setting $p := f$ and $g := \Pi_D(x - \eta \widetilde{\nabla}_h f(x))$. It is easy to construct a linear arithmetic circuit computing $g$, given a linear arithmetic circuit computing $f$. Note, in particular, that the projection $\Pi_D$ can be computed by a linear circuit since $D=[0,1]^n$. Indeed, $[\Pi_D(x)]_i = \min \{1, \max\{0,x_i\}\}$ for all $i \in [n]$ and $x \in \mathbb{R}^n$. Finally, the restriction of the codomain of $p$ to $[0,1]$ can be handled exactly as in the proof of \cref{lem:clo-codomain}.

In particular, the reduction from \gdfd/ to \linclo/ is domain-preserving and thus \cref{thm:linear-cls} immediately follows from the following theorem.

\begin{theorem}\label{thm:gd-finite-diff}
	\gdfd/ is \ppadpls/-complete, even with fixed domain $[0,1]^2$.
\end{theorem}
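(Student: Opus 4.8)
The plan is to treat membership and hardness separately. Membership in \ppadpls/ is essentially already recorded above: \gdfd/ reduces to \linclo/ by setting $p:=f$ and $g(x):=\Pi_D(x-\eta\widetilde{\nabla}_h f(x))$ --- the finite-difference gradient of a linear arithmetic circuit is itself computed by a linear arithmetic circuit, since it only involves evaluating $f$ at a few shifted points followed by subtractions and multiplications by constants --- and \linclo/ reduces to \clo/ and hence to \gclo/, which lies in \ppadpls/ by \cref{prop:gclo2ppadpls}. So I concentrate on \ppadpls/-hardness.

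For hardness I would reduce from \gdls/ on the unit square, which is \ppadpls/-hard by \cref{thm:all-ppadpls-complete} and in fact remains hard for its promise version; I may therefore assume the input $f$ is genuinely $C^1$ with $L$-Lipschitz gradient. Taking the explicit hard instances from \cref{thm:main-kkt-hard}, whose polynomial pieces have coefficients bounded as in the proof of \cref{lem:kkt-function-properties}, I may moreover assume these properties hold on a slightly larger box, say $[-1,2]^2\supseteq[0,1]^2$. Given such an instance $(\varepsilon_0,\eta,A,b,f,\nabla f,L)$, I would build the \gdfd/ instance $(\varepsilon_1,\eta,h,\widetilde{f})$ with the same step size, where $\widetilde f$ is a linear arithmetic circuit approximating $f$ to within $\gamma$ in sup-norm on $[-1,2]^2$, obtained from the approximation theorem (\cref{thm:lin-circuit-simple}), and where $h,\gamma,\varepsilon_1$ are chosen as below. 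The reduction keeps the domain $[0,1]^2$ and maps a \gdfd/ solution $x$ to $x$ itself.

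The correctness argument controls two sources of error. First, applying Taylor's theorem (\cref{lem:taylor}) on the two segments $x\to x\pm h e_i$ (which lie in $[-1,2]^2$ as $h\le 1$) gives $\left\|\widetilde{\nabla}_h f(x)-\nabla f(x)\right\|_\infty\le Lh/2$, and absorbing the approximation error, $|\widetilde f-f|\le\gamma$, yields $\left\|\widetilde{\nabla}_h\widetilde f(x)-\nabla f(x)\right\|_\infty\le Lh/2+\gamma/h$. Second, writing $x':=\Pi_D(x-\eta\widetilde{\nabla}_h\widetilde f(x))$ and $x'':=\Pi_D(x-\eta\nabla f(x))$, non-expansiveness of $\Pi_D$ gives $\|x'-x''\|\le\sqrt2\,\eta(Lh/2+\gamma/h)$; then $L$-Lipschitzness of $f$ and the sup-norm bound on $\widetilde f-f$, together with the defining inequality $\widetilde f(x')\ge\widetilde f(x)-\varepsilon_1$ of a \gdfd/ solution, give
$$f(x'')\;\ge\;f(x)-\varepsilon_1-2\gamma-\sqrt2\,L\eta\left(\tfrac{Lh}{2}+\tfrac{\gamma}{h}\right).$$
Setting $\varepsilon_1:=\varepsilon_0/2$, then $h:=\min\{1,\varepsilon_0/(10L^2\eta)\}$ to kill the $Lh/2$ term, and finally $\gamma$ small enough (a bound of the form $\gamma\le c\,h\varepsilon_0/(L\eta)$, which is only inverse-exponential in the input length and hence achievable by a polynomial-size linear circuit via \cref{thm:lin-circuit-simple}) makes the right-hand side strictly larger than $f(x)-\varepsilon_0$. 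Thus $x$ is a genuine solution of the \gdls/ instance; since \gdfd/ has no violation solutions and we reduced from a promise instance, the reduction is correct, and it is plainly polynomial time.

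The main obstacle is the tension in the choice of $h$: the central-difference estimate against a merely $C^{1,1}$ function incurs an $O(Lh)$ error that demands $h$ small, whereas finite differences amplify the circuit-approximation error by a factor $1/(2h)$, which demands $h$ not too small relative to $\gamma$. This is resolved precisely because \cref{thm:lin-circuit-simple} lets $\gamma$ be inverse-exponentially small at polynomial cost, so $h$ may be taken polynomially small (up to the exponentially large/small parameters already allowed in the input) and $\gamma$ then squeezed well below $h\varepsilon_0$. A secondary nuisance is that near $\partial[0,1]^2$ the stencil $x\pm h e_i$ leaves the unit square, so one must ensure that $f$, and the approximation guarantee for $\widetilde f$, extend with the same Lipschitz constants to a neighbourhood --- which is why I would draw on the explicit bicubic instances of \cref{thm:main-kkt-hard} rather than a black-box \gdls/ instance.
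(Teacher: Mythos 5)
Your proposal is correct and follows essentially the same route as the paper: membership via the immediate reduction \gdfd/ $\to$ \linclo/ $\to$ \gclo/, and hardness by reducing from promise-\gdls/ on $[0,1]^2$ (using the hard instances of \cref{thm:main-kkt-hard}, Lipschitz on $[-1,2]^2$), approximating $f$ by a linear circuit via \cref{thm:lin-circuit-simple}, and controlling the two error sources (Taylor's theorem for the central difference, $\gamma/h$ for the circuit approximation) together with non-expansiveness of $\Pi_D$ and Lipschitzness of $f$. The paper's parameter choices ($\varepsilon'=\varepsilon/4$, $h=\min\{1,\varepsilon/(8\eta L^2)\}$, $\delta=\min\{\varepsilon/4,Lh^2/2\}$) differ from yours only in inessential constants.
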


\noindent This result is interesting by itself, because the problem \gdfd/ is arguably quite natural, but also because it is the first problem that is complete for \ppadpls/ (and \cls/) that has a \emph{single} arithmetic circuit in the input. Note that our other problems which we prove to be \ppadpls/-complete, as well as the previously known \cls/-complete problems, all have two arithmetic circuits in the input.

\begin{proof}
	As explained above, \gdfd/ immediately reduces to \linclo/ and thus to \gclo/, which lies in \ppadpls/ by \cref{prop:gclo2ppadpls}. Thus, it remains to show that \gdfd/ is \ppadpls/-hard when we fix $n=2$. This is achieved by reducing from \gdls/ on domain $[0,1]^2$, which is \ppadpls/-hard by \cref{thm:all-ppadpls-complete}. In fact, we can even simplify the reduction by only considering \gdls/ instances that have some additional structure, but remain \ppadpls/-hard. Namely, consider an instance $(\varepsilon, \eta, f, \nabla f, L)$ of \gdls/ on domain $D=[0,1]^2$ such that:
	\begin{itemize}
		\item $\nabla f$ is the gradient of $f$,
		\item $f$ and $\nabla f$ are $L$-Lipschitz-continuous on $[-1,2]^2$.
	\end{itemize}
	To see that the problem remains \ppadpls/-hard even with these restrictions, note that the restrictions are satisfied by the hard instances constructed for the \kkt/ problem in the proof of \cref{thm:main-kkt-hard}, and that the reduction from \kkt/ to \gdls/ in \cref{prop:kkt2gdls} also trivially preserves them. In particular, even though the proof of \cref{thm:main-kkt-hard} only mentions that $f$ and $\nabla f$ are $L$-Lipschitz-continuous on $[0,1]^2$, the same arguments also show that they are $L$-Lipschitz-continuous on $[-1,2]^2$ (where $L$ has been scaled by some fixed constant).
	
	Let us now reduce from the instance $(\varepsilon, \eta, f, \nabla f, L)$ of \gdls/ to \gdfd/. We construct the instance $(\varepsilon', \eta, h, F)$ of \gdfd/ where $\varepsilon' = \varepsilon/4$, $h = \min\{1, \frac{\varepsilon}{8 \eta L^2}\}$ and $F$ is a linear arithmetic circuit that is obtained as follows. Let $\delta = \min\{\varepsilon/4, Lh^2/2\}$. By \cref{thm:lin-circuit-simple} and \cref{rem:approx-linear-domain}, we can construct a linear arithmetic circuit $F: \mathbb{R}^2 \to \mathbb{R}$ in polynomial time in $\sz(f)$, $\log L$ and $\log (1/\delta)$ such that $|f(x)-F(x)| \leq \delta$ for all $x \in [-1,2]^2$. Note that the second possibility in \cref{thm:lin-circuit-simple} cannot occur, since $f$ is guaranteed to be $L$-Lipschitz-continuous on $[-1,2]^2$.
	
	Consider any solution of that instance of \gdfd/, i.e., a point $x \in [0,1]^2$ such that $F(\Pi_D(x - \eta \widetilde{\nabla}_h F(x))) \geq F(x) - \varepsilon/4$. Let us show that $x$ is a solution to the original \gdls/ instance, i.e., that $f(\Pi_D(x - \eta \nabla f(x))) \geq f(x) - \varepsilon$.
	
	We have that for $i \in \{1,2\}$
	\begin{align*}
	&\quad \biggl| \bigl[\widetilde{\nabla}_h f(x)\bigr]_i - \bigl[\nabla f(x)\bigr]_i \biggr|\\
	&= \left|\frac{f(x + h \cdot e_i) - f(x - h \cdot e_i)}{2h} - \bigl[\nabla f(x)\bigr]_i\right|\\
	&\leq \frac{1}{2h} \biggl( \Bigl|f(x + h \cdot e_i) - f(x) - h\bigl[\nabla f(x)\bigr]_i\Bigr| + \Bigl|- f(x - h \cdot e_i) + f(x) - h\bigl[\nabla f(x)\bigr]_i\Bigr| \biggr)\\
	&= \frac{1}{2h} \biggl( \Bigl|f(x + h \cdot e_i) - f(x) - \bigl\langle \nabla f(x), (x + h \cdot e_i) - x \bigr\rangle\Bigr|\\
	&\quad + \Bigl|- f(x - h \cdot e_i) + f(x) + \bigl\langle \nabla f(x), (x - h \cdot e_i) - x \bigr\rangle\Bigr| \biggr)\\
	&\leq \frac{1}{2h} \left(\frac{L}{2} \bigl\|h \cdot e_i\bigr\|^2 + \frac{L}{2} \bigl\|-h \cdot e_i\bigr\|^2\right) = \frac{Lh}{2}
	\end{align*}
	where we used Taylor's theorem (\cref{lem:taylor}). Note that $x \pm h \cdot e_i \in [-1,2]^2$, since $h \leq 1$. Furthermore, it is easy to see that $\bigl|[\widetilde{\nabla}_h F(x)]_i - [\widetilde{\nabla}_h f(x)]_i\bigr| \leq \delta/h$, since $F$ approximates $f$ up to error $\delta$ on all of $[-1,2]^2$. It follows that $\bigl\|\widetilde{\nabla}_h F(x) - \nabla f(x)\bigr\| \leq \sqrt{2} (\delta/h + Lh/2) \leq 2Lh$. From this it follows that
	\begin{align*}
	&\quad \biggl|f\Bigl(\Pi_D\bigl(x - \eta \nabla f(x)\bigr)\Bigr) - f\Bigl(\Pi_D\bigl(x - \eta \widetilde{\nabla}_h F(x)\bigr)\Bigr)\biggr|\\
	&\leq L \cdot \Bigl\|\Pi_D\bigl(x - \eta \nabla f(x)\bigr) - \Pi_D\bigl(x - \eta \widetilde{\nabla}_h F(x)\bigr)\Bigr\|\\
	&\leq L \cdot \Bigl\|\bigl(x - \eta \nabla f(x)\bigr) - \bigl(x - \eta \widetilde{\nabla}_h F(x)\bigr)\Bigr\|\\
	&\leq \eta L \cdot \bigl\|\widetilde{\nabla}_h F(x) - \nabla f(x)\bigr\|\\
	&\leq 2 \eta L^2 h \leq \varepsilon/4.
	\end{align*}
	Finally, note that $|f(x) - F(x)| \leq \delta \leq \varepsilon/4$ and
	$$\biggl|f\Bigl(\Pi_D\bigl(x - \eta \widetilde{\nabla}_h F(x)\bigr)\Bigr) - F\Bigl(\Pi_D\bigl(x - \eta \widetilde{\nabla}_h F(x)\bigr)\Bigr)\biggr| \leq \delta \leq \varepsilon/4.$$
	Thus, since $F\Bigl(\Pi_D\bigl(x - \eta \widetilde{\nabla}_h F(x)\bigr)\Bigr) \geq F(x) - \varepsilon/4$, it follows that
	$$f\Bigl(\Pi_D\bigl(x - \eta \nabla f(x)\bigr)\Bigr) \geq f(x) - 4\varepsilon/4$$
	i.e., $x$ is a solution to the original \gdls/ instance.
\end{proof}

\section{Future directions}\label{sec:future-directions}

Our results may help to identify the complexity of the following problems that are known to lie in \ppadpls/:
\begin{itemize}
	\item \textsc{Mixed-Congestion}: The problem of finding a \emph{mixed} Nash
	equilibrium of a congestion game. It is known that finding a \emph{pure} Nash
	equilibrium is \pls/-complete \citep{FPT04}. As mentioned in \cref{sec:intro-results}, \citet{BabR21} have recently applied our main result to obtain \ppadpls/-completeness for \textsc{Mixed-Congestion}. It would be interesting to extend this to \emph{network} congestion games, where the strategies are represented implicitly.
	\item \textsc{polynomial-KKT}: The special case of the \kkt/ problem where
	the function is a polynomial, provided explicitly in the input (exponents in
	unary). In particular, note that this version of the problem does not require the introduction of violation-solutions for Lipschitzness and smoothness. A consequence of the above-mentioned reduction by
	\citet{BabR21} is that the problem is \ppadpls/-complete for polynomials of
	degree 5. It is an interesting open problem to extend this hardness result to lower degree polynomials.
	\item \textsc{Contraction}: Find a fixed point of a function that is contracting with respect to some $\ell_p$-norm.
	\item \textsc{Tarski}: Find a fixed point of an order-preserving function, as guaranteed by Tarski's theorem \citep{EtessamiPRY20,FearnleyPS22-Tarski-algo,DangQY20}.
	\item \textsc{ColorfulCarath{\'e}odory}: A problem based on a theorem in convex geometry \citep{MMSS2017caratheodory}.
\end{itemize}
The first three problems on this list were known to lie in \cls/ \citep{DaskalakisP2011-CLS}, while the other two were only known to lie in \ppadpls/.

The collapse between \cls/ and \ppadpls/ raises the question of whether the class EOPL (for End of Potential Line), a subclass of \cls/, is also equal to \ppadpls/. The class EOPL, or more precisely its subclass UEOPL (with U for unique), is known to contain various problems of interest that have unique solutions such as Unique Sink Orientation (USO), the P-matrix Linear Complementarity Problem (P-LCP), Simple Stochastic Games (SSG) and Parity Games \citep{FGMS2020-UEOPL}.

In an earlier version of this paper, we conjectured that EOPL\,$\neq$\,\ppadpls/, but, in another surprising turn of events, subsequent work by \citet{GoosHJMPRT22-collapses} has shown that in fact EOPL\,$=$\,\ppadpls/. This collapse provides further evidence that \ppadpls/ is a natural and robust class. Furthermore, together with the work of \citet{HubacekY2017-CLS}, who prove that EOPL\,$\subseteq$\,\cls/, it also yields an alternative proof of $\cls/ = \ppadpls/$. Finally, regarding our results about the gradient descent problem, the proof of \ppadpls/-hardness can now be significantly simplified by reducing from \textsc{End-of-Potential-Line}, the standard EOPL-complete problem, instead of \textsc{Either-Solution(\eol/,\iter/)}. In particular, ``green paths'' are enough to embed an \textsc{End-of-Potential-Line} instance, and the ``orange paths'' and the \pls/-Labyrinth are no longer needed.

The collapse EOPL\,$=$\,\ppadpls/ leaves open the question of whether UEOPL is also equal to \ppadpls/. We conjecture that UEOPL\,$\neq$\,\ppadpls/. The canonical complete problem for UEOPL, \textsc{Unique-End-of-Potential-Line}, is the same as \textsc{End-of-Potential-Line}, except that it also allows an additional kind of violation solution. We currently see no way of reducing \textsc{End-of-Potential-Line} to \textsc{Unique-End-of-Potential-Line}, because this would require a way to extract a standard solution from one of those new violation solutions. In order to provide evidence that the classes are not equal, one could try to obtain an oracle separation between UEOPL and \ppadpls/, in the sense of \citet{BCEIP98}.

\citet{Ishizuka21} has recently shown that the definition of EOPL is robust with respect to some modifications (similarly to \ppad/ \citep{GH19}), and has provided a somewhat artificial problem that is complete for $\textup{PPA} \cap \textup{PLS}$. This raises the interesting question of whether $\textup{PPA} \cap \textup{PLS}$, and other intersections of well-studied classes, also admit natural complete problems, or if \ppadpls/ is in fact an isolated case.

\appendix

\section{More on arithmetic circuits}\label{app:more-circuits}

\subsection{Evaluation of well-behaved arithmetic circuits (Proof of \texorpdfstring{\cref{lem:well-behaved-efficient}}{Lemma~\ref*{lem:well-behaved-efficient}})}

We restate the Lemma here for convenience.

\theoremstyle{plain}
\newtheorem*{lem:appendix:well-behaved-efficient}{Lemma \ref*{lem:well-behaved-efficient}}
\begin{lem:appendix:well-behaved-efficient}
	Let $f$ be a well-behaved arithmetic circuit with $n$ inputs. Then, for any rational $x \in \mathbb{R}^n$, $f(x)$ can be computed in time $\textup{poly}(\sz(f),\sz(x))$.
\end{lem:appendix:well-behaved-efficient}

\begin{proof}
	Recall that an arithmetic circuit $f$ is well-behaved if, on any directed path that leads to an output, there are at most $\log (\sz(f))$ true multiplication gates. Without loss of generality, we can assume that the circuit $f$ only contains gates that are used to compute at least one of the outputs.
	
	Let $x$ denote the input to circuit $f$ and for any gate $g$ of $f$ let $v(g)$ denote the value computed by gate $g$ when $x$ is provided as input to the circuit. For any gate $g$ that is not an input gate or a constant gate, let $g_1$ and $g_2$ denote the two gates it uses as inputs. Clearly, if $g$ is one of $\{+,-,\times,\max,\min,>\}$, $v(g)$ can be computed in polynomial time in $\sz(v(g_1)) + \sz(v(g_2))$, including transforming it into an irreducible fraction. Thus, in order to show that the circuit can be evaluated in polynomial time, it suffices to show that for all gates $g$ of $f$, it holds that $\sz(v(g)) \leq p(\sz(f) + \sz(x))$, where $p$ is some fixed polynomial (independent of $f$ and $x$). In the rest of this proof, we show that
	$$\sz(v(g)) \leq 6 \cdot \sz(f)^3 \cdot \sz(x).$$
	
	It is convenient to partition the gates of the circuit depending on their depth. For any gate $g$ in $f$, we let $d(g)$ denote the depth of the gate in $f$. The input gates and the constant gates are at depth $1$. For any other gate $g$, we define its depth inductively as $d(g) = 1 + \max\{d(g_1),d(g_2)\}$, where $g_1$ and $g_2$ are the two input gates of $g$. Note that $d(g) \leq \sz(f)$ for all gates $g$ in the circuit.
	
	We also define a notion of ``multiplication-depth'' $md(g)$. The gates $g$ at depth $1$ all have $md(g)=0$. For the rest of the gates, the multiplication-depth is defined inductively. For a gate $g$ whose inputs are $g_1$ and $g_2$, we let $md(g) = 1 + \max \{md(g_1), md(g_2)\}$ if $g$ is a true multiplication gate, and $md(g) = \max \{md(g_1), md(g_2)\}$ otherwise. Since $f$ is well-behaved, it immediately follows that $md(g) \leq \log (\sz(f))$ for all gates $g$ of the circuit.
	
	We begin by showing that for any gate $g$ of $f$, it holds that $|v(g)| \leq 2^{\sz(f)^2(\sz(x)+\sz(f))}$.
	This follows from the stronger statement that
	$$|v(g)| \leq 2^{d(g) \cdot 2^{md(g)} \cdot (\sz(x)+\sz(f))},$$
	which we prove by induction as follows. First of all, note that any gate at depth $1$ satisfies the statement, since any input or constant of the circuit is bounded by $2^{\sz(x)}$ or $2^{\sz(f)}$ respectively. Next, assume that the statement holds for all gates with depth $\leq k-1$ and consider some gate $g$ at depth $k$. Let $g_1$ and $g_2$ denote its two inputs, which must satisfy that $d(g_1) \leq k-1$ and $d(g_2) \leq k-1$. If $g$ is one of $\{\min, \max, <\}$, then the statement immediately also holds for $g$. If $g$ is an addition or subtraction gate, then $|v(g)| \leq |v(g_1)| + |v(g_2)| \leq 2 \max \{|v(g_1)|,|v(g_2)|\}$, which implies that the statement also hold for $g$, since $d(g_1), d(g_2) \leq k-1$ and $d(g) = k$. If $g$ is a multiplication by a constant, then $|v(g)| \leq 2^{\sz(f)} |v(g_1)|$ (wlog $g_2$ is the constant), and the statement holds for $g$ too. Finally, if $g$ is a true multiplication gate, then $|v(g)| = |v(g_1)| |v(g_2)| \leq (\max \{|v(g_1)|,|v(g_2)|\})^2$. Since $md(g) = 1 + \max\{md(g_1), md(g_2)\}$, it follows that the statement also holds for $g$.
	
	Let $den(g)$ denote the absolute value of the denominator of $v(g)$ (written as an irreducible fraction). We show that for all gates $g$, it holds that $den(g) \leq 2^{\sz(f)^2(\sz(x)+\sz(f))}$. This is enough to conclude our proof. Indeed, since we also have that $|v(g)| \leq 2^{\sz(f)^2(\sz(x)+\sz(f))}$, it follows that the absolute value of the numerator of $v(g)$ is
	$$|v(g)| \cdot den(g) \leq 2^{2 \cdot \sz(f)^2(\sz(x)+\sz(f))}.$$
	As a result, it follows that
	$$\sz(v(g)) \leq 2 \cdot \sz(f)^2(\sz(x)+\sz(f)) + \sz(f)^2(\sz(x)+\sz(f)) \leq 6 \cdot \sz(f)^3 \cdot \sz(x).$$
	It remains to show that $den(g) \leq 2^{\sz(f)^2(\sz(x)+\sz(f))}$, which we prove by showing that
	\[den(g) \leq 2^{d(g) \cdot 2^{md(g)} \cdot (\sz(x)+\sz(f))}.\]
	Let $M$ denote (the absolute value of) the product of all denominators appearing in the input $x$ and the description of $f$, i.e., the denominators of the coordinates of the input $x$, and the denominators of the constants used by $f$. Note that $M \leq 2^{\sz(x)+\sz(f)}$. We prove by induction that for all gates $g$,
	$$\text{$den(g)$ is a factor of $M^{d(g) \cdot 2^{md(g)}}$}$$
	which in particular implies the bound on $den(g)$ above. First of all, note that any gate at depth $1$ is an input or a constant, and thus satisfies the statement. Next, assume that the statement holds for all gates with depth $\leq k-1$ and consider some gate $g$ at depth $k$. Let $g_1$ and $g_2$ denote its two inputs, which must satisfy that $d(g_1) \leq k-1$ and $d(g_2) \leq k-1$. If $g$ is one of $\{\min, \max, <\}$, then it is easy to see that the statement immediately also holds for $g$. If $g$ is an addition or subtraction gate, then, since $v(g_1)$ and $v(g_2)$ can both be expressed as fractions with denominator $M^{d(g) \cdot 2^{md(g)}}$, so can $v(g)$, and the statement also holds for $g$. If $g$ is a multiplication by a constant, then $den(g)$ is a factor of $M^{d(g) \cdot 2^{md(g)}}$, since $den(g_1)$ is a factor of $M^{(d(g)-1) \cdot 2^{md(g)}}$ and the denominator of the constant is a factor of $M$ (wlog assume that $g_2$ is the constant). Finally, if $g$ is a true multiplication gate, then $den(g_1)$ and $den(g_2)$ are factors of $M^{d(g) \cdot 2^{md(g)-1}}$, and thus $den(g)$ is a factor of $(M^{d(g) \cdot 2^{md(g)-1}})^2 = M^{d(g) \cdot 2^{md(g)}}$ as desired.
\end{proof}

\subsection{Linear arithmetic circuits are Lipschitz-continuous}

Linear arithmetic circuits are only allowed to use the operations $\{+,-,\max,\min, \times \zeta\}$ and rational constants. The operation $\times \zeta$ denotes multiplication by a constant (which is part of the description of the circuit). Every linear arithmetic circuit is in particular a well-behaved arithmetic circuit, and so, by \cref{lem:well-behaved-efficient}, can be evaluated in polynomial time. Furthermore, every linear arithmetic circuit represents a Lipschitz-continuous function such that the Lipschitz constant has polynomial bit-size with respect to the size of the circuit.

\begin{lemma}\label{lem:linear-lipschitz}
	Any linear arithmetic circuit $f: \mathbb{R}^n \to \mathbb{R}^m$ is $2^{\sz(f)^2}$-Lipschitz-continuous (w.r.t. the $\ell_\infty$-norm) over $\mathbb{R}^n$.
\end{lemma}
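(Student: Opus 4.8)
The plan is to induct on the structure of the circuit, showing that every gate computes a function that is Lipschitz-continuous with respect to the $\ell_\infty$-norm, and tracking the growth of the Lipschitz constant gate by gate. For a gate $g$ with value $v(g)$ as a function of the input $x \in \mathbb{R}^n$, let $\mathrm{Lip}(g)$ denote a Lipschitz constant of the function $x \mapsto v(g)$ w.r.t.\ $\|\cdot\|_\infty$ on the domain and $\|\cdot\|$ (the absolute value, since each gate is scalar-valued) on the codomain. Input gates (projections $x \mapsto x_i$) are $1$-Lipschitz, and constant gates are $0$-Lipschitz. For the inductive step, consider a gate $g$ with input gates $g_1, g_2$: if $g$ is $+$ or $-$, then $\mathrm{Lip}(g) \leq \mathrm{Lip}(g_1) + \mathrm{Lip}(g_2)$; if $g$ is $\max$ or $\min$, then $\mathrm{Lip}(g) \leq \max\{\mathrm{Lip}(g_1), \mathrm{Lip}(g_2)\}$ (using the elementary fact that $|\max\{a,b\} - \max\{a',b'\}| \leq \max\{|a-a'|, |b-b'|\}$, and similarly for $\min$); and if $g$ is multiplication by a constant $\zeta$, then $\mathrm{Lip}(g) \leq |\zeta| \cdot \mathrm{Lip}(g_1)$.

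From these recurrences one gets a crude bound. Each gate multiplies the Lipschitz constant by at most a factor that is the sum of the relevant multiplicative constants; more simply, traversing one level can at most double (from $+/-$) the running Lipschitz constant or multiply it by $|\zeta| \leq 2^{\sz(f)}$ (from $\times\zeta$). Since a directed path to an output has length at most $\sz(f)$ and there are at most $\sz(f)$ gates feeding into any $+$-gate along the way, one obtains $\mathrm{Lip}(g) \leq (2^{\sz(f)})^{\sz(f)} = 2^{\sz(f)^2}$ for each output gate (the crude counting easily absorbs the additive contributions). Finally, for the full function $f : \mathbb{R}^n \to \mathbb{R}^m$, Lipschitzness w.r.t.\ $\|\cdot\|_\infty$ on both sides follows because $\|f(x) - f(y)\|_\infty = \max_{j \in [m]} |f_j(x) - f_j(y)| \leq \max_j \mathrm{Lip}(f_j) \cdot \|x-y\|_\infty \leq 2^{\sz(f)^2} \|x-y\|_\infty$.

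The only mildly delicate point is getting the bookkeeping in the exponent to come out to exactly $\sz(f)^2$ rather than something larger; this is where one should be a little careful but not overly precise, since the bound is deliberately generous. One clean way is to prove by induction on depth $d(g)$ that $\mathrm{Lip}(g) \leq 2^{d(g) \cdot \sz(f)}$: at depth $1$ the constant is $\leq 1 = 2^0$; an addition/subtraction gate at depth $d$ gives $\mathrm{Lip}(g) \leq 2 \cdot 2^{(d-1)\sz(f)} \leq 2^{d \cdot \sz(f)}$ since $\sz(f) \geq 1$; a $\max$/$\min$ gate only keeps the max of the two, so the bound is immediate; and a $\times\zeta$ gate gives $\mathrm{Lip}(g) \leq 2^{\sz(f)} \cdot 2^{(d-1)\sz(f)} = 2^{d\sz(f)}$. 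Since $d(g) \leq \sz(f)$ for every gate, we conclude $\mathrm{Lip}(g) \leq 2^{\sz(f)^2}$, which finishes the proof. I expect this counting step to be the main (minor) obstacle — everything else is routine.
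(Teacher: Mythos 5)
Your proof is correct and follows essentially the same route as the paper: the same gate-by-gate recurrences for $+$, $-$, $\max$, $\min$, and $\times\zeta$, and the same induction on depth showing $\mathrm{Lip}(g) \leq 2^{d(g)\cdot\sz(f)}$ with $d(g) \leq \sz(f)$ and $|\zeta| \leq 2^{\sz(f)}$. The final "clean" version of your counting argument is exactly the paper's proof, so no changes are needed.
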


\begin{proof}
	For any gate $g$ of the circuit $f$, let $L(g)$ denote the Lipschitz-constant of the function which outputs the value of $g$, given the input $x$ to the circuit. As in the proof of \cref{lem:well-behaved-efficient}, it is convenient to partition the gates of $f$ according to their depth. Note that for all the gates $g$ at depth $1$, i.e., the input gates and the constant gates, it holds that $L(g) \leq 1$. We show that any gate $g$ at depth $k$ satisfies $L(g) \leq 2^{k \cdot \sz(f)}$. It immediately follows from this that $f$ is $2^{\sz(f)^2}$-Lipschitz-continuous (w.r.t. the $\ell_\infty$-norm) over $\mathbb{R}^n$.
	
	Consider a gate $g$ at depth $k$ with inputs $g_1$ and $g_2$ (which lie at a lower depth). If $g$ is $+$ or $-$, then $L(g) \leq L(g_1) + L(g_2) \leq 2 \max\{L(g_1),L(g_2)\} \leq 2 \cdot 2^{(k-1) \cdot \sz(f)} \leq 2^{k \cdot \sz(f)}$. If $g$ is $\max$ or $\min$, then it is easy to see that $L(g) \leq \max\{L(g_1),L(g_2)\} \leq 2^{k \cdot \sz(f)}$. Finally, if $g$ is $\times \zeta$, then $L(g) \leq |\zeta| \cdot L(g_1) \leq 2^{\sz(f)} 2^{(k-1) \cdot \sz(f)} = 2^{k \cdot \sz(f)}$, where we used the fact that $|\zeta| \leq 2^{\sz(f)}$.
\end{proof}

\section{Mathematical tools}

\subsection{Tools from convex analysis and a generalization of Farkas' Lemma}\label{app:convex}

Let $D \subseteq \mathbb{R}^n$ be a non-empty closed convex set. Recall that the projection $\Pi_D: \mathbb{R}^n \to D$ is defined by $\Pi_D(x) = \argmin_{y \in D} \|x-y\|$, where $\|\cdot\|$ denotes the Euclidean norm. It is known that $\Pi_D(x)$ always exists and is unique. The following two results are standard tools in convex analysis, see, e.g., \citep{bertsekas1998nonlinear}.

\begin{lemma}\label{lem:projection-lemma}
	Let $D$ be a non-empty closed convex set in $\mathbb{R}^n$ and let $y \in \mathbb{R}^n$. Then for all $x \in D$ it holds that
	$$\langle y - \Pi_D(y), x - \Pi_D(y) \rangle \leq 0.$$
\end{lemma}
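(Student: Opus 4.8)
The statement to prove is the projection lemma (\cref{lem:projection-lemma}): for a non-empty closed convex set $D\subseteq\mathbb{R}^n$, a point $y\in\mathbb{R}^n$, and any $x\in D$, one has $\langle y-\Pi_D(y),x-\Pi_D(y)\rangle\le 0$. This is the standard variational characterization of the Euclidean projection, and the plan is to derive it from the fact that $\Pi_D(y)$ minimizes the (convex, differentiable) function $z\mapsto\|y-z\|^2$ over $D$, using convexity of $D$ to stay feasible along a segment.

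\textbf{Plan.} Write $p=\Pi_D(y)$, so $p\in D$ and $\|y-p\|\le\|y-z\|$ for all $z\in D$. Fix an arbitrary $x\in D$. For $t\in[0,1]$, convexity of $D$ gives $z_t:=(1-t)p+tx=p+t(x-p)\in D$, hence $\|y-p\|^2\le\|y-z_t\|^2$. Now expand the right-hand side:
\[
\|y-z_t\|^2=\|(y-p)-t(x-p)\|^2=\|y-p\|^2-2t\,\langle y-p,\,x-p\rangle+t^2\|x-p\|^2 .
\]
Subtracting $\|y-p\|^2$ from both sides of the inequality $\|y-p\|^2\le\|y-z_t\|^2$ yields, for all $t\in(0,1]$,
\[
0\le -2t\,\langle y-p,\,x-p\rangle+t^2\|x-p\|^2 ,
\]
and dividing by $t>0$ gives $2\langle y-p,\,x-p\rangle\le t\|x-p\|^2$. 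Letting $t\to 0^{+}$ forces $\langle y-p,\,x-p\rangle\le 0$, which is exactly the claimed inequality with $p=\Pi_D(y)$.

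\textbf{Main obstacle.} There is essentially no hard step here: existence and uniqueness of $\Pi_D(y)$ are quoted from the preliminaries (closed convex $D$), and the only thing to be a little careful about is that the segment $[p,x]$ lies in $D$ (this is precisely where convexity is used, and it is why the lemma can fail for non-convex $D$), and that dividing by $t$ and taking the limit are legitimate because the inequality $2\langle y-p,x-p\rangle\le t\|x-p\|^2$ holds for every $t\in(0,1]$. One could also phrase the argument more slickly via the first-order optimality condition for the smooth convex objective $\varphi(z)=\tfrac12\|y-z\|^2$ over $D$, namely $\langle\nabla\varphi(p),x-p\rangle\ge 0$ with $\nabla\varphi(p)=p-y$, which rearranges to the same thing; but the elementary expansion above is self-contained and avoids invoking first-order optimality machinery.
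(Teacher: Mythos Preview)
Your proof is correct and entirely standard. The paper itself does not prove this lemma; it simply states it as a standard tool from convex analysis and cites \citep{bertsekas1998nonlinear}, so there is no ``paper's own proof'' to compare against---your argument is precisely the textbook derivation one would expect.
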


\begin{proposition}\label{prop:strict-separation}
	Let $D_1$ and $D_2$ be two disjoint non-empty closed convex sets in $\mathbb{R}^n$ and such that $D_2$ is bounded. Then, there exist $c \in \mathbb{R}^n \setminus \{0\}$ and $d \in \mathbb{R}$ such that $\langle c, x \rangle < d$ for all $x \in D_1$, and $\langle c, x \rangle > d$ for all $x \in D_2$.
\end{proposition}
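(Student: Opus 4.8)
The plan is to reduce the statement to separating the origin from a single closed convex set and then invoke the projection lemma (\cref{lem:projection-lemma}). Define the set $D := \{a - b : a \in D_1,\, b \in D_2\}$. It is convex, being the image of the convex set $D_1 \times D_2$ under the linear map $(a,b) \mapsto a - b$; it is nonempty; and crucially $0 \notin D$, since $D_1 \cap D_2 = \emptyset$. The first substantive step is to check that $D$ is \emph{closed}: given $z_k = a_k - b_k \to z$ with $a_k \in D_1$ and $b_k \in D_2$, compactness of $D_2$ lets us pass to a subsequence along which $b_k \to b \in D_2$; then $a_k = z_k + b_k \to z + b$, and closedness of $D_1$ gives $z + b \in D_1$, so $z = (z+b) - b \in D$. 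This is the only place where boundedness of $D_2$ is used, and it is the step I expect to require the most care.

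Once $D$ is known to be nonempty, closed and convex with $0 \notin D$, set $p := \Pi_D(0)$; since $0 \notin D$ we have $p \neq 0$. Applying \cref{lem:projection-lemma} with $y = 0$ yields $\langle -p,\, x - p\rangle \le 0$, i.e.\ $\langle p, x\rangle \ge \|p\|^2 > 0$, for every $x \in D$. Unpacking the definition of $D$, this says $\langle p, a - b\rangle \ge \|p\|^2$ for all $a \in D_1$ and $b \in D_2$, equivalently $\langle p, a\rangle \ge \langle p, b\rangle + \|p\|^2$.

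Now take $c := -p \in \mathbb{R}^n \setminus \{0\}$, so the inequality above reads $\langle c, a\rangle \le \langle c, b\rangle - \|p\|^2$ for all $a \in D_1$, $b \in D_2$. Fixing any $b_0 \in D_2$ shows $\alpha := \sup_{a \in D_1}\langle c, a\rangle$ is finite, being bounded above by $\langle c, b_0\rangle - \|p\|^2$; setting $\beta := \inf_{b \in D_2}\langle c, b\rangle$ we then obtain $\alpha \le \beta - \|p\|^2 < \beta$. Choosing $d := (\alpha + \beta)/2$ gives $\langle c, a\rangle \le \alpha < d$ for every $a \in D_1$ and $\langle c, b\rangle \ge \beta > d$ for every $b \in D_2$, which is precisely the claimed strict separation; note that both strict inequalities are genuine (because $\alpha < \beta$), regardless of whether the supremum or infimum is attained.
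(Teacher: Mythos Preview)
The paper does not actually prove this proposition: it is stated as a standard tool from convex analysis and simply cited (see, e.g., \citep{bertsekas1998nonlinear}). So there is no ``paper's own proof'' to compare against.

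Your argument is correct and is essentially the textbook proof of this separation theorem. The key steps---forming the Minkowski difference $D = D_1 - D_2$, using compactness of $D_2$ to show $D$ is closed, projecting the origin onto $D$, and then invoking the variational inequality for projections (\cref{lem:projection-lemma}) to obtain a uniform gap $\|p\|^2$ between the two sets in the direction $c = -p$---are all carried out cleanly. In particular, your handling of the strict inequalities via $\alpha < \beta$ and $d = (\alpha+\beta)/2$ is careful and correct even when the supremum over $D_1$ is not attained.
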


\noindent We will need the following generalization of Farkas' Lemma, which we prove below. For $\varepsilon = 0$, we recover the usual statement of Farkas' Lemma.

\begin{lemma}\label{lem:new-farkas}
	Let $A \in \mathbb{R}^{m \times n}$, $b \in \mathbb{R}^n$ and $\varepsilon \geq 0$. Then exactly one of the following two statements holds:
	\begin{enumerate}
		\item $\exists x \in \mathbb{R}^n: Ax \leq 0$, $\langle b, x \rangle > \varepsilon \|x\|$,
		\item $\exists y \in \mathbb{R}^m: \|A^Ty - b\| \leq \varepsilon$, $y \geq 0$.
	\end{enumerate}
\end{lemma}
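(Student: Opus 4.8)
The plan is to prove this generalized Farkas' Lemma by reducing it to a strict separation argument between two convex sets, mirroring the standard proof of Farkas' Lemma. First I would verify the easy direction: the two statements cannot both hold. If $x$ satisfies statement 1 and $y$ satisfies statement 2, then on one hand $\langle b, x\rangle > \varepsilon\|x\|$, while on the other hand, writing $b = A^Ty - r$ with $\|r\|\leq\varepsilon$, we get $\langle b,x\rangle = \langle A^Ty, x\rangle - \langle r, x\rangle = \langle y, Ax\rangle - \langle r,x\rangle \leq 0 + \|r\|\|x\| \leq \varepsilon\|x\|$ (using $y\geq 0$, $Ax\leq 0$, and Cauchy--Schwarz), a contradiction. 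Note $x=0$ cannot satisfy statement 1, so $\|x\|>0$ is fine here.

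For the main direction, I would show that if statement 2 fails then statement 1 holds. Consider the set $K = \{A^Ty : y \in \mathbb{R}^m_{\geq 0}\}$, which is a closed convex cone (closedness of finitely generated cones is standard), and the closed ball $\bar B = \{z \in \mathbb{R}^n : \|z - b\| \leq \varepsilon\}$. Statement 2 failing means exactly $K \cap \bar B = \emptyset$. Since $\bar B$ is compact and convex, and $K$ is closed and convex, \cref{prop:strict-separation} gives $c \in \mathbb{R}^n\setminus\{0\}$ and $d\in\mathbb{R}$ with $\langle c, w\rangle < d$ for all $w \in K$ and $\langle c, z\rangle > d$ for all $z \in \bar B$. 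The key step is to extract the right information from the separating hyperplane: first, since $0 \in K$, we get $d > 0$; second, since $K$ is a cone, $\langle c, \lambda A^Ty\rangle < d$ for all $\lambda > 0$ and $y \geq 0$ forces $\langle c, A^Ty\rangle \leq 0$ for all $y\geq 0$, i.e. $\langle Ac, y\rangle \leq 0$ for all $y \geq 0$, hence $Ac \leq 0$. So $x := c$ is a candidate for statement 1 with respect to the constraint $Ax\leq 0$.

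It remains to check the strict inequality $\langle b, c\rangle > \varepsilon\|c\|$. This comes from the second half of the separation: $\langle c, z\rangle > d > 0$ for all $z$ with $\|z-b\|\leq\varepsilon$. Taking $z = b - \varepsilon c/\|c\|$ (which lies in $\bar B$), we get $\langle c, b\rangle - \varepsilon\|c\| > d > 0$, which is precisely $\langle b, c\rangle > \varepsilon\|c\|$, establishing statement 1. I expect the main obstacle to be the bookkeeping around the cone structure — specifically justifying that $K$ is closed (needed to apply \cref{prop:strict-separation}, which requires closed sets) and correctly using the scaling invariance of the cone to upgrade the strict separation inequality on $K$ to the non-strict inequality $Ac \leq 0$; these are routine but must be stated carefully. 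The choice of the boundary point $z = b - \varepsilon c/\|c\|$ of the ball is the one slightly non-obvious trick, and everything else follows standard Farkas-via-separation reasoning.

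\begin{proof}[Proof sketch]
The two statements are mutually exclusive: if $x$ satisfies (1) and $y$ satisfies (2), then $\langle b,x\rangle = \langle A^Ty,x\rangle + \langle b - A^Ty, x\rangle \leq \langle y, Ax\rangle + \|b-A^Ty\|\cdot\|x\| \leq \varepsilon\|x\|$, contradicting $\langle b,x\rangle > \varepsilon\|x\|$.

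Conversely, suppose (2) fails. Let $K = \{A^Ty : y\geq 0\}$, a closed convex cone, and $\bar B = \{z : \|z-b\|\leq\varepsilon\}$, a compact convex set; (2) failing means $K\cap\bar B=\emptyset$. By \cref{prop:strict-separation} there are $c\neq 0$, $d\in\mathbb{R}$ with $\langle c,w\rangle < d$ for $w\in K$ and $\langle c,z\rangle > d$ for $z\in\bar B$. Since $0\in K$, $d>0$. Since $K$ is a cone, $\langle c, A^Ty\rangle \leq 0$ for all $y\geq 0$, i.e. $\langle Ac, y\rangle \leq 0$ for all $y \geq 0$, so $Ac\leq 0$. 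Finally, applying the second inequality to $z = b - \varepsilon c/\|c\| \in \bar B$ gives $\langle c,b\rangle - \varepsilon\|c\| > d > 0$, so $x := c$ satisfies $Ax \leq 0$ and $\langle b,x\rangle > \varepsilon\|x\|$, which is (1).
\end{proof}
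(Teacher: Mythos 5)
Your proof is correct and follows essentially the same route as the paper: it first checks mutual exclusivity by Cauchy--Schwarz, then applies \cref{prop:strict-separation} to the cone $\{A^Ty : y \geq 0\}$ and the closed ball $\{z : \|z-b\|\leq\varepsilon\}$, derives $Ac \leq 0$ from the cone structure, and finishes with the boundary point $z = b - \varepsilon c/\|c\|$. The only differences are cosmetic (you explicitly flag closedness of the cone and establish $d>0$ before $Ac\leq 0$ rather than after, while the paper derives $Ac\leq 0$ by directly exhibiting a scaled basis vector), and your parenthetical remark that $x=0$ cannot satisfy statement~1 is unnecessary since the inequality chain holds even at $x=0$.
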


\begin{proof}
	Let us first check that both statements cannot hold at the same time. Indeed, if this were the case, then we would obtain the following contradiction
	$$\varepsilon \|x\| < \langle b, x \rangle = \langle A^T y, x \rangle + \langle b - A^Ty, x \rangle \leq \langle y, Ax \rangle + \|b - A^Ty\| \|x\| \leq \varepsilon \|x\|$$
	where we used the fact that $\langle y, Ax \rangle \leq 0$ and the Cauchy-Schwarz inequality.
	
	Now, let us show that if statement 2 does not hold, then statement 1 must necessarily hold. Let $D_1 = \{A^Ty : y \geq 0\}$ and $D_2 = \{x : \|x-b\| \leq \varepsilon\}$. Note that since statement 2 does not hold, it follows that $D_1$ and $D_2$ are disjoint. Furthermore, it is easy to check that $D_1$ and $D_2$ satisfy the conditions of \cref{prop:strict-separation}. Thus, there exist $c \in \mathbb{R}^n \setminus \{0\}$ and $d \in \mathbb{R}$ such that $\langle c, x \rangle < d$ for all $x \in D_1$, and $\langle c, x \rangle > d$ for all $x \in D_2$. In particular, we have that for all $y \geq 0$, $\langle Ac, y \rangle = \langle c, A^Ty \rangle < d$. From this it follows that $Ac \leq 0$, since if $[Ac]_i > 0$ for some $i$, then $\langle Ac, y \rangle \geq d$ for $y = \frac{|d|}{[Ac]_i} e_i$.
	
	In order to show that $x := c$ satisfies the first statement, it remains to prove that $\langle b, c \rangle > \varepsilon \|c\|$. Note that by setting $y=0$, we get that $0 = \langle c, A^T0 \rangle < d$. Let $z = b - \varepsilon \frac{c}{\|c\|}$. Since $z \in D_2$, it follows that $\langle c, z \rangle > d > 0$. Since $\langle c, z \rangle = \langle c, b \rangle - \varepsilon \|x\|$, statement 1 indeed holds.
\end{proof}

\subsection{Proof of \texorpdfstring{\cref{lem:taylor}}{Lemma~\ref*{lem:taylor}} (Taylor's Theorem)}\label{app:taylor}

We restate the Lemma here for convenience.

\theoremstyle{plain}
\newtheorem*{lem:appendix:taylor}{Lemma \ref*{lem:taylor}}
\begin{lem:appendix:taylor}[Taylor's theorem]
	Let $f: \mathbb{R}^n \to \mathbb{R}$ be continuously differentiable and let $D \subseteq \mathbb{R}^n$ be convex. If $\nabla f$ is $L$-Lipschitz-continuous (w.r.t.\ the $\ell_2$-norm) on $D$, then for all $x,y \in D$ we have
	$$\bigl|f(y) - f(x) - \langle \nabla f(x), y-x \rangle \bigr| \leq \frac{L}{2} \|y-x\|^2.$$
\end{lem:appendix:taylor}

\begin{proof}
	Let $g : [0,1] \to \mathbb{R}$ be defined by $g(t) = f(x + t(y-x))$. Then, $g$ is continuously differentiable on $[0,1]$ and $g'(t) = \langle \nabla f(x + t(y-x)), y-x \rangle$. Furthermore, $g'$ is $(L \|x-y\|^2)$-Lipschitz-continuous on $[0,1]$, since
	\begin{equation*}
	\begin{split}
	|g'(t_1) - g'(t_2)| &= \left| \langle \nabla f(x + t_1(y-x)) - \nabla f(x + t_2(y-x)), y-x \rangle \right|\\
	&\leq \|\nabla f(x + t_1(y-x)) - \nabla f(x + t_2(y-x))\| \cdot \|y-x\|\\
	&\leq L \cdot \|t_1(y-x) - t_2(y-x)\| \cdot \|y-x\|\\
	&\leq L \cdot |t_1-t_2| \cdot \|y-x\|^2
	\end{split}
	\end{equation*}
	where we used the Cauchy-Schwarz inequality. We also used the fact that $\nabla f$ is $L$-Lipschitz on $D$, and $x + t(y-x) \in D$ for all $t \in [0,1]$.
	Now, we can write
	$$f(y) - f(x) - \langle \nabla f(x), y-x \rangle = g(1) - g(0) - g'(0) = \int_0^1 (g'(t)-g'(0)) \, \mathrm{d} t$$
	and thus
	\[\left|f(y) - f(x) - \langle \nabla f(x), y-x \rangle\right| \leq \int_0^1 |g'(t)-g'(0)| \, \mathrm{d} t \leq \int_0^1 L \cdot \|x-y\|^2 \cdot |t| \, \mathrm{d} t = \frac{L}{2} \|y-x\|^2. \qedhere\]
\end{proof}

\section{Minor observations on \clo/ and \kkt/}

\begin{lemma}\label{lem:clo-dimension}
	For all integers $k_2 > k_1 > 0$, $k_1$\textup{D}-\clo/ reduces to $k_2$\textup{D}-\clo/ using a promise-preserving reduction.
\end{lemma}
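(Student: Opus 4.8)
The plan is a straightforward padding argument: I would embed a $k_1$-dimensional instance into $[0,1]^{k_2}$ by adding $k_2-k_1$ ``dummy'' coordinates, set up so that continuous local search is forced to drive the dummy coordinates towards $0$, which prevents them from creating any spurious approximate local optimum.

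Concretely, given an instance $(\varepsilon, p, g, L)$ of $k_1$D-\clo/, I would write a point of $[0,1]^{k_2}$ as $(x,z)$ with $x \in [0,1]^{k_1}$, $z \in [0,1]^{k_2-k_1}$, and construct the $k_2$D-\clo/ instance $(\varepsilon', p', g', L')$ with
\begin{align*}
g'(x,z) &= (g(x),\,0,\dots,0), &
p'(x,z) &= \tfrac12\,p(x) + \frac{1}{2(k_2-k_1)}\sum_{i=1}^{k_2-k_1} z_i,
\end{align*}
$\varepsilon' = \varepsilon/2$ and $L' = L+1$. The circuits for $p'$ and $g'$ are obtained from those for $p$ and $g$ by adding only constant gates and multiplications by the fixed constants $\tfrac12$ and $\tfrac1{2(k_2-k_1)}$, which are not true multiplications; hence $p',g'$ are again well-behaved and the reduction is polynomial-time and domain-preserving. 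The codomain of $p'$ is in $[0,1]$ because $p$ maps into $[0,1]$ (or appeal to \cref{lem:clo-codomain}). The solution map sends a non-violation solution $(x^*,z^*)$ of the new instance to $x^*$, and passes any violation pair through essentially unchanged (as detailed below).

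For soundness, if $(x^*,z^*)$ satisfies $p'(g'(x^*,z^*)) \ge p'(x^*,z^*) - \varepsilon'$, then since $p'(g'(x^*,z^*)) = \tfrac12 p(g(x^*))$ and $p'(x^*,z^*) \ge \tfrac12 p(x^*)$ (using $z_i^* \ge 0$), I get $p(g(x^*)) \ge p(x^*) - 2\varepsilon' = p(x^*) - \varepsilon$, so $x^*$ solves the original instance. For promise-preservation, a violation $(x,z),(x',z')$ of $L'$-Lipschitzness of $g'$ forces $x \ne x'$ and $\|g(x)-g(x')\| = \|g'(x,z)-g'(x',z')\| > L'\|(x,z)-(x',z')\| \ge L\|x-x'\|$, a violation for $g$; and a violation of $L'$-Lipschitzness of $p'$, using $\bigl|\sum_i(z_i-z_i')\bigr| \le \sqrt{k_2-k_1}\,\|(x,z)-(x',z')\|$, gives
\[
\tfrac12|p(x)-p(x')| > \Bigl(L' - \tfrac{1}{2\sqrt{k_2-k_1}}\Bigr)\|(x,z)-(x',z')\| \ge \Bigl(L'-\tfrac12\Bigr)\|x-x'\| = \Bigl(L+\tfrac12\Bigr)\|x-x'\| \ge L\,\|x-x'\|,
\]
where the case $x = x'$ is impossible since then $|p'(x,z)-p'(x',z')| \le \tfrac12\|(x,z)-(x',z')\| \le L'\|(x,z)-(x',z')\|$; thus $(x,x')$ is a violation for $p$.

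There is no real obstacle here. The only points needing care are (i) the coupling $g'(x,z) = (g(x),0,\dots,0)$ together with the linear term $\tfrac1{2(k_2-k_1)}\sum_i z_i$ in $p'$, which is exactly what stops the always-improvable extra coordinates from producing an approximate local optimum on their own, and (ii) the choice $L' = L+1$ (dominating both $L$ and $\tfrac{L+1}{2}$), which ensures that the padding creates no new ``violation-only'' solutions, so that the reduction is genuinely promise-preserving and hardness (including for the promise version) transfers upward.
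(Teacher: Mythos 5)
Your proof is correct, but it takes a detour that the paper avoids. The paper simply sets $p'(x_1,x_2)=p(x_1)$ and $g'(x_1,x_2)=(g(x_1),0)$, keeping $\varepsilon$ and $L$ unchanged. With that choice, $p'(g'(x_1,x_2))\geq p'(x_1,x_2)-\varepsilon$ is literally the same inequality as $p(g(x_1))\geq p(x_1)-\varepsilon$, so any non-violation solution projects to a solution of the original instance; Lipschitz violations are passed through because $|p'(x)-p'(y)|=|p(x_1)-p(y_1)|>L\|x-y\|\geq L\|x_1-y_1\|$, and similarly for $g'$. No rescaling of $\varepsilon$, no enlarging of $L$, and no dummy-coordinate penalty term are needed.

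The motivation you give for the extra term $\tfrac{1}{2(k_2-k_1)}\sum_i z_i$ — that without it the dummy coordinates might produce a spurious approximate local optimum ``on their own'' — is not actually a concern. In \clo/ the solution condition compares $p'$ at a point to $p'$ at the \emph{image} of that point under $g'$; since $g'(x,z)=(g(x),0)$ is independent of $z$ and $p'$ (in the paper's choice) is also independent of $z$, the dummy coordinates simply vanish from the condition. They cannot block progress (this is not a fixed-point condition) nor create new approximate local optima. Your version does work — the inequality $p'(x^*,z^*)\geq\tfrac12 p(x^*)$ salvages soundness, and the $L'=L+1$, $\varepsilon'=\varepsilon/2$ bookkeeping makes the Lipschitz-violation transfer go through — but it buys nothing over the direct embedding, and it costs you a factor-of-two loss in $\varepsilon$ and a constant bump in $L$ that the paper does not incur. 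When simplifying, you would also drop the need to separately handle the $x=x'$ case in the $p'$ violation argument, which your current write-up has to deal with.
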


\begin{proof}
	For $x \in \mathbb{R}^{k_2}$, we write $x = (x_1,x_2)$, where $x_1 \in \mathbb{R}^{k_1}$ and $x_2 \in \mathbb{R}^{k_2-k_1}$. Let $(\varepsilon, p, g, L)$ be an instance of $k_1$D-\clo/. The reduction constructs the instance $(\varepsilon, p', g', L)$ of $k_2$D-\clo/, where
	$$p'(x) = p'(x_1,x_2) = p(x_1) \qquad \text{and} \qquad g'(x) = g'(x_1,x_2) = (g(x_1),0).$$
	Clearly, the arithmetic circuits for $p'$ and $g'$ can be constructed in polynomial time and are well-behaved.
	
	Since $|p'(x) - p'(y)| = |p(x_1) - p(y_1)| \leq L \|x_1 - y_1\| \leq L \|x-y\|$, it is clear that any violation $x,y \in [0,1]^{k_2}$ of $L$-Lipschitzness for $p'$ also yields a violation $x_1,y_1 \in [0,1]^{k_1}$ for $p$. Similarly, since
	$$\|g'(x) - g'(y)\| = \|(g(x_1),0) - (g(y_1),0)\| = \|g(x_1) - g(y_1)\| \leq L \|x_1 - y_1\| \leq L \|x-y\|,$$
	any violation $x,y$ of $L$-Lipschitzness for $g'$ also yields a violation $x_1,y_1$ for $g$. Thus, any violation of the constructed instance is always mapped back to a violation of the original instance, and the reduction is indeed promise-preserving.
	
	Finally, note that any proper solution $x \in [0,1]^{k_2}$ of the constructed instance, i.e., such that $p'(g'(x)) \geq p'(x) - \varepsilon$, immediately yields a solution $x_1 \in [0,1]^{k_1}$ of the original instance.
\end{proof}

\begin{lemma}\label{lem:clo-codomain}
	\clo/ with codomain $[0,1]$ for function $p$ is equivalent to \clo/ without this restriction.
\end{lemma}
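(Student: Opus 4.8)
The plan is to establish both inclusions. One direction is immediate: an instance of \clo/ as in \cref{def:clo} --- where $p$ is required to take values in $[0,1]$ --- is verbatim also an instance of the variant without that restriction, with exactly the same set of solutions (proper solutions and Lipschitz violations alike), so the identity reduction works.

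For the converse I would compose $p$ with an order-preserving \emph{affine} rescaling and then clamp the output into $[0,1]$; a nonlinear monotone reparametrisation would not do, since it fails to transform the slack parameter $\varepsilon$ linearly, and an affine rescaling onto $[0,1]$ needs a bound on $p$, which we obtain by evaluating $p$ at one point and invoking Lipschitzness (with violations as the escape hatch). Concretely, given an instance $(\varepsilon,p,g,L)$ of the unrestricted variant, let $0=(0,\dots,0)\in[0,1]^n$, compute the rational constant $c:=p(0)$ (polynomial time, by \cref{lem:well-behaved-efficient}), set $M:=4Ln$, and output the \clo/ instance $\bigl(\varepsilon/M,\ p',\ g,\ \max\{L,1\}\bigr)$ where
$$p'(x) := \min\{\,1,\ \max\{\,0,\ (p(x)-c)/M + \tfrac{1}{2}\,\}\,\}.$$
By construction $p'$ maps $[0,1]^n$ into $[0,1]$, and it is computed by a well-behaved arithmetic circuit obtained from the one for $p$ by appending only finitely many gates (a subtraction, a multiplication by the constant $1/M$, an addition, two $\min/\max$ gates --- no new true multiplications), so the reduction runs in polynomial time. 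Writing $q(t):=(t-c)/M+\tfrac{1}{2}$, an increasing affine map of slope $1/M$, and $\mathrm{clamp}(s):=\min\{1,\max\{0,s\}\}$, a $1$-Lipschitz map, we have $p'=\mathrm{clamp}\circ q\circ p$, hence $|p'(u)-p'(w)|\le \tfrac{1}{M}|p(u)-p(w)|$ for all $u,w$. Consequently any violation of $\max\{L,1\}$-Lipschitzness of $p'$ yields (since $M\ge L$) a violation of $L$-Lipschitzness of $p$, and any violation of $\max\{L,1\}$-Lipschitzness of $g$ is already a violation for the original $g$; both are returned unchanged by the solution map.

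The one substantive step is sending a proper solution $x$ of the constructed instance --- one with $p'(g(x))\ge p'(x)-\varepsilon/M$ --- back to a solution or violation of the original, and this is where I expect the only real care to be needed. Using that $[0,1]^n$ has Euclidean diameter $\sqrt n < 2n = M/(2L)$: if $q(p(x))>1$ then $p(x)-c>M/2>L\sqrt n\ge L\|x-0\|$ and $x\ne 0$, so $(x,0)$ is a strict violation of $L$-Lipschitzness of $p$; symmetrically, if $q(p(x))<0$ then $(0,x)$ is one, and if $q(p(g(x)))<0$ then $(0,g(x))$ is one. In the only remaining case $q(p(x))\in[0,1]$ and $q(p(g(x)))\ge 0$, so $\mathrm{clamp}(q(p(x)))=q(p(x))$ while $\mathrm{clamp}(q(p(g(x))))\le q(p(g(x)))$, whence $q(p(g(x)))\ge q(p(x))-\varepsilon/M$; since $q$ is affine of slope $1/M$ this rearranges to $p(g(x))\ge p(x)-\varepsilon$, so $x$ already solves the original instance. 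Finally the reduction is promise-preserving: if the original has no violations then $p$ is $L$-Lipschitz on $[0,1]^n$, so $|p(x)-c|\le L\sqrt n<M/2$ for all $x$, the clamp is never active, $p'=q\circ p$ is $\tfrac{1}{4n}$-Lipschitz $\le\max\{L,1\}$, and $g$ is $L$-Lipschitz $\le\max\{L,1\}$, so the constructed instance has no violations either. The main obstacle is thus exactly this last round of bookkeeping: choosing the rescaling window $M$ large enough relative to $L$ and the domain diameter that every ``clamp active'' configuration is guaranteed to be a genuine strict Lipschitz violation of the original instance, and then verifying the boundary cases; the choice $M=4Ln$ is comfortably large enough.
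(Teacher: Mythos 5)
Your proof is correct and follows essentially the same route as the paper's: an affine rescaling of $p$ by a factor of order $1/(nL)$ anchored at a computable point (the paper uses the centre $z_c$ of the cube, you use the origin), clamped into $[0,1]$, with the observation that the clamp can only be active at a point witnessing a violation of $L$-Lipschitzness of $p$, and with $\varepsilon$ and $L$ adjusted accordingly. The only differences are cosmetic constants ($M=4Ln$ versus $2nL$, $L'=\max\{L,1\}$ versus $\max\{L,\tfrac{1}{2n}\}$) and your slightly more explicit case analysis of where the clamp may bite at $g(x)$.
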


\begin{proof}
	It is clear that the version with the restriction trivially reduces to the version without the restriction. Thus, it remains to show the other direction, namely that \clo/ without the codomain restriction reduces to the restricted version.
	
	Let $(\varepsilon, p, g, L)$ be an instance of \clo/ with domain $[0,1]^n$ and without a codomain restriction for $p$. The reduction constructs the instance $(\varepsilon', p', g, L')$ of \clo/ with domain $[0,1]^n$, where $\varepsilon' = \frac{\varepsilon}{2nL}$, $L' = \max \{L, \frac{1}{2n}\}$ and
	$$p'(x) = \min \left\{1, \max \left\{0, \frac{1}{2} + \frac{p(x) - p(z_c)}{2nL} \right\} \right\}$$
	where $z_c = (1/2, 1/2, \dots, 1/2)$ is the centre of $[0,1]^n$. Note that the arithmetic circuit computing $p'$ can be computed in polynomial time given the circuit for $p$, and that the modification of $p$ will require using gates $\times \zeta$, but no general multiplication gates. Thus, the circuit for $p'$ is also well-behaved. Note, in particular, that the value $p(z_c)$ can be computed in polynomial time in the size of the circuit for $p$. It follows that the reduction can be computed in polynomial time.
	
	First of all, let us show that any point $x \in [0,1]^n$ such that $p'(x) \neq \frac{1}{2} + \frac{p(x) - p(z_c)}{2nL}$ will immediately yield a violation of the $L$-Lipschitzness of $p$. Indeed, if $x$ and $z_c$ satisfy the $L$-Lipschitzness of $p$, then this means that
	$$|p(x) - p(z_c)| \leq L \|x - z_c\| \leq n L$$
	since $x, z_c \in [0,1]^n$. As a result, it follows that $\frac{1}{2} + \frac{p(x) - p(z_c)}{2nL} \in [0,1]$ and thus $p'(x) = \frac{1}{2} + \frac{p(x) - p(z_c)}{2nL}$.
	
	In the rest of this proof we assume that we always have $p'(x) = \frac{1}{2} + \frac{p(x) - p(z_c)}{2nL}$, since we can immediately extract a violation if we ever come across a point $x$ where this does not hold. Let us now show that any solution of the constructed instance immediately yields a solution of the original instance. Clearly, any violation of the $L'$-Lipschitzness of $g$ is trivially also a violation of $L$-Lipschitzness.
	
	Next assume that $x,y \in [0,1]^n$ are a violation of $L'$-Lipschitzness of $p'$. Let us show by contradiction that $x,y$ must be a violation of $L$-Lipschitzness for $p$. Indeed, assume that $x,y$ satisfy the $L$-Lipschitzness for $p$, then
	$$|p'(x)-p'(y)| = \left| \frac{p(x) - p(y)}{2nL} \right| \leq \frac{1}{2n} \|x-y\|$$
	which is a contradiction to $x,y$ being a violation of $L'$-Lipschitzness of $p'$.
	
	Finally, consider any proper solution of the constructed instance, i.e., $x \in [0,1]^n$ such that $p'(g(x)) \geq p'(x) - \varepsilon'$. Then it follows straightforwardly that $p(g(x)) \geq p(x) - 2nL \varepsilon'$, which implies that $x$ is a solution to the original instance, since $2nL \varepsilon' = \varepsilon$. Note that the reduction is also promise-preserving, since we always map violations of the constructed instance back to violations of the original instance.
\end{proof}

\begin{lemma}\label{lem:one-dimensional-easy}
	\gclo/ with fixed dimension $n=1$ can be solved in polynomial time. As a result, this also holds for \kkt/, \gdls/ and \gdfp/.
\end{lemma}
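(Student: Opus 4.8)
The plan is to solve \gclo/ with $n=1$ by binary search on the one–dimensional domain, and then to transfer the result to \kkt/, \gdls/ and \gdfp/ via the domain–preserving reductions already established. Given $(A,b)$ defining the bounded non-empty domain $D\subseteq\mathbb{R}$, I would first compute in polynomial time the endpoints $\alpha\le\beta$ with $D=[\alpha,\beta]$, so that $\Pi_D(t)=\min\{\beta,\max\{\alpha,t\}\}$ can be evaluated directly. Write $G(x):=\Pi_D(g(x))\in[\alpha,\beta]$; by \cref{lem:well-behaved-efficient} the values $g(x)$ and $p(x)$ — and hence $G(x)$ and $p(G(x))$ — are computable in polynomial time and have polynomial bit-size whenever $x$ does. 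Call $x\in[\alpha,\beta]$ a \emph{solution point} if $p(G(x))\ge p(x)-\varepsilon$; otherwise a \emph{right point} if $G(x)>x$ and a \emph{left point} if $G(x)<x$ (the remaining case $G(x)=x$ forces $x$ to be a solution point). Since $G(\alpha)\ge\alpha$ and $G(\beta)\le\beta$, the point $\alpha$ is a solution point or a right point, and $\beta$ is a solution point or a left point.

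The algorithm maintains an interval $[\ell,r]\subseteq[\alpha,\beta]$ with the invariant that $\ell$ is a solution or right point and $r$ is a solution or left point, starting from $[\ell,r]=[\alpha,\beta]$. At each step it tests whether $\ell$, $r$ or the midpoint $m=(\ell+r)/2$ is a solution point and, if so, outputs it; additionally, whenever $G$ is evaluated at a point $x$ it checks whether the pair $(x,G(x))$ witnesses a violation of the $L$-Lipschitzness of $p$, and whether the two current endpoints witness a violation of the $L$-Lipschitzness of $g$, outputting such a pair if found. Otherwise $m$ is a right point, in which case we recurse on $[m,r]$, or a left point, in which case we recurse on $[\ell,m]$; either way the invariant is preserved. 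We iterate until $r-\ell\le\tfrac{2\varepsilon}{L(L+1)}$. Because $\beta-\alpha$, $L$ and $1/\varepsilon$ are each at most exponential in the input size, this halts after polynomially many iterations, each running in polynomial time and producing rationals of polynomial bit-size.

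The crux — and the step I expect to be the main obstacle — is to argue that the loop cannot terminate without the algorithm having output a solution or a violation. Suppose the final $\ell$ is a non-solution right point and the final $r$ is a non-solution left point. Then $p(G(\ell))<p(\ell)-\varepsilon$ with $G(\ell)>\ell$, so unless $(\ell,G(\ell))$ is a $p$-Lipschitz violation we get $\varepsilon<L\,(G(\ell)-\ell)$, i.e.\ $G(\ell)>\ell+\varepsilon/L$; symmetrically $G(r)<r-\varepsilon/L$. Since the Euclidean projection $\Pi_D$ onto a convex set is non-expansive and (unless $(\ell,r)$ is a $g$-Lipschitz violation) $g$ is $L$-Lipschitz on that pair, $|G(\ell)-G(r)|\le|g(\ell)-g(r)|\le L(r-\ell)$. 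Combining, $L(r-\ell)\ge G(\ell)-G(r)>2\varepsilon/L-(r-\ell)$, whence $(L+1)(r-\ell)>2\varepsilon/L$, contradicting $r-\ell\le\tfrac{2\varepsilon}{L(L+1)}$. Hence at termination at least one of the three checks must fire and the algorithm outputs a valid answer, so \gclo/ with $n=1$ lies in FP. The delicate part here is precisely this quantitative observation: that "inward-pointing, non-solution" endpoints at a sufficiently fine scale are forced to contradict non-expansiveness of $\Pi_D$ combined with the Lipschitz bounds — everything else is routine bookkeeping.

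Finally, for the remaining problems: the reductions \gdfp/ $\to$ \kkt/ (\cref{prop:gdfp2kkt}), \kkt/ $\to$ \gdls/ (\cref{prop:kkt2gdls}) and \gdls/ $\to$ \gclo/ (\cref{prop:gdls2gclo}) are all domain-preserving, so they send one-dimensional instances to one-dimensional instances. Chaining them shows that 1D-\kkt/, 1D-\gdls/ and 1D-\gdfp/ each reduce in polynomial time to 1D-\gclo/, which we have just placed in FP; hence all three are in FP as well.
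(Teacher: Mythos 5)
Your proof is correct and follows essentially the same route as the paper's: a one-dimensional binary search on the sign of $\Pi_D(g(x))-x$, terminating at a resolution where the Lipschitz bounds on $p$ and $g$ (together with non-expansiveness of $\Pi_D$) force one of the two close, oppositely-pointing endpoints to be a solution, followed by the same chain of domain-preserving reductions for \kkt/, \gdls/ and \gdfp/. The only difference is presentational: the paper first extracts an $\varepsilon/L$-approximate fixed point of $x\mapsto\Pi_D(g(x))$ and then converts it to a solution, whereas you fold both steps into one contradiction argument and are somewhat more explicit about which pairs witness a Lipschitz violation when the promise fails.
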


\begin{proof}
	This is a straightforward consequence of the fact that finding Brouwer fixed points in one dimension is easy. Consider any instance $(\varepsilon,A,b,p,g,L)$ of \gclo/ with $n=1$. It is easy to see that any $\varepsilon/L$-approximate fixed point of $x \mapsto \Pi_D(g(x))$ immediately yields a solution to the \gclo/ instance.
	
	Thus, we proceed as follows. First of all, from $A$ and $b$ we can directly determine $t_1,t_2 \in \mathbb{R}$ such that $D = [t_1,t_2]$. Note that the bit-size of $t_1$ and $t_2$ is polynomial in the input size. Then define a grid of points on the interval $[t_1,t_2]$ such that the distance between consecutive points is $\varepsilon/L^2$. Finally, using binary search, find two consecutive points $x_1$ and $x_2$ such that $g(x_1) \geq x_1$ and $g(x_2) \leq x_2$. One of these two points has to be an $\varepsilon/L$-approximate fixed point of $x \mapsto \Pi_D(g(x))$ (or we obtain a violation of Lipschitz-continuity). Binary search takes polynomial time, because the number of points is at most exponential in the input size.
	
	Since the other three problems reduce to \gclo/ using domain-preserving reductions (see \cref{sec:all-ppadpls-complete}), it follows that they can also be solved in polynomial time when $n=1$.
\end{proof}

\begin{lemma}\label{lem:kkt-poly-params}
	\kkt/ on domain $[0,1]^n$ can be solved in polynomial time in $1/\varepsilon$, $L$ and the sizes of the circuits for $f$ and $\nabla f$.
\end{lemma}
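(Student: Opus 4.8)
The plan is to solve the problem simply by running projected gradient descent with fixed step size $\eta = 1/L$, but interleaving a rounding step after every iteration so that the iterates stay on a fixed (exponentially fine) grid and hence keep polynomial bit-size. Concretely, I would set $\rho := \varepsilon^2/(8L^2\sqrt{n})$, start from any grid point $\tilde x_0 \in (\rho\mathbb{Z})^n \cap [0,1]^n$, and at step $k$: evaluate $\nabla f(\tilde x_k)$ (polynomial time by \cref{lem:well-behaved-efficient}, since the circuit is well-behaved), form $x_{k+1} := \Pi_D(\tilde x_k - \tfrac1L\nabla f(\tilde x_k))$ where $\Pi_D$ is coordinatewise clamping to $[0,1]$, test whether $\|\tilde x_k - x_{k+1}\| \le \varepsilon/(2L)$ and if so halt and output $x_{k+1}$, and otherwise let $\tilde x_{k+1}$ be $x_{k+1}$ with each coordinate rounded to the nearest multiple of $\rho$.

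For correctness I would first observe that if at any point $f$ or $\nabla f$ violates $L$-Lipschitzness, or if $\nabla f$ violates the Taylor inequality of \cref{lem:taylor} at one of the finitely many pairs of points used by the algorithm, then that pair is a valid violation solution and is output. Assuming no violation ever occurs, I would bound the number of iterations via the standard projected-descent estimate: combining \cref{lem:taylor} with the projection inequality \cref{lem:projection-lemma} --- exactly as in Step 1 of the proof of \cref{prop:kkt2gdls} --- gives $f(x_{k+1}) \le f(\tilde x_k) - \tfrac{L}{2}\|\tilde x_k - x_{k+1}\|^2$, while $L$-Lipschitzness of $f$ gives $f(\tilde x_{k+1}) \le f(x_{k+1}) + \tfrac12 L\rho\sqrt{n}$. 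Hence at every non-halting step $\|\tilde x_k - x_{k+1}\| > \varepsilon/(2L)$, so $f$ decreases by at least $\tfrac{\varepsilon^2}{8L} - \tfrac12 L\rho\sqrt{n} \ge \tfrac{\varepsilon^2}{16L}$ by the choice of $\rho$. Since $f$ is $L$-Lipschitz on $[0,1]^n$ its range there is at most $L\sqrt{n}$, so the loop halts after at most $T \le 16L^2\sqrt{n}/\varepsilon^2$ iterations. When it halts, the output $y := x_{k+1} = \Pi_D(\tilde x_k - \tfrac1L\nabla f(\tilde x_k))$ satisfies $\|\tilde x_k - y\| \le \varepsilon/(2L)$, which is precisely the hypothesis under which Step 2 of the proof of \cref{prop:kkt2gdls} shows (using \cref{lem:projection-lemma}, $L$-Lipschitzness of $\nabla f$, and the generalized Farkas lemma \cref{lem:new-farkas}) that $y$ is an $\varepsilon$-KKT point of $f$ on $D$ --- or else produces a violation, which is output instead.

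For the running time, the iteration count is $T = \mathrm{poly}(1/\varepsilon, L)\cdot\sqrt{n}$ with $n \le \sz(f)$, and every iterate lies on the grid of spacing $\rho$ inside $[0,1]^n$, so $\sz(\tilde x_k) = O(n\log(1/\rho)) = O(\sz(f)(\log L + \log(1/\varepsilon)))$ is polynomial in the input length; therefore each circuit evaluation (by \cref{lem:well-behaved-efficient}, cf.\ \cref{rem:well-behaved-corrigendum}), together with the arithmetic, clamping, rounding, and the final verification that $y$ is an $\varepsilon$-KKT point, runs in time polynomial in the input length, giving an overall bound of $\mathrm{poly}(1/\varepsilon, L, \sz(f), \sz(\nabla f))$. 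The step that I expect to require the most care is the choice of the grid resolution $\rho$: it must be fine enough that the rounding perturbation $\tfrac12 L\rho\sqrt{n}$ of the objective is dominated by the $\Theta(\varepsilon^2/L)$ per-step decrease, yet coarse enough ($1/\rho$ at most exponential in the input size) that the rounded iterates have polynomial bit-size; the choice $\rho = \varepsilon^2/(8L^2\sqrt{n})$ threads this needle.
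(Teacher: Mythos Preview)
Your proposal is correct and follows essentially the same approach as the paper: run projected gradient descent with step size $\eta=1/L$, use the descent estimate from Step~1 of \cref{prop:kkt2gdls} to bound the number of iterations by $O(L^2\sqrt{n}/\varepsilon^2)$ via the $O(L\sqrt{n})$ range of $f$, and invoke Step~2 of \cref{prop:kkt2gdls} at the stopping point to obtain an $\varepsilon$-KKT point (or a violation). The one place you are more careful than the paper is the explicit rounding to a grid of spacing $\rho$ to keep the iterates at polynomial bit-size; the paper simply asserts that each gradient-descent step runs in polynomial time without addressing the potential growth of $\sz(x_k)$ across iterations, so your treatment here is an improvement in rigor rather than a deviation in strategy. (A minor cosmetic point: your $\rho$ involves $\sqrt{n}$, which is generally irrational; replacing it by $n$ or $\lceil\sqrt{n}\rceil$ fixes this without affecting any of your estimates.)
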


\begin{proof}
	This follows from the fact that the problem can be solved by Gradient Descent in polynomial time in those parameters.
	Let $(\varepsilon,f,\nabla f,L)$ be an instance of \kkt/ with domain $[0,1]^n$. First, compute $f(0)$ in polynomial time in $\sz(f)$. If $f$ is indeed $L$-Lipschitz-continuous, then it follows that $f(x) \in I = [f(0)-\sqrt{n}L, f(0)+\sqrt{n}L]$ for all $x \in [0,1]^n$. If we ever come across a point where this does not hold, we immediately obtain a violation of $L$-Lipschitz-continuity of $f$. So, for the rest of this proof we simply assume that $f(x) \in I$ for all $x \in [0,1]^n$.
	
	Note that the length of interval $I$ is $2\sqrt{n}L$, which is polynomial in $L$ and $n$. By using the reduction in the proof of \cref{prop:kkt2gdls}, we can solve our instance by solving the instance $(\varepsilon',\eta,f,\nabla f,L)$ of \gdls/, where $\varepsilon' = \frac{\varepsilon^2}{8L}$ and $\eta = \frac{1}{L}$. The important observation here is that this instance of \gdls/ can be solved by applying Gradient Descent with step size $\eta$ and with any starting point, in at most $\frac{|I|}{\varepsilon'}=\frac{16\sqrt{n}L^2}{\varepsilon^2}$ steps. Indeed, every step must improve the value of $f$ by $\varepsilon'$, otherwise we have found a solution. It is easy to see that each step of Gradient Descent can be done in polynomial time in $\sz(\nabla f)$, $n$ and $\log L$. Since the number of steps is polynomial in $1/\varepsilon$, $L$ and $n$, the problem can be solved in polynomial time in $1/\varepsilon$, $L$, $n$, $\sz(f)$ and $\sz(\nabla f)$. Finally note that $n \leq \sz(f)$ (because $f$ has $n$ input gates).
\end{proof}

\section{\gbrouwer/ and \grlo/}\label{sec:general-brouwer-rlo}

In this section we define the computational problems \gbrouwer/ and \grlo/ and prove that they are \ppad/- and \pls/-complete respectively. These two completeness results follow straightforwardly from prior work. The membership of \gbrouwer/ in \ppad/ and of \grlo/ in \pls/ are used in this paper to show that our problems of interest lie in \ppadpls/.

\begin{tcolorbox}[breakable,enhanced]
	\begin{definition}
		\gbrouwer/:
		
		\noindent\textbf{Input}:
		\begin{itemize}
			\item precision parameter $\varepsilon > 0$,
			\item $(A,b) \in \mathbb{R}^{m \times n} \times \mathbb{R}^m$ defining a bounded non-empty domain $D = \{x \in \mathbb{R}^n: Ax \leq b\}$,
			\item well-behaved arithmetic circuit $g: \mathbb{R}^n \to \mathbb{R}^n$,
			\item Lipschitz constant $L > 0$.
		\end{itemize}
		
		\noindent\textbf{Goal}: Compute an approximate fixed point of $g$ on domain $D$. Formally, find $x \in D$ such that
		$$\|\Pi_D(g(x)) - x\| \leq \varepsilon.$$
		Alternatively, we also accept a violation of $L$-Lipschitzness of $g$ as a solution. Namely, $x,y \in D$ such that $\|g(x) - g(y)\| > L \|x-y\|$.
	\end{definition}
\end{tcolorbox}

\begin{proposition}\label{prop:general-brouwer}
	\gbrouwer/ is \ppad/-complete.
\end{proposition}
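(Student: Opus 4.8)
The plan is to establish both membership in \ppad/ and \ppad/-hardness, reducing in each direction to the standard hypercube Brouwer problem, whose \ppad/-completeness is classical \citep{Pap94,DGP09}.

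\textbf{Membership in \ppad/.} I would reduce \gbrouwer/ to the standard Brouwer fixed-point problem on $[0,1]^n$ (with a suitable grid resolution), which is known to be in \ppad/. Given an instance $(\varepsilon, A, b, g, L)$ with domain $D = \{x : Ax \le b\}$, first use linear programming to compute a bounding box and an affine rescaling $\Phi : [0,1]^n \to \mathbb{R}^n$ whose image contains $D$; the bit-complexity of $\Phi$ is polynomial in the input size. Consider the map $h : [0,1]^n \to [0,1]^n$ defined by $h(z) = \Phi^{-1}\bigl(\Pi_D(g(\Phi(z)))\bigr)$. The composition $\Pi_D \circ g$ maps $D$ into $D$ and is Lipschitz (since $\Pi_D$ is $1$-Lipschitz and $g$ is purportedly $L$-Lipschitz), so $h$ maps the cube into the (rescaled) copy of $D$ sitting inside it, and in particular into $[0,1]^n$; it has an appropriately controlled Lipschitz constant $L'$ expressible with polynomially many bits. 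The one subtlety is that $\Pi_D$ is not obviously computable by an arithmetic circuit; this is exactly the reason the problem statement of \gbrouwer/ already bundles $\Pi_D$ into the definition, and in the membership reduction we are allowed to work in the model underlying the \ppad/-complete Brouwer problem (where $\Pi_D$ can be approximated via convex quadratic programming to within the needed precision, and any failure of Lipschitzness is caught as a violation). An approximate fixed point $z^*$ of $h$, i.e.\ $\|h(z^*) - z^*\| \le \varepsilon'$ for $\varepsilon'$ chosen as $\varepsilon$ divided by the Lipschitz constant of $\Phi$, pulls back under $\Phi$ to a point $x^* = \Phi(z^*)$ with $\|\Pi_D(g(x^*)) - x^*\|$ small; a short computation, together with the fact that $\Pi_D(g(x^*)) \in D$, shows $x^*$ (or its projection onto $D$) is an $\varepsilon$-approximate fixed point of $g$ on $D$, or else we have exhibited points witnessing a violation of $L$-Lipschitzness of $g$. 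This places \gbrouwer/ in \ppad/.

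\textbf{\ppad/-hardness.} This is immediate: the standard $n$-dimensional Brouwer problem on $[0,1]^n$ with a Lipschitz function presented by an arithmetic circuit is \ppad/-complete, and it is the special case of \gbrouwer/ with $D = [0,1]^n$. For $D = [0,1]^n$ the projection $\Pi_D$ is given coordinatewise by $[\Pi_D(x)]_i = \min\{1, \max\{0, x_i\}\}$, which is computable by a (linear) arithmetic circuit, so an instance of the hypercube Brouwer problem is literally an instance of \gbrouwer/ with the same solutions. Hence \gbrouwer/ is \ppad/-hard, and combined with membership it is \ppad/-complete.

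\textbf{Main obstacle.} The routine bookkeeping is the propagation of approximation errors and Lipschitz constants through the affine rescaling $\Phi$ and the composition with $\Pi_D$, together with verifying that all constants remain of polynomial bit-size; none of this is conceptually hard. The one genuinely delicate point is the treatment of $\Pi_D$ in the membership direction — ensuring that using a polynomial-time (quadratic-programming) approximation of $\Pi_D$ inside the reduction does not create spurious fixed points, and that any inaccuracy can be charged to a Lipschitz-violation solution. Since this argument is essentially the one already used by \citet{DaskalakisP2011-CLS} for the containment of \clo/ in \ppad/, adapted to a general polytopal domain, I would present it by reduction to that prior work rather than re-deriving it from scratch, exactly as \cref{prop:gclo2ppadpls} does for the corresponding statement about \gclo/.
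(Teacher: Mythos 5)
Your hardness direction is exactly the paper's: the hypercube Brouwer problem is a special case of \gbrouwer/ because $\Pi_{[0,1]^n}$ is computable coordinatewise by $\min/\max$ gates (the paper cites the \ppad/-hardness of the two-dimensional case with linear arithmetic circuits). The membership direction, however, has a gap that your own ``main obstacle'' paragraph circles around without resolving. Your reduction needs to hand the target Brouwer problem the map $h(z) = \Phi^{-1}\bigl(\Pi_D(g(\Phi(z)))\bigr)$, but the standard \ppad/-complete Brouwer problem takes its function as an arithmetic circuit, and $\Pi_D$ for a general polytope is the argmin of a convex quadratic program --- the paper explicitly notes it is unclear whether this is computable by the allowed circuits (which is the very reason $\Pi_D$ appears as an external operator in the definitions of \gbrouwer/ and \gclo/). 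Saying you are ``allowed to work in the model underlying the \ppad/-complete Brouwer problem'' assumes a Turing-machine-presented Brouwer problem on the cube whose \ppad/ membership is itself the thing that needs an argument. Worse, your proposed fallback --- presenting the argument ``exactly as \cref{prop:gclo2ppadpls} does'' --- is circular: that proposition establishes \gclo/ $\in$ \ppad/ precisely by reducing to \gbrouwer/ and invoking the present proposition for membership; and the \citet{DaskalakisP2011-CLS} containment of \clo/ in \ppad/ only covers the cube, where the projection is trivial.

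The paper closes this gap by citing Proposition~2 of \citet{EY10-Nash-FIXP}, which places in \ppad/ the search for approximate fixed points of any polynomial-time computable, polynomially continuous Brouwer function on a bounded polytopal domain directly --- no rescaling to the cube and no circuit for $\Pi_D$ required, since in that Turing-machine model the projection can be computed by convex quadratic programming. The paper observes that the EY10 proof tolerates the lack of a continuity guarantee because Lipschitz violations are accepted as solutions, and offers an alternative route via \cref{thm:lin-circuit-simple} (approximate $g$ by a linear arithmetic circuit, which is automatically Lipschitz, then apply EY10). Your error-propagation sketch for recovering an $\varepsilon$-approximate fixed point in $D$ from one of $h$ is fine, but to make the membership proof complete you need to replace the appeal to ``the standard hypercube Brouwer problem'' with the Etessami--Yannakakis result (or an equivalent argument for polytopal domains in the Turing-machine model).
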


\begin{proof}
	Various formulations and special cases of the problem of finding a Brouwer fixed point are known to be \ppad/-complete \citep{Pap94,ChenD2009-2D,EY10-Nash-FIXP}. The \ppad/-hardness of our \gbrouwer/ problem immediately follows from the \ppad/-hardness of the problem on the domain $[0,1]^2$ and when $g$ is a linear arithmetic circuit, which is known from \citep{Mehta2018-constant-rank}.
	
	The containment in \ppad/ essentially follows from Proposition~2 in \citep{EY10-Nash-FIXP}, where it is shown that finding an approximate fixed point of a Brouwer function that is efficiently computable and continuous, when the domain is a bounded polytope, is in \ppad/. In \gbrouwer/, the function is not guaranteed to be continuous, but instead we allow violations of Lipschitz-continuity as solutions. However, it can easily be seen that the proof in \citep{EY10-Nash-FIXP} also applies to this case. Alternatively, we can also use our \cref{thm:lin-circuit-simple} to approximate the circuit $g$ by a linear arithmetic circuit (which is necessarily Lipschitz-continuous with a polynomially representable Lipschitz-constant, see \cref{lem:well-behaved-efficient}) and then use \citep[Proposition 2]{EY10-Nash-FIXP} directly. Note that since $D$ is bounded, we can easily compute $M > 0$ such that $D \subseteq [-M,M]^n$ (using linear programming). Then, using \cref{thm:lin-circuit-simple} and \cref{rem:approx-linear-domain}, we can approximate $g$ by a linear arithmetic circuit on the domain $D$.
\end{proof}

\begin{tcolorbox}[breakable,enhanced]
	\begin{definition}
		\grlo/:
		
		\noindent\textbf{Input}:
		\begin{itemize}
			\item precision/stopping parameter $\varepsilon > 0$,
			\item $(A,b) \in \mathbb{R}^{m \times n} \times \mathbb{R}^m$ defining a bounded non-empty domain $D = \{x \in \mathbb{R}^n: Ax \leq b\}$,
			\item well-behaved arithmetic circuits $p: \mathbb{R}^n \to \mathbb{R}$ and $g: \mathbb{R}^n \to \mathbb{R}^n$,
			\item Lipschitz constant $L > 0$.
		\end{itemize}
		
		\noindent\textbf{Goal}: Compute an approximate local optimum of $p$ with respect to $g$ on domain $D$. Formally, find $x \in D$ such that
		$$p(\Pi_D(g(x))) \geq p(x) - \varepsilon.$$
		Alternatively, we also accept a violation of $L$-Lipschitzness of $p$ as a solution. Namely, $x,y \in D$ such that $|p(x) - p(y)| > L \|x-y\|$.
	\end{definition}
\end{tcolorbox}

\begin{proposition}\label{prop:general-rlo}
	\grlo/ is \pls/-complete.
\end{proposition}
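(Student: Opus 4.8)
The plan is to establish the two halves of the claim separately, with membership in \pls/ (the half actually used in this paper, via \cref{prop:gclo2ppadpls}) being the substantive part. Both halves mirror the treatment of \textsc{Real-Localopt} by \citet{DaskalakisP2011-CLS}, the only genuinely new ingredient being that the domain is now an arbitrary bounded polytope rather than $[0,1]^n$; this matters only in that $\Pi_D$ is no longer a coordinate-wise clamp, and I would first record that $\Pi_D$ is nonetheless exactly computable in polynomial time: projecting a rational point onto $D=\{x:Ax\le b\}$ is a convex quadratic program with rational data, whose unique optimum is determined by the linear KKT system of an active constraint set and is therefore rational of polynomial bit-length.

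For membership I would reduce \grlo/ to \textsc{Localopt}. Given an instance $(\varepsilon,A,b,p,g,L)$, use linear programming to find $M$ with $D\subseteq[-M,M]^n$ of polynomial bit-length, and fix a grid spacing $\rho$ and a value-discretization step $\kappa$, both inverse-exponential, with $L\rho\sqrt n\le\varepsilon/4$ and $\kappa\le\varepsilon/4$; the $\rho$-grid inside $[-M,M]^n$ then has at most exponentially many points, say $\le 2^N$ with $N$ polynomial. Identify $v\in[2^N]$ (surjectively) with a grid vector $q(v)\in[-M,M]^n$ and set $\tilde x(v):=\Pi_D(q(v))\in D$. Define $V(v)$ to be $p(\tilde x(v))$ rounded to a multiple of $\kappa$, encoded as an integer and clamped to a range of exponential size — legitimate because $L$-Lipschitzness forces $|p(\tilde x(v))|\le|p(x_0)|+L\cdot\mathrm{diam}(D)$ for a fixed $x_0\in D$, and any violation of this bound is a Lipschitz-violation pair $(x_0,\tilde x(v))$ — and define $S(v)$ to be the node whose grid vector is the nearest grid point to $\Pi_D(g(\tilde x(v)))$. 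By \cref{lem:well-behaved-efficient} (for $p$ and $g$) together with polynomial-time computability of $\Pi_D$, both $S$ and $V$ are total functions computed by polynomial-size Boolean circuits. Now suppose $v$ is a \textsc{Localopt} solution, i.e. $V(S(v))\ge V(v)$; put $y:=\Pi_D(g(\tilde x(v)))$ and $z:=\tilde x(S(v))$. Since projection is non-expansive and $y\in D$, we get $\|y-z\|\le\rho\sqrt n$, and local optimality gives $p(z)\ge p(\tilde x(v))-2\kappa$. If $p$ is $L$-Lipschitz on the pair $\{y,z\}$ then $p(y)\ge p(z)-L\|y-z\|\ge p(\tilde x(v))-\varepsilon$, and since $\Pi_D(g(\tilde x(v)))=y$ this says $\tilde x(v)$ is a genuine \grlo/ solution; otherwise $(y,z)\in D\times D$ is a Lipschitz-violation. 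Hence \grlo/ lies in \pls/.

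For \pls/-hardness, note that \grlo/ restricted to $D=[0,1]^n$ is exactly the problem \textsc{Real-Localopt} of \citet{DaskalakisP2011-CLS} (here with well-behaved circuits, which only strengthens hardness), which is \pls/-complete; since any such instance is directly an instance of \grlo/, the latter is \pls/-hard. Equivalently, one can reduce from the \pls/-complete problem \textsc{Iter}, placing its line on a fine grid in the domain with a linear potential $p$ (so that no Lipschitz-violations arise) and a piecewise-constant map $g$ that tracks $C$, arranged so that its only non-improving points sit at \textsc{Iter} solutions. Together with membership this proves \grlo/ is \pls/-complete.

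The step I expect to be most delicate is the membership argument: getting the three quantities — grid spacing $\rho$, value step $\kappa$, and Lipschitz slack — to combine so that a discrete local optimum provably certifies either a genuine $\varepsilon$-approximate solution or a genuine violation, and doing the bookkeeping to make $S$ and $V$ total polynomial-size circuits with no spurious solutions. Everything beyond what \citet{DaskalakisP2011-CLS} already did for $[0,1]^n$ hinges on the single observation that $\Pi_D$ for a rational polytope is exactly polynomial-time computable.
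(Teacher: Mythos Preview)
Your proposal is correct. Both halves rely on exactly the observation you flag: the Euclidean projection $\Pi_D$ onto a rational polytope is exactly computable in polynomial time and is non-expansive, so composing with it preserves Lipschitz structure and lets one fall back on the standard discretization argument for \textsc{Real-Localopt}.

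The paper packages the membership argument slightly differently. Rather than rebuilding the grid reduction to \textsc{Localopt} from scratch, it first reduces \grlo/ on a general polytope $D$ to the hypercube case by setting $\widehat{p}=p\circ\Pi_D$ and $\widehat{g}=g\circ\Pi_D$ on $[-M,M]^n\supseteq D$, observes that Lipschitz violations for $\widehat{p}$ pull back to violations for $p$ and that an approximate local optimum $x$ for $(\widehat{p},\widehat{g})$ yields $\Pi_D(x)$ as one for $(p,g)$, rescales to $[0,1]^n$, and then cites \citet{DaskalakisP2011-CLS} for that case. Your direct route and the paper's modular route are the same idea; the paper's version avoids respelling the discretization and the bookkeeping you identify as ``delicate,'' at the cost of one extra layer of composition, while yours makes the role of $\Pi_D$ in the grid construction fully explicit. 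Hardness is handled identically in both: the restriction to $D=[0,1]^n$ is already \pls/-complete by \citet{DaskalakisP2011-CLS}.
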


\begin{proof}
	The \pls/-hardness of \grlo/ immediately follows from Theorem~2.1 in \citep{DaskalakisP2011-CLS}, where it is shown that the problem is \pls/-complete in the special case where the domain is $[0,1]^3$. The proof of membership in \pls/ for the domain $[0,1]^3$ immediately generalizes to $[0,1]^n$, even for non-fixed $n$. Thus, it remains to show that we can reduce \grlo/ to the special case where the domain is $[0,1]^n$.
	
	Note that since $D$ is bounded, we can easily compute $M > 0$ such that $D \subseteq [-M,M]^n$ (using linear programming). We extend $p$ and $g$ to the whole hypercube $[-M,M]^n$ by using the projection onto $D$, namely $\widehat{p}(x) = p(\Pi_D(x))$ and $\widehat{g}(x) = g(\Pi_D(x))$. Since $\|x-y\| \geq \|\Pi_D(x) - \Pi_D(y)\|$ for all $x,y \in \mathbb{R}^n$, it follows that any violation of $L$-Lipschitzness for $\widehat{p}$ immediately yields a violation for $p$. If $x \in [-M,M]^n$ is an $\varepsilon$-approximate local optimum of $\widehat{p}$ with respect to $\widehat{g}$, i.e., $\widehat{p}(\Pi_D(\widehat{g}(x))) \geq \widehat{p}(x) - \varepsilon$, then it immediately follows that $\Pi_D(x) \in D$ is an $\varepsilon$-approximate local optimum of $p$ with respect to $g$. Thus, we have reduced the problem to the case where the domain is a hypercube $[-M,M]^n$.
	
	The final step is to change the domain from $[-M,M]^n$ to $[0,1]^n$, which can easily be achieved by letting $\tilde{p}(x) = \widehat{p}(2M \cdot x - M \cdot \mathbf{e})$ and $\tilde{g}(x) = (\widehat{g}(2M \cdot x - M \cdot \mathbf{e}) + M \cdot \mathbf{e})/2M$. Here $\mathbf{e} \in \mathbb{R}^n$ denotes the all-ones vector. A violation of $2ML$-Lipschitzness for $\tilde{p}$ immediately yields a violation of $L$-Lipschitzness for $\widehat{p}$. Furthermore, if $x \in [0,1]^n$ is an $\varepsilon$-approximate local optimum of $\tilde{p}$ with respect to $\tilde{g}$, then it is easy to see that $(2M \cdot x - M \cdot \mathbf{e}) \in [-M,M]^n$ is an $\varepsilon$-approximate local optimum of $\widehat{p}$ with respect to $\widehat{g}$. We thus obtain an instance on the domain $[0,1]^n$ with the functions $\tilde{p}$ and $\tilde{g}$ and $L'=2ML$ instead of $L$. By using the same arguments as in \citep[Theorem 2.1]{DaskalakisP2011-CLS}, it follows that the problem lies in \pls/. Note that we do not actually need to construct arithmetic circuits that compute $\tilde{p}$ and $\tilde{g}$ (from the given circuits for $p$ and $g$), because it suffices to be able to compute the functions in polynomial time for the arguments in \citep{DaskalakisP2011-CLS} to go through.
\end{proof}

\section{Approximation by linear circuits}\label{sec:approx-linear-circuit}

In this section, we show that functions computed by arithmetic circuits can be approximated by \emph{linear} arithmetic circuits with a very small error. In linear arithmetic circuits we are only allowed to use the gates $+$, $-$, $\max$, $\min$, $\times \zeta$ and rational constants. In particular, we cannot use general multiplication gates or comparison gates.

\begin{theorem}\label{thm:lin-circuit-simple}
	Given a well-behaved arithmetic circuit $f: [0,1]^n \to \mathbb{R}^d$, a purported Lipschitz constant $L > 0$, and a precision parameter $\varepsilon > 0$, in polynomial time in $\sz(f)$, $\log L$ and $\log (1/\varepsilon)$, we can construct a linear arithmetic circuit $F: [0,1]^n \to \mathbb{R}^d$ such that for any $x \in [0,1]^n$ it holds that:
	\begin{itemize}
		\item $\|f(x)-F(x)\|_\infty \leq \varepsilon$, or
		\item given $x$, we can efficiently compute $y \in [0,1]^n$ such that
		$$\|f(x)-f(y)\|_\infty > L \|x-y\|_\infty.$$
	\end{itemize}
	Here ``efficiently'' means in polynomial time in $\sz(x)$, $\sz(f)$, $\log L$ and $\log(1/\varepsilon)$.
\end{theorem}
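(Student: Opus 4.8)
The plan is to simulate the given well-behaved circuit $f$ gate by gate, replacing each non-linear operation by a small linear-arithmetic-circuit gadget. Since $\max$ and $\min$ are already available in linear circuits, the only gates that must be eliminated are true multiplications $\times$ and comparisons $>$. I would first record the quantitative fact that makes this possible: because $f$ is well-behaved, for inputs in $[0,1]^n$ every gate of $f$ takes a value of magnitude at most $M:=2^{\textup{poly}(\sz(f))}$, a bound whose binary representation has polynomial length. This follows from (the proof of) \cref{lem:well-behaved-efficient}: with inputs of magnitude at most $1$, multiply-by-constant gates contribute only a $2^{\textup{poly}(\sz(f))}$ factor, and there are at most $\log\sz(f)$ true multiplications on any path, each of which can at most square the magnitude. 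After rescaling by constants (allowed via $\times\zeta$ gates, and cheap since $M$ has polynomial bit-size) we may assume that all intermediate quantities live in $[-M,M]$.

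The multiplication gadget is the part closest to prior work. Using $ab=\frac14\bigl((a+b)^2-(a-b)^2\bigr)$, it suffices to approximate $t\mapsto t^2$ on $[-2M,2M]$ by a linear circuit, and after an affine rescaling this reduces to approximating $t\mapsto t^2$ on $[0,1]$. For the latter I would use the classical piecewise-linear construction (see the gadgets of \citet{DGP09,CDT09}): a telescoping sum of $K$ scaled iterated tent maps $T(t)=\min(2t,2-2t)$, each computable with $O(1)$ linear gates and reusing intermediate results, approximates $t-t^2$ to within $O(4^{-K})$, giving a linear circuit of size $O(K)$. The delicate point is the error budget: when such gadgets are composed through the up-to-$\log\sz(f)$ multiplication levels, absolute errors are amplified by factors of order $M$ at each level, so to keep the total output error at most $\varepsilon$ the per-gadget error $\delta$ must be taken doubly-exponentially small, forcing $K=\textup{poly}(\sz(f),\log L,\log(1/\varepsilon))$ — still polynomial. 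This is exactly where I would follow \citet{DGP09,CDT09} but replace their averaging step used to clean up gadget outputs by a \emph{median} step (a median of finitely many linear-circuit values is again linear-circuit-computable), which yields the required worst-case accuracy rather than an average-case guarantee.

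For a comparison gate $c=(a>b)$ I would substitute a steep continuous ``ramp'' $R_w(a-b)$, where $R_w(s)=\max\bigl(0,\min(1,s/w)\bigr)$ and the width $w>0$ has polynomial bit-size but is chosen below the smallest nonzero gap that the earliest comparison on any path can exhibit. Given an input $x$: evaluate the constructed linear circuit $F$; if every ramp receives an argument outside its transition interval $(0,w)$, then each ramp reproduces its comparison exactly, so $F(x)$ differs from $f(x)$ only through the multiplication gadgets and $\|f(x)-F(x)\|_\infty\le\varepsilon$. Otherwise some comparison $g$ is borderline at $x$; take the topologically first such gate and perform a bracketing/bisection search in the vicinity of $x$ (resolution $2^{-\textup{poly}(\sz(f),\log L,\log(1/\varepsilon))}$, each evaluation polynomial time by \cref{lem:well-behaved-efficient}) for a nearby point at which the sign of $a_g-b_g$ flips. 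If a flip is found and $f$ actually jumps there, then $x$ together with a point $y$ straddling the flip at distance $<\varepsilon/(2L)$ satisfies $|f(x)-f(y)|>L\|x-y\|$, which we output as a violation; if $f$ does not jump there, the downstream sub-circuit is insensitive to $g$'s output near $x$ and the ramp error is harmless; and if no flip exists within a sufficiently wide search, I would argue — this is the crux — that well-behavedness (which controls the degree and coefficient bit-sizes of the polynomial feeding $g$, hence forces it to be bounded away from the threshold by more than $w$ whenever it is bounded away at all) shows this last case cannot arise.

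The step I expect to be the main obstacle is precisely this comparison-gate analysis and its interaction with the violation clause: one must show that the only way a ramp can fail to match its comparison at $x$ without producing a genuine Lipschitz violation is for the relevant downstream sub-circuit to be insensitive to that comparison at $x$, and this rests essentially on well-behavedness to exclude ``permanently borderline'' comparisons whose argument stays within $w$ of the threshold over an entire neighbourhood. The multiplication-gadget accuracy (the median refinement of the DGP/CDT technique) is the other load-bearing ingredient but is comparatively routine once the error bookkeeping is fixed. I would conclude by checking that $\textup{poly}(\sz(f),\log L,\log(1/\varepsilon))$ bits suffice for every constant introduced — the rescalings by $M$, the ramp width $w$, the number $K$ of tent-map levels, and the median fan-in — so that $F$ is constructed within the claimed running time.
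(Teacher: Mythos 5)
Your route is genuinely different from the paper's. You simulate the arithmetic circuit gate by gate, replacing each true multiplication by a piecewise-linear squaring gadget and each comparison by a steep ramp. The paper instead never simulates the circuit's internals: it fixes a grid of side-length $1/N$ with $N\geq 4L/\varepsilon$, packages the (discretized, approximately computed) values $f(\widehat{p})$ at grid centres into a polynomial-size \emph{Boolean} circuit $C$, and builds a linear arithmetic circuit that extracts the bits of the cell index of $x$, simulates $C$ with $\min/\max$-encoded Boolean gates, and returns the \emph{median} over $2n+1$ equi-angle samples (the median, rather than the average of \citet{DGP09,CDT09}, is used precisely so that the at most $n$ samples with corrupted bit extraction cannot perturb the output). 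The Lipschitz hypothesis then bridges from $f(\widehat{p})$ to $f(x)$, and its failure is exactly the violation clause. This buys two things your approach does not: it works for any polynomially-approximately-computable class (\cref{thm:lin-circuit-general}), not just circuits, and it completely sidesteps the discontinuities introduced by comparison gates, because $F$ is only required to agree with $f$ on grid centres plus Lipschitzness, not pointwise.

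That last point is where your proposal has a genuine gap, and it is the one you flag yourself. A comparison gate can receive an argument that stays strictly inside the ramp's transition interval $(0,w)$ over a large region of the domain without $f$ having any Lipschitz violation there. A toy example: $f(x)=g(x)\cdot(h(x)>0)$ where $h$ is a sub-circuit whose minimum over $[0,1]^n$ is positive but smaller than any $w$ you can write down with polynomially many bits, and $g$ is large and Lipschitz. Then $f=g$ is perfectly $L$-Lipschitz, there is no violation pair to output, yet $F(x)\approx g(x)\cdot R_w(h(x))$ is badly wrong wherever $h(x)<w$. Your proposed escape is that well-behavedness forces the argument of a comparison to be bounded away from the threshold by more than $w$ whenever it is bounded away at all; but well-behavedness only bounds the number of true multiplications per path (hence bit-sizes of values at rational points), and it says nothing about how close a continuous sub-circuit output can get to zero over a region without reaching it, let alone once comparisons nest inside other comparisons so that the argument of a later comparison is itself discontinuous. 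Root-separation-style lower bounds might rescue the purely polynomial case, but with interleaved $\max$, $\min$, and $>$ gates I do not see how to close this, and you do not supply the argument. There is also a secondary certification issue in your violation subroutine: finding a sign flip of a comparison argument near $x$ does not by itself hand you a pair $x,y$ with $\|f(x)-f(y)\|_\infty> L\|x-y\|_\infty$, since the jump in $f$ at the flip may be smaller than $L$ times the distance between the two representable points you can actually straddle it with. As written, the proposal is an interesting but incomplete alternative; the paper's discretize-and-look-up construction avoids the obstacle entirely.
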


\noindent Our proof of this result relies on existing techniques introduced by \citet{DGP09} and \citet{CDT09}, but with a modification that ensures that we only get a very small error. Indeed, using the usual so-called sampling trick with averaging does not work here. We modify the sampling trick to output the \emph{median} instead of the average.

Since we believe that this tool will be useful in future work, we prove a more general version of \cref{thm:lin-circuit-simple}. This more general version is \cref{thm:lin-circuit-general} and it is presented and proved in the next subsection, where we also explain how \cref{thm:lin-circuit-simple} is easily obtained from \cref{thm:lin-circuit-general}.

\begin{remark}\label{rem:approx-linear-domain}
	Note that in \cref{thm:lin-circuit-simple} the domain $[0,1]^n$ can be replaced by $[-M,M]^n$ for any $M > 0$ (in which case the running time is polynomial in the same quantities and in $\log M$). This is easy to show by using a simple bijection between $[0,1]^n$ and $[-M,M]^n$. This also holds for the more general statement in \cref{thm:lin-circuit-general}. In fact, the result holds for any convex set $S \subseteq [-M,M]^n$, as long as we can efficiently compute the projection onto $S$. Furthermore, the choice of the $\ell_\infty$-norm in the statement is not important, and it can be replaced by any other $\ell_p$-norm, if $f$ is $L$-Lipschitz-continuous with respect to that norm.
\end{remark}

\subsection{General statement and proof}

In order to make the statement of the result as general as possible, we consider a class of functions $\mathcal{F}$. Every function $f \in \mathcal{F}$ has an associated representation, and we let $\sz(f)$ denote the length of the representation of $f$. For example, if $\mathcal{F}$ is the class of functions represented using a certain type of circuit, then $\sz(f)$ is the size of the circuit corresponding to $f$. The following definition is inspired by a similar notion in \citep{EY10-Nash-FIXP}.

\begin{definition}
	A class $\mathcal{F}$ of functions is said to be \emph{polynomially-approximately-computable} if there exists a polynomial $q$ such that for any function $f \in \mathcal{F}$ where $f: [0,1]^n \to \mathbb{R}^d$, any point $x \in [0,1]^n$, and any precision parameter $\delta > 0$, a value $v \in \mathbb{R}^d$ such that $\|f(x)-v\|_\infty \leq \delta$ can be computed in time $q(\sz(f) + \sz(x) + \log(1/\delta))$.
\end{definition}

\noindent The next theorem basically says that any polynomially-approximately-computable class can be approximated by linear arithmetic circuits, as long as the functions are Lipschitz-continuous.

\begin{theorem}\label{thm:lin-circuit-general}
	Let $\mathcal{F}$ be a polynomially-approximately-computable class of functions. Given $f \in \mathcal{F}$ where $f: [0,1]^n \to \mathbb{R}^d$, $L > 0$ and $\varepsilon > 0$, in polynomial time in $\sz(f)$, $\log L$ and $\log(1/\varepsilon)$, we can construct a linear arithmetic circuit $F: [0,1]^n \to \mathbb{R}^d$ such that for any $x \in [0,1]^n$ it holds that:
	\begin{itemize}
		\item $\|f(x)-F(x)\|_\infty \leq \varepsilon$, or
		\item given $x$, we can efficiently compute $y \in [0,1]^n$ such that
		$$\|f(x)-f(y)\|_\infty > L \|x-y\|_\infty + \varepsilon/2.$$
	\end{itemize}
	Here ``efficiently'' means in polynomial time in $\sz(x)$, $\sz(f)$, $\log L$ and $\log(1/\varepsilon)$.
\end{theorem}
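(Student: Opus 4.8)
The plan is to produce $F$ by approximating $f$ on an exponentially fine dyadic grid and then expressing the resulting piecewise‑linear interpolant through a multiresolution decomposition that a \emph{linear} circuit can evaluate with only polynomially many gates; the median step will be the device that boosts the accuracy of this construction from inverse‑polynomial (which is all the usual DGP/CDT sampling/averaging trick gives) to essentially arbitrary. Fix a dyadic resolution $\gamma = 2^{-k}$ with $k=\mathrm{poly}(\sz(f),\log L,\log(1/\varepsilon))$ so that $L\gamma\le\varepsilon/4$ (diameter in $\ell_\infty$), and fix an evaluation precision $\delta\le\varepsilon/4$. Since $\mathcal F$ is polynomially‑approximately‑computable, for any grid point $g$ we can compute a value within $\delta$ of $f(g)$ in polynomial time; turning that Turing machine into a Boolean circuit and simulating Boolean logic on clean $0/1$ signals by $\min$, $\max$ and $1-(\cdot)$ over linear gates, this approximate grid value is itself produced by a \emph{linear} gadget taking the bits of $g$ as inputs. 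The target object is then the $n$‑linear interpolant $\widehat f$ of these grid values: using $L$‑Lipschitzness one checks $\|\widehat f(x)-f(x)\|_\infty\le L\gamma+\delta\le\varepsilon/2$ for all $x$, so it suffices to compute $\widehat f$ (or something $\varepsilon/2$‑close) by a linear circuit.

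The structural idea is the multiresolution identity. In one variable, $\widehat f=\Phi_0+\sum_{m=1}^{k}(\Phi_m-\Phi_{m-1})$, where $\Phi_m$ is the piecewise‑linear interpolant of the grid values at the coarser resolution $2^{-m}$; the level‑$m$ correction $\Phi_m-\Phi_{m-1}$ equals, at each $x$, a fixed period‑$2^{-m+1}$ triangle wave $\psi_m(x)$ — a function with $2^m$ teeth, but computable with $O(m)$ iterated ``tent‑map'' gates $u\mapsto\min(2u,2-2u)$ — multiplied by a ``second‑difference height'' that depends only on which level‑$m$ cell contains $x$. Tensoring this over the $n$ coordinates writes $\widehat f$ as a sum of $\mathrm{poly}(k,n)$ levels, each a product of triangle waves with a cell‑dependent height. (For the special case of $\cref{thm:lin-circuit-simple}$ one could alternatively simulate the given circuit gate by gate, the only non‑trivial gadgets being multiplication via the polarization identity and a tent‑map approximation of $u\mapsto u^2$, and the comparison gate, which is handled exactly as below; but for general $\mathcal F$ the black‑box, grid‑based route above is the one that applies.)

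The main obstacle — where I expect the real work — is that the cell‑dependent heights are \emph{discontinuous} step functions of $x$, and a linear circuit computes only continuous functions, so they cannot be reproduced exactly. The fix is to read off the relevant bits of $x$ by a \emph{soft} bit‑extraction gadget, continuous but correct only away from a thin strip around cell boundaries. Two features make this work: (i) each cell‑dependent height is always multiplied by a triangle wave vanishing precisely at the cell boundaries, which damps the error introduced by softening near a boundary; and (ii) to keep the softened bits accurate deep into the construction, each bit is extracted as the \emph{median} of polynomially many soft comparisons against slightly perturbed thresholds — because a median is pinned to its true majority value up to the largest admissible deviation, rather than accumulating the small per‑sample errors that an \emph{average} would (which is exactly why inverse‑polynomial accuracy is all the classical averaging trick yields). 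Composing all levels and gadgets, either the resulting linear circuit $F$ satisfies $\|f(x)-F(x)\|_\infty\le\varepsilon$ for every $x$, or, tracing a point $x$ at which this fails back through the polynomially many levels, we identify in polynomial time a grid point $g$ near $x$ at which the computed value of $f$ contradicts what $L$‑Lipschitz‑continuity would force, hence a point $y$ with $\|f(x)-f(y)\|_\infty>L\|x-y\|_\infty+\varepsilon/2$. Finally, $\cref{thm:lin-circuit-simple}$ follows by instantiating $\mathcal F$ with the class of well‑behaved arithmetic circuits — polynomially‑approximately‑computable by $\cref{lem:well-behaved-efficient}$ — and absorbing the $\varepsilon/2$ slack in the violation clause into the Lipschitz bound.
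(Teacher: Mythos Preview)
Your proposal has a genuine gap at the heart of the construction: you need to compute $\psi_m(x)\cdot h_m(x)$, the product of a triangle wave and a cell-dependent height, where \emph{both} factors depend on $x$. Linear arithmetic circuits have no multiplication gate---only multiplication by a constant---so this product cannot be formed directly. You mention approximate multiplication via polarization and a tent-map approximation of $u\mapsto u^2$, but only parenthetically and only for the special case of \cref{thm:lin-circuit-simple}; in your main (black-box) route you never explain how the product is realised. Even granting approximate multiplication, the error analysis across $k$ levels, $n$ coordinates, and the interaction with garbage heights near boundaries would need to be carried out, and you have not done so. Relatedly, the ``tensoring over the $n$ coordinates'' to obtain $\mathrm{poly}(k,n)$ levels is asserted but not justified: the level-$m$ correction $\Phi_m-\Phi_{m-1}$ for an $n$-linear interpolant is not a single product of one-dimensional triangle waves times a scalar height.

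Your use of the median is also misplaced. Taking the median of soft comparisons against \emph{perturbed thresholds} for a single coordinate gives you nothing: the median of $2k+1$ sigmoid-like functions centred at $t_1<\dots<t_{2k+1}$ is itself just a sigmoid-like function centred near $t_{k+1}$, still continuous and still soft near the boundary. The paper's trick is different and simpler: it takes $2n+1$ equi-angle \emph{samples of the input point} $x+\tfrac{\ell}{4nN}\mathbf{e}$, $\ell=0,\dots,2n$; for each sample it extracts all bits, simulates the Boolean circuit that computes (a discretisation of) $f$ at the corresponding grid point, and obtains a single real number. Because the samples are spaced along the diagonal, at most one sample per coordinate can be close to a cube boundary, so at most $n$ of the $2n+1$ values are garbage; the \emph{median of the final values}, computed by a sorting network built from $\min/\max$ gates, therefore lies between the minimum and maximum of the $\geq n+1$ good values, each of which is $f$ at some grid point within $\ell_\infty$-distance $1/N$ of $x$. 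No interpolation is attempted, and no product of two computed quantities is ever formed. The violation clause then falls out directly: if $|f(x)-F(x)|>\varepsilon$, one of the at most $n+1$ nearby grid centres witnesses a Lipschitz failure, and it can be found in polynomial time by enumerating them.
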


\noindent Note that \cref{thm:lin-circuit-general} immediately implies \cref{thm:lin-circuit-simple}, since the class of all well-behaved arithmetic circuits mapping $[0,1]^n$ to $\mathbb{R}^d$ is polynomially-approximately-computable. (In fact, it is even exactly computable.) Note that since $L \|x-y\|_\infty + \varepsilon/2 \geq L \|x-y\|_\infty$ for any $\varepsilon>0$, we indeed immediately obtain \cref{thm:lin-circuit-simple}.

\begin{proof}[Proof of \cref{thm:lin-circuit-general}]
	First of all, note that we can assume that $d=1$. Indeed, if $f: [0,1]^n \to \mathbb{R}^d$, then we can consider $f_1, \dots, f_d : [0,1]^n \to \mathbb{R}$ where $f_i(x) = [f(x)]_i$, and construct linear arithmetic circuits $F_1, \dots, F_d$ approximating $f_1, \dots, f_d$ (as in the statement of the theorem). By constructing $F(x) = (F_1(x), \dots, F_d(x))$, we have then obtained a linear arithmetic that satisfies the statement of the theorem. Indeed, if for some $x \in [0,1]^n$ we have $\|f(x)-F(x)\|_\infty > \varepsilon$, then it follows that there exists $i \in [n]$ such that $|f_i(x)-F_i(x)| > \varepsilon$. From this it follows that we can efficiently compute $y$ with $|f_i(x)-f_i(y)|> L \|x-y\|_\infty + \varepsilon/2$, which implies that $\|f(x)-f(y)\|_\infty > L \|x-y\|_\infty + \varepsilon/2$. Note that $d \leq \sz(f)$, so this construction remains polynomial-time with respect to $\sz(f)$, $\log L$ and $\log(1/\varepsilon)$.
	
	Consider any $L > 0$, $\varepsilon > 0$ and $f \in \mathcal{F}$ where $f: [0,1]^n \to \mathbb{R}$. Pick $k \in \mathbb{N}$ such that $N := 2^k \geq 4L/\varepsilon$. We consider the partition of $[0,1]^n$ into $N^n$ subcubes of side-length $1/N$. Every $p \in [N]^n$ then represents one subcube of the partition, and we let $\widehat{p} \in [0,1]^n$ denote the centre of that subcube. Formally, for all $p \in [N]^n$, $\widehat{p} \in [0,1]^n$ is given by
	$$\left[\,\widehat{p}\,\right]_i = \frac{2p_i-1}{2N} \qquad \text{for all $i \in [n]$}.$$
	For any $p \in [N]^n$, let $\tilde{f}(\widehat{p})$ denote the approximation of $f(\widehat{p})$ with error at most $\varepsilon/16$. Note that $\tilde{f}(\widehat{p})$ can be computed in time $q(\sz(f) + \sz(\widehat{p}) + \log(16/\varepsilon))$ (where $q$ is the polynomial associated to $\mathcal{F}$). Since $\sz(\widehat{p})$ is polynomial in $\log L$ and $\log(1/\varepsilon)$, we can compute a rational number $M > 0$ such that $\sz(M)$ is polynomial in $\sz(f)$, $\log L$ and $\log(1/\varepsilon)$, and it holds that $|\tilde{f}(\widehat{p})| \leq M$ for all $p \in [N]^n$. We then define
	$$C(p) := \left\lfloor\left(\tilde{f}(\widehat{p})+M\right)\frac{16}{\varepsilon} \right\rfloor + 1$$
	for all $p \in [N]^n$. Note that $C(p) \in [1,32M/\varepsilon+1] \cap \mathbb{N}$. Pick $m \in \mathbb{N}$ such that $2^m \geq 32M/\varepsilon+1$. Then, $C: [N]^n \to [2^m]$ and we construct a Boolean circuit $\{0,1\}^{kn} \to \{0,1\}^m$ that computes $C$. Importantly, the Boolean circuit can be constructed in polynomial time in $\sz(f)$, $\log L$ and $\log(1/\varepsilon)$. Before we move on, note that for all $p \in [N]^n$, letting $V(p) := (C(p)-1)\frac{\varepsilon}{16}-M$, it holds that
	\begin{equation}\label{eq:lin-circuit-trafo}
	\left|f(\widehat{p}) - V(p)\right| \leq \left|f(\widehat{p}) - \tilde{f}(\widehat{p})\right| + \left|\tilde{f}(\widehat{p}) - V(p)\right| \leq \frac{\varepsilon}{8}
	\end{equation}
	since $|f(\widehat{p}) - \tilde{f}(\widehat{p})| \leq \varepsilon/16$ and
	\begin{align*}
	&\quad \left|\tilde{f}(\widehat{p}) - V(p)\right|\\
	&= \left|\tilde{f}(\widehat{p}) + M - \left\lfloor\left(\tilde{f}(\widehat{p})+M\right)\frac{16}{\varepsilon} \right\rfloor\frac{\varepsilon}{16}\right|\\
	&\leq \left|\tilde{f}(\widehat{p}) + M - \left(\tilde{f}(\widehat{p})+M\right)\frac{16}{\varepsilon} \frac{\varepsilon}{16}\right| + \left|\left(\tilde{f}(\widehat{p})+M\right)\frac{16}{\varepsilon} - \left\lfloor\left(\tilde{f}(\widehat{p})+M\right)\frac{16}{\varepsilon} \right\rfloor \right| \frac{\varepsilon}{16}\\
	&\leq \frac{\varepsilon}{16}.
	\end{align*}
	
	Using \cref{lem:lin-circuit-approx}, which is our key lemma here and is stated and proved in the next subsection, we can construct a linear arithmetic circuit $F: [0,1]^n \to \mathbb{R}$ in polynomial time in $\sz(C)$ (and thus in $\sz(f)$, $\log L$ and $\log(1/\varepsilon)$), such that for all $x \in [0,1]^n$
	$$F(x) \in \left[ \min_{p \in S(x)} C(p), \max_{p \in S(x)} C(p) \right]$$
	where $S(x) \subseteq [N]^n$ is such that
	\begin{enumerate}
		\item $|S(x)| \leq n+1$,
		\item $\|x - \widehat{p}\|_\infty \leq 1/N$ for all $p \in S(x)$, and
		\item $S(x)$ can be computed in polynomial time in $\sz(x)$ and $\log N$.
	\end{enumerate}
	We modify the linear circuit so that instead of outputting $F(x)$, it outputs $(F(x)-1)\frac{\varepsilon}{16} - M$. Note that this is straightforward to do using the arithmetic gates at our disposal. Since $V(p) = (C(p)-1)\frac{\varepsilon}{16}-M$, we obtain that for all $x \in [0,1]^n$
	$$F(x) \in \left[ \min_{p \in S(x)} V(p), \max_{p \in S(x)} V(p) \right].$$
	
	We are now ready to complete the proof. For this it suffices to show that if $|f(x)-F(x)| > \varepsilon$, then the second point in the statement of the theorem must hold. Assume that $x \in [0,1]^n$ is such that $|f(x)-F(x)| > \varepsilon$. It immediately follows that there exists $p^* \in S(x)$ such that $|f(x)-V(p^*)| > \varepsilon$. By \cref{eq:lin-circuit-trafo}, it follows that $|f(x)-f(\widehat{p^*})| > \varepsilon - \varepsilon/8 = 7\varepsilon/8$. Note that we might not be able to identify $p^*$, since we can only approximately compute $f$. Thus, we instead proceed as follows. We compute $p' = \argmax_{p \in S(x)} |f'(x)-f'(\widehat{p})|$, where $f'$ denotes computation of $f$ with error at most $\varepsilon/32$. Note that $p'$ can be computed in polynomial time in $\sz(x)$, $\sz(f)$, $\log L$ and $\log(1/\varepsilon)$, since $f'$ and $S(x)$ can be computed efficiently.
	
	We now show that $y=\widehat{p'} \in [0,1]^n$ satisfies the second point in the statement of the theorem. First of all, note that $|f'(x)-f'(\widehat{p^*})| > 7\varepsilon/8 - 2\varepsilon/32 = 13\varepsilon/16$. By the choice of $p'$, it must be that $|f'(x)-f'(\widehat{p'})| \geq |f'(x)-f'(\widehat{p^*})| > 13\varepsilon/16$, which implies that $|f(x)-f(\widehat{p'})| > 13\varepsilon/16 - 2\varepsilon/32 > 3\varepsilon/4$. On the other hand, since we have that $\|x - \widehat{p'}\|_\infty \leq 1/N \leq \varepsilon/4L$ (because $p' \in S(x)$), it follows that $L \|x - \widehat{p'}\|_\infty \leq \varepsilon/4$. Thus, we indeed have that $|f(x)-f(y)|> L \|x-y\|_\infty + \varepsilon/2$, as desired. Since $p'$ can be computed efficiently, so can $y=\widehat{p'}$.
\end{proof}

\subsection{Key Lemma}

Let us recall some notation introduced in the proof of \cref{thm:lin-circuit-general}.
For $N \in \mathbb{N}$, consider the partition of $[0,1]^n$ into $N^n$ subcubes of side-length $1/N$. Every $p \in [N]^n$ then represents one subcube of the partition, and we let $\widehat{p} \in [0,1]^n$ denote the centre of that subcube. Formally, for all $p \in [N]^n$, $\widehat{p} \in [0,1]^n$ is given by
$$\left[\,\widehat{p}\,\right]_i = \frac{2p_i-1}{2N}$$
for all $i \in [n]$.

\begin{lemma}\label{lem:lin-circuit-approx}
	Assume that we are given a Boolean circuit $C : \{0,1\}^{kn} \to \{0,1\}^m$, interpreted as a function $C: [N]^n \to [2^m]$, where $N=2^k$. Then, in polynomial time in $\sz(C)$, we can construct a linear arithmetic circuit $F: [0,1]^n \to \mathbb{R}$ such that for all $x \in [0,1]^n$
	$$F(x) \in \left[ \min_{p \in S(x)} C(p), \max_{p \in S(x)} C(p) \right]$$
	where $S(x) \subseteq [N]^n$ is such that
	\begin{enumerate}
		\item $|S(x)| \leq n+1$,
		\item $\|x - \widehat{p}\|_\infty \leq 1/N$ for all $p \in S(x)$, and
		\item $S(x)$ can be computed in polynomial time in $\sz(x)$ and $\log N$.
	\end{enumerate}
\end{lemma}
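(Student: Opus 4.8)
The plan is to build $F$ by a recursive interpolation along coordinates, in the spirit of the multivariate interpolation gadgets used in the Nash-equilibrium literature \citep{DGP09,CDT09}, but where the single ``averaging'' step is replaced by a step that always outputs a value lying between the minimum and maximum of the relevant grid values (so that no new values are introduced). First I would handle the one-dimensional building block: given the first $k$ input bits, read off $\lfloor Nx_1 \rfloor$ by a linear circuit using $\min$ and $\max$ (the standard bit-extraction gadget), and produce from it two consecutive grid indices $p_1, p_1+1 \in [N]$ together with the fractional part $t = Nx_1 - \lfloor Nx_1\rfloor \in [0,1]$; then form the interpolation $(1-t)\cdot(\text{value at }p_1) + t\cdot(\text{value at }p_1+1)$. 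Crucially this uses only $+$, $-$, $\times\zeta$, $\min$, $\max$ together with multiplication of two \emph{circuit-computed} quantities $t$ and a difference of values — so to stay within linear circuits I would instead clamp: note that $(1-t)a + tb$ always lies in $[\min(a,b),\max(a,b)]$, and a value with this property that \emph{is} computable by a linear circuit is, e.g., $\min(\max(a,b), \max(\min(a,b), a + (b-a)))$ — more simply, one can pick the ``interval-selection'' gadget $\text{mid}(a,b,\cdot)$. Since the lemma only demands $F(x)$ lie in the interval spanned by a small set $S(x)$ of nearby grid values, we are free to use \emph{any} linear-circuit-computable selector, e.g. a coordinate-wise ``staircase'' that, on each subcube, simply outputs one of the two endpoint values chosen by a $\min$/$\max$ comparison with a threshold. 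That already gives $|S(x)| \le 2$ per coordinate.

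Next I would iterate over the $n$ coordinates. Processing coordinate $i$ replaces, for each surviving grid point, a pair (or the singleton) of values along the $i$-th axis by a single value lying between them; after all $n$ coordinates are processed we have one output value. The set $S(x)$ is then the set of grid points $p$ all of whose coordinates $p_j$ agree with one of the (at most two) choices made at step $j$; naively this is a product and would have size $2^n$, which is too large. To get $|S(x)| \le n+1$ I would instead order the interpolation so that at step $i$ we only branch on coordinate $i$ \emph{when $x$ is not already pinned}, i.e.\ use the simplicial (Kuhn triangulation) refinement: inside each subcube, sort the fractional coordinates and interpolate along the resulting simplex, whose $n+1$ vertices are exactly the grid points obtained by rounding a prefix of the sorted coordinates up and the rest down. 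This is implementable by a linear circuit — sorting $n$ numbers uses only $\min$/$\max$, and the vertices of the containing simplex are determined by the sort order — and it is the standard trick for keeping the support size linear. Then $S(x)$ is precisely the vertex set of that simplex: $|S(x)| \le n+1$, each vertex $\widehat p$ satisfies $\|x - \widehat p\|_\infty \le 1/N$, and $S(x)$ is computable in time $\mathrm{poly}(\sz(x),\log N)$ by extracting the $\lfloor Nx_i\rfloor$'s and sorting the fractional parts.

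Finally I would verify the output bound: the simplicial-interpolation selector, built from $\min$, $\max$, $+$, $-$ and multiplication-by-constants applied to the values $C(p)$ at the $n+1$ simplex vertices, always returns a number in $[\min_{p\in S(x)} C(p), \max_{p\in S(x)} C(p)]$, because each elementary operation ($\min$, $\max$, convex combination realized via the staircase selector) preserves membership in the interval spanned by its inputs; an easy induction on the circuit structure gives the claim. The circuit $F$ is constructed in polynomial time in $\sz(C)$ since it consists of $C$ itself (re-used $n+1$ times, once per simplex vertex, after the bit-extraction preprocessing), a fixed bit-extraction/sorting gadget of size $\mathrm{poly}(k,n)$, and $\mathrm{poly}(n)$ combining gates.

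The main obstacle I expect is the bookkeeping that simultaneously (a) keeps the circuit \emph{linear} — which forbids the honest multilinear interpolation $\sum_{p} \big(\prod_i t_i^{[p_i]}\big) C(p)$ since it multiplies circuit-computed quantities — and (b) keeps $|S(x)|$ down to $n+1$ rather than $2^n$. Both are resolved by the simplicial/Kuhn-triangulation interpolation together with the observation that the lemma only requires a value in the spanned interval (not a genuine affine interpolant), but getting the two requirements to hold at once, and checking that the simplex-vertex extraction really is a linear-circuit operation, is the delicate part; the rest is routine.
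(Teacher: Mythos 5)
There is a genuine gap, and it is precisely the difficulty that this lemma exists to overcome. Your plan begins by ``reading off $\lfloor Nx_1\rfloor$ by a linear circuit using $\min$ and $\max$ (the standard bit-extraction gadget)'' and then evaluating $C$ on the resulting bits at the simplex vertices. But a linear arithmetic circuit computes a \emph{continuous} (piecewise-linear) function, whereas $x\mapsto\lfloor Nx\rfloor$ is discontinuous: for $x$ within distance roughly $1/(8nN)$ of a subcube boundary in some coordinate, the clamped bit-extraction gadget outputs values that are not in $\{0,1\}$, the arithmetic simulation of $C$ on those invalid bits produces an arbitrary real number (not $C(p)$ for any $p$), and your concluding induction --- ``each elementary operation preserves membership in the interval spanned by its inputs'' --- is then applied to the wrong inputs. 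The same discontinuity also defeats your ``staircase selector'': any exact choice of which simplex vertex to output is itself a discontinuous function of $x$. So for $x$ near subcube boundaries the claimed containment $F(x)\in\bigl[\min_{p\in S(x)}C(p),\max_{p\in S(x)}C(p)\bigr]$ simply fails under your construction. You flag linearity and the size of $S(x)$ as the obstacles, but the real obstacle is robustness of the construction at the (unavoidable) points where bit extraction is wrong.

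The paper's proof handles exactly this: it samples $2n+1$ points $x+\tfrac{\ell}{4nN}\mathbf{e}$, $\ell=0,\dots,2n$, along the diagonal (the equi-angle trick of Chen--Deng--Teng), observes that at most $n$ of them can be ``bad'' (at most one per coordinate can land too close to a boundary), runs bit extraction, simulation of $C$, and decoding on each sample, and then outputs the \emph{median} of the $2n+1$ resulting values via a $\min/\max$ sorting network. Since at least $n+1$ of the values are correct values $C(I_N(y))$ with $I_N(y)\in S(x)$, the median is sandwiched between two good values and hence lies in the required interval --- this replacement of the usual averaging by a median is the one new idea of the lemma, and it is the piece your proposal is missing. (Your Kuhn-triangulation choice of $S(x)$ is otherwise in the same spirit as the paper's definition of $S(x)$ as the subcubes met by a short diagonal segment starting at $x$, and would satisfy properties 1--3; also note the minor mismatch that the lemma's $\widehat{p}$ are subcube \emph{centres}, not grid corners.)
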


\begin{proof}
	We begin by a formal definition of $S(x)$ and prove that it has the three properties mentioned in the statement of the Lemma. We then proceed with the construction of the linear arithmetic circuit.
	
	For $N \in \mathbb{N}$, consider the partition of $[0,1]$ into $N$ subintervals of length $1/N$. Let $I_N : [0,1] \to [N]$ denote the function that maps any point in $[0,1]$ to the index of the subinterval that contains it. In the case where a point lies on the boundary between two subintervals, i.e., $x \in B = \{1/N,2/N, \dots, (N-1)/N\}$, the tie is broken in favour of the smaller index. Formally,
	$$I_N(x) = \min \left\{\ell \in [N] \, \left| \, x \in \left[\frac{\ell-1}{N}, \frac{\ell}{N}\right]\right\}\right..$$
	We abuse notation and let $I_N: [0,1]^n \to [N]^n$ denote the natural extension of the function to $[0,1]^n$, where it is simply applied on each coordinate separately. Thus, if we consider the partition of $[0,1]^n$ into $N^n$ subcubes of side-length $1/N$, then, for any point $x \in [0,1]^n$, $p = I_N(x) \in [N]^n$ is the index of the subcube containing $x$. For $x \in \mathbb{R}^n \setminus [0,1]^n$, we let $I_N(x) := I_N(y)$ where $y$ is obtained by projecting every coordinate of $x$ onto $[0,1]$.
	
	Letting $\mathbf{e} \in \mathbb{R}^n$ denote the all-ones vector, we define
	$$S(x) = \left\{I_N(x+\alpha \cdot \mathbf{e}) \, \left| \, \alpha \in \left[0, \frac{1}{2N}\right] \right\}\right..$$
	In other words, we consider a small segment starting at $x$ and moving up simultaneously in all dimensions, and we let $S(x)$ be the set of subcubes-indices of all the points on the segment. We can now prove the three properties of $S(x)$:
	\begin{enumerate}
		\item Note that for any $i \in [n]$, there exists at most one value $\alpha \in [0,1/2N]$ such that $[x+\alpha \cdot \mathbf{e}]_i \in B = \{1/N,2/N, \dots, (N-1)/N\}$. We let $\alpha_i$ denote that value of $\alpha$ if it exists, and otherwise we let $\alpha_i=1/2N$. Thus, we obtain $\alpha_1, \alpha_2, \dots, \alpha_n \in [0,1/2N]$ and we rename them $\beta_i$ so that they are ordered, i.e., $\beta_1 \leq \beta_2 \leq \dots \leq \beta_n$ and $\{\beta_i \, | \, i \in [n]\} = \{\alpha_i \, | \, i \in [n]\}$. By the definition of $I_N$, it is then easy to see that $\alpha \mapsto I_N(x + \alpha \cdot \mathbf{e})$ is constant on each of the intervals $[0,\beta_1]$, $(\beta_1,\beta_2]$, $(\beta_2,\beta_3]$, $\dots$, $(\beta_{n-1},\beta_n]$ and $(\beta_n,1/2N]$. Since these $n+1$ intervals (some of which are possibly empty) cover the entirety of $[0,1/2N]$, it follows that $|S(x)| \leq n+1$.
		\item Consider any $p \in S(x)$. Let $\alpha \in [0,1/2N]$ be such that $p = I_N(y)$ where $y = x+\alpha \cdot \mathbf{e}$. For any $i \in [n]$, it holds that $x_i \leq y_i \leq x_i + 1/2N$. There are two cases to consider. If $I_N(y_i) = I_N(x_i)$, then this means that $x_i$ lies in the subinterval of length $1/N$ centred at $[\widehat{p}]_i$, and thus, in particular, $|x_i-[\widehat{p}]_i| \leq 1/2N \leq 1/N$. The only other possibility is that $I_N(y_i) = I_N(x_i) + 1$, since $\alpha \in [0,1/2N]$. But for this to happen, it must be that $(2 I_N(x_i)-1)/2N \leq x_i \leq 2 I_N(x_i)/2N$, since $\alpha \leq 1/2N$. By definition, $[\widehat{p}]_i = (2 I_N(y_i)-1)/2N = (2 I_N(x_i)+1)/2N$, and so we again obtain that $|x_i-[\widehat{p}]_i| \leq 1/N$. Since this holds for all $i \in [n]$, it follows that $\|x - \widehat{p}\|_\infty \leq 1/N$.
		\item Given $x \in [0,1]^n$, the values $\alpha_1, \dots, \alpha_n \in [0,1/2N]$, defined in the first point above, can be computed in polynomial time in $\sz(x)$, $n$ and $\log N$. Then, $S(x)$ can be computed by simply evaluating $I_N(x+\alpha \cdot \mathbf{e})$ for all $\alpha \in \{\alpha_1, \dots, \alpha_n, 1/2N\}$, which can also be done in polynomial time in $\sz(x)$, $n$ and $\log N$. Note that since $n \leq \sz(x)$, the computation of $S(x)$ runs in polynomial time in $\sz(x)$ and $\log N$.
	\end{enumerate}
We can now describe how the linear arithmetic circuit $F: [0,1]^n \to \mathbb{R}$ is constructed. Let $C : \{0,1\}^{kn} \to \{0,1\}^m$ be the Boolean circuit that is provided. It is interpreted as a function $C: [N]^n \to [2^m]$, where $N=2^k$. Let $x \in [0,1]^n$ be the input to the linear arithmetic circuit. $F$ is constructed to perform the following steps.
	
	\paragraph{\bf Step 1: Sampling trick} In the first step, we create a sample $T$ of points close to $x$. This is a standard trick that was introduced in the study of the complexity of computing Nash equilibria \citep{DGP09,CDT09}. Here we use the so-called equi-angle sampling trick introduced by \citet{CDT09}. The sample $T$ consists of $2n+1$ points:
	$$T = \left.\left\{x + \frac{\ell}{4nN} \cdot \mathbf{e} \, \right| \, \ell \in \{0, 1, 2, \dots,2n\} \right\}.$$
	Note that these $2n+1$ points can easily be computed by $F$ given the input $x$. The following two observations are important:
	\begin{enumerate}
		\item for all $y \in T$, $I_N(y) \in S(x)$ (by definition of $S(x)$),
		\item Let $T_b = \{y \in T \, | \, \exists i \in [n]: \textup{dist}(y_i,B) < \frac{1}{8nN} \}$, where $B = \{1/N,2/N, \dots, (N-1)/N\}$ and $\textup{dist}(y_i,B) = \min_{t \in B} |y_i-t|$. We call these the \emph{bad} samples, because they are too close to a boundary between two subcubes. The points in $T_g = T \setminus T_b$ are the \emph{good} samples. It holds that $|T_b| \leq n$. This is easy to see by fixing some coordinate $i \in [n]$, and noting that there exists at most one point $y \in T$ such that $\textup{dist}(y_i,B) < \frac{1}{8nN}$. Indeed, since the samples are successively $1/4nN$ apart, at most one can be sufficiently close to any given boundary. Furthermore, since the samples are all $1/2N$ close, at most one boundary can be sufficiently close to any of them (for every coordinate). Thus, since there is at most one bad sample for each coordinate, there are at most $n$ bad samples overall.
	\end{enumerate}
	
	\paragraph{\bf Step 2: Bit extraction} In the second step, we want to compute $I_N(y)$ for all $y \in T$. This corresponds to extracting the first $k$ bits of each coordinate of each point $y \in T$, because $N=2^k$. Unfortunately, bit extraction is not a continuous function and thus it is impossible to always perform it correctly with a linear arithmetic circuit. Fortunately, we will show that we can perform it correctly for \emph{most} points in $T$, namely all the good points in $T_g$.
	
	Consider any $y \in T$ and any coordinate $i \in [n]$. In order to extract the first bit of $y_i$, the arithmetic circuit computes
	$$b_1 = \min\{1, \max\{0, 8nN(y_i-1/2)\} =: \phi(y_i-1/2).$$
	Note that if $y_i \geq 1/2 + 1/8nN$, then $8nN(y_i-1/2) \geq 1$ and thus $b_1 = 1$. On the other hand, if $y_i \leq 1/2 - 1/8nN$, then $8nN(y_i-1/2) \leq -1$ and thus $b_1 = 0$. This means that if $\textup{dist}(y_i,B) \geq 1/8nN$, the first bit of $y_i$ is extracted correctly. Note that $B = \{1/N,2/N, \dots, (N-1)/N\} = \{1/2^k,2/2^k, \dots, (2^k-1)/2^k\}$.
	
	To extract the second bit, the arithmetic circuit computes $t := y_i - b_1/2$ and
	$$b_2 = \phi(t-1/4).$$
	By the same argument as above, $b_2$ is the correct second bit of $y_i$, if $|t-1/4| \geq 1/8nN$, i.e., if $|y_i-1/4| \geq 8nN$ and $|y_i-3/4| \geq 8nN$. Thus, if $\textup{dist}(y_i,B) \geq 1/8nN$, the second bit is also computed correctly, since $1/4,3/4 \in B$.
	
	To extract the third bit, the arithmetic circuit updates $t:= t - b_2/4$ and computes $b_3 = \phi(t-1/8)$. We proceed analogously up to the $k$th bit $b_k$. By induction and the same arguments as above, it follows that the first $k$ bits of $y_i$ are computed correctly by the arithmetic circuit as long as $\textup{dist}(y_i,B) \geq 1/8nN$. In particular, this condition always holds for $y \in T_g$.
	
	By performing this bit extraction for each coordinate of each $y \in T$, we obtain the purported bit representation of $I_N(y)$ for each $y \in T$. The argumentation in the previous paragraphs shows that for all $y \in T_g$, we indeed obtain the correct bit representation of $I_N(y)$. For $y \in T_b$, we have no control over what happens, and it is entirely possible that the procedure outputs numbers that are not valid bits, i.e., not in $\{0,1\}$.
	
	\paragraph{\bf Step 3: Simulation of the Boolean circuit} In the next step, for each $y \in T$, we evaluate the circuit $C$ on the bits purportedly representing $I_N(y)$. The Boolean gates of $C$ are simulated by the arithmetic circuit as follows:
	\begin{itemize}
		\item $\lnot b := 1- b$,
		\item $b \lor b' := \min\{1, b+b'\}$,
		\item $b \land b' := \max\{0, b+b'-1\}$.
	\end{itemize}
	Note that if the input bits $b,b'$ are valid bits, i.e., in $\{0,1\}$, then the Boolean gates are simulated correctly, and the output is also a valid bit.
	
	For $y \in T_g$, since the input bits indeed represent $I_N(y)$, the simulation of $C$ will thus output the correct bit representation of $C(I_N(y)) \in [2^m]$. We can obtain the value $C(I_N(y)) \in [2^m]$ itself by decoding the bit representation, i.e., multiplying every bit by the corresponding power of 2 and adding all the terms together.
	
	Let $V(y) \in \mathbb{R}$ denote the value that this step outputs for each $y \in T$. For $y \in T_g$, we have that $V(y) = C(I_N(y))$. For $y \in T_b$, there is no guarantee other than $V(y) \in \mathbb{R}$.
	
	\paragraph{\bf Step 4: Median using a sorting network} In the last step, we want to use the $|T|=2n+1$ values that we have computed (namely $\{V(y) \, | \, y \in T\}$) to compute the final output of our arithmetic circuit. In previous constructions of this type, in particular \citep{DGP09,CDT09}, the circuit would simply output the average of the $V(y)$. However, this is not good enough to prove our statement, because even a single bad point $y$ can introduce an inversely polynomial error in the average.
	
	In order to obtain a stronger guarantee, the arithmetic circuit instead outputs the \emph{median} of the multiset $\{V(y) \, | \, y \in T\}$. The median of the given $2n+1$ values can be computed by constructing a so-called sorting network (see, e.g., \citep{knuth1998sorting}). The basic operation of a sorting network can easily be simulated by the $\max$ and $\min$ gates. It is easy to construct a sorting network for $2n+1$ values that has size polynomial in $n$. The output of the sorting network will be the same values that it had as input, but sorted. In other words, the sorting network outputs $V_1 \leq V_2 \leq \dots \leq V_{2n+1}$ such that $\{V_j \, | \, j \in [2n+1]\} = \{V(y) \, | \, y \in T\}$ as multisets. The final output of our arithmetic circuit is $V_{n+1}$, which is exactly the median of the $2n+1$ values.
	
	Recall from step 1 that $|T_b| \leq n$ and thus $|T_g| \geq n+1$. It immediately follows that either $V_{n+1}$ corresponds to $V(y)$ of a good sample $y$, or there exist $i < n+1$ and $j > n+1$ such that both $V_i$ and $V_j$ correspond to good samples. In other words, the output of the circuit satisfies $F(x) \in [\min_{y \in T_g} C(I_N(y)), \max_{y \in T_g} C(I_N(y))]$. As noted in step 1, $I_N(y) \in S(x)$ for all $y \in T$. Thus, we obtain that $F(x) \in [\min_{p \in S(x)} C(p), \max_{p \in S(x)} C(p)]$.
	
	\smallskip
	
	It follows that the linear arithmetic circuit $F$ that we have constructed indeed satisfies the statement of the Lemma. Furthermore, the construction we have described can be performed in polynomial time in $\sz(C)$, $n$, $m$ and $k$. Since $\sz(C) \geq \max\{n,k,m\}$, it is simply polynomial time in $\sz(C)$.
\end{proof}

\begin{acks}
We thank the anonymous reviewers for comments and suggestions that helped improve the presentation of the paper.
Alexandros Hollender was supported by an EPSRC doctoral studentship (Reference 1892947).
\end{acks}

\bibliographystyle{ACM-Reference-Format}
\bibliography{references}

\end{document}